%% file: main.tex
\def\llncs{0}
\def\fullpage{1}
\def\anonymous{0}
\def\authnote{1}
\def\notxfont{0}
\def\submission{0}

\ifnum\submission=1
\def\llncs{1}
\fi

\ifnum\llncs=1
	\documentclass[envcountsect,a4paper,runningheads,11pt]{llncs}
\else
	\documentclass[letterpaper,hmargin=1.05in,vmargin=1.05in,11pt]{article}
			\ifnum\fullpage=1
		\usepackage{fullpage}
		\fi
\fi

\input{preamble_usepackages.tex}
\input{macros}

\title{Constant-Rate Certified Deletion}

\ifnum\anonymous=1
    \ifnum\llncs=1
        \author{\empty}\institute{\empty}
    \else
        \author{}
    \fi
\else
\ifnum\llncs=1
    \author{
    Kai-Min Chung \inst{1} \and 
    Tzu-Hsiang Huang \inst{2} \and
    Wei-Hsiang Hung \inst{1} \and 
    Shota Yamada \inst{3} 
}

    \institute{
 Academia Sinica, Taipei, Taiwan \\
  \email{kmchung@iis.sinica.edu.tw, arniehung@iis.sinica.edu.tw} 
  \and
  University of Illinois Urbana-Champaign, USA \\
  \email{th57@illinois.edu} \and
  National Institute of Advanced Industrial Science and Technology (AIST), Tokyo, Japan \\ \email{yamada-shota@aist.go.jp}
}
\else
\author[1]{Kai-Min Chung}
\author[2]{Tzu-Hsiang Huang}
\author[1]{Wei-Hsiang Hung}
\author[3]{Shota Yamada}

\affil[1]{{\small Academia Sinica, Taipei, Taiwan}\authorcr{\small kmchung@iis.sinica.edu.tw, arniehung@iis.sinica.edu.tw}}
\affil[2]{{\small University of Illinois Urbana-Champaign, USA} \authorcr{\small th57@illinois.edu}}
\affil[3]{{\small National Institute of Advanced Industrial Science and Technology (AIST), Tokyo, Japan}\authorcr{\small yamada-shota@aist.go.jp}}

\fi 
\fi

\date{}

\begin{document}
\maketitle
\HZZ{This version includes comment.}
\begin{abstract}
We present a unified framework for upgrading a broad class of cryptographic primitives to support constant-rate certified deletion. 
Previous constructions require a linear number of qubits per encrypted bit of certified-deletable plaintext. 
In contrast, we obtain the first constant-rate constructions in the plain model that achieve certified deletion while preserving everlasting security.

Our approach applies to a wide range of ``all-or-nothing''–type primitives based on BB84-style encodings, including commitment schemes, public-key encryption, attribute-based encryption, and fully homomorphic encryption. 
Beyond this class, we also obtain constant-rate certified deletion for primitives built from subspace coset states, such as blind delegation,
secure software leasing, functional encryption, and differing-inputs iO, and CCA-PKE.
Importantly, our framework does not introduce any additional assumptions beyond those required by the underlying certified deletion primitives.

Finally, under the hardness of SIS, we show that public verifiability can be incorporated into BB84- and coset-based certified deletion. In combination with our constant-rate constructions, this yields publicly verifiable certified deletion schemes with constant rate.
\end{abstract}

\ifnum\llncs=0
    \newpage
    \setcounter{tocdepth}{2}
    \tableofcontents
    \newpage
\fi

\input{1_Introduction}
\input{2_Technical}
\ifnum\llncs=0
    \input{3_Preliminary}

\fi
\input{4_Extractor_final}
\input{5-1_BB84_states}
\ifnum\llncs=0
    \input{5-2_BB84_apps}
    \input{6-1_coset_states_final}
    \input{6-2_coset_apps}
    \input{7_PublicVerifiable}
\fi

\ifnum\llncs=1
    \input{8-3_llncs_coset}

\fi

\ifnum\anonymous=0
\paragraph{Acknowledgements}
We thank the anonymous reviewers for the feedback, as well as Tyler Besselman,  Nai-Hui Chia, Yu-Ching Shen, and Jhih-Wei Shih for helpful discussions.

Kai-Min Chung and Wei-Hsiang Hung is supported by NSTC QC project under Grant no. NSTC 115-2119-M-001-002.
Tzu-Hsiang Huang is supported in part by AFOSR, NSF 2112890, NSF CNS-2247727
and a Google Research Scholar award. This material is based upon work supported by the Air Force Office of Scientific Research under award number FA9550-23-1-0543.
Shota Yamada is supported by JPMJCR22M1 and JPMJKP24U3. 

\fi

\ifnum\submission=0
\bibliographystyle{alpha} 
\else
\bibliographystyle{splncs04}
\fi
\bibliography{abbrev3,crypto,reference}

\end{document}

%% file: preamble_usepackages.tex
\usepackage[
  hypertexnames=false
  colorlinks=true,
  citecolor=blue,
  pagebackref=true
]{hyperref}



\usepackage{amsmath, amsfonts, amssymb, mathtools,amscd, mathrsfs}

\usepackage{amsthm}
\allowdisplaybreaks[2]

\usepackage{tikz}
\usepackage{graphicx}
\usetikzlibrary{calc,decorations.pathreplacing}

\usepackage{lmodern}
\DeclareFontShape{OMX}{cmex}{m}{n}{
  <-7.5> cmex7
  <7.5-8.5> cmex8
  <8.5-9.5> cmex9
  <9.5-> cmex10
}{}
\SetSymbolFont{largesymbols}{normal}{OMX}{cmex}{m}{n}
\SetSymbolFont{largesymbols}{bold}  {OMX}{cmex}{m}{n}

\usepackage[T1]{fontenc}
\usepackage[utf8]{inputenc}
\usepackage{bbm}

\usepackage{arydshln} 
\usepackage{url}
\usepackage{ifthen}
\usepackage{bm}
\usepackage{multirow}

\usepackage{graphicx}
\usepackage{xcolor}
\usepackage{colortbl}

\usepackage{threeparttable}
\usepackage{comment}
\usepackage{paralist,verbatim}
\usepackage{cases}
\usepackage{booktabs}
\newcommand\bra[2][]{#1\langle {#2} #1\rvert}
\newcommand\ket[2][]{#1\lvert {#2} #1\rangle}
\newcommand\Bra[1]{\Big\langle {#1} \Big\rvert}
\newcommand\Ket[1]{\Big\lvert {#1} \Big\rangle}

\usepackage{cancel} 
\usepackage{ascmac} 
\usepackage{framed}
\usepackage{authblk}
\usepackage{pifont}
\usepackage{qcircuit}
\definecolor{darkblue}{rgb}{0,0,0.6}
\definecolor{darkgreen}{rgb}{0,0.5,0}
\definecolor{maroon}{rgb}{0.5,0.1,0.1}
\definecolor{dpurple}{rgb}{0.2,0,0.65}

\usepackage[capitalise,noabbrev]{cleveref}
\usepackage[absolute]{textpos}
\usepackage[final]{microtype}
\usepackage[absolute]{textpos}
\usepackage{everypage}
\DeclareMathAlphabet{\mathpzc}{OT1}{pzc}{m}{it}

\usepackage{algorithmic}
\usepackage{algorithm}
\usepackage{here}

\usepackage{changepage}

\newtheoremstyle{thicktheorem}%
{\topsep}
{\topsep}
{\itshape}{}%
{\bfseries}%
{.}
{ }%
{\thmname{#1}\thmnumber{ #2}%
		\thmnote{ (#3)}%
}

\newtheoremstyle{remark}
{\topsep}
{\topsep}
	{}
	{}
	{}
	{.}
	{ }
	{\textit{\thmname{#1}}\thmnumber{ #2}
			\thmnote{ (#3)}%
	}

\ifnum\llncs=0
	\theoremstyle{thicktheorem}
	\newtheorem{theorem}{Theorem}[section]
	\newtheorem{lemma}[theorem]{Lemma}
	\newtheorem{corollary}[theorem]{Corollary}
	\newtheorem{proposition}[theorem]{Proposition}
	\newtheorem{definition}[theorem]{Definition}

	\theoremstyle{remark}
	\newtheorem{claim}[theorem]{Claim}
	\newtheorem{remark}[theorem]{Remark}
	\newtheorem{example}[theorem]{Example}
	\newtheorem{note}[theorem]{Note}
	\newtheorem{conjecture}[theorem]{Conjecture}
\else
      \theoremstyle{plain}
  \newtheorem{thmX}{Theorem}[section]
  \newtheorem{lemX}[thmX]{Lemma}
  \newtheorem{corX}[thmX]{Corollary}
  \newtheorem{propX}[thmX]{Proposition}
  \newtheorem{defX}[thmX]{Definition}

  \theoremstyle{remark}
  \newtheorem{claimX}[thmX]{Claim}
  \newtheorem{remX}[thmX]{Remark}
  \newtheorem{exX}[thmX]{Example}
  \newtheorem{noteX}[thmX]{Note}
  \newtheorem{conjX}[thmX]{Conjecture}

  \renewenvironment{theorem}{\begin{thmX}}{\end{thmX}}
  \renewenvironment{lemma}{\begin{lemX}}{\end{lemX}}
  
  \renewenvironment{proposition}{\begin{propX}}{\end{propX}}
  \renewenvironment{definition}{\begin{defX}}{\end{defX}}

  \renewenvironment{remark}{\begin{remX}}{\end{remX}}
  \renewenvironment{example}{\begin{exX}}{\end{exX}}

\fi

\Crefname{MyClaim}{Claim}{Claims}

	\crefname{theorem}{Theorem}{Theorems}
	\crefname{assumption}{Assumption}{Assumptions}
	\crefname{construction}{Construction}{Constructions}
	\crefname{corollary}{Corollary}{Corollaries}
	\crefname{conjecture}{Conjecture}{Conjectures}
	\crefname{definition}{Definition}{Definitions}
	\crefname{exmaple}{Example}{Examples}
	\crefname{experiment}{Experiment}{Experiments}
	\crefname{counterexample}{Counterexample}{Counterexamples}
	\crefname{lemma}{Lemma}{Lemmata}
	\crefname{observation}{Observation}{Observations}
	\crefname{proposition}{Proposition}{Propositions}
	\crefname{remark}{Remark}{Remarks}
	\crefname{claim}{Claim}{Claims}
	\crefname{fact}{Fact}{Facts}
	\crefname{note}{Note}{Notes}

\ifnum\llncs=1
 \crefname{appendix}{App.}{Appendices}
 \crefname{section}{Sec.}{Sections}
\else
\fi

\ifnum\llncs=1
\renewcommand*{\backref}[1]{}
\else
	\renewcommand*{\backref}[1]{(Cited on page~#1.)}
	\ifnum\notxfont=1
	\else
		\usepackage{newtxtext}
	\fi
\fi

%% file: macros.tex
\ifnum\authnote=1  
\newcommand{\shota}[1]{}
\newcommand{\synote}[1]{}
\newcommand{\arnie}[1]{}
\newcommand{\HZZ}[1]{}

\else
\newcommand{\shota}[1]{$\ll$\textsf{\color{red} shota: { #1}}$\gg$}
\newcommand{\synote}[1]{$\ll$\textsf{\color{red} shota: { #1}}$\gg$}
\newcommand{\arnie}[1]{$\ll$\textsf{\color{orange} Arnie: { #1}}$\gg$}
\newcommand{\HZZ}[1]{$\ll$\textsf{\color{cyan} Huang: { #1}}$\gg$}

\fi

\usepackage{amsmath}

\DeclareMathOperator{\rowSpan}{rowSpan}

\newcommand{\Tr}{\mathrm{Tr}}

\newcommand{\hyb}{\mathsf{Hyb}}

\newcommand{\sigk}{\mathsf{sigk}}

\newcommand{\Sim}{\algo{Sim}}
\newcommand{\QFT}{\algo{QFT}}

\newcommand{\img}{\mathrm{img}}
\newcommand{\ch}{\algo{ch}}

\newcommand{\sample}{\overset{\hspace{0.1em}\mathsf{\scriptscriptstyle\$}}{\leftarrow}}

\newcommand{\LAspan}[1]{\algo{span} \left({#1} \right) }
\DeclareMathOperator{\im}{im}

\newcommand{\cA}{\mathcal{A}}
\newcommand{\cB}{\mathcal{B}}

\newcommand{\cD}{\mathcal{D}}

\newcommand{\cF}{\mathcal{F}}

\newcommand{\cH}{\mathcal{H}}
\newcommand{\cI}{\mathcal{I}}

\newcommand{\cM}{\mathcal{M}}

\newcommand{\cO}{\mathcal{O}}

\newcommand{\cR}{\mathcal{R}}
\newcommand{\cS}{\mathcal{S}}
\newcommand{\cT}{\mathcal{T}}

\newcommand{\cX}{\mathcal{X}}
\newcommand{\cY}{\mathcal{Y}}
\newcommand{\cZ}{\mathcal{Z}}

\def\makeuppercase#1{
\expandafter\newcommand\csname tl#1\endcsname{\widetilde{#1}}
}

\def\makelowercase#1{
\expandafter\newcommand\csname tl#1\endcsname{\widetilde{#1}}
}

\newcommand{\N}{\mathbb{N}}

\newcommand{\F}{\mathbb{F}}

\newcommand{\As}{\mathcal{A}}

\newcommand{\Ms}{\mathcal{M}}
\newcommand{\Is}{\mathcal{I}}

\newcommand{\Zs}{\mathcal{Z}}

\newcommand{\regR}{\mathbf{R}}

\newcommand{\regB}{\mathbf{B}}

\newcommand{\secp}{\lambda}

\newcommand{\Adv}{\mathsf{Adv}}

\newcommand*{\sk}{\keys{sk}}
\newcommand*{\pk}{\keys{pk}}
\newcommand*{\msk}{\keys{msk}}

\newcommand*{\vk}{\keys{vk}}

\newcommand*{\keys}[1]{\mathsf{#1}}

\newcommand*{\algo}[1]{\ensuremath{\mathsf{#1}}}

\newenvironment{boxfig}[2]{\begin{figure}[#1]\fbox{\begin{minipage}{0.97\linewidth}
                        \vspace{0.2em}
                        \makebox[0.025\linewidth]{}
                        \begin{minipage}{0.95\linewidth}
            {{
                        #2 }}
                        \end{minipage}
                        \vspace{0.2em}
                        \end{minipage}}}{\end{figure}}

\newcommand{\hash}{\mathsf{h}}

\newcommand{\Forge}{\mathtt{Forge}}

\newcommand{\SubGen}{\algo{Gen}}

\newcommand{\SubEnc}{\algo{Enc}}

\newcommand{\SubDecap}{\algo{Dec}}

\newcommand{\Setup}{\algo{Setup}}

\newcommand{\Gen}{\algo{Gen}}

\newcommand{\Sig}{\algo{Sig}}
\newcommand{\Sign}{\algo{Sign}}
\newcommand{\Vrfy}{\algo{Vrfy}}

\newcommand{\CS}{\algo{CS}}
\newcommand{\Constrain}{\algo{Constrain}}

\newcommand{\TD}{\algo{TD}}
\newcommand{\co}{\algo{co}}

\newcommand{\PKE}{\algo{PKE}}

\newcommand{\negl}{{\mathsf{negl}}}

\newcommand{\EXP}{{\mathsf{EXP}}}
\newcommand{\poly}{{\mathsf{poly}}}

\newcommand{\bin}{\{0,1\}}

\makeatletter
\DeclareRobustCommand
  \myvdots{\vbox{\baselineskip4\p@ \lineskiplimit\z@
    \hbox{.}\hbox{.}\hbox{.}}}
\makeatother

%% file: 1_Introduction.tex
\section{Introduction}
\paragraph{Certified Deletion.} 
The \emph{right to be forgotten}, a prominent legal concept in modern privacy regulations like the General Data Protection Regulation \cite{gdpr2016}, mandates data erasure to protect personal privacy and autonomy. This right is essential for giving individuals control over their personal data and a chance to move past their digital history. However, ensuring compliance with this privacy mandate, especially when the underlying device is untrusted, raises a fundamental question:
\begin{center}
    \textit{Is it possible to prove that a piece of data has been deleted?}
\end{center}
In the classical world, this is impossible, as classical information can be copied arbitrarily many times. Even if we protect the data by encryption, future leakage of the secret key or advances in cryptanalysis may break the scheme and allow an adversary to recover the plaintext. On the other hand, quantum information sheds new light on this problem. The no-cloning theorem asserts that it is generally impossible to copy a quantum state. Furthermore, the uncertainty principle implies that certain pairs of measurements cannot both be known precisely at the same time. This destructive nature of quantum mechanics opens the possibility of incorporating cryptographic schemes with quantum information to achieve provable data deletion.


Broadbent and Islam \cite{BI} were the first to formalize the notion of certified deletion (CD) in the context of one-time secret-key encryption. Their construction achieves security against unbounded adversaries, even if the secret key is later leaked, provided that the adversary has issued a deletion certificate and it has been successfully verified. Subsequently, Hiroka, Morimae, Nishimaki, and Yamakawa~\cite{HMNY} and Poremba~\cite{Por} extended certified deletion to public-key primitives, considering public-key encryption (PKE) and attribute-based encryption (ABE), and fully homomorphic encryption (FHE), respectively. Since public-key primitives inevitably rely on computational assumptions, their security can hold only against bounded adversaries prior to deletion. However, their definitions restrict security even further, ensuring protection only against \textit{bounded} adversaries after the deletion certificate has been issued and verified. Independently,  Bartusek and Khurana \cite{C:BarKhu23} and \cite{HMNYa} strengthen the security to certified everlasting security where adversaries become \textit{unbounded} after issuing a deletion certificate. The notion has become the standard of certified deletion. 

Certified deletion was initially studied for “all-or-nothing”–type primitives, where the security definition guarantees that an adversary can either decrypt the entire plaintext or learn nothing at all. Follow-up works further extended certified deletion to primitives beyond the “all-or-nothing” setting, where the adversary may obtain partial information about the encoded message within the security game. Specifically, Bartusek et al. \cite{BGK+} study maliciously blind delegation, obfuscation, and functional encryption schemes, while Hiroka et al. \cite{EC:HKMNPY24} focus on functional encryption.

A line of work focuses on publicly verifiable deletion (PVD), where the deletion verification key is public and anyone can verify deletion using this key \cite{Por,BGK+,BKP,BKM+,KNY,EC:KatSel25}. \cite{Por} constructed PKE and FHE schemes with PVD based on the conjecture that the Ajtai hash function satisfies a strong Gaussian-collapsing property. \cite{BKP} later introduced the notion of target-collapsing hash functions, which are sufficient for PVD. They proved the conjecture of \cite{Por} by showing that Ajtai hash functions are target-collapsing under the LWE assumption, and further demonstrated that injective (or almost-regular) one-way functions are target-collapsing. \cite{BKM+} subsequently weakened the underlying assumptions required for PVD, showing that standard one-way functions suffice. Independently, \cite{KNY} achieved PVD by signing BB84 states in the computational basis, assuming one-way functions or minimal quantum-planted NP hardness. These two approaches were subsequently unified in \cite{JC:BKK+}. On the other hand, the subspace coset state compiler of \cite{BGK+} enables primitives beyond the all-or-nothing type with PVD, assuming indistinguishability obfuscation (iO).
\cite{EC:KatSel25} considers secret sharing with publicly verifiable deletion.

However, existing certified deletion schemes in the plain model suffer from a significant inefficiency. 
In particular, Bartusek and Raizes \cite{BR} posed the problem of constructing commitments with certified deletion whose size is not much larger than the committed message. Unfortunately, all known constructions that achieve certified everlasting security require $\Theta(\lambda)$ qubits to encrypt a single classical bit,\footnote{If one is satisfied with merely negligible security, then $\omega(\log\lambda)$ qubits are sufficient. However, in most settings a failure probability smaller than inverse exponential is required, in which case $\Theta(\lambda)$ qubits are needed to encrypt a single classical bit.} resulting in a total communication cost of $\Theta(n\lambda)$ qubits for an $n$-bit message. This multiplicative dependence on the security parameter constitutes a significant overhead and eliminating this multiplicative dependence on $\lambda$ remains an open challenge.

\paragraph{Barriers to constant-rate deletion.}
The inefficiency of certified deletion schemes is not coincidental. The core obstacle lies in the proof technique used in the plain model.
To establish certified deletion security, one must prove a seemingly paradoxical statement: a message has been information-theoretically deleted from the adversary’s view, even though it was previously information-theoretically determined by the ciphertext.
The proof technique pioneered by \cite{C:BarKhu23} resolves this tension by guessing a random string to replace the message-dependent component of the ciphertext, thereby delaying the experiment’s dependence on the actual message until after the adversary outputs the deletion certificate. This deferral allows the security analysis to argue that the message dependence can later be removed once the certificate has been verified. However, the guessing step incurs a security loss exponential in the message length, making it infeasible to guess all message bits and thus precluding constant-rate deletion. In the following, we examine existing constructions to highlight the barriers to achieving constant rates. We define the rate as the size of the message (the plaintext) divided by the number of qubits required during the communication.

\paragraph{Existing Encoding Techniques.}
Existing certified deletion encodings can be broadly categorized into three approaches:
\begin{itemize}
    \item \textbf{BB84-type encodings}~\cite{BI,HMNY,HMNYa,C:BarKhu23,KNY}. 
    While \cite{BI,HMNY,HMNYa} achieve constant rate, they either rely on non-standard definitions or assume the quantum random oracle model (QROM). 
    To meet the standard definition in the plain model, \cite{C:BarKhu23,KNY} require $\Theta(\lambda)$ qubits to ensure that the adversary cannot guess all deletion positions, as their proofs rely on a cut-and-choose argument.

    \item \textbf{Subspace coset state encodings}~\cite{BGK+}. 
    These generalize the BB84-based approach to subspace coset states, but their analysis depends on an exponential gap between a coset and a noisy superspace, which again necessitates $\Theta(\lambda)$ qubits.

    \item \textbf{Hash-based PVD constructions}~\cite{Por,BKP,BKM+}. 
    These constructions rely on the hardness of inverting a hash function, which forces the preimage---and hence the number of encoding qubits---to be at least $\Theta(\lambda)$.\arnie{$\Omega(\secp)$ might be more accurate as  \cite{Por} requires inverse poly many qubits.}
\end{itemize}
In all three approaches, the number of encoding qubits grows linearly with the security parameter $\lambda$. 
When combined with the \cite{C:BarKhu23} guessing technique, this limits the classical message length to $O(\log \lambda)$ and yields a subconstant rate of $O(\log \lambda)/\Theta(\lambda)$ under standard security definitions in the plain model. This motivates the central question of our work:
\begin{center}
\textit{Can we construct constant-rate certified deletion in the plain model, with only a small qubit overhead over the classical plaintext?}
\end{center}

\subsection{Our Results}

Our work answers the above question in the affirmative. Our main contributions are summarized below.

\begin{itemize}
  \item \textbf{Constant-rate certified deletion theorems.} 
  We establish constant-rate certified deletion theorems for both BB84 states and subspace coset states. 
  These results yield two generic compilers: one enables  \emph{constant-rate} certified deletion for ``all-or-nothing''-type primitives instantiated from BB84 states, including commitment, secret key encryption, public-key encryption, attribute-based encryption, fully homomorphic encryption, timed-release encryption, and witness encryption. 
\ifnum\llncs=0
  The other enables \emph{constant-rate} certified deletion for primitives beyond the ``all-or-nothing''-type instantiated from subspace coset states, including blind delegation, secure software leasing, functional encryption, differing-inputs iO, and CCA-PKE.
  \fi
\ifnum\llncs=1
      We also achieve constant-rate certified deletion from subspace coset states. 
      This leads to applications such as CCA encryption and blind delegation.
\fi

  Specifically, assuming that the message length is much longer than the security parameter $\lambda$, we obtain rate-$(1-\epsilon)$ certified deletion for primitives instantiated from BB84 states for any constant $\epsilon > 0$
  \ifnum\llncs=1
        (See \cref{remark:llncs_rate1_m}),
  \else
        (See \cref{remark:rate1}),
  \fi
  and we obtain rate-$\frac{1}{4}$ certified deletion for primitives instantiated from coset subspace states.

  We attain a constant-rate certified deletion by employing the following techniques.
  \begin{itemize}
      \item \textbf{Overcoming barrier to provable security.} 
  We overcome the technical obstacle to provable security that arises from the guessing technique of \cite{C:BarKhu23}. 
  Our analysis proves certified deletion even in the presence of leakage originating from the unguessed portion of the ciphertext.

  \item \textbf{Seedless randomness extraction from weak quantum sources.} 
  As an intermediate contribution, we prove a new seedless randomness extraction theorem from weak quantum sources for BB84 states, 
  where the weakness of the source stems from the aforementioned leakage. 
  \end{itemize}

  \item \textbf{Constant-rate public verifiable deletion from SIS.}  
  We also study how to add public verifiability to our constant-rate constructions obtained above. 
Existing techniques~\cite{Por,BGK+,BKP,BKM+,KNY,EC:KatSel25} destroy the rate, as they require deletion certificates of linear size. 
The only exception relies on indistinguishability obfuscation~\cite{BGK+}, a strong assumption without a standard instantiation. 

Assuming the hardness of SIS, we show that public verifiability can be incorporated into certified deletion for BB84 and subspace coset states.
Combined with our constant-rate constructions, this yields publicly verifiable certified deletion for all-or-nothing-type primitives with a constant rate. 
  
\end{itemize}

%% file: 2_Technical.tex
\section{Technical Overview}
\subsection{Constant-Rate Certified Deletion}
\paragraph{Certified Deletion Theorem by \cite{C:BarKhu23}}
Let us start with the main theorem of certified deletion for BB84 state \cite{C:BarKhu23}. 
Let $(\SubGen, \SubEnc, \SubDecap)$ be a post-quantum semantically-secure encryption scheme. Then, for any adversary $\cA$ and any message $m \in \{0,1\}$, we define the following experiment $\EXP^{\cA}(m)$ as follows:

\begin{itemize}
    \item  Sample $x,\theta \leftarrow \{0,1\}^n$ and initialize $\cA$ with
    \begin{equation*}
        H^{\theta}\ket{x}, m \oplus \bigoplus\limits_{i: \theta[i]=0} x[i], \SubEnc(\theta).
    \end{equation*}
    \item $\cA$'s output is parsed as a string $x' \in \{0,1\}^n$ and a residual state $\rho$.
    \item (Verification step): If $x[i]=x'[i]$ for all $i$ such that $\theta[i]=1$, then output $\rho$. Otherwise, output a special symbol $\bot$.
\end{itemize}
\cite{C:BarKhu23} shows that for any efficient adversary $\cA$, $\TD(\EXP^\cA(0), \EXP^\cA(1)) \leq \negl(\lambda)$. 
Roughly speaking, $x'$ serves as the deletion certificate. What the above theorem says is that if the adversary passes the verification by submitting a correct deletion certificate, it loses the information of $m$ in an information-theoretic sense, conditioned on passing the verification.

\paragraph{Breaking Circularity by Guessing}
We then explain the security proof strategy.
The intuition for the security is as follows:
Assume that an adversary $\cA$ is initialized with $H^\theta\ket{x}$ only. 
Then, in order to pass the verification step ($x[i]=x'[i]$ for all $i$ such that $\theta[i]=1$), $\cA$ must measure ``almost'' all qubits in Hadamard basis, since the adversary has no knowledge on $\theta$ and cannot selectively measure the Hadamard position.  
Hence, we expect that $\bigoplus\limits_{i: \theta[i]=0} x[i]$ looks random to $\cA$, since it almost loses the information of $\ket{x}_\theta$ after the measurement. 
Next, consider the case where $\cA$ is given $m \oplus \bigoplus\limits_{i: \theta[i]=0} x[i]$ and $\SubEnc(\theta)$ additionally. 
We may say that the latter does not reveal the information of $\theta$, but the former is dependent on $\theta$. 
Namely, there is a circularity. To argue that $\bigoplus\limits_{i: \theta[i]=0} x[i]$ is random, $\theta$ should be hidden. However, to say $\theta$ is hidden, we may want to say $\bigoplus\limits_{i: \theta[i]=0} x[i]$ is random (i.e., independent of $\theta$). 
To overcome this circularity, they guess a bit $b$ replacing $m \oplus \bigoplus\limits_{i: \theta[i]=0} x[i]$ and defer the dependency of $m$ until after the adversary outputs $x'$ and $\rho$. This guessing step incurs a factor-$1/2$ loss in trace distance, corresponding to the probability of guessing the wrong bit.


\paragraph{Structure of the Post-Verification State}
Second, they aim to justify that, after $\cA$ passes the verification step, 
the quantum state behaves as if measured in the Hadamard basis at positions where $\theta[i] = 0$. This ensures that $\bigoplus\limits_{i: \theta[i]=0} x[i]$ is independent of $\cA$'s output state $\rho$. Such a condition would hold if the adversary measured those qubits honestly in the Hadamard basis.  
To show that the adversary cannot deviate substantially from the honest strategy, 
we argue that $\theta$ is hidden from $\cA$. 
To this end, we purify the adversary’s view as follows: 
\begin{equation*}
    \sum_{x} \ket{x}_{C}\ket{x}_{A}, b, \SubEnc(\theta),
\end{equation*}
where $C$ and $A$ represent the challenger's and the adversary's register, respectively. 
The additional register $C$ serves as a purification that acts as a ``handle'' for the adversary’s internal information about $x$ on the challenger’s side, allowing us to reason about the adversary’s knowledge via measurements performed by the challenger.
After the adversary's output, the challenger measures $C$ in the $\theta$ basis. 

This reduces the dependence on $\theta$ from $\cA$'s view. 
For the moment, assume that $\SubEnc(\theta)$ is replaced by $\SubEnc(0)$.
Then, $\cA$'s view is completely independent of $\theta$ and we can defer the choice of $\theta$ to the point after $\cA$ outputs $x'$.
In such a setting, we can rely on cut-and-choose argument over the random subset defined by $\theta$ to ensure that the register $C$ is almost $H\ket{x'}$:\footnote{
This notation means that we apply the Hadamard gate $H$ to all the qubits. Looking ahead, we will use similar notation for $\QFT$.
} 
If the state in the register $C$ is far from $H\ket{x'}$, measurement of $C$ at positions where $\theta[i] = 1$ results in a string that is inconsistent with $x'$ with high probability. 
We can formalize this argument to characterize the structure of the state after the adversary passes verification as follows:
\begin{equation}\label{equ:quantum_source}
    \sum_{s:h(s) \text{ is small}} H \ket{x'+s}_{C_{\theta_0}}\ket{\psi_s}_{A},
\end{equation}
 where $h(\cdot)$ represents the Hamming weight, $C_{\theta_0}$ is the challenger's register consisting of the position $i$ with $\theta[i]=0$, and $\ket{\psi_s}$ is an arbitrary state that may be entangled with $C_{\theta_0}$. This structure formalizes the high-level notion that the adversary has measured ``almost'' all the qubits in the Hadamard basis and therefore has lost most of the information of $\{x_i\}_{i:\theta[i]=0}$. 

\paragraph{Applying the XOR Extractor Lemma}
 At this point, we can invoke the XOR extractor lemma by \cite{EC:ABKK23} that shows XOR is a good deterministic randomness extractor for the quantum source in \cref{equ:quantum_source}, in the sense that if we measure the register $C_{\theta_0}$ in the computational basis to obtain $x_{\theta_0}$, then $\bigoplus\limits_{i} x_{\theta_0}[i]$ is uniformly random and independent of other variables. The outputs of the two experiments are both the state $\rho$, conditioned on the guessing bit being correct, i.e. $b= m \oplus \bigoplus\limits_{i} x_{\theta_0}[i]$. It is therefore crucial to show that $\bigoplus\limits_{i} x_{\theta_0}[i]$ is independent of the adversary’s state $\rho$; otherwise, the outputs of the two experiments could differ by a large trace distance due to underlying correlations.

\paragraph{Efficiently Checkable Predicate}

The above discussion was carried out in the ideal world where $\SubEnc(\theta)$ is replaced with $\SubEnc(0)$. 
In the real world, however, $\cA$ receives $\SubEnc(\theta)$. 
One may naively argue that since the real world is computationally indistinguishable from the ideal world, the fact that $\bigoplus_{i:\theta[i]=0} x[i]$ is random in the ideal experiment implies the same in the real experiment. 
While this conclusion is ultimately true in our setting, justifying it is non-trivial, as computational indistinguishability does not in general imply statistical indistinguishability.

To resolve this issue, we apply the XOR extractor lemma directly in the real world. 
For this purpose, it suffices to guarantee that $C_{\theta_0}$ has the structure described in \cref{equ:quantum_source}. 
At first glance, this is unclear, since in the real world $\theta$ is only computationally hidden from $\cA$. 

The crucial observation is that whether $C_{\theta_0}$ satisfies the structural condition of \cref{equ:quantum_source} constitutes an \emph{efficiently checkable predicate}. 
Indeed, we can apply the Hadamard measurement to $C_{\theta_0}$ and verify whether the resulting string has small Hamming distance from $x'$. 
This procedure runs in polynomial time and determines whether the desired structural condition holds.

Suppose, for contradiction, that the predicate is satisfied with noticeably different probabilities in the ideal and real worlds. 
Then we can construct a reduction that receives either $\SubEnc(\theta)$ or $\SubEnc(0)$, simulates both the challenger and the adversary, and evaluates the above predicate. 
If the acceptance probabilities differ noticeably, this yields a distinguisher against the semantic security of the encryption scheme. 
Therefore, the predicate must hold with negligibly close probabilities in the two experiments.

Putting everything together, we conclude that the structural condition transfers from the ideal world to the real world. 
Consequently, the XOR bit remains information-theoretically hidden even in the real world.

\paragraph{Constant Rate Extractor}
We now turn to proving constant-rate certified deletion, where the message length satisfies $k = \Theta(n)$. 
The first step is to replace the XOR extractor $\{0,1\}^n \rightarrow \{0,1\}$ with a high-rate extractor $\{0,1\}^n \rightarrow \{0,1\}^k$.
Let us now move to show constant-rate certified deletion theorem. 
\cite{BR} already suggest a candidate high-rate extractor $\regR \in \F^{k \times n}$, where $\F$ is a polynomial bounded sized finite field. $\regR$ is a Maximum Distance Separable code, i.e., every $k$ columns of $\regR$ are linearly independent. 
Since such a matrix does not exist over the binary field for our parameters, we move to a larger field and replace the Hadamard basis with the Fourier basis over $\F$.
Accordingly, the first step of the experiment is modified as follows: 
Sample $x\leftarrow \F^{2n}$, balanced $\theta \leftarrow \{0,1\}^{2n}$, and initialize $\cA$ with
\begin{equation*}
    \QFT^{\theta}\ket{x}, m + \regR x_{\theta_0}, \SubEnc(\theta),
\end{equation*}
where $x_{\theta_0}$ is the substring of $x$ where $\theta[i]=0$. 
A natural attempt would be to guess an entire string $b \leftarrow \F^k$ to replace $m + \regR x_{\theta_0}$. 
However, this would incur an exponential loss in the guessing step, as the success probability would be $|\F|^{-k}$. 
Consequently, even if the modified experiments were negligibly close, this would not imply that the original experiments have negligible trace distance.

\paragraph{Position-wise Guessing}
We now explain how to overcome this issue. 
For simplicity, suppose we only require deletion of the first field element of the message. 
If we directly applied the proof technique of \cite{C:BarKhu23}, we would still need to guess the entire message string. 
In contrast, our approach proceeds one field element at a time.
We aim to show that $\regR x_{\theta_0}$ is uniformly random and independent of the adversary's view at that single field position. However, the adversary now has a partial leakage of $x$ and $\theta$. For example, if $b$ replaces the first position, then the adversary's view includes $\left(m+\regR x_{\theta_0}\right)[2:k]$, which reduces the possible value of $x$ and $\theta$. Hence, even if the adversary passes the verification, the state in the register $C_{\theta_0}$ is no longer guaranteed to have the structure of \cref{equ:quantum_source}.

\paragraph{Unentangling the Message Carrying Part}
To restore the cut-and-choose argument, we must ensure that $\theta$ is hidden from the adversary at the crucial point of the proof. 
To this end, we introduce fresh randomness $r \leftarrow \F^k$ that computationally one-time pads the message leakage. 
Concretely, we modify the initialization of $\cA$ to
\begin{equation}\label{tech:construction}
    \QFT^{\theta}\ket{x}, \quad m + \regR x_{\theta_0} + r, \quad \SubEnc(\theta,r).
\end{equation}

The randomness $r$ only provides computational hiding, but this suffices for our purposes. 
Intuitively, the presence of $r$ allows us to delay the sampling of $\theta$ to a later point in the game. For the time being, we switch to the ideal world where $\SubEnc(\theta,r)$ is replaced with $\SubEnc(0,0)$ and continue the analysis. 

We now describe how this delay is achieved. 
First, we purify the state $\QFT^{\theta}\ket{x}$ into an EPR pair $\sum_x \ket{x}_C \ket{x}_A$. 
In the original sampling procedure, $\theta$ is chosen immediately and the standard-basis positions are measured to obtain $x_{\theta_0}$, from which $\regR x_{\theta_0} + r$ is computed and given to the adversary.

Instead, we perform an equivocation argument. 
Since $r$ is uniform and independent in the ideal world (where $\SubEnc$ encrypts zero), we may equivalently sample $\ell \gets \F^k$ uniformly and set
\[
    r := \ell - m - \regR x_{\theta_0}.
\]
The adversary then receives $\ell$ together with the register $A$. 
This preserves the joint distribution, as $r$ remains uniformly distributed and independent.

In this alternative generation procedure, we may first prepare the EPR pair $\sum_x \ket{x}_C \ket{x}_A$, send the register $A$ and the random string $\ell$ to the adversary, and postpone the sampling of $\theta$ and the computation of $x_{\theta_0}$. 
Since the operations of sampling $\theta$ and defining $r$ are local to the challenger, they commute with $\cA$'s computation of $x'$. 
We may therefore delay the choice of $\theta$ until after $\cA$ outputs $x'$.

\paragraph{Structure of the Post-Verification State }
Working in the ideal world where $\SubEnc(\theta,r)$ is replaced with $\SubEnc(0,0)$, 
the above delay allows us to sample $\theta$ only after $\cA$ outputs $x'$. 
This restores the cut-and-choose argument in the constant-rate setting.

In particular, we may choose $\theta$ and then measure the standard-basis positions to obtain $x_{\theta_0}$ only after receiving $x'$. 
Therefore, if the adversary passes the verification step, the joint state of the challenger and the adversary must be negligibly close to a state of the following form:
\begin{equation}\label{eq:non-restricted-state}
    \sum_{s: h(s)\text{ is small}} 
    \QFT^{-1} \ket{x'+s}_{C_{\theta_0}} \ket{\psi_s}_A.
\end{equation} \arnie{$\ket{\psi_s}$ is a joint state of $A$ and the adversary's inner state, should we change the notation $A$?}\synote{I think it is a good idea. Maybe $\bar A$?}
This is the analogue of \cref{equ:quantum_source} in the single-bit case.
We now incorporate the linear extractor $\regR$ into the analysis. 
Recall that the masking term involves $\regR x_{\theta_0}$, which corresponds to measuring $C_{\theta_0}$ in the computational basis and applying the linear map $\regR$.
To analyze this coherently, we first apply the linear transformation
\[
    \ket{y}_{C_{\theta_0}} \mapsto \ket{y}_{C_{\theta_0}} \ket{\regR y}_{C_2},
\]
where $C_2$ is an auxiliary register. 
We then measure $C_2$ and obtain an outcome $\ell' \in \F^k$.
In the position-wise guessing strategy, we only guess the first coordinate of $\regR y$. 
Accordingly, we decompose $\regR$ into its first row $\regR_1$ and the remaining rows $\regR_{-1}$. 
Measuring $C_2$ except for the first coordinate amounts to fixing the linear constraint
\[
    \regR_{-1} y = \ell'_{-1}.
\]

Conditioned on this outcome, the joint state becomes
\begin{equation}\label{eq:restricted-state}
    \sum_{s: h(s)\text{ is small}} 
    \sum_{y: \regR_{-1} y = \ell'_{-1}}
    \omega_{|\F|}^{-\langle y, x'+s\rangle}
    \ket{y}_{C_{\theta_0}} \ket{\psi_s}_A.
\end{equation}

Intuitively, the superposition is now restricted to a coset of the linear subspace defined by $\regR_{-1} y = \ell'_{-1}$. 
This restriction reflects the leakage of $k-1$ field elements of $\regR x_{\theta_0}$ to the adversary. 
The remaining question is whether the first coordinate $\regR_1 y$ is still random under this constraint.

\paragraph{Extracting Randomness from the Restricted State}
We now argue that the first coordinate $\regR_1 y$ remains uniformly random even under the linear constraint $\regR_{-1} y = \ell'_{-1}$.

The starting point is the high-rate extraction lemma of \cite{BR}, which (for our choice of $\regR$) guarantees that measuring the register $C_{\theta_0}$ in the computational basis in the non-restricted state \cref{eq:non-restricted-state} yields a string $y$ such that the value $\regR y \in \F^k$ is (perfectly) uniform and independent of the adversary's state. 
In particular, the joint distribution of $(\regR_{-1} y, \regR_1 y)$ is uniform over $\F^{k-1}\times \F$, and hence $\regR_1 y$ is uniform even conditioned on $\regR_{-1} y$.

Equivalently, consider the following two-step measurement on \cref{eq:non-restricted-state}: first compute and measure $\regR_{-1} y$ to obtain $\ell'_{-1}$, and then compute and measure $\regR_1 y$ to obtain $\ell'_1$. 
Since $(\regR_{-1} y, \regR_1 y)$ is uniform, the outcomes $\ell'_{-1}$ and $\ell'_1$ are independent, and conditioning on $\ell'_{-1}$ does not affect the distribution of $\ell'_1$.

Finally, observing that measuring $\regR_{-1} y$ in \cref{eq:non-restricted-state} yields exactly the restricted state in \cref{eq:restricted-state}, we conclude that in the restricted state the value $\regR_1 y$ remains statistically close to uniform and independent of the adversary’s state, as required for the position-wise guessing argument.

\paragraph{Finding an Efficiently Checkable Predicate}
It remains to argue that the structural condition captured by the restricted state in \cref{eq:restricted-state} can be verified efficiently. 
This is required in order to transfer the argument from the ideal world to the real world via a reduction.
Recall that the restricted state has the form
\begin{equation}\label{eq:restricted-again}
    \sum_{s: h(s)\text{ is small}}
    \sum_{y: \regR_{-1} y = \ell'_{-1}}
    \omega_{|\F|}^{-\langle y, x'+s\rangle}
    \ket{y}_{C_{\theta_0}} \ket{\psi_s}_A.
\end{equation}
The inner summation ranges over a coset of the linear subspace defined by $\regR_{-1} y = \ell'_{-1}$.

To test whether the state is close to this form, we apply the Fourier transform over $\F$ to the register $C_{\theta_0}$. 
A direct calculation shows that
\begin{equation}\label{eq:QFTed-restricted-structure}
    \QFT 
\sum_{y: \regR_{-1} y = \ell'_{-1}}
    \omega_{|\F|}^{-\langle y, x'+s\rangle}
    \ket{y}
\;\propto\;
\sum_{w \in \rowSpan(\regR_{-1})}
    \omega_{|\F|}^{\langle \alpha, w\rangle}
    \ket{x'+s+w},
\end{equation}
where $\alpha$ depends only on $\ell'_{-1}$.
Thus, after the Fourier transform, the superposition is supported on shifts of $x'+s$ by elements of the row span of $\regR_{-1}$.

At this point, the MDS property of $\regR$ becomes crucial. 
Since $s$ has small Hamming weight and $\regR$ has maximum distance, the error vector $s$ can be efficiently recovered from $x'+s+w$ using the error-correction capability of the code defined by $\regR$. 
Standard linear-algebraic operations over $\F$ therefore allow us to map the above superposition to a state of the form
\begin{equation}\label{eq:entangled-only-wit-s}
        \sum_{s: h(s)\text{ is small}} \ket{s}_{C_{\theta_0}} \ket{\psi_s}_A.
\end{equation}

Consequently, we obtain an efficiently checkable predicate: 
we apply the above transformation and accept if and only if the resulting state is supported on small-weight $s$. 
All required operations—Fourier transforms and linear transformations over $\F$—run in polynomial time.

As in the single-bit case, if this predicate were satisfied with noticeably different probabilities in the ideal and real worlds, this would yield a distinguisher against the semantic security of the encryption scheme. 
Therefore, the structural condition holds with negligibly close probability in both worlds.

\paragraph{Why a Naive Check Is Not Sufficient}
Our efficiently checkable predicate is implemented via a carefully designed unitary that maps the state in~\cref{eq:QFTed-restricted-structure} to the structured form in~\cref{eq:entangled-only-wit-s}. 
Intuitively, this transformation enforces the exact structural condition required on the register $C_{\theta_0}$ from which randomness is extracted.

A natural alternative would be to apply the $\QFT$ to $C_{\theta_0}$, measure in the computational basis, and decode the outcome to recover the error parameter $s$. 
Since an adversary that passes the verification produces a state consistent with some small $s$ and coding word $w$, this naive check would accept with overwhelming probability.

However, this is not sufficient. 
Such a test only verifies that the marginal distribution of $C_{\theta_0}$ is consistent with some pair $(s,w)$. 
It does not restrict how the adversary’s register may be entangled with these parameters. 
While entanglement with respect to the small error parameter $s$ is allowed by the extraction lemma, entanglement with respect to the coset parameter $w$ is not.
If the adversary’s state remains entangled with $w$, then the extracted randomness may still depend on $w$ and hence on the adversary’s register. 

Our unitary therefore enforces a stronger structural condition: it eliminates the dependence on $w$ and ensures that any remaining entanglement is confined solely to $s$. 
This additional control is essential for applying the extraction lemma and guaranteeing that the extracted randomness is independent of the adversary’s state.

\ifnum\llncs=0
\begin{remark}[Number of qubits]
    It is important to identify which messages should be sent through a quantum channel, as this will affect the rate we are interested in. One might think that the classical message $m + \regR x_{\theta_0}+r$ in \cref{tech:construction} should be sent using a quantum channel, since in our security proof we consider a purification of it.
    We want to show that the real-world view is indistinguishable from the purification view. However, if we use a classical channel in the real world, then the two views are in fact distinguishable by the adversary, since classical and quantum channels are physically realized by entirely different devices. Fortunately, it is still acceptable to use a classical channel to send $m + \regR x_{\theta_0}+r$. Indeed, consider a hybrid world that is equivalent to the real world, except that we send $\ket{m + \regR x_{\theta_0}+r}$ through a quantum channel. In this setting, our results show that no efficient adversary wins the game. Now, we want to show that this implies no efficient adversary wins the game in the real world. Suppose (by contradiction) that an efficient adversary $\cA$ wins the game in the real world. Then, define $\cA'$ as follows:
    \begin{enumerate}
        \item Measure $\ket{m + \regR x_{\theta_0}+r}$ in computational basis.
        \item Run $\cA$ on the same input except that $m + \regR x_{\theta_0}+r$ is a classical string.
    \end{enumerate}
    Then, it is clear that $\cA'$ wins the game in the hybrid world.
\end{remark}
\fi

\subsection{An alternative approach to coset state CD}
Our technique for upgrading the BB84 certified deletion theorem to the constant-rate regime is to identify an efficiently checkable and randomness extractable predicate, even when the partial leakage is entangled with the BB84 state. Fortunately, for coset-state-based constructions, there is a clever way to circumvent the issue of entanglement. We explain this by first recalling the main theorem on certified deletion for coset states \cite{BGK+}.

Let $\Zs$ be a post-quantum subspace-hiding distribution, i.e., there is an efficient simulator $ \Zs(S, T, u, w) \approx_c \Sim(G, T, u, x)$ for random subspace cosets $S^\perp + w \subset G+x$ and $S + v \subset T+u$. Then, for any adversary $\cA$ and any message $m \in \{0,1\}$, we define the following experiment $\EXP^{\cA}(m)$ as follows:

\begin{itemize}
    \item  Sample a random $n/2$-dimension subspace $S$ of $\F^n_2, v \leftarrow \co(S),w \leftarrow \co(S^\perp)$, random $T + u$ such that $T + u \subseteq S + v$ and initialize $\cA$ with
    \begin{equation*}
         \ket{S_{v,w}}, \quad \Zs(S, T, u, w, m \oplus\langle v , 1 \rangle).
    \end{equation*}
    \item $\cA$'s output is parsed as a string $\Tilde{z} \in \F^n_2$ and a residual state $\rho$.
    \item (Verification step): If $\Tilde{z} \in S^\perp + w$, then output $\rho$. Otherwise, output a special symbol $\bot$.
\end{itemize}
\cite{BGK+} shows that for any efficient adversary $\cA$, $\TD(\EXP^\cA(0), \EXP^\cA(1)) \leq \negl(\lambda)$. 

To upgrade the construction to the constant-rate regime, we again enlarge the field to $\F$ and replace the XOR extractor with an arbitrary full-rank matrix $\regR$.
The construction becomes
\[
\ket{S_{v,w}}, \quad \Zs(S, T, u, w, m + \regR v).
\]
Similarly, we would like to apply position-wise guessing in our analysis. For example, to argue that the first field element of the message is hidden, we sample a uniform element $b$ during the hybrid argument and replace the first coordinate of $m+\regR v$ with $b$.
Next, if we purify $\ket{S_{v,w}}$ to
\begin{equation}\label{equ:tech_overview_S}
    \sum_{v_0 \in T/S}\ket{v_0}_C\ket{S_{v_0+u,w}}_A
\end{equation}
as in \cite{BGK+}, the resulting state has the partial leakage $(m + \regR (v_0+u))[2:k]$ entangled with the purification state in \cref{equ:tech_overview_S}.
Instead, we purify $\ket{S_{v,w}}$ to 
\begin{equation}\label{equ:tech_overview_S2}
    \sum_{v_{\ker} \in \ker \regR_{-1}|_{T/S}}\ket{v_{\ker}}_C\ket{S_{v_{\ker}+v_\img + u,w}}_A.
\end{equation}
It is clear that $(m + \regR (v_{\ker}+ v_\img +u))[2:k]$ is independent of $v_{\ker}$ and hence is not entangled with the purification state in \cref{equ:tech_overview_S2}. In general, to argue that the $i$-th element of the message is hidden, we purify $v_{\ker}$ into a superposition over the subspace $\ker \regR_{-i}|_{T/S}$.

Define $S^+:= \LAspan{S, \ker \regR_{-i}|_{T/S}}$.
Following a similar calculation to that in \cite{BGK+}, it suffices to show that the adversary $\cA$ cannot output a vector $\widetilde{z}$ such that
\begin{equation}\label{equ:tech_overview_win}
    \widetilde{z} \in \left( S^\perp+w \right) \Big\backslash \left( (S^+)^\perp + \widetilde{v}_{\ker} + w \right),
\end{equation}
given some information (which we omit here). It is clear that if the adversary output a vector in \cref{equ:tech_overview_win}, then there exists a new adversary $\cA'$ output $\widetilde{z}'$ such that
\begin{equation*}
    \widetilde{z}' \in S^\perp \big\backslash (S^+)^\perp,
\end{equation*}
given $G$, $S^+$, and $T$. We conclude the proof by switching the sampling procedure. In the original procedure, $S$ and $T$ are sampled first, which uniquely determines $\ker(\regR_{-1}|_{T/S})$ and hence $S^+$. Equivalently, we may first sample $G$, $S^+$, and $T$, and then sample $S$ conditioned on these choices. As a result, we may defer sampling $S$ until after the adversary outputs $\widetilde{z}'$. This allows us to conclude that $\cA'$ cannot win with non-negligible probability.

\paragraph{Why clever purification does not work for BB84-based construction:} \arnie{coset state?}
At first glance, it may seem that the proof for the BB84-based construction can be obtained by following the same approach as for the coset-state-based construction. However, this approach fails. A BB84-based construction simply chooses $S$ as a span of some computational basis. Let $B:=\LAspan{v_1,\dots, v_j}$ be the maximal subset spanned by computational basis vectors such that $B \leq S^+$. It is clear that $G^\perp \leq S \leq B \leq S^+$. Furthermore, once $S^+$ is given to the adversary $\cA'$, it is easy to find $B$. The adversary $\cA'$ is defined as follows:
\begin{enumerate}
    \item If $B \lneq S^+$, let $H := B$. \\ If $B= S^+$, randomly choose $i \in [j]$, let $H := \LAspan{v_1,\dots,v_{i-1},v_{i+1},\dots, v_{\frac{3n}{4}-k+1}}$.
    \item Uniformly sample a vector $\widetilde{z}' \in H^\perp$ and return $\widetilde{z}'$.
\end{enumerate}
If $B \lneq S^+$, then $(S^+)^\perp \lneq H^\perp \leq S^\perp$. If $B = S^+$, the probability that $S \leq H \lneq S^+$ is $p:=1-\frac{\frac{n}{2}}{\frac{3n}{4}-k+1}$. Condition on this event, we have $(S^+)^\perp \lneq H^\perp \leq S^\perp$. Hence, if we choose a vector in $H^\perp$ uniformly at random, we get $\widetilde{z}' \in S^\perp \big\backslash (S^+)^\perp$ with probability at least $1-\frac{1}{|\F|}$. The overall winning probability of $\Adv'$ is $p\left(1-\frac{1}{|\F|}\right)$, which is inverse polynomial.

\subsection{Publicly Verifiable Deletion with Constant-Rate.}
We then show how to add public verifiability to certified deletion based on BB84 and coset states, assuming the shortest integer solutions (SIS) assumption. Our transformation preserves the \emph{constant-rate} property. Combined with our constant-rate certified deletion theorem, this yields certified deletion with public verifiability (PVD) without asymptotically degrading the rate.

We begin by recalling the key idea of \cite{KNY}, which achieves PVD for BB84 assuming one-way functions, albeit at a low rate.
For simplicity, we use the binary encoding case (rather than the finite-field case). In BB84-based certified deletion, one prepares $\ket{x}_\theta$ for $x,\theta\in\{0,1\}^n$. In the standard design, the message is masked using the bits $x_i$ in the computational-basis positions $\theta_0:=\{i:\theta[i]=0\}$, while the deletion certificate is derived from the Hadamard-basis positions $\theta_1:=\{i:\theta[i]=1\}$. A first observation in \cite{KNY} is that one may \emph{swap the roles of the two bases} without loss of generality; we may therefore derive the certificate from $\theta_0$ (and use $\theta_1$ for masking). Under this convention, the deletion certificate is a string $x'\in\{0,1\}^n$, and it is verified by checking $x'[i]=x[i]$ for all $i\in\theta_0$.
Since the verification requires $x$ and $\theta$, it is only privately verifiable.

To make this check publicly verifiable, \cite{KNY} ``signs'' each BB84 qubit with a digital signature. Concretely, for each position $i$, they attach (coherently) a digital signature $\sigma_i$ on the message $(i,x[i])$. Upon measuring in the computational basis, one obtains $x[i]$ together with a valid signature $\sigma_i$ on $(i,x[i])$. The deletion certificate consists of the revealed measurement outcomes together with the corresponding signatures, and can be verified using the public verification key.\footnote{The original construction uses one-time signatures with fresh keys per position; for exposition we instead sign $(i,x[i])$ using a standard signature scheme.}
Security hinges on the fact that an adversary cannot ``flip'' a signed bit while passing verification. Specifically, consider any $i\in\theta_0$. Since the encoding on these positions is in the computational basis, the adversary can obtain at most a signature on $(i,x[i])$, but is never given a signature on $(i,\bar x[i])$ (even in the superposition). Therefore, by signature unforgeability, no adversary can produce a purported certificate that contains $\bar x[i]$ at some checked position $i\in\theta_0$ together with a valid signature. Hence any certificate that passes verification must be consistent with the underlying $x$ on all checked coordinates, as required.

This solution, while elegant, comes at a low rate: in \cite{KNY} each qubit is accompanied by a signature register, and thus the overhead is at least $\lambda$ bits per qubit. Moreover, the bit-wise approach does not extend in any straightforward way to coset states, where the state is not a tensor product of single-qubit encodings and one cannot isolate positions that behave as independent classical bits.

To overcome both issues, we adopt an alternative viewpoint of BB84 states that naturally generalizes beyond product structure. Rather than viewing $\ket{x}_\theta$ as a tensor product of computational-basis states and Hadamard-basis states, we view it as a uniform superposition over strings satisfying a \emph{policy} determined by $(x,\theta)$. Concretely,
\[
\ket{x}_\theta \;=\; \frac{1}{\sqrt{|T^{x,\theta}|}} \sum_{m\in T^{x,\theta}} \ket{m},
\qquad
T^{x,\theta} := \{ m\in \{0,1\}^n : m[i]=x[i]\ \text{if}\ i\in \theta_0\}.
\]
This suggests signing the \emph{entire} string $m$ (in superposition) instead of signing each qubit individually. If the signature is compact, then the signature register is short and hence the constant-rate property is preserved.

Let us formalize the desired guarantee. We would like to enforce that any successful adversary outputs $x'$ consistent with $x$ on the checked coordinates, i.e., $x'[i]=x[i]$ for all $i\in \theta_0$. Equivalently, given a signed superposition of the form
\[
\frac{1}{\sqrt{|T^{x,\theta}|}} \sum_{m\in T^{x,\theta}} \ket{m,\sigma_m},
\]
we want to prevent the adversary from producing a pair $(m',\sigma')$ such that $m'\notin T^{x,\theta}$ and $\sigma'$ verifies as a valid signature on $m'$. 
This is not captured by standard EUF-CMA signatures, since the adversary is given (in superposition) exponentially many valid signatures $\{\sigma_m\}_{m\in T^{x,\theta}}$ at once, whereas EUF-CMA only grants polynomially many signing queries.

We therefore turn to a signature primitive that enforces a signing policy at the key level. Concretely, we use \emph{constrained signatures} (CS)~\cite{TCC:Tsabary17,EC:KitMorYam25}.
In CS, a signing key $\sk_f$ is associated with a predicate (or function) $f$, and it can sign a message $m$ only if $f(m)=1$. 
We can view $\sk_f$ as a compact ``compressed representation'' of the ability to sign \emph{all} messages $m$ satisfying $f(m)=1$.
Single-key unforgeability requires that, given $\sk_f$, it is infeasible to output a valid signature on any $m^*$ with $f(m^*)=0$, while signatures are publicly verifiable. Importantly for us, CS can be instantiated for general circuits from the SIS assumption, and the resulting signatures are \emph{compact}, i.e., their length is essentially independent of the message length.

Using CS, one can sign the BB84 state by coherently signing each branch $m$ in the superposition with a key $\sk_{f_{x,\theta}}$, where $f_{x,\theta}(m)=1$ iff $m\in T^{x,\theta}$. If an adversary $\mathcal A$ outputs $(m',\sigma')$ with $m'\notin T^{x,\theta}$ and $\sigma'$ valid for $m'$, then this immediately yields a forgery against CS for $f_{x,\theta}$, since $f_{x,\theta}(m')=0$.

However, using a key $\sk_{f_{x,\theta}}$ that depends on $(x,\theta)$ is problematic in our certified-deletion applications: in the relevant reductions, $(x,\theta)$ are hidden (and the simulator may only receive a BB84/coset state instance as a black box), so we must be able to generate the signed state \emph{without} knowing $(x,\theta)$. To address this, we instead sign using the ``unconstrained'' key $\sk_{f_{=1}}$ for the constant predicate $f_{=1}= 1$, and rely on \emph{function privacy} of CS to argue that, for messages satisfying the policy, signing under $\sk_{f_{=1}}$ is indistinguishable from signing under $\sk_{f_{x,\theta}}$.

At a high level, function privacy states that for any functions $f,g$ and any message $m$ with $f(m)=g(m)$, a signature on $m$ produced under $\sk_f$ is statistically indistinguishable from one produced under $\sk_g$.
To lift this guarantee to our setting---where we sign a \emph{quantum} superposition over exponentially many messages---it is conceptually convenient to first work in the random oracle model (ROM).
In the ROM, we can sign the quantum state coherently by deriving the signing randomness $H(m)$ for each branch $m$ via the random oracle $H$.
This way, the signing randomness is uniformly random and independent across (distinct) branches.
We then coherently sign the message using this randomness and uncompute it.
This lets us invoke function privacy on each branch, which in turn yields statistical indistinguishability of the resulting signed states.
Of course, a truly random oracle cannot be instantiated efficiently. However, our signing procedure uses the oracle only twice (to compute $H(m)$ and to uncompute it).
We therefore replace $H$ by a $4$-wise independent hash family and invoke Zhandry's lemma~\cite{C:Zhandry12}, which states that for any algorithm making $q$ quantum queries, a $2q$-wise independent hash function is statistically indistinguishable from a random oracle.

%% file: 3_Preliminary.tex
\section{Preliminaries}
Let $\lambda$ be a security parameter. Let $\negl(\cdot)$ denote an arbitrary negligible function; that is, a function such that for any $c\in \N$, there exists an $N \in \N$ such that for all $n\geq N$, $\negl(n) \leq {n^{-c}}$. Let $\F$ denote a finite field. We consider a family of fields $\{\F_\lambda\}_{\lambda \in \N}$ of polynomial bounded size; that is, $f(\lambda):=|\F_\lambda|$ is bounded above by some polynomial in $\lambda$. Furthermore, in our construction, we assume that $\F_\lambda$ is a prime field, for all $\lambda$. That is, $|\F_\lambda|$ is a prime number. We define an indicator $\mathbbm{1}_{\phi}$, where $\phi$ is either true or false. If $\phi$ is true, then $\mathbbm{1}_{\phi}=1$. Otherwise, $\mathbbm{1}_{\phi}=0$.  \HZZ{add $O, \Omega,\theta$}

We define $[n]:=\{1,2,\dots,n\}$ and $[n]_0:=\{0, 1,2,\dots,n\}$.
Given a string $x \in \F^n$. Let $h(x)$ denote the \emph{Hamming weight} of $x$, which counts the number of non-
zero indices of $x$. Let $x_1,x_2 \in \F^n$. The \emph{Hamming distance}, denoted by $h(x_1,x_2)$, is defined as the number of indices at which $x_1$ and $x_2$ differ. Given an index $i \in [n]$, let $x[i]$ denote the $i$-th bit of $x$. Similarly, let $x[-i]$ denote the sub-string obtained from $x$ by deleting its $i$-th bit. Furthermore, given a subset $P \subseteq [n]$, let $x[P]$ denote the sub-string obtained from $x$ by keeping only the bits whose indices are in $P$. Similarly, let $x[-P]$ denote the sub-string obtained from $x$ by keeping only the bits whose indices are not in $P$. Besides, given $1 \leq a \leq b \leq n$, we define $x[a:b]:=x[P]$, where $P = \{a,a+1,\dots,b\}$.

Let $\theta \in \{0,1\}^n$ be a binary string. Let $\theta_0 := \{ i : \theta[i] = 0 \}$ and $\theta_1 := \{ i : \theta[i] = 1 \}$ denote the sets of positions $i$ where $\theta[i]$ equals $0$ and $1$, respectively. Let $x \in \mathbb{F}^n$. Define $x_{\theta_0}$ and $x_{\theta_1}$ to be the subvectors of $x$ consisting of the entries $x[i]$ for $i \in \theta_0$ and $i \in \theta_1$, respectively.

Let $\regR \in \F^{k\times n}$ be a matrix. For all $i \in [k]$, let $\regR_{-i}$ denote the $(k-1)\times n$ matrix obtained from $\regR$ by removing its $i$-th row. Furthermore, we always use capital bold letter to represent a matrix.

We define $\TD(\cdot,\cdot)$ to denote the \emph{trace distance} of two inputs, overloading the notation so that the inputs may be either quantum states or random variables. The trace distance between states $\rho, \tau$, denoted $\TD(\rho, \tau)$, is defined as
\[
    \TD(\rho, \tau)
    := \frac{1}{2} \lVert \rho - \tau \rVert_1
    := \frac{1}{2} \Tr \left( \sqrt{(\rho - \tau)^\dagger (\rho - \tau)} \right).
\]

Let $\mathscr{A}$ be a class of adversaries.\footnote{Following \cite{BGK+}, we assume that $\mathscr{A}$ is closed under constant-factor increases in space and time.} We say that two families of distributions $\{\cD_0\}_\lambda$ and $\{\cD_1\}_\lambda$ are indistinguishable against $\mathscr{A}$, denoted by $\cD_0 \approx_{\mathscr{A}} \cD_1$, if for any adversary in $\mathscr{A}$, its advantage in the following game is negligible in $\lambda$:
\begin{enumerate}
    \item The challenger samples $s_0 \leftarrow \cD_0$, $s_1 \leftarrow \cD_1$, and a bit $b \leftarrow \{0,1\}$. It then sends $s_b$ to the adversary.
    \item The adversary outputs a bit $b'$.
\end{enumerate}
The adversary's advantage is defined as $\left|\Pr[b=b']-\frac{1}{2} \right|$.

\subsection{Linear Algebra Preliminaries}

\begin{definition}[Properties of Extractors]\label[definition]{prop_extract}
    Let $\F$ be a finite field and consider some integer $k \leq n$. Let $\regR \in \F^{k \times n}$ be a matrix. We now define some properties for an extractor $\regR$.
    \begin{itemize}
        \item Full-rank: There exist $k$ columns of $\regR$ that are linearly independent.
        
        \item Maximum Distance Separable (MDS): Every set of $k$ columns of $\regR$ are linearly independent.
        \item Efficiently Decodable: View $\regR$ as an encoder. We say that $\regR$ can correct up to $r$ errors if there exists a deterministic decoder $\cD: \F^n \rightarrow \F^k$ such that
        \begin{equation*}
            \cD(m\regR + e) = m,
        \end{equation*}
       for all $m\in \F^k$ and all $e \in \F^n$ with $h(e) \leq r$. An efficient decoder requires that $\cD$ runs in polynomial time.
    \end{itemize}
    \HZZ{Do (1) implies (3)?}
\end{definition}

\begin{example}
    Reed–Solomon codes satisfy all the properties in \cref{prop_extract} as long as $|\F| \geq n$.
\end{example}

Let $(\F^n, \cdot)$ be an n-dimensional vector space equipped with the dot product defined as follow:
\begin{center}
    Let $v:=(v_1,\dots,v_n), w:=(w_1,\dots,w_n)  \in \F^n$. Then, $v\cdot w := \sum\limits_{i=1}^n v_i \cdot w_i$.
\end{center}
Also, for each $i \in [n]$, let $e_i \in \F^n$ denote the $i$-th standard basis vector, which is the zero vector except for a 1 in the $i$-th coordinate.

Let $V$ be a finite-dimensional vector space. A set $S \subseteq V$ is a \emph{subspace} if it is closed under addition and scalar multiplication, denoted $S \leq V$. A \emph{hyperplane} of $V$ is a subspace of a vector space $V$ that has dimension $|V|-1$. Let $S_0, S_1 < V$ be subspaces. Suppose $S_0 \cap S_1 = \{0\}$; then their direct sum is written as $S_0 \oplus S_1$. This means that for any vector $v \in S_0 + S_1:= \{s_0+s_1: s_0 \in S_0,\, s_1 \in S_1\}$, there exist unique vectors $s_0 \in S_0$ and $s_1 \in S_1$ such that $v = s_0+s_1$.

Let $V$ and $W$ be finite-dimensional vector spaces, let $S \le V$, and let $\regR:V \mapsto W$ be a linear function. The image of $\regR$ under $S$, denoted by $\regR(S)$, is the subspace
\begin{equation*}
    \{w \in W: \exists s \in S \,s.t.\, \regR(s)=w \} \le W.
\end{equation*}
The \emph{fundamental theorem of linear maps} says that
\begin{equation*}
    \dim V = \dim \im \regR + \dim \ker \regR.
\end{equation*}

For a subspace $S \leq V$, the \emph{dual subspace} is defined as $S^\perp:=\{v \in V: \forall u \in S,\, v\cdot u = 0\}$. Note that unlike in traditional inner product spaces, $S^\perp$ does not necessarily form a direct sum complement to $S$; Specifically, it is possible that $S \cap S^\perp \neq \{0\}$.

For any vector $v \in V$, the \emph{coset} of $S$ to which $v$ belongs is defined as $S+v:=\{u: \exists s \in S,\, u=s+v\}$. The set of cosets forms a partition of $V$, with each coset having equal sizes. Also, for any $v,u \in \F^n$, $S+v = S+u $ if and only if $v-u \in S$. A representative coset of $S$, denoted as $\co(S) \leq V$, is a subspace that contains exactly one vector from every coset of $S$. Given $S \leq V$, the representative coset $\co(S)$ is not unique, and hence the definition of $\co(S)$ should be clear from the context. We also denote a representation coset of $S$ as $V/S$.

\begin{lemma}[\cite{BGK+}]\label[lemma]{lem:def_co}
    For subspaces $S, C \leq \F^n$, the following are equivalent:
    \begin{enumerate}
        \item $C$ is a representative coset of $S$.
        \item $\F^n:= S \oplus C$.
    \end{enumerate}
\end{lemma}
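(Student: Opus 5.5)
We prove the two implications separately. The only facts needed are the definitions of a coset, of a representative coset, and of a direct sum, together with the dimension formula for sums of subspaces.

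\emph{(1) $\Rightarrow$ (2).} Suppose $C$ is a subspace containing exactly one vector from every coset of $S$.

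First we show $S \cap C = \{0\}$. Take $c \in S \cap C$. Then $c$ and $0$ both lie in $C$, and both lie in the coset $S+0 = S$. Uniqueness of the representative of that coset forces $c = 0$.

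Next we show $S + C = \F^n$. Take any $v \in \F^n$. The coset $S+v$ contains some $c \in C$, so $v - c \in S$. Hence $v = (v-c) + c \in S + C$.

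Together these give $\F^n = S \oplus C$. Uniqueness of the decomposition is immediate from $S \cap C = \{0\}$: if $s_0 + c_0 = s_1 + c_1$, then $s_0 - s_1 = c_1 - c_0 \in S \cap C$, so both differences vanish.

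\emph{(2) $\Rightarrow$ (1).} Suppose $\F^n = S \oplus C$ with $C \leq \F^n$ a subspace.

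Existence: fix any coset $S+v$ and write $v = s + c$ with $s \in S$, $c \in C$. Then $c - v = -s \in S$, so $c \in S+v$.

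Uniqueness: if $c, c' \in C$ both lie in $S+v$, then $c - c' \in S$. Since also $c - c' \in C$, we get $c - c' \in S \cap C = \{0\}$, so $c = c'$.

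Thus $C$ meets every coset of $S$ in exactly one vector, and it is a subspace by hypothesis. So $C$ is a representative coset of $S$.

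As a sanity check, both conditions force $\dim C = n - \dim S$. Under (2) this is the dimension formula for a direct sum. Under (1), $|C| = |\F|^n/|S|$ because there are exactly $|\F|^n/|S|$ cosets, each contributing one element of $C$.

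There is no real obstacle here. The only point needing care is that the definition of $\co(S)$ requires $C$ to be a subspace, not merely a set of coset representatives. This is why the direct-sum formulation in (2) is phrased with $C \leq \F^n$.
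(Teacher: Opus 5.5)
Your proof is correct. Both directions follow directly from the definitions of a representative coset and of a direct sum, and you rightly use that $C$ is a subspace (so $0 \in C$) when showing $S \cap C = \{0\}$. The paper gives no proof of this lemma; it is imported from \cite{BGK+}, and your argument is the standard elementary verification one would supply.
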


Based on standard linear algebra, $V/S$ typically denotes the quotient space of $V$ by the subspace $S$. However, in our setting, we view $V/S$ as a subspace of $V$, equal to $\co(S)$, as a notational shorthand. Note that this is justified since the quotient space $V/S$ is isomorphic to $\co(S)$. The following example will be used throughout the paper to define $\co(S)$ and $V/S$.
\begin{example}
    Let $V$ be a vector space with dimension $n$. Suppose we are given a subspace $S$ and a basis $\{v_1,\dots,v_n\}$ such that $S:=\LAspan{v_1,\dots,v_m}$ with $m \leq n$. Then, $\co(S) = V/S$ can be defined as $\LAspan{v_{m+1},\dots,v_n}$.
\end{example}

\begin{proposition}\label[proposition]{prop:dual}
The following are properties for dual subspaces and coset.
    \begin{enumerate}
        \item $n = \dim(S) + \dim(S^\perp)$.
        \item $(S^\perp)^\perp = S$.
        \item Consider any subspace $S \le \F^n$ and two distinct representative cosets of $S$, namely $\co(S)_1$ and $\co(S)_2$. Then, for any $v_1 \in \co(S)_1$, there exists a unique $v_2 \in \co(S)_2$ such that $v_1+S = v_2+S$.
        \item Fix any subspaces $A,B \leq \F^n$. If $\LAspan{A,B} = \F^n$, then $(A \cap B)^\perp = \LAspan{A^\perp, B^\perp}$.
        \item Fix any subspaces $S \leq S^+ \leq \F^n$ and any well-defined $\co(S)$ and $\co(S^+)$. Let $P:= S^+ \cap \co(S)$ and $D := \co(S^+)^\perp \cap S^\perp$, then $P \oplus D^\perp = \F^n$.
    \end{enumerate}
\end{proposition}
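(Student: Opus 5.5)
All five items are standard linear algebra over $\F$. I would prove them in order, since later items use earlier ones. The two tools are \cref{lem:def_co}, which lets me write $\F^n = S \oplus \co(S)$, and the non-degeneracy of the dot product. The only step needing care is the dimension count in item 5.

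\textbf{Items 1--3.} For item 1, fix a basis $b_1,\dots,b_m$ of $S$ and consider the linear map $\phi:\F^n\to\F^m$, $v\mapsto (v\cdot b_1,\dots,v\cdot b_m)$. Its kernel is exactly $S^\perp$. It is surjective because the $b_i$ are linearly independent: the matrix with rows $b_i$ has rank $m$. The fundamental theorem of linear maps then gives $n=\dim S+\dim S^\perp$. For item 2, the inclusion $S\subseteq (S^\perp)^\perp$ is immediate from the definition. Applying item 1 twice gives $\dim (S^\perp)^\perp = n-\dim S^\perp=\dim S$, so the two are equal. For item 3, use $\F^n=S\oplus\co(S)_2$ from \cref{lem:def_co} to write $v_1=s+v_2$ uniquely with $s\in S$ and $v_2\in\co(S)_2$; then $v_1+S=v_2+S$. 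For uniqueness, if $v_2,v_2'\in\co(S)_2$ lie in the same coset, then $v_2-v_2'\in S\cap\co(S)_2=\{0\}$.

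\textbf{Item 4.} First, $A^\perp+B^\perp\subseteq (A\cap B)^\perp$ holds trivially. Next, $(\LAspan{A^\perp,B^\perp})^\perp=A^{\perp\perp}\cap B^{\perp\perp}=A\cap B$, using item 2 for the last equality. Taking $\perp$ of both sides and applying item 2 once more yields $\LAspan{A^\perp,B^\perp}=(A\cap B)^\perp$. I would remark that the hypothesis $\LAspan{A,B}=\F^n$ is not actually needed for this argument.

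\textbf{Item 5.} By the identity just established (in its general form), $D^\perp=\LAspan{\co(S^+),S}$. I would then show two things: $P\cap D^\perp=\{0\}$, and $\dim P+\dim D^\perp=n$.

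For the dimension of $P$, use the modular law. Since $S\le S^+$ and $\F^n=S\oplus\co(S)$, every $x\in S^+$ decomposes as $x=s+c$ with $s\in S$ and $c\in\co(S)$. Then $c=x-s\in S^+$, so $S^+=S\oplus(S^+\cap\co(S))$ and $\dim P=\dim S^+-\dim S$.

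For the dimension of $D^\perp$, note $S\cap\co(S^+)\subseteq S^+\cap\co(S^+)=\{0\}$ by \cref{lem:def_co}. Hence $\dim D^\perp=\dim S+(n-\dim S^+)$. The two dimensions sum to $n$.

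For trivial intersection, suppose $p\in P$ and $p=s+c$ with $s\in S$ and $c\in\co(S^+)$. Then $c=p-s\in S^+\cap\co(S^+)=\{0\}$, so $p=s\in S\cap\co(S)=\{0\}$. This gives $P\oplus D^\perp=\F^n$.

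I expect the main obstacle to be the modular-law step $S^+=S\oplus(S^+\cap\co(S))$. It is the only place where the inclusion $S\le S^+$ is genuinely used, so I would write it out explicitly.
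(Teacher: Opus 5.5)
Your proposal is correct. The paper omits items 1--3 as standard, so your arguments there simply fill in what it leaves out.

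\textbf{Item 4.} Here you take a genuinely different route. The paper uses the hypothesis $\LAspan{A,B}=\F^n$ to get $A^\perp\cap B^\perp=(A+B)^\perp=\{0\}$ and $\dim(A\cap B)=\dim A+\dim B-n$. It then upgrades the inclusion $A^\perp+B^\perp\le(A\cap B)^\perp$ to an equality by checking that both sides have dimension $2n-\dim A-\dim B$. You instead use $(X+Y)^\perp=X^\perp\cap Y^\perp$ together with double duality. This proves the identity for all $A,B$, so your remark that the spanning hypothesis is superfluous is right.

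\textbf{Item 5.} The generality of your item 4 pays off here. You get $D^\perp=S+\co(S^+)$ in one line. The paper re-derives it by containment, $S\cap\co(S^+)=\{0\}$, and a dimension match. That dimension match tacitly needs $\co(S^+)^\perp+S^\perp=\F^n$, which is exactly the dual fact your general item 4 supplies.

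The rest of item 5 is the paper's own argument. The trivial-intersection step, writing $p=s+c$ and forcing $c=p-s\in S^+\cap\co(S^+)=\{0\}$ and then $p\in S\cap\co(S)=\{0\}$, appears there essentially verbatim. The only remaining difference is bookkeeping. The paper computes $\dim P=\dim S^++\dim\co(S)-n$, implicitly using $S^++\co(S)=\F^n$, and compares this with $\dim D$. You obtain $\dim P=\dim S^+-\dim S$ from the modular decomposition $S^+=S\oplus P$ and add it to $\dim D^\perp$ directly.
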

\begin{proof}
    The proofs of first to third are standard and are therefore omitted.  
    First, we prove the fourth statement. By the definition of dual space, we have
    \begin{equation*}
        A^\perp, B^\perp \leq (A \cap B)^\perp.
    \end{equation*}
    Hence, $\LAspan{A^\perp, B^\perp} \leq (A \cap B)^\perp$. Furthermore, $A^\perp \cap B^\perp = \{0\}$ implies that 
    \begin{equation*}
        \dim \LAspan{A^\perp, B^\perp} = \dim(A^\perp) + \dim(B^\perp) = 2n - \dim(A) - \dim(B).
    \end{equation*}
    Besides,
    \begin{equation*}
        \dim \left((A \cap B)^\perp\right) = n-(\dim(A)+\dim(B)-n) = 2n - \dim(A) - \dim(B).
    \end{equation*}
    Hence, $(A \cap B)^\perp = \LAspan{A^\perp, B^\perp}$.  
    
    Next, we prove the fifth statement, which is similar to Lemma 5.12 in \cite{BGK+}. We know that
    \begin{equation*}
        D^\perp := (\co(S^+)^\perp \cap S^\perp)^\perp = S \oplus \co(S^+).
    \end{equation*}
    The last equality follows by the fact that $S, \co(S^+) \leq (\co(S^+)^\perp \cap S^\perp)^\perp$, $\dim(S) + \dim(\co(S^+)) = \dim(\co(S^+)^\perp \cap S^\perp)^\perp$, $S \cap \co(S^+) = \{0\}$. Let $z =z'+z'' \in P \cap D^\perp$ such that $z' \in S$ and $z'' \in \co(S^+)$. Since $z \in P \leq S^+$, $z'' := z-z' \in S^+$. But $z'' \in \co(S^+)$ as well, so $z''=0$. Hence, $z=z' \in S$. But $z \in \co(S)$ as well, so $z=0$. Therefore, $P \cap D^\perp = \{0\}$. We conclude the proof by checking the dimension of $P$ and $D^\perp$. Since
    \begin{align*}
        \dim(P) 
        &= \dim(S^+)+\dim(\co(S)) - n \\
        &= \dim(\co(S^+)^\perp) + \dim(S^\perp) -n \\
        &= \dim(D),
    \end{align*}
    we have that $\dim(P)+\dim(D^\perp) = n$.
    
\end{proof}

\noindent By \cref{prop:dual} we define the following:
\begin{definition}\label[definition]{def:re}
Fix a subspace $S$ and two distinct representative cosets of $S$, namely $\co(S)_1$ and $\co(S)_2$. Also, fix $v_1 \in \co(S)_1$. We said that $v_2 \in \co(S)_2$ is a vector that corresponds to the choice of $v_1$, if $v_2 \in \co(S)_2$ is the unique vector such that $v_1+S = v_2+S$.
    
\end{definition}

\subsection{Quantum Preliminaries}
\begin{definition}[Subspace Coset States]
    For any subspace $S \subseteq \F^n$ and vectors $v,w \in \F^n$, the coset state $\ket{S_{v,w}}$ is defined as:
    \begin{equation*}
        \ket{S_{v,w}} := \frac{1}{\sqrt{|S|}}\sum_{z \in S+v} \omega_{|\F|}^{z \cdot w} \ket{z}.
    \end{equation*}
\end{definition}
\begin{proposition}
    The following are the properties for subspace coset state.
    \begin{itemize}
        \item $\QFT^{\otimes n} \ket{S_{v,w}} = \omega^{v\cdot w}\ket{S^{\perp}_{-w,v}}$.
        \item $\left(\QFT^{-1} \right)^{\otimes n} \ket{S_{v,w}} = \omega^{v\cdot w}\ket{S^{\perp}_{w,-v}}$.
        \item  Let $v \in \co(S)$ and $w \in \co(S^\perp)$. Let $(v',w')$ correspond to the choice of $(v,w)$. Then $\ket{S_{v,w}} = \ket{S_{v',w'}}$.
    \end{itemize}
\end{proposition}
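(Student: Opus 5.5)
The first two items are direct Fourier computations, and the third reduces to a comparison of phases. For the first item, I will write $z = s + v$ with $s \in S$ and expand
\[
\QFT^{\otimes n}\ket{S_{v,w}} = \frac{1}{\sqrt{|S|\,|\F|^n}}\sum_{y\in\F^n}\sum_{s\in S}\omega^{(s+v)\cdot w + (s+v)\cdot y}\ket{y}.
\]
The inner sum over $s$ is $|S|$ when $y+w\in S^\perp$ and $0$ otherwise. Substituting $y = t - w$ with $t\in S^\perp$ leaves the phase $\omega^{v\cdot w}\,\omega^{v\cdot (t-w)}$. Normalisation works out because $|S|\,|S^\perp| = |\F|^n$, by \cref{prop:dual} (item 1). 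Comparing with the definition $\ket{S^\perp_{-w,v}} = |S^\perp|^{-1/2}\sum_{y\in S^\perp - w}\omega^{y\cdot v}\ket{y}$ should give the first identity, after tracking carefully the sign convention $\omega^{+}$ versus $\omega^{-}$ in $\QFT$. The second item follows in the same way with $\omega^{-(\cdot)}$, or by applying $(\QFT^{-1})^{\otimes n}$ to the first identity together with $(S^\perp)^\perp=S$ from \cref{prop:dual} (item 2).

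For the third item, I will unpack \cref{def:re}. Since $v'$ corresponds to $v$, we have $v'-v\in S$, so $S+v=S+v'$ and both states are supported on the same set. Since $w'$ corresponds to $w$, we have $w'-w\in S^\perp$. For every $z=s+v$ in the support,
\[
z\cdot w' = z\cdot w + (s+v)\cdot(w'-w) = z\cdot w + v\cdot(w'-w).
\]
Hence $\ket{S_{v',w'}} = \omega^{v\cdot(w'-w)}\ket{S_{v,w}}$, with a phase that does not depend on $z$. The remaining step, and the one I expect to be the real obstacle, is removing this global phase.

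My first attempt will be to use the hypotheses $v\in\co(S)$ and $w\in\co(S^\perp)$ to show that $v\cdot(w'-w)=0$. However, I do not think this can succeed in general, for the following reason. Take $n=2$, $S=\LAspan{e_1}$, so that $S^\perp=\LAspan{e_2}$. Choose the representative cosets $\co(S)_1=\LAspan{e_2}$, $\co(S)_2=\LAspan{e_1+e_2}$, $\co(S^\perp)_1=\LAspan{e_1}$ and $\co(S^\perp)_2=\LAspan{e_1+e_2}$. Set $v=e_2$ and $w=e_1$; the corresponding vectors are $v'=e_1+e_2$ and $w'=e_1+e_2$. Then $v\cdot(w'-w)=e_2\cdot e_2=1\neq 0$.

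So the plan is to prove the third item as equality of quantum states, that is, equality up to the global phase $\omega^{v\cdot(w'-w)}$, or equivalently $\ket{S_{v,w}}\!\bra{S_{v,w}}=\ket{S_{v',w'}}\!\bra{S_{v',w'}}$. This is the sense in which the item is used later, since only density matrices enter trace-distance arguments. I will also note the cases where the equality is exact. If only $v$ changes (so $w'=w$), the phase is trivially $1$. If only $w$ changes, exactness holds precisely when $v\perp(w'-w)$, which is automatic when $v\in S$.
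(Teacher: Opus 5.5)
Your computation for the first two items is the same as the paper's. You expand in the Fourier basis, collapse the sum over $S$ using $\sum_{s\in S}\omega^{s\cdot t}=|S|\,\mathbbm{1}_{t\in S^\perp}$, and renormalise with $|S|\,|S^\perp|=|\F|^n$. The paper writes out only the first identity and says the second is similar; your derivation of the second from the first via $(S^\perp)^\perp=S$ also works.

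You go beyond the paper on the third item, for which the paper gives no argument at all. Your identity $\ket{S_{v',w'}}=\omega^{v\cdot(w'-w)}\ket{S_{v,w}}$ is correct, and so is your $n=2$ counterexample. So the item as written is false and holds only up to a global phase. Reading it as an exact vector identity is the natural reading, since the first two items track phases carefully.

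The counterexample also survives the convention $\co(S^\perp):=\co(S)^\perp$ that the paper uses everywhere (\cref{prop:co_of_perp}). Over $\F_2$ your choices already satisfy it. Over an odd prime field, choosing $\co(S^\perp)_2=\LAspan{e_1-e_2}$ gives $w'=e_1-e_2$ and phase $\omega^{-1}$.

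The up-to-phase version is all the paper needs, because a global phase on a standalone state is unobservable. It would matter only if representatives were re-chosen inside a superposition controlled by $v$, where the $v$-dependent phase would become a relative phase. The paper never does this: in the purification $\sum_{v_{\ker}}\ket{v_{\ker}}\ket{S_{v_{\ker}+v_\img+u,w}}$ the shift always stays in $\co(S)$ and $w$ stays fixed in $\co(S^\perp)$.
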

\begin{proof}
    We prove the first identity; the second follows similarly.
    \begin{equation*}
    \begin{aligned}
        \QFT^{\otimes n} \ket{S_{v,w}} &= \QFT^{\otimes n} \cdot \frac{1}{\sqrt{|S|}}\sum_{z \in S+v} \omega_{|\F|}^{z \cdot w} \ket{z} \\
        &=\frac{1}{\sqrt{|S||\F|^n}}\sum_{y \in \F^n}\sum_{s \in S} \omega_{|\F|}^{(s+v) \cdot (w+y)} \ket{y} \\
        &= \frac{1}{\sqrt{|S||\F|^n}}\sum_{y \in \F^n} \omega_{|\F|}^{v \cdot (w+y)} \cdot |S| \cdot \mathbbm{1}_{w+y \in S^\perp} \ket{y} \\
        &= \omega_{|\F|}^{v \cdot w} \frac{1}{\sqrt{|S^\perp|}}\sum_{y \in S^\perp-w} \omega_{|\F|}^{v \cdot y}  \ket{y} \\
        &= \omega_{|\F|}^{v\cdot w}\ket{S^{\perp}_{-w,v}},        
    \end{aligned}
    \end{equation*}
    where the third equation follows from $\sum_{s \in S} \omega_{|\F|}^{s \cdot t} = |S| \cdot \mathbbm{1}_{t \in S^\perp}$ and the fourth equation follows from $|S| \cdot |S^\perp| = |\F|^n$.
\end{proof}

\begin{lemma}[Gentle Measurement, \cite{winter2002coding}]\label[lemma]{gentle_measurement}
    Let $\rho$ be a quantum state and let $(\Pi,I-\Pi)$ be a projective measurement such that $\Tr(\Pi\rho) \geq 1- \delta$. Let
    \begin{equation*}
        \rho' := \frac{\Pi \rho \Pi}{\Tr(\Pi \rho)}
    \end{equation*}
    be the state after applying $(\Pi,I-\Pi)$ to $\rho$ and post-selecting on obtaining $\Pi$. Then, $\TD(\rho,\rho') \leq 2 \sqrt{\delta}$.
\end{lemma}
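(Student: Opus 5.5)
We reduce to the pure-state case by purification and then use the exact formula for the trace distance between two pure states. This gives the sharper bound $\sqrt{\delta}$, which implies the stated $2\sqrt{\delta}$.

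\emph{Step 1 (purify).} Let $\rho$ act on a register $A$. Fix a purification $\ket{\psi}_{AE}$ of $\rho$, so that $\rho = \Tr_E(\ket{\psi}\bra{\psi})$. Write $p := \Tr(\Pi\rho)$; then
\[
p = \bra{\psi}(\Pi\otimes I_E)\ket{\psi} = \lVert(\Pi\otimes I_E)\ket{\psi}\rVert^2 \geq 1-\delta .
\]
Define the normalized post-measurement pure state
\[
\ket{\psi'} := \frac{(\Pi\otimes I_E)\ket{\psi}}{\sqrt{p}} .
\]
Because $\Pi\otimes I_E$ acts trivially on $E$, its reduced state is
\[
\Tr_E(\ket{\psi'}\bra{\psi'}) = \frac{\Pi\rho\Pi}{p} = \rho' ,
\]
so $\ket{\psi'}$ purifies $\rho'$.

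\emph{Step 2 (overlap).} Since $\Pi$ is a projector, $\Pi^2=\Pi$, and hence
\[
\langle\psi\vert\psi'\rangle = \frac{\bra{\psi}(\Pi\otimes I_E)\ket{\psi}}{\sqrt{p}} = \sqrt{p} .
\]
Therefore $\lvert\langle\psi\vert\psi'\rangle\rvert^2 = p \geq 1-\delta$.

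\emph{Step 3 (pure-state trace distance).} For unit vectors the standard identity $\TD(\ket{\psi}\bra{\psi},\ket{\psi'}\bra{\psi'}) = \sqrt{1-\lvert\langle\psi\vert\psi'\rangle\rvert^2}$ holds. The operator $\ket{\psi}\bra{\psi}-\ket{\psi'}\bra{\psi'}$ has rank at most two and trace zero, so it has two eigenvalues $\pm\mu$. Its Hilbert--Schmidt norm then gives
\[
2\mu^2 = 2-2\lvert\langle\psi\vert\psi'\rangle\rvert^2 .
\]
The trace distance is $\mu$, so it equals $\sqrt{1-p}\leq\sqrt{\delta}$.

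\emph{Step 4 (monotonicity).} The trace distance is non-increasing under the partial trace $\Tr_E$, which is a CPTP map. Hence
\[
\TD(\rho,\rho') \leq \TD(\ket{\psi}\bra{\psi},\ket{\psi'}\bra{\psi'}) \leq \sqrt{\delta} \leq 2\sqrt{\delta} .
\]

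Every step is standard. The only point needing care is Step 1: the post-selected purification must reduce to exactly $\rho'$, which holds because $\Pi\otimes I_E$ commutes with tracing out $E$.
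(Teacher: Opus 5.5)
Your proof is correct. It actually gives the sharper bound $\TD(\rho,\rho')\le\sqrt{\delta}$, which implies the stated $2\sqrt{\delta}$.

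The paper does not prove this lemma; it imports it as a black box from \cite{winter2002coding}. In the literature, gentle measurement is usually stated for an arbitrary POVM element $0\le\Lambda\le I$ and for the unnormalized operator $\sqrt{\Lambda}\rho\sqrt{\Lambda}$. Recovering the normalized, post-selected form stated here therefore needs an extra triangle-inequality step to account for the renormalization.

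Your route avoids that step. You purify, observe that $(\Pi\otimes I_E)\ket{\psi}/\sqrt{p}$ reduces exactly to $\rho'$ and has real overlap $\sqrt{p}$ with $\ket{\psi}$, apply the closed form $\sqrt{1-\lvert\langle\psi\vert\psi'\rangle\rvert^2}$ for pure states, and finish with monotonicity under $\Tr_E$. This is self-contained and handles the normalization directly. It covers exactly the projective, post-selected setting in which the paper uses the lemma, e.g.\ when comparing $\hyb_2$ with $\hyb_3$ and $\hyb_2^d$ with $\hyb_2^e$ in the BB84 analysis.

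Your argument also extends to general POVM elements with one extra line: replace $\Pi$ by $\sqrt{\Lambda}$ and use $\sqrt{\Lambda}\ge\Lambda$ to get overlap at least $\sqrt{p}$.

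The only implicit assumption is $p>0$, so that $\rho'$ is defined. This holds whenever $\delta<1$, and for $\delta\ge 1/4$ the claimed bound is trivial anyway.
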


\begin{lemma}[Cut-and-Choose, \cite{bouman2010sampling} Appendix B.3 and Theorem 3]\label[lemma]{cut_and_choose}
    Let $\ket{\psi}_{AE}$ be an arbitrary pure quantum state, where register $A$ consists of $2n$-qudit, each of dimension $|\F|$. Also, let $U_1,\dots,U_{2n}$ be arbitrary unitaries, each of dimension $|\F|\times |\F|$. Define a random variable $X$ as follows:
    \begin{enumerate}
        \item Sample $P \subseteq [2n]$ uniformly at random, with $|P|=n$.
        \item Apply $U_1,\dots,U_{2n}$ to register $A$, acting on each qudit respectively.
        \item Measure register $A$ in computational basis and obtain outcome $s:=s_1\|s_2\|\dots \|s_{2n}$, where $s_i \in \F$ for all $i$.
        \item If $s[P]=0^n$ and $h\left(s[-P]\right) > c$, set $X=1$. Otherwise, set $X=0$.
    \end{enumerate}
    Then, $\Pr[X=1] \leq \sqrt{4e^{-2n\cdot (c/n)^2/32}} =2e^{-c^2/32n}$.
\end{lemma}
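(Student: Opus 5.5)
Since $P$ is sampled independently of everything else, and steps 2–3 act only on register $A$, I would first observe that the lemma is a purely classical statement. The outcome $s\in\F^{2n}$ of measuring $(U_1\otimes\cdots\otimes U_{2n})\ket{\psi}_{AE}$ on register $A$ in the computational basis is distributed according to some fixed distribution $\mathcal{D}$ on $\F^{2n}$. The choice of $P$ does not influence this distribution. Hence
\[
\Pr[X=1]=\sum_{s}\Pr_{\mathcal{D}}[s]\cdot \Pr_{P}\bigl[s[P]=0^n \wedge h(s[-P])>c\bigr],
\]
and it suffices to bound the inner probability uniformly over every fixed string $s$.

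Next I would fix $s$ with Hamming weight $w=h(s)$. If $w\le c$, the event $h(s[-P])>c$ is impossible. If $w>c$, the event $s[P]=0^n$ forces all $w$ nonzero positions of $s$ to lie in $[2n]\setminus P$. The number of size-$n$ subsets avoiding a fixed set of $w$ positions is $\binom{2n-w}{n}$. This gives
\[
\Pr_P\bigl[s[P]=0^n\bigr]=\frac{\binom{2n-w}{n}}{\binom{2n}{n}}=\prod_{j=0}^{w-1}\frac{n-j}{2n-j}\le 2^{-w}<2^{-c}.
\]
(The product is understood as $0$ once $w>n$.)

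Finally I would compare this with the claimed bound. Since $h(s[-P])\le n$, the event is empty unless $c<n$, so we may assume $c<n\le 2n$. Then $c^2/(32n)\le c/16< c\ln 2$, and therefore $2^{-c}=e^{-c\ln 2}\le e^{-c^2/32n}\le 2e^{-c^2/32n}$. Averaging over $s\sim\mathcal{D}$ gives the lemma.

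The only step I would treat with care is the first reduction. One must make sure that measuring after the local unitaries gives a well-defined classical distribution independent of $P$; this holds because the $U_i$ and the measurement do not depend on $P$. Once that is in place, the combinatorial bound is elementary, and it is in fact stronger than the Hoeffding-style bound cited from \cite{bouman2010sampling}. The square root in the cited form comes from their more general quantum sampling framework and is not needed here.
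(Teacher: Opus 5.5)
Your proof is correct, and it takes a different and more elementary route than the paper. The paper gives no argument of its own. It invokes the quantum sampling framework of \cite{bouman2010sampling}, in which a general theorem turns the classical error probability of a sampling strategy into a guarantee about a possibly only partially measured quantum population, at the cost of a square root. A tail bound for the ``$n$ out of $2n$'' strategy supplies the classical error, which is where the shape $\sqrt{4e^{-2n(c/n)^2/32}}$ comes from.

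You instead observe that in the lemma as stated, every qudit is measured in a product basis chosen independently of $P$. So $s$ is independent of $P$, and the statement reduces to exact counting for a fixed string. This gives $\Pr[X=1]\le 2^{-c}$, exponentially small in $c$ rather than in $c^2/n$, with no square root. Your last inequality uses $c\ge 0$; this is implicit in the statement, which fails for very negative $c$, and it holds for the paper's choice $c=\frac{n-k}{2}$.

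The extra generality of the Bouman--Fehr route, namely control over the residual state of the unmeasured positions, is not needed anywhere in the paper. In \cref{bb84_de}, the check on $C_{\theta_1}$ and the projector $\Pi_{x'_{\theta_0}}$ are both diagonal in the same shifted Fourier basis. That basis is fixed by $x'$ before $\theta$ is sampled, only the joint probability $p(\lambda)q(\lambda)$ is bounded, and the effect on the state is handled separately by gentle measurement. So your reduction covers the paper's only invocation of the lemma verbatim. The same counting, $\binom{N-t}{w}/\binom{N}{w}\le(1-t/N)^{w}$ for a check set of size $t$ among $N$ positions, also immediately gives the unbalanced cut-and-choose bound alluded to in \cref{remark:rate1}.
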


The following lemmas are useful for analyzing the advantage of a quantum adversary querying an oracle.
\begin{lemma}[\cite{C:Zhandry12}]\label[lemma]{lemma:t_wise_to_random}
Let $\cO$ be an oracle drawn from a $2q$-wise independent function family. Any quantum algorithm making at most $q$ queries to $\cO$  has zero advantage in distinguishing $\cO$ from a truly random function.
\end{lemma}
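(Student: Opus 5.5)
The plan is to use the polynomial method of Beals et al., in the form Zhandry uses it. Fix any quantum algorithm $\cA$ making at most $q$ queries to $\cO:\cX\rightarrow\cY$, modeled as the unitary $\ket{x,y}\mapsto\ket{x,y+\cO(x)}$, with arbitrary oracle-independent unitaries between queries. The goal is to show that the acceptance probability $p(\cO)$ is a fixed linear combination of monomials in the indicator variables
\begin{equation*}
\delta_{x,y}(\cO):=\mathbbm{1}_{\cO(x)=y},
\end{equation*}
and that each monomial involves at most $2q$ distinct points $x$. The lemma then follows, because the expectation of each such monomial depends only on the joint distribution of at most $2q$ oracle outputs.

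\textbf{Step 1.} Write the oracle unitary as $\sum_{x,y,z}\delta_{x,z}(\cO)\,\ket{x,y+z}\bra{x,y}$. By induction on the number of queries, the amplitude of each basis state after $t$ queries is a linear combination, with coefficients independent of $\cO$, of products of at most $t$ indicators $\delta_{x,y}$. Interleaved oracle-independent unitaries only take linear combinations, so they preserve this form.

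\textbf{Step 2.} The final acceptance probability is $p(\cO)=\sum_{\text{accepting }b}|\alpha_b(\cO)|^2$. Since the $\delta$'s are real, each term $|\alpha_b(\cO)|^2=\alpha_b\overline{\alpha_b}$ expands into monomials of degree at most $2q$ in the $\delta_{x,y}$. Any product of indicators is itself an indicator of the event that $\cO(x_j)=y_j$ for at most $2q$ pairs $(x_j,y_j)$; it is zero if this event is inconsistent. Hence
\begin{equation*}
p(\cO)=\sum_{J}c_J\,\mathbbm{1}_{\cO(x_j)=y_j\ \forall j\in J},
\end{equation*}
where each $J$ has at most $2q$ distinct inputs and the coefficients $c_J$ do not depend on $\cO$.

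\textbf{Step 3.} Take expectations. For a $2q$-wise independent family, $\Pr_{\cO}[\cO(x_j)=y_j\ \forall j]$ equals $|\cY|^{-|\{x_j\}|}$ whenever the constraints are consistent. This is exactly the value for a uniformly random function. By linearity of expectation, $\mathbb{E}[p(\cO)]$ is identical in the two worlds, so the distinguishing advantage is exactly zero.

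\textbf{Main obstacle.} The main care point is the bookkeeping in Step 1, namely establishing the degree-$t$ invariant. In particular, one must handle the query register's additive structure and the workspace correctly, so that no hidden oracle dependence enters. One must also observe that repeated points in a monomial reduce, rather than increase, the number of distinct inputs. Everything else is linearity.
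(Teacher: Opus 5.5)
Your proposal is correct and is essentially Zhandry's original polynomial-method proof, which the paper cites rather than reproduces. After $q$ queries each amplitude is a degree-$\le q$ polynomial in the indicators $\delta_{x,y}(\cO)$, so the acceptance probability is an $\cO$-independent linear combination of indicators of consistent constraints on at most $2q$ inputs, and the expectations of these indicators coincide under $2q$-wise independence and under a truly random function. The only implicit conventions you rely on are the standard ones: a group structure on the range for the query unitary, and $2q$-wise independence meaning uniformity on any set of at most $2q$ distinct inputs.
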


\HZZ{Which paper gives better bound?}
\begin{lemma}[\cite{boneh2013secure}]\label[lemma]{lemma:oracle_diff}
    Let $\cX$ and $\cY$ be sets. Let $D_x$ and $D'_x$ be distributions on $\cY$. Suppose there exists $\epsilon \geq 0$, such that for all $x \in \cX$, $\TD(D_x,D'_x)\leq \epsilon$. Let $\cO,\cO':\cX \rightarrow \cY$ be functions where, for each $x$, $\cO(x)$ and $\cO'(x)$ are independently drawn from $D_x$ and $D'_x$, respectively. Then any quantum algorithm making at most $q$ queries to either $\cO$ or $\cO'$ cannot distinguish the two, except with probability at most $\sqrt{cq^3\epsilon}$ with some constant $c$.
\end{lemma}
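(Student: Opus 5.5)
The plan is a coupling argument: write both oracles as a common ``core'' oracle modified on a sparse random set, and then bound the effect of that modification with the BBBV hybrid argument. Since $\TD(D_x,D'_x)\leq\epsilon$, I will first decompose, for every $x\in\cX$,
\[
D_x=(1-\epsilon)\,C_x+\epsilon\,E_x,\qquad D'_x=(1-\epsilon)\,C_x+\epsilon\,E'_x .
\]
Here $C_x$ is proportional to $\min(D_x,D'_x)$, mixed with the appropriate residual mass when $\TD(D_x,D'_x)<\epsilon$, and $E_x,E'_x$ are distributions on $\cY$. This gives the following joint sampling procedure.
\begin{enumerate}
    \item For each $x$, independently sample $G(x)\leftarrow C_x$, $e_x\leftarrow E_x$, $e'_x\leftarrow E'_x$, and a bit $\beta_x$ that equals $1$ with probability $\epsilon$.
    \item Let $S:=\{x:\beta_x=1\}$.
    \item Set $\cO(x):=e_x$ if $x\in S$ and $\cO(x):=G(x)$ otherwise; define $\cO'$ in the same way using $e'_x$.
\end{enumerate}
Then $\cO$ has exactly the distribution prescribed by the $D_x$, and $\cO'$ exactly the one prescribed by the $D'_x$. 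By the triangle inequality it suffices to bound the distinguishing advantage between $\cO$ and $G$; the bound for $\cO'$ versus $G$ is identical.

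Fix an adversary making $q$ queries. Let $\ket{\phi_t}$ be its state just before the $t$-th query when it runs with oracle $G$. Let $m_t(S)$ be the squared amplitude of $\ket{\phi_t}$ on query inputs lying in $S$. Since $\cO$ and $G$ differ only on $S$, the standard hybrid argument of Bennett--Bernstein--Brassard--Vazirani gives
\[
\bigl\lVert \ket{\psi^{\cO}_q}-\ket{\psi^{G}_q}\bigr\rVert\;\leq\;2\sum_{t=1}^{q}\sqrt{m_t(S)}\;\leq\;2\sqrt{q\sum_{t=1}^q m_t(S)},
\]
where the second step is Cauchy--Schwarz. The key point is that the run with oracle $G$ never touches $S$, $e$, or $e'$, so the states $\ket{\phi_t}$ are independent of $S$. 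Each $x$ lies in $S$ independently with probability $\epsilon$, so $\mathbb{E}_S[m_t(S)]=\epsilon$ for every $t$, even after conditioning on $G$. By Jensen's inequality the expected norm difference is therefore at most $2\sqrt{q\cdot q\epsilon}=2q\sqrt{\epsilon}$. Converting the norm bound into a trace-distance bound on the final mixed states, again via convexity over the randomness of $(G,S,e)$, bounds the distinguishing advantage by $O(q\sqrt{\epsilon})$.

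Combining the two halves, the advantage between $\cO$ and $\cO'$ is $O(q\sqrt{\epsilon})=\sqrt{O(q^2\epsilon)}\leq\sqrt{c\,q^3\epsilon}$ for a suitable constant $c$, which is the claimed bound (in fact slightly stronger).

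The main obstacle is the independence argument. If I had instead coupled $\cO$ and $\cO'$ directly, the disagreement set would be correlated with the oracle values the adversary actually sees, and the expectation step $\mathbb{E}[m_t(S)]\leq\epsilon$ would fail. The three-component decomposition with independent Bernoulli marks, combined with always comparing against the reference oracle $G$ (whose run is oblivious to $S$), is what makes that step valid. A secondary check is that the BBBV hybrid bound applies pointwise for each fixed $(G,S,e)$ before averaging, which is standard.
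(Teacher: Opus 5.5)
Your argument is correct. The paper gives no proof of its own and simply imports the lemma from Boneh--Zhandry, so the relevant comparison is with the cited bound, and you reach it by a genuinely different route.

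\textbf{The cited route.} The $\sqrt{c\,q^3\epsilon}$ form comes from Zhandry's small-range distributions. One uses the same kind of splitting of $D_x,D'_x$ into a common part of weight $1-\epsilon$ and private parts of weight $\epsilon$. The oracles' sampling coins are then replaced by $r$ coin strings indexed through a random function into $[r]$, at a cost of $O(q^3/r)$ per oracle. After this replacement the two small-range oracles coincide unless one of the $r$ coins lands in an $\epsilon$-branch, which has probability at most $r\epsilon$. Optimizing $r$ gives $\sqrt{c\,q^3\epsilon}$.

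\textbf{Your route.} You compare each oracle with the reference oracle $G$ and run a BBBV hybrid. The key fact is that the $G$-run is independent of the marked set $S$, so the expected query mass on $S$ is exactly $\epsilon$ at every step. This is precisely the hypothesis of the one-way-to-hiding lemma of Ambainis--Hamburg--Unruh, which you could invoke directly to get $2q\sqrt{\epsilon}$ per side in one line.

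\textbf{What each buys.} Your approach is elementary, needing neither the polynomial method nor the small-range theorem. It also gives the stronger bound $4q\sqrt{\epsilon}$, so the statement holds with $c=16$ whenever $q\geq 1$. The small-range route is a black-box reduction that requires no analysis of query amplitudes, but it loses an extra factor of $\sqrt{q}$.

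\textbf{A cosmetic fix.} No mixing is needed when defining $C_x$. Set $\delta_x:=\TD(D_x,D'_x)$ and take $C_x:=\min(D_x,D'_x)/(1-\delta_x)$. Then $(1-\epsilon)C_x\leq\min(D_x,D'_x)$ pointwise, and the slack $\epsilon-\delta_x$ is absorbed into $E_x$ and $E'_x$, which are therefore genuine distributions. The case $\delta_x=1$ forces $\epsilon\geq 1$, where the bound is trivial.
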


%% file: 4_Extractor_final.tex
\section{Quantum Source Seedless Extractor}\label{sec:extractor}
In this section, we define two distinct quantum sources of randomness, which are extractable using a deterministic extractor: one based on BB84 states and the other on coset states. Readers may skip this section and return to it when needed in \cref{section:bb84,section:coset}.
Due to space constraints, we defer some definitions and proofs throughout the paper to the full version.

\subsection{Quantum Source of Randomness Based on BB84 States}\label{section:bb84_extractor}
First, we define a function $f$, which will be used in the definition of our quantum source of randomness. The definition of $f$ depends on $\regR \in \F^{k \times n}$, and we assume that $\regR$, whose rows are linearly independent, is clear from the context.

\ifnum\llncs=0
\begin{definition}\label[definition]{def:extractor_f}
    Define $f: \rowSpan(\regR) \times \F^k \rightarrow \F$ as follows: Express $x\in \rowSpan(\regR)$ as a unique linear combination of rows of $\regR$. Suppose
    \begin{equation*}
        x:= \sum\limits_{i=1}^{k} \alpha_i \regR_i,
    \end{equation*}
    where $\regR_i$ is the $i$-th row of the matrix $\regR$.
    Then, we define $f$ as 
    \begin{equation*}
        f(x,y) = \sum\limits_{i=1}^{k} \alpha_i \cdot y[i].
    \end{equation*}
\end{definition}
\fi
\ifnum\llncs=1
\begin{definition}\label[definition]{def:extractor_f}
    Define $f: \rowSpan(\regR) \times \F^k \rightarrow \F$ as follows: Express $x\in \rowSpan$ as a unique linear combination of rows of $\regR$. Suppose
        $x:= \sum\limits_{i=1}^{k} \alpha_i \regR_i,$ 
    where $\regR_i$ is the $i$-th row of the matrix $\regR$.
    Then, we define $f$ as 
        $f(x,y) = \sum\limits_{i=1}^{k} \alpha_i \cdot y[i].$
\end{definition}
\fi

\begin{proposition}\label[proposition]{pre:f}
    $f$ is a bilinear function.
\end{proposition}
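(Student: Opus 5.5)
The plan is to make the map $x \mapsto (\alpha_1,\dots,\alpha_k)$ explicit and note that it is linear. Because the rows $\regR_1,\dots,\regR_k$ are linearly independent, they form a basis of $\rowSpan(\regR)$. Hence the coordinate map $\phi:\rowSpan(\regR)\to\F^k$, $\phi(x) = \alpha$ with $x = \sum_i \alpha_i \regR_i$, is well defined, and it is the inverse of the linear isomorphism $\alpha \mapsto \alpha^{\top}\regR$. It is therefore linear. Concretely, suppose $x = \sum_i \alpha_i\regR_i$ and $x' = \sum_i \alpha'_i\regR_i$, and let $c\in\F$. Then $x + c x' = \sum_i (\alpha_i + c\alpha'_i)\regR_i$, and uniqueness of the representation gives $\phi(x+cx') = \phi(x) + c\,\phi(x')$.

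With this in hand, I would rewrite the definition as $f(x,y) = \phi(x)\cdot y$, the standard dot product on $\F^k$ from the linear algebra preliminaries. Bilinearity then splits into two short checks.
\begin{itemize}
\item \emph{Linearity in the first argument.} Fix $y$. Then $f(x+cx',y) = \phi(x+cx')\cdot y = (\phi(x)+c\,\phi(x'))\cdot y = f(x,y) + c\,f(x',y)$, using linearity of $\phi$ and bilinearity of the dot product.
\item \emph{Linearity in the second argument.} Fix $x$. Then $f(x,y+cy') = \sum_i \alpha_i (y[i] + c\,y'[i]) = f(x,y) + c\,f(x,y')$, which follows directly from distributivity in $\F$.
\end{itemize}

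The only step that needs care is well-definedness of $\alpha$, i.e., that the representation of $x$ is unique. This is exactly where the standing assumption that $\regR$ has linearly independent rows is used. I will state it explicitly at the start of the proof; after that, everything is routine.
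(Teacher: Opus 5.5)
Your proposal is correct and follows essentially the same route as the paper. Both arguments use the uniqueness of the coefficient representation (guaranteed by the linearly independent rows) to get that $x+w$ has coefficients $\alpha+\beta$, and then check linearity in each argument directly. Packaging this as a linear coordinate map $\phi$ with $f(x,y)=\phi(x)\cdot y$, and combining additivity and homogeneity into a single check, is just a tidier phrasing of the paper's four separate verifications.
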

\ifnum\llncs=0
\begin{proof}
    Let $x,w \in \rowSpan(\regR)$, $y,z \in \F^k$, and $\gamma \in \F$. Suppose $x:= \sum\limits_{i=1}^{k} \alpha_i \regR_i$ and $w:= \sum\limits_{i=1}^{k} \beta_i \regR_i$. Then, $x+w:= \sum\limits_{i=1}^{k} (\alpha_i + \beta_i) \cdot \regR_i$. Hence,
    \begin{equation*}
        \begin{cases}
            f(x+w,y)= \sum\limits_{i=1}^{k} (\alpha_i+\beta_i) \cdot y[i] = \sum\limits_{i=1}^{k} \alpha_i \cdot y[i] + \sum\limits_{i=1}^{k} \beta_i \cdot y[i] = f(x,y) + f(w,y) \\
            
            f(x,y+z) = \sum\limits_{i=1}^{k} \alpha_i \cdot (y+z)[i] = \sum\limits_{i=1}^{k} \alpha_i \cdot y[i] + \sum\limits_{i=1}^{k} \alpha_i \cdot z[i] = f(x,y)+f(x,z) \\

            f(\gamma\cdot x, y) = \sum\limits_{i=1}^{k} (\gamma \cdot \alpha_i) \cdot y[i] = \gamma \cdot \sum\limits_{i=1}^{k} ( \alpha_i) \cdot y[i] = \gamma \cdot f(x,y) \\
            
            f(x,\gamma \cdot y) = \sum\limits_{i=1}^{k} \alpha_i \cdot (\gamma \cdot y)[i] = \gamma \cdot \sum\limits_{i=1}^{k} \alpha_i \cdot y[i] = \gamma \cdot f(x,y).
        \end{cases}
    \end{equation*}
\end{proof}
\fi

\begin{theorem}[Conditional Quantum Source Seedless Extractor for BB84 States]\label[theorem]{bb84_extractor}
    Let $\F$ be a prime field and consider some integer $k \leq n$. Let $\regR \in \F^{k \times n}$ be a matrix such that $\regR$ is 
    \ifnum\llncs=0
    MDS (\cref{prop_extract}).
    \fi
    \ifnum\llncs=1
    MDS.
    \fi
    Let $\regR_{-i}$ equal to $\regR$ but without the $i$-th row. Let $X:=X_1,\dots,X_n$ be a register containing $n$ $\F$-qudits and $E$ be an arbitrary size register. Fix any $x \in \F^n$. Consider any quantum state
    \begin{equation}\label{thm:extractor_init}
        \ket{\gamma}_{XE}:= \frac{1}{\sqrt{|\F|^{k-1}}} \cdot \sum_{\substack{s \in \F^n\\ h(s)\leq \frac{n-k}{2}}} \ \sum_{w \in \rowSpan(\regR_{-i})} \ket{x+s+w}_{X} \otimes \alpha_w \cdot \ket{\psi_s}_{E},
    \end{equation}
    where 
    \begin{itemize}
        \item $\alpha_w = \omega_{|\F|}^{f(-w,-\ell')}$, for an arbitrary $\ell' \in \F^k$ and $f$ is defined in \cref{def:extractor_f}.
        \item $\{\ket{\psi_s}\}_s$ are arbitrary states such that $\ket{\gamma}$ is normalized.
    \end{itemize}
    Let $\rho_{EY}$ be the mixed state that results from the following procedure:
    \begin{enumerate}
        \item For each $i \in [n]$, apply $\QFT^{-1}$ over $\F$ to each register $X_i$.
        \item Initialize a fresh register $Y:=Y_1 \cdots Y_k$ containing $k$ $\F$-qudits. Apply the isometry $\cI: \ket{x}_X \mapsto \ket{x}_X \otimes \ket{\regR x}_Y$.
        \item Trace out register $X$.
    \end{enumerate}
    Then, we have the following state:
    \begin{equation*}
        \rho_{EY} = \left( \sum_{\substack{s \in \F^n\\ h(s)\leq\frac{n-k}{2} }} \ket{\psi_{s}}\bra{\psi_{s}} \right)_{E} \otimes \Ket{\ell'_{-i}}\Bra{\ell'_{-i}}_{Y_{-i}} \otimes \left( \frac{1}{|\F|} \sum_{r \in \F}  \ket{r}\bra{r} \right)_{Y_{i}}.
    \end{equation*}
\end{theorem}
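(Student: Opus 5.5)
We will compute the state directly rather than appeal to a generic extractor lemma. The plan has three stages: apply $\QFT^{-1}$ branch by branch, apply $\cI$, and then trace out $X$, using the MDS property to control cross terms between different $s$.

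\textbf{Step 1 (Fourier side).} Fix $s$. The inner superposition $\sum_{w\in\rowSpan(\regR_{-i})}\alpha_w\ket{x+s+w}$ is, up to the phase convention, a coset state of the subspace $W:=\rowSpan(\regR_{-i})$, twisted by the character $w\mapsto\omega^{f(-w,-\ell')}$. This map is linear in $w$ by \cref{pre:f}, so it can be written as $\omega^{\langle w,\beta\rangle}$ for some $\beta\in\F^n$. We choose $\beta$ with $\regR_{-i}\beta=\ell'_{-i}$ (such $\beta$ exists since the rows of $\regR$ are independent). Then
\[
f(w,\ell')=\sum_{j\neq i}a_j\ell'[j]=\langle w,\beta\rangle .
\]
Applying the inverse-QFT formula from the coset-state proposition gives
\[
\QFT^{-1}\ket{W_{x+s,\beta'}}\propto\sum_{y\in W^\perp+\beta''}\omega^{-\langle y,x+s\rangle}\ket{y}.
\]
Here $W^\perp=\ker\regR_{-i}$ and the shift is chosen so that every $y$ in the support satisfies $\regR_{-i}y=\ell'_{-i}$. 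A short check of signs is needed at this point, to make sure the shift lands exactly on $\ell'_{-i}$ rather than on $-\ell'_{-i}$. The normalization $|\F|^{-(k-1)/2}$ makes each $s$-branch $|\F|^{-(n-k+1)/2}\sum_{y:\regR_{-i}y=\ell'_{-i}}\omega^{-\langle y,x+s\rangle}\ket{y}$.

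\textbf{Step 2 (apply $\cI$ and trace out $X$).} Since every $y$ in the support has $\regR_{-i}y=\ell'_{-i}$, the register $Y_{-i}$ is deterministically $\ket{\ell'_{-i}}$ and factors off. It remains to treat $Y_i$ together with $E$. Tracing out $X$, the $(s,s')$ cross term of $\rho_{EY_i}$ is
\[
\ket{\psi_s}\bra{\psi_{s'}}\otimes\sum_{r,r'}\Big(\sum_{y}\omega^{-\langle y,x+s\rangle}\overline{\omega^{-\langle y,x+s'\rangle}}\Big)\ket{r}\bra{r'} .
\]
The inner sum runs over $y$ with $\regR_{-i}y=\ell'_{-i}$ and $\regR_iy=r=r'$, because tracing out $X$ forces the same $y$ on both sides and $r$ is a function of $y$. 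So only diagonal $r=r'$ terms appear, and the coefficient is $\sum_{y\in A_r}\omega^{\langle y,s'-s\rangle}$, where $A_r=\{y:\regR y=(\ell'_{-i},r)\}$ is a coset of $\ker\regR$. This character sum equals $|A_r|\,\omega^{\langle y_r,s'-s\rangle}$ if $s'-s\in\rowSpan(\regR)$, and $0$ otherwise.

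\textbf{Step 3 (MDS kills the cross terms).} This is the main obstacle. We have $h(s'-s)\le n-k$, while the MDS property says that every nonzero codeword of $\rowSpan(\regR)$ has weight at least $n-k+1$. Hence $s'-s\in\rowSpan(\regR)$ forces $s=s'$, and all cross terms vanish. The diagonal terms have coefficient $|A_r|\cdot|\F|^{-(n-k+1)}=|\F|^{n-k}/|\F|^{n-k+1}=1/|\F|$, independent of $r$. The result is $\big(\sum_s\ket{\psi_s}\bra{\psi_s}\big)\otimes\ket{\ell'_{-i}}\bra{\ell'_{-i}}\otimes\frac{1}{|\F|}\sum_r\ket{r}\bra{r}$, as claimed.

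The delicate points are the sign and phase bookkeeping in Step 1 and the exact distance bound in Step 3. The condition $h(s)\le\frac{n-k}{2}$ is exactly what is needed, since it gives $h(s-s')\le n-k<n-k+1$.
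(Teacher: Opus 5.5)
Your proof is correct and is essentially the paper's direct computation: inverse QFT, apply $\cI$, trace out $X$, use MDS to kill the $s\neq s'$ cross terms, and count to get the uniform $1/|\F|$ on $Y_i$. The sign check you flag comes out right, since $\sum_{w}\omega^{\langle w,\beta-y\rangle}$ forces $y\in\beta+\ker\regR_{-i}$, i.e.\ $\regR_{-i}y=+\ell'_{-i}$. The differences are only in bookkeeping: you do the $w$-sum first via the coset-state Fourier identity, whereas the paper carries the $w_1,w_2$ sums through the partial trace and gets $\ell_{-i}=\ell'_{-i}$ at the end. You also kill cross terms via $(\ker\regR)^\perp=\rowSpan(\regR)$ and minimum distance $n-k+1$, where the paper argues that $y$ is free on $n-k$ coordinates containing the support of $s_2-s_1$.
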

\ifnum\llncs=0
\begin{proof}
We track how the state evolves:
    \begin{align}
        (\ref{thm:extractor_init}) & \xmapsto{\QFT^{-1} \text{ over } \F \text{ on } X} 
        \frac{1}{\sqrt{|\F|^{n+k-1}}}\sum_{\substack{s \in \F^n\\ h(s)\leq\frac{n-k}{2}}} \ \sum_{w \in \text{rowSpan}(\regR_{-i})} \sum_{y \in \F^n} \omega_{|\F|}^{-(x+s+w)\cdot y} \ket{y}_{X} \otimes \alpha_w \cdot \ket{\psi_s}_{E} \nonumber \\
        & \xmapsto{\cI} \frac{1}{\sqrt{|\F|^{n+k-1}}} \sum_{\substack{s \in \F^n\\ h(s)\leq\frac{n-k}{2}}} \ \sum_{w \in \text{rowSpan}(\regR_{-i})} \sum_{y \in \F^n}\omega_{|\F|}^{-(x+s+w)\cdot y} \ket{y}_{X} \otimes \alpha_w \cdot \ket{\psi_s}_{E} \otimes \ket{\regR y}_{Y} \label{extractor_1}
    \end{align}
    Tracing out register $X$ in term~(\ref{extractor_1}) yields
    \begin{align}
        &\,\rho_{EY} \nonumber\\      
        = &\,\frac{1}{|\F|^{n+k-1}}\sum_{y \in \F^n} \left(\sum_{\substack{s_1,s_2 \in \F^n\\ h(s_1)\leq\frac{n-k}{2} \\ h(s_2)\leq\frac{n-k}{2}}} \sum_{w_1,w_2 \in \text{rowSpan}(\regR_{-i})} \omega_{|\F|}^{-(x+s_1+w_1)\cdot y}\omega_{|\F|}^{(x+s_2+w_2)\cdot y} \alpha_{w_1}\overline{\alpha_{w_2}} \ket{\psi_{s_1}}\bra{\psi_{s_2}} \right)_{E} \otimes \ket{\regR y}\bra{\regR y}_{Y} \nonumber\\      
        = &\, \sum_{\substack{s_1,s_2 \in \F^n\\ h(s_1)\leq\frac{n-k}{2} \\ h(s_2)\leq\frac{n-k}{2}}} \ket{\psi_{s_1}}\bra{\psi_{s_2}} _{E} \otimes  \frac{1}{|\F|^{n+k-1}} \sum_{\ell \in \F^k}  \ket{\ell}\bra{\ell}_{Y} \cdot \sum_{w_1,w_2 \in \text{rowSpan}(\regR_{-i})} \alpha_{w_1}\overline{\alpha_{w_2}}  \cdot \sum_{\substack{y \in \F^n \\ \regR y = \ell}} \omega_{|\F|}^{(s_2-s_1)\cdot y}\omega_{|\F|}^{(w_2-w_1)\cdot y} \nonumber\\  
        = &\, \left( \sum_{\substack{s \in \F^n\\ h(s)\leq\frac{n-k}{2} }} \ket{\psi_{s}}\bra{\psi_{s}}  \right)_{E} \otimes \left( \frac{1}{|\F|^{2k-1}} \sum_{\ell \in \F^k}  \ket{\ell}\bra{\ell}_{Y} \cdot \sum_{w_1,w_2 \in \text{rowSpan}(\regR_{-i})} \alpha_{w_1}\overline{\alpha_{w_2}} \omega_{|\F|}^{f(w_2-w_1,\ell)} \right) \label{BB84_extract_1}\\
        = &\, \left( \sum_{\substack{s \in \F^n\\ h(s)\leq\frac{n-k}{2} }} \ket{\psi_{s}}\bra{\psi_{s}} \right)_{E} \otimes \left( \frac{1}{|\F|^{2k-1}} \sum_{\ell \in \F^k}  \ket{\ell}\bra{\ell}_{Y} \cdot \sum_{w_1,w_2 \in \text{rowSpan}(\regR_{-i})}  \omega_{|\F|}^{f(w_2-w_1,\ell-\ell')} \right) \label{BB84_extract_2}\\
        = &\,  \left( \sum_{\substack{s \in \F^n\\ h(s)\leq\frac{n-k}{2} }} \ket{\psi_{s}}\bra{\psi_{s}} \right)_{E} \otimes \Ket{\ell'_{-i}}\Bra{\ell'_{-i}}_{Y_{-i}} \otimes \left( \frac{1}{|\F|} \sum_{r \in \F}  \ket{r}\bra{r} \right)_{Y_{i}}. \label{BB84_extract_3}
    \end{align}
    We explain why the equality (\ref{BB84_extract_1}) holds. Suppose, $w_1, w_2$ are fixed and $y \in \F^n$ is an arbitrary vector satisfying $\regR y = \ell$. Then,
    \begin{equation*}
        (w_2-w_1)\cdot y= f(w_2-w_1,\ell). 
    \end{equation*}
    Hence, $\omega_{|\F|}^{(w_2-w_1)\cdot y}$ can be viewed as a constant inside the summation over $\ell,w_1,w_2$. Now, suppose $s_1 \neq s_2$. We know that $h(s_2-s_1) \leq n-k$, and hence there are $k$ positions where $s_2-s_1$ is zero, while the remaining $n-k$ positions are arbitrary but not all zero; we denote these remaining positions by $P$, which consist of $n-k$ indices. By the MDS properties of $\regR$, the values of $y$ at positions in $P$ can be chosen arbitrarily, while the values at positions outside $P$ are uniquely determined by $\regR$ and $\ell$. Then, we have
    \begin{equation*}
        \sum_{\substack{y \in \F^n \\ \regR y = \ell}} \omega_{|\F|}^{(s_2-s_1)\cdot y} = \sum_{y[P] \in \F^{n-k}} \omega_{|\F|}^{(s_2-s_1)[P]\cdot y[P]} = 0,
    \end{equation*}
    where the last equality follows from the fact that $(s_2-s_1)[P]$ is not all zero.
    
    The equality (\ref{BB84_extract_2}) follows from the fact that $f$ is a bilinear function. Finally, we explain why the last equality (\ref{BB84_extract_3}) holds. We first consider the case that $\ell_{-i} \neq \ell'_{-i}$. Then,

    \begin{equation*}
        \sum_{w_1,w_2 \in \text{rowSpan}(\regR_{-i})}  \omega_{|\F|}^{f(w_2-w_1,\ell-\ell')} = |\F|^{k-1} \cdot \sum_{w \in \text{rowSpan}(\regR_{-i})}  \omega_{|\F|}^{f(w,\ell-\ell')} = |\F|^{k-1} \cdot \sum_{\alpha \in \F^{k-1}} \omega^{\alpha \cdot (\ell_{-i}-\ell'_{-i})} = 0,
    \end{equation*}
    where the last equality follows from the definition of $f$. We then consider the case that $\ell_{-i} = \ell'_{-i}$. No matter the value of $\ell_i$, we have
    \begin{equation*}
        \sum_{w_1,w_2 \in \text{rowSpan}(\regR_{-i})}  \omega_{|\F|}^{f(w_2-w_1,\ell-\ell')} = |\F|^{k-1} \cdot \sum_{\alpha \in \F^{k-1}} \omega^{\alpha \cdot 0} = |\F|^{2k-2}.
    \end{equation*}
\end{proof}
\fi

\ifnum\llncs=1
\subsection{Quantum Source of Randomness Based on Subspace Coset States}\label{sec:quantum-source-cosetstate}
\fi
\ifnum\llncs=0
\subsection{Quantum Source of Randomness Based on Subspace Coset States}
\label{sec:quantum-source-cosetstate}
\fi

We define a sampling process to generate various subspaces, which will be useful in our construction. Before that we show a proposition that help us to define the sampling process.

\begin{proposition}\label[proposition]{prop:sampling_subspace}
    Let $\lambda \in \N$ be a security parameter, $c > 0$ be a constant, $k+c\lambda \leq \frac{n}{3}$, $\F$ be a finite field satisfying $2 \leq |\F|$, and $\regR \in \F^{k \times n}$ be a full-rank matrix. If $k+c\lambda$ vectors $w_1,\dots,w_{k+c\lambda} \in \F^n$ are sampled uniformly at random subject to being linearly independent, then with $1-\negl(\lambda)$ probability, 
    \begin{equation*}
        \regR(\LAspan{w_1,\dots,w_{k+c\lambda}}) = \F^{k}.
    \end{equation*}

\end{proposition}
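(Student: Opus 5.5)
The plan is to reduce the claim to a statement about a random matrix failing to have full rank, and then bound that failure probability by a union bound over the dual space. Write $m := k+c\lambda$ and let $W \in \F^{n \times m}$ be the matrix whose columns are $w_1,\dots,w_m$. Since the span of the $w_j$ is the column space of $W$, we have $\regR(\LAspan{w_1,\dots,w_m}) = \F^k$ if and only if the $k\times m$ matrix $\regR W$ has rank $k$. Equivalently, there is no nonzero $u \in \F^k$ with $u^\top \regR W = 0$.

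First, I would remove the linear-independence conditioning. Let $W'$ have i.i.d. uniform columns in $\F^n$. Then $W'$ has independent columns with probability at least
\[
\prod_{j=0}^{m-1}\bigl(1-|\F|^{j-n}\bigr) \ \ge\ 1 - |\F|^{m-n}/(|\F|-1) \ \ge\ 1 - 2^{-2n/3+1},
\]
using $m \le n/3$. The distribution of $W$ is exactly $W'$ conditioned on this event. Hence for any event $E$,
\[
\Pr[E(W)] \ \le\ \Pr[E(W')]/\Pr[\text{indep}] \ \le\ 2\Pr[E(W')].
\]
So it suffices to bound the failure probability for $W'$.

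Second, I would bound the failure event for uniform $W'$. Because $\regR$ has full rank $k$, its rows are linearly independent, so for each fixed nonzero $u\in\F^k$ the vector $a := \regR^\top u \in \F^n$ is nonzero. For nonzero $a$, the inner product $a\cdot w'_j$ with a uniform $w'_j$ is uniform over $\F$, so it vanishes with probability $1/|\F|$. These events are independent across the $m$ columns, so
\[
\Pr[u^\top \regR W' = 0] = |\F|^{-m}.
\]
A union bound over the at most $|\F|^k$ nonzero vectors $u$ then gives a failure probability of at most
\[
|\F|^{k-m} = |\F|^{-c\lambda} \le 2^{-c\lambda}.
\]

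Combining the two steps, the failure probability for the actual sampled $W$ is at most $2\cdot 2^{-c\lambda}$, which is negligible. The main point of care is the conditioning step. I would handle it by the ratio bound above rather than by a direct counting argument, because the ratio bound needs only $n - m \ge 1$, which holds since $m \le n/3$. Alternatively, one can sample the $w_j$ sequentially and note that each conditional distribution is uniform on $\F^n$ minus a subspace; this yields the same bound up to a constant factor.
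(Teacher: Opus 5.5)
Your proposal is correct and follows the same route as the paper. The paper also union-bounds over hyperplanes of $\F^k$ that contain every $\regR(w_j)$, each contributing $|\F|^{-(k+c\lambda)}$ because $\regR w_j$ is uniform; this is equivalent to your union bound over nonzero $u$, up to a factor of $|\F|-1$. Your explicit ratio bound for the linear-independence conditioning makes rigorous a step the paper only asserts (``linearly independent with overwhelming probability'').
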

\begin{proof}
    By the fundamental theorem of linear maps, it is sufficient to find a subspace $B \leq \LAspan{w_1,\dots,w_{k+c\lambda}}$ with dimension $k$ such that $B \cap \ker \regR = \{0\}$. 
    We can choose $B$ with the following procedure:
    \begin{enumerate}
        \item Let $j=0$
        \item For $1 \leq \ell \leq k$, do the following:
            \begin{enumerate}
                \item $j \leftarrow j+1$
                \item If $w_j \notin \LAspan{\ker \regR,z_1,\dots z_{\ell-1}}$, let $z_\ell:=w_j$. Continue the loop (i.e., jump back to Step 2).
                \item If $j=k+c\lambda$, return $B:=\bot$; otherwise, jump back to Step (a).
            \end{enumerate}
        \item Return $B:= \LAspan{z_1,\dots,z_k}$.
    \end{enumerate}
    Suppose $\{w_i\}$ are sampled uniformly random, then
    \begin{align*}
        \Pr[\bot] &= \Pr_{w_j \leftarrow \F^n}\left[\,\exists \text{ hyperplane } H \;\; \forall j \in [k+c\lambda],\,\regR(w_j) \in H \right] \\
        & \leq \sum_{\substack{H \lneq \F^k \\ \dim(H)=k-1}} \Pr_{y_j \leftarrow \F^k} \left[\,\forall j \in [k+c\lambda],\,y_j \in H \right] \\
        & \leq \sum_{\substack{H \lneq \F^k \\ \dim(H)=k-1}} \left( |\F|^{-1} \right)^{k+c\lambda} \\
        & = \frac{|\F|^k-1}{|\F|-1} \cdot \left( |\F|^{-1} \right)^{k+c\lambda} \\
        &= \negl(\lambda),
    \end{align*}
    where the first inequality follows from union bound, $\regR$ is full-rank, and $w_j$ is sampled uniformly random; the second equality counts the number of hyperplanes of $\F^k$. Furthermore, if $\{w_i\}$ are sampled uniformly random, then they are linearly independent with overwhelming probability. Hence, the above procedure outputs $\bot$ with negligible probability.

    Since $w_j \notin \LAspan{\ker \regR,z_1,\dots z_{\ell-1}}$ for all $\ell \in [k]$, a standard linear algebra argument implies that $B \cap \ker \regR = \{0\}$.
\end{proof}

\begin{proposition}\label[proposition]{prop:co_of_perp}
    Given subspace $A,\co(A) \leq \F^n$, then
    \begin{equation*}
        \co (A^\perp):=\co(A)^\perp
    \end{equation*}
    is well-defined.
\end{proposition}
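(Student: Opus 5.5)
We need to show that $\co(A)^\perp$ is a representative coset of $A^\perp$. By \cref{lem:def_co}, this is equivalent to showing the direct sum decomposition $\F^n = A^\perp \oplus \co(A)^\perp$. Since $\co(A)$ is a representative coset of $A$, \cref{lem:def_co} already gives $\F^n = A \oplus \co(A)$. So the whole task reduces to transferring a direct-sum decomposition across the duality map.

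The proof has two checks.

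\textbf{Dimension count.} By \cref{prop:dual}(1), $\dim A^\perp = n - \dim A$ and $\dim \co(A)^\perp = n - \dim \co(A)$. Since $\dim A + \dim \co(A) = n$, these add to
\[
(n-\dim A) + (n-\dim\co(A)) = 2n - n = n.
\]

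\textbf{Trivial intersection.} Suppose $z \in A^\perp \cap \co(A)^\perp$. Then $z$ is orthogonal to every vector of $A$ and of $\co(A)$. By bilinearity of the dot product, $z$ is orthogonal to all of $A + \co(A) = \F^n$. Taking inner products with the standard basis vectors $e_i$ gives $z[i] = z \cdot e_i = 0$ for every $i$, so $z = 0$.

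These two facts together imply that $A^\perp + \co(A)^\perp$ is a direct sum of dimension $n$, hence equal to $\F^n$. Applying \cref{lem:def_co} in the reverse direction, $\co(A)^\perp$ is a representative coset of $A^\perp$, so the definition $\co(A^\perp) := \co(A)^\perp$ is well defined.

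An alternative route to the intersection step is \cref{prop:dual}(4) with $(A^\perp,\co(A)^\perp)$ in place of $(A,B)$, combined with $(S^\perp)^\perp = S$. However, that lemma's hypothesis is the spanning condition, which is essentially what we are trying to prove, so the direct orthogonality argument above is cleaner.

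The only point needing care is that over a finite field $S^\perp$ need not be a complement of $S$; that is, $S \cap S^\perp$ can be nontrivial. The argument above never relies on orthogonal complements being complements. It uses only the dimension formula and the nondegeneracy of the dot product on all of $\F^n$, both of which hold over any field. Beyond this, I expect no real obstacle.
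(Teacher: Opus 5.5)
Your proof is correct and complete. The paper checks the other half of the direct-sum condition: you verify trivial intersection plus the dimension count, while the paper verifies spanning plus the same dimension count.

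To get spanning, the paper invokes \cref{prop:dual}(4) with the pair $(A,\co(A))$ itself, not with the duals. The hypothesis $\LAspan{A,\co(A)}=\F^n$ is already given by $\F^n = A\oplus\co(A)$, and the conclusion is
\[
\LAspan{A^\perp,\co(A)^\perp} = (A\cap\co(A))^\perp = \{0\}^\perp = \F^n .
\]
So your remark that the \cref{prop:dual}(4) route is circular holds only for your substitution $(A^\perp,\co(A)^\perp)$. Applied to $(A,\co(A))$, it is exactly the paper's argument.

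The two routes rest on the same fact. The paper's proof of \cref{prop:dual}(4) uses $A^\perp\cap B^\perp=\{0\}$ without comment, and that is precisely your nondegeneracy step. Your version is therefore the more direct and self-contained one. The paper's version is shorter on the page only because it defers this step to \cref{prop:dual}.
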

\begin{proof}
    By \cref{lem:def_co}, $\co(A^\perp)$ must be some subspace $V \leq \F^n$ such that $A^\perp \oplus V = \F^n$. Hence, it is sufficient to show that $A^\perp \oplus \co(A)^\perp = \F^n$. Indeed, by \cref{prop:dual}, $\LAspan{A^\perp, \co(A)^\perp} = \F^n$. Furthermore, $\dim(A^\perp) + \dim(\co(A)^\perp) =n$.
\end{proof}

\begin{definition}[Procedure to Sample Subspaces]\label[definition]{sampling}
   Given parameters $k,n,\lambda \in \N$ such that $k + \lambda \leq \frac{n}{4}$, let $\F$ be a prime field, and let $\regR \in \F^{k \times n}$ be a full rank matrix. We say that
    \begin{equation*}   
        [S, T]:=\left(S, T,\co(S),\co(T),\co(S^\perp),\co(T^\perp),\{P_i\}_{i\in [k]}, \{D_i\}_{i\in [k]}, \{S^+_i\}_{i \in [k]}, P \right)
    \end{equation*}
    is sampled uniformly at random over $\F^n$ if the sampling procedure is described as below:
    \begin{enumerate}
        \item Sample $\frac{n}{2}$ linearly independent vectors $\left\{z_1,\dots,z_\frac{n}{2} \right\}$ uniformly at random. Then, sample $\frac{n}{4}$ linearly independent vectors $\left\{z_{\frac{n}{2}+1},\dots,z_\frac{3n}{4} \right\}$ uniformly at random.
        \item Define
        \begin{equation*}
            S:= \LAspan {z_1,\dots, z_{\frac{n}{2}}}.
        \end{equation*}
        
        \item Let
        \begin{equation*}
            T=\LAspan{z_1,\dots, z_\frac{3n}{4}}.
        \end{equation*}
        
        If one of the following events happen, the sampling procedure is said to fail:
        \begin{itemize}
            \item $\,\LAspan{z_{\frac{n}{2}+1},\dots, z_\frac{3n}{4}} \cap S \neq \{0\}$ or
            \item $\regR\left(\LAspan{z_{\frac{n}{2}+1},\dots, z_\frac{3n}{4}}\right) \neq \F^{k}$
        \end{itemize}
        If the procedure fails, return $\bot$ together with all subspaces. This occurs with only negligible probability. Indeed, since $\frac{n}{4} \geq \Theta(\lambda)$, the first event occurs with negligible probability. Moreover, since $\frac{n}{4} \geq k + \Theta(\lambda)$, \cref{prop:sampling_subspace} implies that the second event also occurs with negligible probability. Finally, the claim follows by the union bound.

        \item Sample $\frac{n}{4}$ linearly independent vectors $\left\{z_{\frac{3n}{4}+1},\dots,z_n \right\}$ uniformly at random such that together with $\left\{z_1,\dots,z_\frac{3n}{4} \right\}$ form a basis of $\F^n$.
      
        \item
        Let
        \ifnum\llncs=1
        $\co(S) := \LAspan{z_{\left(\frac{n}{2}+1\right)},\dots, z_{n}}$, $\co(T) := \LAspan{z_{\left(\frac{3n}{4}+1\right)},\dots, z_{n}}$, $\co(S^\perp) := \co(S)^\perp$, $\co(T^\perp) := \co(T)^\perp$, $S^+ := \LAspan{z_1,\dots, z_{\left(\frac{3n}{4}-k+1\right)} }$, $ A := T \cap \co(S)$, and $B:= \co(T^\perp) \cap S^\perp$.
        \fi
        \ifnum\llncs=0
        \begin{equation*}
        \begin{cases}  
            \co(S) := \LAspan { z_{\left(\frac{n}{2}+1\right)},\dots, z_{n}} \\
            \co(T) := \LAspan {z_{\left(\frac{3n}{4}+1\right)},\dots, z_{n}} \\
            \co(S^\perp) := \left(\co(S) \right )^\perp &  (\text{By \cref{prop:co_of_perp}})\\
            \co(T^\perp) := \co(T)^\perp &  (\text{By \cref{prop:co_of_perp}}) \\
            P := T/S := \co(S) \cap T 
        \end{cases}
        \end{equation*}
        \fi

        \item For each $i \in [k]$, define
        \begin{equation*}
        \begin{cases}
            P_i := \ker \regR_{-i}|_P \\
            S_i^+ := \LAspan{S, P_i}
        \end{cases}
        \end{equation*}

        \item For each $i \in [k]$, let $\beta_i$ be a set of linearly independent vectors such that:
        \begin{equation*}
        \begin{cases}
            P_i \cap \LAspan{\beta_i}  = \{0\} \\
            \LAspan{P_i, \beta_i} = P.
        \end{cases}  
        \end{equation*}
        Define $P/P_i := \LAspan{\beta_i}$.

        \item For each $i \in [k]$, we define $\co(S^+_i)$, $\co((S^+_i)^\perp)$, and $D_i$ as follows:
        \begin{equation*}
        \begin{cases}
            \co(S^+_i) :=\LAspan{\beta_i, z_{\frac{3n}{4}+1},\dots,z_n} \\
            \co((S^+_i)^\perp) := \co((S^+_i))^\perp &  (\text{By \cref{prop:co_of_perp}})\\
            D_i := \co((S^+_i)^\perp) \cap S^\perp
        \end{cases}
        \end{equation*}        
    \end{enumerate}
    We say that the subspaces $[S,T]$ are sampled successfully (over $\F^n$) if they do not return $\bot$ at Step~3.
    When given $[S,T]$, we condition on the event that they are sampled successfully.
\end{definition}
\input{p_coset_basis}

We illustrate the relationships among the different subspaces and vectors, defined in \cref{sampling}, in the following \cref{fig:coset-basis}.
Next, we present several propositions and theorems that are useful for proving our main theorem.

\begin{proposition}\label[proposition]{coset_stateconstruct}
    The following algorithm can be implemented efficiently.
    \begin{enumerate}
        \item Sample $[S,T]$ uniformly at random over dimension $n$ and sample $(v_\img+u,w) \leftarrow \co(S_i^+) \times \co(S^\perp)$, where $v_{\img} \in P/P_i$ and $u \in \co(T)$.
        \item Construct $\frac{1}{\sqrt{|P_i|}} \sum\limits_{v_{\ker} \in P_i}\ket{v_{\ker}}\ket{S_{v_{\ker}+v_\img+u,w}}$.
    \end{enumerate}
    \synote{$v_\img$and $u$ should be defined in the above. (One can guess, but should be explicitly defined.)}
\end{proposition}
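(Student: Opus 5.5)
The plan is to show that each step of the algorithm is a sequence of classical linear-algebra computations over the polynomial-size prime field $\F$, followed by a standard coherent state-preparation routine. Throughout, $P_i$, $P/P_i$, $\co(S^+_i)$ and related spaces are those of \cref{sampling}. Recall that $n$ and $\log|\F|$ are polynomial in $\lambda$.

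\textbf{Step 1.} Sampling $[S,T]$ consists of:
\begin{itemize}
\item drawing uniformly random vectors $z_1,\dots,z_n$;
\item rejection-testing linear independence by Gaussian elimination, which succeeds with constant probability per trial, so expected polynomial time suffices; a truncated version fails with negligible probability;
\item checking the two failure events (a rank test of $\LAspan{S,z_{n/2+1},\dots,z_{3n/4}}$ and of $\regR$ restricted to that span);
\item computing the derived spaces by Gaussian elimination.
\end{itemize}
The derived spaces are obtained as follows. A basis of $P=\co(S)\cap T$ can be read off directly: since $S=\LAspan{z_1,\dots,z_{n/2}}$ and $\co(S)=\LAspan{z_{n/2+1},\dots,z_n}$, we have $P=\LAspan{z_{n/2+1},\dots,z_{3n/4}}$. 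For $P_i=\ker\regR_{-i}|_P$, solve the linear system $\regR_{-i}Mc=0$, where $M$ is the basis matrix of $P$. Extend a basis of $P_i$ to a basis of $P$ to obtain $\beta_i$. Dual spaces and intersections are again linear systems. Sampling $v_\img\leftarrow P/P_i$, $u\leftarrow\co(T)$ and $w\leftarrow\co(S^\perp)$ amounts to taking uniform coefficient vectors on the respective bases. Since $\co(S^+_i)=\LAspan{\beta_i}\oplus\co(T)$, the sum $v_\img+u$ is uniform on $\co(S^+_i)$.

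\textbf{Step 2.} First prepare the uniform superposition $\frac{1}{\sqrt{|P_i|}}\sum_{c}\ket{c}$ over coefficient vectors $c\in\F^{\dim P_i}$ using $\QFT$ on $\ket{0}$ qudits, and coherently compute $v_{\ker}=M_{P_i}c$ into a register. Since $M_{P_i}$ has full column rank, $c$ can be uncomputed by reversible linear inversion, giving $\frac{1}{\sqrt{|P_i|}}\sum_{v_{\ker}}\ket{v_{\ker}}$.

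Controlled on $v_{\ker}$, we then need $\ket{S_{v_{\ker}+v_\img+u,w}}$. Do this in four sub-steps:
\begin{enumerate}
\item Prepare a uniform superposition over coefficients $d\in\F^{n/2}$ and map it to $\sum_{s\in S}\ket{s}$ via the basis $z_1,\dots,z_{n/2}$, uncomputing $d$ as before.
\item Add $v_{\ker}+v_\img+u$ into this register with a controlled addition, giving $\sum_{z\in S+v}\ket{z}$ where $v=v_{\ker}+v_\img+u$.
\item Apply the phase $\omega_{|\F|}^{z\cdot w}$. This is a diagonal unitary computed by evaluating $z\cdot w$ into an ancilla, applying a phase-kickback gate $\ket{a}\mapsto\omega^a\ket{a}$, and uncomputing.
\end{enumerate}
Each operation is a reversible polynomial-size circuit over $\F$-qudits, with $|\F|\le\poly(\lambda)$. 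No control register is left entangled except $v_{\ker}$ itself, so the result is exactly the claimed state.

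The main point requiring care is clean uncomputation of the coefficient registers, so that no garbage remains entangled. This is handled by the fact that basis maps have full column rank and can be inverted linearly, rather than by any delicate analysis. A secondary subtlety is the negligible failure probability of the sampling; conditioning on success, as stated in \cref{sampling}, resolves it.
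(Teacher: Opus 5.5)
Your proposal is correct and has the same structure as the paper's proof: prepare $\frac{1}{\sqrt{|P_i|}}\sum_{v_{\ker}\in P_i}\ket{v_{\ker}}$ by Gaussian elimination, then, controlled on $v_{\ker}$, build $\ket{S_{v_{\ker}+v_\img+u,w}}$. The only difference is in the coset-state subroutine. You imprint $\omega_{|\F|}^{z\cdot w}$ directly with a diagonal phase; since $w$ is classical, this is just a product of single-qudit gates and needs no ancilla. The paper instead shifts by $-v$, applies $\QFT^{\otimes n}$, adds $w$, and applies $\QFT^{\otimes n}$ again, and it relies on the factors $\omega_{|\F|}^{w\cdot v}$ and $\omega_{|\F|}^{-w\cdot v}$ cancelling so that no $v_{\ker}$-dependent relative phase appears; your route makes the absence of such a phase immediate.
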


\ifnum\llncs=0
\begin{proof}
    Given $v,w \in \F^n$. We show how to construct $\ket{S_{v,w}}$.
    \begin{align*}
        \ket{0} &\mapsto \frac{1}{\sqrt{|S|}}\sum_{z \in S}\ket{z} \xmapsto{\text{minus } v} \frac{1}{\sqrt{|S|}}\sum_{z \in S}\ket{z-v} \\ &\xmapsto{\QFT^{\otimes n}} \frac{1}{\sqrt{|S^\perp|}}\sum_{z' \in S^\perp} \omega_{|\F|}^{z' \cdot (-v)}\ket{z'} = \omega_{|\F|}^{w\cdot v} \cdot  \frac{1}{\sqrt{|S^\perp|}}\sum_{z' \in S^\perp} \omega_{|\F|}^{(z'+w) \cdot (-v)}\ket{z'} \\
        &\xmapsto{\text{add } w } \omega_{|\F|}^{w\cdot v} \cdot  \frac{1}{\sqrt{|S^\perp|}}\sum_{z' \in S^\perp} \omega_{|\F|}^{(z'+w) \cdot (-v)}\ket{z'+w} = \omega_{|\F|}^{w\cdot v} \cdot \ket{S^\perp_{w,-v}} \\
        &\xmapsto{\QFT^{\otimes n}} \ket{S_{v,w}},
    \end{align*}
    where the first mapping can be implemented by Gaussian elimination. To construct the state in Step~2, we again implement $\sum\limits_{v_{\ker} \in P_i}\ket{v_{\ker}}$ using Gaussian elimination. Then, conditioned on $v_{\ker}$, we construct $\ket{S_{v_{\ker}+v_{\img}+u,w}}$ using the algorithm described above.
\end{proof}
\fi

\begin{proposition}\label[proposition]{coset_unitaryconstruct}
    Given $[S,T]$, for each $i \in [k]$, there exists a unitary $U_i$ such that for any $v_{\ker} \in P_i$,
    \begin{equation*}
        U_i \ket{v_{\ker}} = \frac{1}{\sqrt{|D_i|}}\sum_{\widetilde{v}_{\ker} \in D_i}\omega_{|\F|}^{v_{\ker} \cdot \widetilde{v}_{\ker}} \cdot \ket{\widetilde{v}_{\ker}}
    \end{equation*}
    and for any $\widetilde{v}_{\ker} \in D_i$,
    \begin{equation*}
        U_i^\dagger \ket{\widetilde{v}_{\ker}} = \frac{1}{\sqrt{|P_i|}}\sum_{v_{\ker} \in P_i}\omega_{|\F|}^{-v_{\ker} \cdot \widetilde{v}_{\ker}} \cdot \ket{v_{\ker}}.
    \end{equation*}
    Also, $U_i$ can be implemented efficiently.
\end{proposition}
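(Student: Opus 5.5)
The plan is to reduce $U_i$ to an ordinary $\QFT$ over $\F^{d}$, where $d:=\dim P_i$. The key structural fact needed is that the dot product restricted to $P_i \times D_i$ is a \emph{perfect pairing}: $\dim P_i = \dim D_i$, and no nonzero $v \in P_i$ is orthogonal to all of $D_i$.

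\textbf{Step 1: identify $P_i$ as an intersection.} I will first show that $P_i = S_i^+ \cap \co(S)$. By construction $P_i \le P = \co(S)\cap T \le \co(S)$, and $P_i \le S_i^+$, so $P_i \le S_i^+ \cap \co(S)$. For the reverse inclusion, take $z = s + p \in \co(S)$ with $s \in S$ and $p \in P_i \le \co(S)$. Then $s = z - p \in S \cap \co(S) = \{0\}$, so $z = p \in P_i$.

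\textbf{Step 2: apply the fifth item of \cref{prop:dual}.} Apply it with $S \le S_i^+$ and the representative cosets $\co(S)$ and $\co(S_i^+)$ from \cref{sampling}. This needs a routine check that $\co(S_i^+) = \LAspan{\beta_i, z_{\frac{3n}{4}+1},\dots,z_n}$ is a genuine complement of $S_i^+$, which follows from counting dimensions and from $\LAspan{P_i,\beta_i}=P$ with $P\oplus \co(T)$ complementing $S$. Together with Step 1, \cref{prop:dual} gives $P_i \oplus D_i^\perp = \F^n$. Hence $\dim P_i = n - \dim D_i^\perp = \dim D_i$. Moreover, if $v \in P_i$ satisfies $v\cdot \widetilde v = 0$ for all $\widetilde v \in D_i$, then $v \in P_i \cap D_i^\perp = \{0\}$. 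So the Gram matrix between any basis of $P_i$ and any basis of $D_i$ is invertible.

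\textbf{Step 3: dual bases.} Compute by Gaussian elimination a basis $p_1,\dots,p_d$ of $P_i$ and a basis $q_1,\dots,q_d$ of $D_i$. Let $G$ be their Gram matrix, and replace the $q$'s by $d_b := \sum_c (G^{-1})_{cb} q_c$, so that $p_a \cdot d_b = \delta_{ab}$. Then for $v = \sum_a \alpha_a p_a$ and $\widetilde v = \sum_b \beta_b d_b$ we get $v\cdot\widetilde v = \alpha\cdot\beta$.

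\textbf{Step 4: define $U_i$ as a three-stage circuit.} On inputs $\ket{v}$ with $v\in P_i$:
\begin{enumerate}
  \item Map $\ket{v} \mapsto \ket{\alpha}$, where $\alpha$ is the coordinate vector of $v$. This is done by computing $\alpha$ into an ancilla via a linear map, then erasing $v$ by subtracting $\sum_a \alpha_a p_a$.
  \item Apply $\QFT^{\otimes d}$ over $\F$, producing $|\F|^{-d/2}\sum_{\beta}\omega_{|\F|}^{\alpha\cdot\beta}\ket{\beta}$.
  \item Map $\ket{\beta}\mapsto\ket{\sum_b \beta_b d_b}$ reversibly in the same way as the first stage.
\end{enumerate}
Since $|D_i| = |\F|^d$ and $\alpha\cdot\beta = v\cdot\widetilde v$, the output is exactly the claimed state. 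This circuit is an isometry from $\mathrm{span}\{\ket{v}: v\in P_i\}$ onto $\mathrm{span}\{\ket{\widetilde v}: \widetilde v\in D_i\}$. I extend it to a unitary on the whole register; for instance, apply the same construction to a complement coordinate system so that every basis state is handled. Reversing the circuit with $\QFT^{-1}$ gives the formula for $U_i^\dagger$.

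\textbf{Efficiency and main obstacle.} All ingredients are polynomial time: Gaussian elimination, matrix inversion over the prime field $\F$ of polynomial size, reversible linear maps, and $\QFT$ over $\F$. The only non-mechanical point is Step 2, namely verifying the hypotheses of \cref{prop:dual} so that nondegeneracy holds. I expect that step to be the main, though mild, obstacle.
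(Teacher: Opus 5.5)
Your proof is correct. It rests on the same key fact as the paper's proof, namely $P_i \oplus D_i^\perp = \F^n$ from the fifth item of \cref{prop:dual}, but you build the unitary differently.

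The paper never chooses coordinates:
\begin{itemize}
\item It prepares the uniform superposition over $D_i^\perp$ and shifts it by $v_{\ker}$.
\item It uses the direct-sum decomposition only to uncompute $v_{\ker}$ from $z+v_{\ker}$, via the projection onto $P_i$ along $D_i^\perp$. This leaves the coset state $\ket{(D_i^\perp)_{v_{\ker},0}}$.
\item It applies the full $n$-qudit $\QFT$ and reads off the result from the identity $\QFT^{\otimes n}\ket{S_{v,w}} = \omega_{|\F|}^{v\cdot w}\ket{S^\perp_{-w,v}}$.
\end{itemize}
In that route the normalization $1/\sqrt{|D_i|}$ and the phase $\omega_{|\F|}^{v_{\ker}\cdot\widetilde v_{\ker}}$ come for free from the coset-state machinery, and no dual basis or Gram-matrix inversion is needed.

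You instead extract the perfect-pairing structure explicitly ($\dim P_i=\dim D_i$ and an invertible Gram matrix), pass to dual bases, and use only a $d$-qudit $\QFT$. This is more elementary, and it gives a cleaner full unitary than the paper's ancilla circuit, which is only specified on inputs $\ket{0}\ket{v_{\ker}}$ with $v_{\ker}\in P_i$.

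Your extension remark can be made concrete. Extend $p_1,\dots,p_d$ and $d_1,\dots,d_d$ to bases of $\F^n$, and let $L_1,L_2$ be invertible linear maps with $L_1 p_a=e_a$ and $L_2 e_b = d_b$ for $b\le d$. Then $U_i = L_2\circ(\QFT^{\otimes d}\otimes I)\circ L_1$, where each $L_j$ acts as $\ket{x}\mapsto\ket{L_j x}$. This is manifestly an efficient unitary on the whole register, so you do not need a separate ``complement coordinate system''.

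You also verify hypotheses that the paper uses silently. These are that $P_i = S_i^+\cap\co(S)$, and that $\LAspan{\beta_i,z_{\frac{3n}{4}+1},\dots,z_n}$ is a genuine complement of $S_i^+$. Both are needed to invoke \cref{prop:dual} with its $P$ and $D$ equal to $P_i$ and $D_i$.
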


\ifnum\llncs=0
\begin{proof}
    First, we define $U_i$ and calculate $U_i\ket{v_{\ker}}$ as follows.
    \begin{align*}
        \ket{0}\ket{v_{\ker}} 
        \mapsto\quad &\frac{1}{\sqrt{|D_i^\perp|}}\sum_{z \in D_i^\perp}\ket{z}\ket{v_{\ker}} \\
        \mapsto\quad &\frac{1}{\sqrt{|D_i^\perp|}}\sum_{z \in D_i^\perp}\ket{z+v_{\ker}}\ket{v_{\ker}} \\
        \mapsto \quad &\frac{1}{\sqrt{|D_i^\perp|}}\sum_{z \in D_i^\perp} \ket{z+v_{\ker}}\ket{0} = \frac{1}{\sqrt{|D_i^\perp|}}\sum_{z \in D_i^\perp} \omega_{|\F|}^{(z+v_{\ker})\cdot 0}\ket{z+v_{\ker}}\ket{0} = \ket{(D_i^\perp)_{v_{\ker},0}}
        \\
        \xmapsto{\QFT^{\otimes n}} \quad &  
        \ket{(D_i)_{0,v_{\ker}}}=
        \frac{1}{\sqrt{|D_i|}}\sum_{\widetilde{v}_{\ker} \in D_i}\omega_{|\F|}^{v_{\ker} \cdot \widetilde{v}_{\ker}} \ket{\widetilde{v}_{\ker}} \otimes \ket{0}.
    \end{align*}
    The third mapping follows from the fact that $P_i \oplus D_i^\perp = \F^n$ (by \cref{prop:dual}), and hence there exists a function such that $z+v_{\ker} \mapsto v_{\ker}$. Furthermore, this function can be efficiently implemented by using Gaussian elimination. The last mapping simply apply $\QFT^{\otimes n}$ on the first register. Next, we calculate $U^\dagger \ket{\widetilde{v}_{\ker}}$ by reversing all of the operations above.
    \begin{align*}
        &\ket{\widetilde{v}_{\ker}}\ket{0} \xmapsto{\left(\QFT^{-1} \right)^{\otimes n}} \frac{1}{\sqrt{|\F|^n}}\sum_{\substack{z \in D_i^\perp \\ v_{\ker} \in P_i}} \omega_{|\F|}^{-(z+v_{\ker}) \cdot\widetilde{v}_{\ker}} \ket{z+v_{\ker}}\ket{0} \\
        \mapsto\quad &\frac{1}{\sqrt{|\F|^n}}\sum_{\substack{z \in D_i^\perp \\ v_{\ker} \in P_i}} \omega_{|\F|}^{-(z+v_{\ker}) \cdot\widetilde{v}_{\ker}} \ket{z+v_{\ker}}\ket{v_{\ker}} \\
        \mapsto\quad &\frac{1}{\sqrt{|\F|^n}}\sum_{\substack{z \in D_i^\perp \\ v_{\ker} \in P_i}} \omega_{|\F|}^{-(z+v_{\ker}) \cdot\widetilde{v}_{\ker}} \ket{z}\ket{v_{\ker}} 
        = \frac{1}{\sqrt{|D_i^\perp|}}\sum_{z \in D_i^\perp}  \ket{z} \cdot \frac{1}{\sqrt{|P_i|}}\sum_{v_{\ker} \in P_i}\omega_{|\F|}^{-v_{\ker} \cdot\widetilde{v}_{\ker}}\ket{v_{\ker}} \\
        \mapsto\quad &\ket{0} \otimes \frac{1}{\sqrt{|P_i|}}\sum_{v_{\ker} \in P_i}\omega_{|\F|}^{-v_{\ker} \cdot \widetilde{v}_{\ker}}\ket{v_{\ker}}.
    \end{align*}
    The first and second mappings follows from the fact that $P_i \oplus D_i^\perp = \F^n$. The equation follows from the fact that $z \in D_i^\perp$ and $\widetilde{v}_{\ker} \in D_i$, and hence $z\cdot \widetilde{v}_{\ker} = 0$.
\end{proof}
\fi

\begin{theorem}[Quantum Source Seedless Extractor for Coset States]\label[theorem]{coset_extractor}
    Fix $k \leq n \in \N$. Let $\F$ be a prime field and let $\regR \in \F^{k \times n}$ be a matrix such that $\regR$ has full rank. Given subspaces $[S,T]$ over $\F^n$ (\cref{sampling}).
    Let $\cX$ be a register containing $n$ $\F$-qudits. Consider any quantum state
    \begin{equation}\label{thm:extractor_coset}
        \ket{\gamma}_{X}:=  \ket{\widetilde{v}_{\ker}}_{X},
    \end{equation}
     such that $\widetilde{v}_{\ker} \in D_i$ for some $i$. Let $\rho_{Y}$ be the mixed state that results from the following procedure:
    \begin{enumerate}
        \item Apply $U_i^\dagger$ on register $X$, where $U_i$ is defined as \cref{coset_unitaryconstruct}.
        \item Initialize a fresh register $Y:=Y_1 \cdots Y_k$ containing $k$ $\F$-qudits. Apply the isometry $\cI: \ket{x}_X \mapsto \ket{x}_X \otimes \Ket{\regR (x+a)}_Y$, for some arbitrary fixed $a \in \F^n$.
        \item Trace out register $X$.
    \end{enumerate}
    Then,
    \begin{equation*}
        \rho_{Y} =  \left( \frac{1}{|\F|} \sum_{\ell \in \F}  \ket{\ell}\bra{\ell} \right)_{Y_i} \otimes \left ( \regR a\right )[-i]_{Y_{-i}}. 
    \end{equation*}
\end{theorem}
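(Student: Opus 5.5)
Since $U_i^\dagger$ is given explicitly in \cref{coset_unitaryconstruct}, the plan is to track the state through the three steps of the procedure and compute the reduced density matrix on $Y$ directly. Applying $U_i^\dagger$ to $\ket{\widetilde v_{\ker}}$ gives
\begin{equation*}
\frac{1}{\sqrt{|P_i|}}\sum_{v_{\ker}\in P_i}\omega_{|\F|}^{-v_{\ker}\cdot\widetilde v_{\ker}}\ket{v_{\ker}}_X .
\end{equation*}
The isometry $\cI$ then appends $\ket{\regR(v_{\ker}+a)}_Y$. Because $X$ holds the distinct computational basis states $\ket{v_{\ker}}$, tracing out $X$ kills every cross term, and we are left with the classical mixture
\begin{equation*}
\rho_Y=\frac{1}{|P_i|}\sum_{v_{\ker}\in P_i}\ket{\regR(v_{\ker}+a)}\bra{\regR(v_{\ker}+a)}_Y .
\end{equation*}
The phases $\omega^{-v_{\ker}\cdot\widetilde v_{\ker}}$ drop out, so $\widetilde v_{\ker}$ has no effect on the result. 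It remains to show that $\regR v_{\ker}$, for $v_{\ker}$ uniform on $P_i$, is zero on coordinates $-i$ and uniform on coordinate $i$.

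For coordinates other than $i$: by definition $P_i=\ker \regR_{-i}|_P$, so $\regR_{-i}v_{\ker}=0$ and $(\regR(v_{\ker}+a))[-i]=(\regR a)[-i]$ is fixed. This gives the factor $\ket{(\regR a)[-i]}\bra{(\regR a)[-i]}$ on $Y_{-i}$. Since it is deterministic, the state is a tensor product and no correlation with $Y_i$ needs to be checked.

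For coordinate $i$, the map $v_{\ker}\mapsto \regR_i v_{\ker}$ is linear from $P_i$ to $\F$. Its image is either $\{0\}$ or all of $\F$. If it is all of $\F$, every fibre has size $|P_i|/|\F|$, so $\regR_i v_{\ker}$ is uniform when $v_{\ker}$ is uniform on $P_i$. Adding the fixed shift $\regR_i a$ keeps it uniform, which yields $\frac{1}{|\F|}\sum_\ell\ket{\ell}\bra{\ell}$ on $Y_i$.

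The only real content is showing surjectivity onto coordinate $i$, i.e.\ that some $v\in P_i$ has $\regR_i v\neq 0$. Here we use that $[S,T]$ was sampled successfully (\cref{sampling}). Then $Z:=\LAspan{z_{n/2+1},\dots,z_{3n/4}}$ satisfies $\regR(Z)=\F^k$. Moreover $Z\subseteq T$ and $Z\subseteq \co(S)$, so $Z\subseteq P=T\cap\co(S)$, and hence $\regR(P)=\F^k$. Pick $v\in P$ with $\regR v=e_i$. Then $\regR_{-i}v=0$, so $v\in P_i$, and $\regR_i v=1$. This settles surjectivity, and everything else is bookkeeping.
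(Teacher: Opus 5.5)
Your proposal is correct and follows the paper's proof essentially step by step. You compute $U_i^\dagger\ket{\widetilde v_{\ker}}$ via \cref{coset_unitaryconstruct}, append $\regR(v_{\ker}+a)$, and trace out $X$ to obtain a uniform classical mixture over $P_i$. You then read off $Y_{-i}$ as fixed from $P_i=\ker\regR_{-i}|_P$, and get uniformity on $Y_i$ from $\regR(P_i)=\LAspan{e_i}$. Your explicit derivation of $\regR(P)=\F^k$ (via $\LAspan{z_{n/2+1},\dots,z_{3n/4}}\subseteq P$) and your fibre-counting argument are slightly cleaner versions of the paper's bare appeal to $\regR(T/S)=\F^k$ and its enumeration of the cosets $jv+W$.
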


\ifnum\llncs=0
\begin{proof}
    \begin{alignat*}{3}
        \ket{\gamma}_{X} &:= \quad&& \ket{\widetilde{v}_{\ker}}_X
        \\
        &\xmapsto{\text{Apply } U_i^\dagger \text{ on } X} \quad&& \frac{1}{\sqrt{|P_i|}} \ \sum_{v_{\ker} \in P_i } \omega^{-v_{\ker} \cdot \widetilde{v}_{\ker}}\ket{v_{\ker}}_X \\
        &\xmapsto{\text{Apply } \cI} \quad&& \frac{1}{\sqrt{|P_i|}} \ \sum_{v_{\ker} \in P_i } \omega^{-v_{\ker} \cdot \widetilde{v}_{\ker}}\ket{v_{\ker}}_X \otimes \Ket{\regR(v_{\ker}+a)}_Y \\
        &\xmapsto{\text{Trace out } X} \quad&& \frac{1}{|P_i|} \ \sum_{v_{\ker} \in P_i } \Ket{\regR(v_{\ker}+a)}\Bra{\regR(v_{\ker}+a)}_Y \\
        &= \quad&& \left( \frac{1}{|\F|} \sum_{\ell \in \F}  \ket{\ell}\bra{\ell} \right)_{Y_i} \otimes  \Ket{(\regR a) [-i]}\Bra{(\regR a) [-i]}_{Y_{-i}}.    
    \end{alignat*}
    The first mapping follows from \cref{coset_unitaryconstruct}. Next, we prove the last equality. Let $W = \ker \regR|_{P_i}$. Since $v_{\ker} \in P_i$, $\regR (v_{\ker}) \in \LAspan{e_i}$. Hence, the $Y_{-i}$ registers must contain a fixed vector $(\regR a)[-i]$. Furthermore, $\regR(T/S)=\F^k$ implies that $\regR(P_i)=\LAspan{e_i}$, and hence $W \lneq P_i$.
    For each $v \in P_i$, consider the coset $v+W$. These cosets are either identical or disjoint and they all have the same size. Let $x,y,w \in P_i$ such that $x = y+w$. Then, we know that 
    \begin{equation*}
        \regR(x) = \regR(y) \iff \regR(w)=0 \iff x+W = y+W
    \end{equation*}
    Since $W \lneq P_i$, there exists a $v \in P_i$ such that $v+W \neq W$. This implies that the set
    \begin{equation*}
        W, v+W,2v+W, \dots, (|\F|-1)v+W
    \end{equation*}
    are all disjoint and their union is $P_i$. Since
    \begin{equation*}
        \big\{\regR(W), \regR(v+W),\regR(2v+W), \dots, \regR((|\F|-1)v+W) \big\}= \F,
    \end{equation*}
    the register $Y_i$ has the state
    \begin{equation*}
        \left( \frac{1}{|\F|} \sum_{\ell \in \F}  \ket{\ell}\bra{\ell} \right)_{Y_i}.
    \end{equation*}

\end{proof}
\fi

%% file: p_coset_basis.tex
\begin{figure}[t]
\centering
\resizebox{0.98\linewidth}{!}{%
\begin{tikzpicture}[
  line/.style={blue!80!black, very thick},
  note/.style={red, font=\normalsize},
  lab/.style={blue!80!black, font=\normalsize},
  top/.style={green!70!black, font=\normalsize},
  every node/.style={font=\normalsize}
]

\begin{scope}[shift={(0,0)}]

\def\W{6}
\def\H{3}
\def\xhalf{3.0}       
\def\xfive{3.75}      
\def\xthree{4.5}      
\def\xlabelshift{3.56} 

\node[top] at (-1.9,3.35) {$\dim$};
\node[top] at (-0.2,3.35) {$0$};
\node[top] at (\xhalf,3.35) {$\frac{n}{2}$};

\draw[line] (0,0) rectangle (\W,\H);

\draw[line] (0,2.1) -- (\W,2.1);
\draw[line] (0,1.25) -- (\W,1.25);

\draw[line] (\xhalf,2.1) -- (\xhalf,\H);      
\draw[line] (\xfive,1.25) -- (\xfive,2.1);   
\draw[line] (\xthree,0) -- (\xthree,1.25);   

\node[lab] at (1.5,2.55) {$S$};
\node[lab] at (4.5,2.55) {$\mathrm{co}(S)$};

\node[lab] at (1.5,1.68) {$S_i^{+}$};

\node[lab] at (2.25,0.6) {$T$};
\node[lab] at (5.25,0.6) {$\mathrm{co}(T)$};

\node[note] at (3.36,1.68) {$v_{\ker}$};
\node[note] at (4.22,1.68) {$v_{\img}$};
\node[note] at (5.25,1.68) {$u$};

\draw[red, thick]
  (3.03,1.25) .. controls (3.18,1.48) and (3.57,1.48) .. (3.72,1.25);

\draw[red, thick]
  (3.78,1.25) .. controls (3.92,1.48) and (4.33,1.48) .. (4.47,1.25);

\draw[red, thick]
  (4.53,1.25) .. controls (4.85,1.48) and (5.68,1.48) .. (5.97,1.25);

\node[top] at (\xlabelshift,0.98) {$\frac{3n}{4}-k+1$};

\draw[blue!80!black, thick]
  (\xhalf,0) .. controls (3.25,-0.38) and (4.25,-0.38) .. (\xthree,0);

\node[lab] at (3.75,-0.52) {$P$};
\node[top] at (\xthree,-0.55) {$\frac{3n}{4}$};

\node[note] at (\W/2,-1.55)
  {$v_0 \coloneqq v_{\ker}+v_{\img} \in {\color{blue!80!black}P}$};

\end{scope}

\begin{scope}[shift={(10.9,0)}]

\def\W{6}
\def\H{3}
\def\xhalf{3.0}       
\def\xfive{3.75}      
\def\xthree{4.5}      
\def\xlabelshift{3.56} 

\node[top] at (\xhalf,3.35) {$\frac{n}{2}$};
\node[top] at (5.65,3.35) {$0$};
\node[top] at (6.75,3.35) {$\dim$};

\draw[line] (0,0) rectangle (\W,\H);

\draw[line] (0,2.1) -- (\W,2.1);
\draw[line] (0,1.25) -- (\W,1.25);

\draw[line] (\xhalf,2.1) -- (\xhalf,\H);      
\draw[line] (\xfive,1.25) -- (\xfive,2.1);   
\draw[line] (\xthree,0) -- (\xthree,1.25);   

\node[lab] at (1.5,2.55) {$\mathrm{co}(S^\perp)$};
\node[lab] at (4.5,2.55) {$S^\perp$};

\node[lab] at (5.25,1.68) {$(S_i^{+})^\perp$};

\node[lab] at (2.25,0.6) {$\mathrm{co}(T^\perp)$};
\node[lab] at (5.25,0.6) {$T^\perp$};

\node[note] at (1.5,1.68) {$w$};
\node[note] at (3.36,1.68) {$\tilde v_{\ker}$};
\node[note] at (4.22,1.68) {$\tilde v_{\img}$};

\draw[red, thick]
  (0.03,2.10) .. controls (0.65,1.86) and (2.35,1.86) .. (2.97,2.10);

\draw[red, thick]
  (3.03,1.25) .. controls (3.18,1.48) and (3.57,1.48) .. (3.72,1.25);

\draw[red, thick]
  (3.78,1.25) .. controls (3.92,1.48) and (4.33,1.48) .. (4.47,1.25);

\node[top] at (\xlabelshift,0.98) {$\frac{n}{4}+k-1$};

\draw[blue!80!black, thick]
  (\xhalf,0) .. controls (3.25,-0.38) and (4.25,-0.38) .. (\xthree,0);

\node[lab] at (3.75,-0.52) {$D$};
\node[top] at (\xthree,-0.55) {$\frac{n}{4}$};

\node[note] at (\W/2,-1.55)
  {$\tilde v_0 \coloneqq \tilde v_{\ker}+\tilde v_{\img} \in {\color{blue!80!black}D}$};

\end{scope}

\end{tikzpicture}%
}
\caption{Coset basis decomposition.}{ Each blue rectangle represents a vector space. An overlap between the sides of two rectangles indicates that they share a subset of basis vectors, with the size of the overlap corresponding to the dimension of the shared subspace.}
\label{fig:coset-basis}
\end{figure}

%% file: 5-1_BB84_states.tex
\section{Certified Deletion for BB84 States}\label{section:bb84_m}
In this section, we extend the certified everlasting lemma~\cite{C:BarKhu23} to the constant-rate regime. 
When instantiated with ``all-or-nothing''–type primitives, this extension yields certified deletion constructions with constant-rate.
\shota{I added this. Is this correct?}\HZZ{I do not read HKM+24, I'm not sure. Also, do BB84 construction protect program?}\synote{HKM+ applications do not work}

\subsection{Main Theorem}\label{section:bb84}
Here, we prove the constant-rate version of the certified deletion lemma.
To capture a wide range of applications, we abstract out the hiding mechanism underlying the primitives as quantum operation $\cZ$ with semantic security, following \cite{C:BarKhu23}.
$\cZ$ may be public key encryption, secret key encryption, commitments, or witness encryption.
\begin{definition}[Semantically Secure]\label[definition]{semantic}
    We say that a quantum operation $\{\cZ_\lambda\}_{\lambda \in \N}$ is \emph{semantically-secure} against any non-uniform QPT $\{\As_\lambda\}_{\lambda \in \N}$, if for all polynomial $\poly(\cdot)$, for all $s_1,s_2 \in \F^{\poly(\lambda)}$, for all $\sigma_\lambda$ with $\poly(\lambda)$ qubits, the following holds: 
    \begin{equation*}
        \Big| \Pr[\As_\lambda(\Zs_\lambda(s_1),\sigma_\lambda) = 1] -\Pr[\As_\lambda(\Zs_\lambda(s_2),\sigma_\lambda) = 1] \Big| = \negl(\lambda).
    \end{equation*}
\end{definition}

\begin{theorem}\label{theorem:bb84_maintheorem}
    Let $\lambda \in \N$ be the security parameter. Fix $k,n \in \N$ 
    such that $3k+\Theta(\lambda) \leq 2n $. Consider a family of finite prime fields $\{\F_\lambda\}_{\lambda \in \N}$ with polynomial bounded field sizes. Let $\regR \in \F^{k \times n}$ be a matrix that is both MDS and efficiently decodable with error
    $\left\lfloor \frac{n-k}{2} \right\rfloor$
    \ifnum\llncs=0(\cref{prop_extract})\fi. Let $m \in \F^k$ be a classical message and $\Zs = \{\Zs_\lambda(\cdot)\}_{\lambda \in \N}$ be a quantum operation with classical input. Define the following distribution $\left\{\EXP_\lambda^{\As_\lambda}(m) \right\}_{\lambda \in \N }$ over quantum states, obtained by running $\As_\lambda$ as follows.
    \begin{enumerate}
        \item Sample $x \leftarrow \F^{2n}$, a balanced $\theta \leftarrow \{0,1\}^{2n}$, which means that $\theta$ is sampled randomly under the condition $h(\theta)=n$, and $r \leftarrow \F^k$. Initialize $\As_\lambda$ with
        \begin{equation}\label{equ:bb84_construction}
            \QFT^{\theta}\ket{x}, m+\regR x_{\theta_0}+r, \Zs_\lambda(\theta, r). 
        \end{equation}
        \item $\As_\lambda$'s output is parsed as a string $x' \in \F^{2n}$ and a residual state $\rho$. If $x_{\theta_1} = x'_{\theta_1}$, then output $\rho$, and otherwise output a special symbol $\bot$.
    \end{enumerate}
    If $\Zs$ is semantically-secure against any non-uniform QPT $\{\As_\lambda\}_{\lambda \in \N}$ (\cref{semantic}), then for all non-uniform QPT $\{\As_\lambda\}_{\lambda \in \N}$ and for all $m \in \F^k$,
    \begin{equation}\label{equ:residualInfo}
        \TD\left(\EXP_\lambda^{\As_\lambda}(m), \EXP_\lambda^{\As_\lambda}(0^k) \right) = \negl(\lambda).
    \end{equation}
\end{theorem}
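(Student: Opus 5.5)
Following the technical overview, we will prove \cref{theorem:bb84_maintheorem} by position-wise guessing. We use a telescoping hybrid over coordinates. For $j\in[k]_0$ let $m^{(j)}$ agree with $m$ on its first $k-j$ coordinates and be zero elsewhere. Then $\TD(\EXP(m),\EXP(0^k))\le\sum_{j}\TD(\EXP(m^{(j)}),\EXP(m^{(j+1)}))$, and each summand compares two messages that differ only in a single coordinate $i$. Since $k=\poly(\lambda)$, it suffices to show that changing one field element $m[i]$ changes the output by $\negl(\lambda)$, uniformly in the other coordinates.

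Fix $i$ and two messages $m,\hat m$ differing only at position $i$. We first rewrite both experiments in purified form. The challenger prepares $\sum_x\ket{x}_C\ket{x}_A$ and gives $A$ to the adversary after applying $\QFT^{\theta}$; applying $\QFT^{\theta}$ on $C$ instead is equivalent up to relabeling. We then replace the $i$-th coordinate of $m+\regR x_{\theta_0}+r$ by a uniformly guessed $b\leftarrow\F$ and postselect on consistency at the end. This costs a factor $|\F|=\poly(\lambda)$ in trace distance, which is acceptable.

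Next we move to the ideal world where $\Zs(\theta,r)$ is replaced by $\Zs(0,0)$. There $r$ is a uniform one-time pad, so the string sent, including coordinates $-i$, can be sampled as a uniform $\ell$ independent of $x,\theta$. We set $r:=\ell-m-\regR x_{\theta_0}$ afterward, which lets us defer the sampling of $\theta$ until after $\cA$ outputs $x'$ and $\rho$. Applying \cref{cut_and_choose} with the random balanced $\theta$ (the $\theta_1$ half checked against $x'$), we obtain that, conditioned on passing verification, the state on $C_{\theta_0}$ is negligibly close (via \cref{gentle_measurement}) to one supported on $\QFT^{-1}\ket{x'_{\theta_0}+s}$ with $h(s)\le\frac{n-k}{2}$. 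Here we use $3k+\Theta(\lambda)\le 2n$ so that the bound $2e^{-c^2/32n}$ is negligible for $c=\frac{n-k}{2}$. Coherently computing and measuring $\regR_{-i}y$ to match the leaked $\ell_{-i}$ yields exactly the form \eqref{thm:extractor_init} after a $\QFT$, with phases $\alpha_w$ given by $f$.

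The main obstacle is transferring this structure to the real world, where $\theta$ is only computationally hidden. We will define an efficient projective predicate $\Pi$ on $C_{\theta_0}$ (together with $x',\ell$) as follows. Apply $\QFT$, then use the efficient decoder of $\regR$ (MDS with $\lfloor\frac{n-k}{2}\rfloor$ error correction) to map $\ket{x'+s+w}$ with $w\in\rowSpan(\regR_{-i})$ to $\ket{s}\ket{w}$. Next, coherently uncompute the $w$-register together with its phase $\alpha_w$, which is determined by $\ell_{-i}$, by applying an inverse Fourier transform on the coefficient register and checking that it returns to $\ket{0}$. Finally, check that the Hamming weight of $s$ is small.

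We must ensure that $\Pi$ accepts exactly the span of states of the form \eqref{thm:extractor_init}, so that it certifies no entanglement with $w$. If the acceptance probability differed noticeably between the real and ideal worlds, a reduction would simulate the challenger, the adversary and $\Pi$, and distinguish $\Zs(\theta,r)$ from $\Zs(0,0)$, contradicting semantic security. Hence in the real world, conditioned on verification, $\Pi$ accepts with overwhelming probability, and by gentle measurement the state is close to the projected one.

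We then apply \cref{bb84_extractor}: measuring $C_{\theta_0}$ in the computational basis and computing $\regR y$ gives $Y_{-i}=\ell'_{-i}$ and $Y_i$ uniform, independent of the adversary's register $E$. Therefore $b-(m[i]+\regR_i x_{\theta_0}+r[i])$ is uniform and independent of $\rho$, so the postselected outputs for $m$ and $\hat m$ coincide up to negligible error. Multiplying by $|\F|$ and summing over $i\in[k]$ gives \eqref{equ:residualInfo}.
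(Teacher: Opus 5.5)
Your proposal is correct and follows the paper's proof essentially step for step. The shared steps are: position-wise guessing with a $|\F|$ loss; EPR purification; switching to $\Zs(0,0)$ so that $r$ acts as a one-time pad and $\theta$ can be deferred; cut-and-choose plus gentle measurement; a decoder-based unitary (the paper's $U_{x,a}[i]$) that removes the $w$-entanglement and gives an efficiently checkable predicate, transferred to the real world by semantic security; and finally \cref{bb84_extractor}. The only things to state more carefully, exactly as the paper does, are these: the coset label and phase parameter used by the predicate is $(\ell-m-r)_{-i}=\regR_{-i}x_{\theta_0}$, which the reduction can compute, not the leaked string $\ell_{-i}$ itself; and the predicate should be the disjunction ``verification fails or the structure holds,'' so that it is the joint probability of passing verification and failing the structure check that is shown negligible, not a conditional probability.
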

%
%

\begin{remark}
    An honest deletion operation applies $\left(\QFT^{-1}\right)^{\otimes 2n}$ to the quantum state, measures the resulting state in the computational basis, and sends all measurement outcomes to the sender.
\end{remark}

\begin{remark}
    \cref{equ:bb84_construction} in \cref{theorem:bb84_maintheorem} demonstrates that our construction attains a constant rate. Specifically, the hidden message $m$ has a size of $k \cdot \log |\F|$ bits, while the quantum state $\QFT^{\theta}\ket{x}$, which encrypts the message, consists of $2n \cdot \log |\F|$ qubits. By setting $2n=3k+\Theta(\lambda)$, we achieve a constant rate of approximately $\frac{1}{3}$ assuming that the message size is significantly larger than the security parameter.    
    Moreover, \cref{equ:residualInfo} formalizes that all information regarding the underlying message $m$ is deleted once the deletion proof $x'$ is certified.
\end{remark}

\ifnum\llncs=1
\begin{remark}\label[remark]{remark:llncs_rate1_m}
    The statement in \cref{theorem:bb84_maintheorem} does not imply an approximate rate-1 construction. However, it is enough to set $(1+\epsilon)k + \Theta(\lambda) \leq n$, for any constant $\epsilon > 0$. Moreover, we do not need to sample a balanced $\theta$ in Step 1. To minimize the number of quantum bits used, we may sample $\theta \leftarrow \{0,1\}^{n+\epsilon}$ such that exactly $n$ entries are $0$ (computational basis, for encrypting $m$) and $\epsilon$ entries are $1$ (Hadamard basis, for verified deletion). The size of $x$ is then adjusted to $\mathbb{F}^{n+\epsilon}$ to match the length of $\theta$.

    With this adjustment, we obtain a rate-$(1-\epsilon')$ certified deletion scheme for primitives instantiated from BB84, for any constant $\epsilon' >0$. Since the proof is similar, we instead present the result with cleaner parameters for clarity.
\end{remark}
\fi

\ifnum\llncs=0
\begin{remark}\label[remark]{remark:rate1}
    The statement in \cref{theorem:bb84_maintheorem} does not imply an approximate rate-1 construction. However, it is enough to set $(1+\epsilon)k + \Theta(\lambda) \leq n$, for any constant $\epsilon > 0$. Moreover, we do not need to sample a balanced $\theta$ in Step 1. To minimize the number of quantum bits used, we may sample $\theta \leftarrow \{0,1\}^{n+\epsilon n}$ such that exactly $n$ entries are $0$ (computational basis, for encrypting $m$) and $\epsilon n$ entries are $1$ (Hadamard basis, for verified deletion). The size of $x$ is then adjusted to $\mathbb{F}^{n+\epsilon n}$ to match the length of $\theta$.

    With this adjustment, we obtain a rate-$(1-\epsilon')$ certified deletion scheme for primitives instantiated from BB84, for any constant $\epsilon' >0$. Since the proof is similar, we instead present the main theorem with cleaner parameters for clarity. To achieve an approximately rate-1 construction, the main difference in the proof is the use of an unbalanced cut-and-choose argument.

    Additionally, if one is satisfied with merely negligible security, then $(1+\epsilon)k + \omega(\log\lambda) \leq n$ are sufficient. However, in most settings a failure probability smaller than inverse exponential is required, in which case $\Theta(\lambda)$ is required instead of $\omega(\log\lambda)$.
\end{remark}
\fi

\begin{proof}[Proof of \cref{theorem:bb84_maintheorem}]
To prove the theorem, we define a sequence of hybrid distributions.
We start by defining $\hyb_0(m)$ to be identical to $\EXP_\lambda^{\As_\lambda}(m)$. Our goal is therefore to show that $\hyb_0(m)$ and $\hyb_0(0^k)$ are statistically close for all $m$.
We establish this via a hybrid argument over individual field elements. Without loss of generality, we focus on the first coordinate: specifically, we show that $\hyb_0(m)$ and $\hyb_0(m+\alpha e_1)$ are statistically close for any $\alpha \in \F$, where $e_1 = 1|0^{k-1}$.
To this end, we introduce a sequence of hybrid experiments from $\hyb_0$ to $\hyb_3$, such that each pair of consecutive hybrids is statistically indistinguishable. It therefore suffices to show that $\hyb_3(m)$ and $\hyb_3(m+\alpha e_1)$ are statistically close. This final step relies on the extraction lemma from \cref{bb84_extractor}.
Applying the extraction lemma requires that the challenger’s quantum state satisfies a specific structural condition. This condition is enforced by inserting a measurement $\mathcal{M}$ in $\hyb_3$. To argue that introducing $\mathcal{M}$ only negligibly affects the experiment, we show that its outcome is almost fixed and invoke the gentle measurement lemma.
To establish this, we further introduce sub-hybrids $\hyb_2^a$ through $\hyb_2^e$. The consecutive games are  indistinguishable by the semantic security of $\cZ$, and if the outcome distribution of $\mathcal{M}$ were to change noticeably, the measurement result itself would yield a predicate that distinguishes the hybrids,
since the measurement can be efficiently implemented.
%


\paragraph{$\hyb_0(m)$:} This is the distribution $\{\EXP_\lambda^{\As_\lambda}(m)\}_{\lambda \in \N}$ described above.

\paragraph{$\hyb_1(m)$:}
In this hybrid, the first field of the second register (which contains information about the message) is replaced by a uniformly random field element $b \leftarrow \F$. The dependence on the first element of the message, $m[1]$ ,is deferred until after the adversary sends a deletion certificate $x'$.
\begin{enumerate}
    \item Sample $x \leftarrow \F^{2n}$, a balanced $\theta \leftarrow \{0,1\}^{2n}$, $r \leftarrow \F^k$, \textcolor{red}{$b \leftarrow \F$}. 
    \item Initialize $\As_\lambda$ with
    \begin{equation*}
        \QFT^{\theta}\ket{x}_{A_{\theta_0}A_{\theta_1}},\, \textcolor{red}{b}\|(m+\regR x_{\theta_0}+r)[2:k],\, \Zs_\lambda(\theta, r).
    \end{equation*}
    $\As_\lambda$'s output is parsed as a string $x' \in \F^{2n}$ and a residual state $\rho$.
    
    Here, $\ket{x}_{A_{\theta_0}A_{\theta_1}}$ denotes an encoding of $x$ into a register $A$ consisting of $2n$ qudits.
The register $A$ is notationally decomposed into two parts, $A_{\theta_0}$ and $A_{\theta_1}$: the $i$-th qudit belongs to $A_{\theta_0}$ if $\theta[i] = 0$, and to $A_{\theta_1}$ if $\theta[i] = 1$, where $\theta[i]$ is the $i$-th bit of $\theta\in \bin^{2n}$.
Importantly, this decomposition is only for the purpose of analysis. In the actual experiment, the entire register $A$—containing all $2n$ qudits—is provided at once, and the holder of $A$ has no information about the partition induced by $\theta$.


    \item If \textcolor{red}{$b \neq (m+\regR x_{\theta_0}+r)[1]$} or $x_{\theta_1} \neq x'_{\theta_1}$, then output $\bot$. Otherwise, output $\rho$.
\end{enumerate}

\paragraph{$\hyb_2(m)$:}
In this hybrid, $x$ is purified, and its choice is deferred until after the adversary sends a deletion certificate $x'$.
\begin{enumerate}
    \item Sample a balanced $\theta \leftarrow \{0,1\}^{2n}$, $r \leftarrow \F^k$, and $b \leftarrow \F$. {\color{red} Prepare the following state:}
    \begin{equation*}
        \sum_x  \Ket{x}_{C_{\theta_0}C_{\theta_1}}\Ket{b\|(m+\regR x_{\theta_0}+r)[2:k]}_{C_2}\otimes \Ket{x}_{A_{\theta_0}A_{\theta_1}}\Ket{b\|(m+\regR x_{\theta_0}+r)[2:k]}_{A_2}.
    \end{equation*} 
    \item Initialize $\As_\lambda$ with the registers $A_{\theta_0}A_{\theta_1}A_2, \, \Zs_\lambda(\theta, r)$.
    $\As_\lambda$'s output is parsed as a string $x' \in \F^{2n}$ and a residual state $\rho$.
    \item \textcolor{red}{Measure $C_{\theta_0}$ and $C_{\theta_1}$ with computational basis and Fourier basis, respectively. That is, apply $\QFT^{\otimes n}$ on $C_{\theta_1}$ and measure in computational basis. Say we obtain $x$}.
    \item If $b \neq (m+\regR x_{\theta_0}+r)[1]$ or $x_{\theta_1} \neq x'_{\theta_1}$, then output $\bot$. Otherwise, output $\rho$.
\end{enumerate}

\paragraph{$\hyb_3(m)$:}
In this hybrid, we insert a projective measurement $\{\cM,I-\cM\}$ immediately before Step 3 of the previous hybrid $\hyb_2(m)$. The measurement $\{\cM,I-\cM\}$ is defined right after the hybrid is described.
\begin{enumerate}
    \item Sample a balanced $\theta \leftarrow \{0,1\}^{2n}$, $r \leftarrow \F^k$, and $b \leftarrow \F$. Prepare the following state:
    \begin{equation*}
        \sum_x  \Ket{x}_{C_{\theta_0}C_{\theta_1}}\Ket{b\|(m+\regR x_{\theta_0}+r)[2:k]}_{C_2}\otimes \Ket{x}_{A_{\theta_0}A_{\theta_1}}\Ket{b\|(m+\regR x_{\theta_0}+r)[2:k]}_{A_2}.
    \end{equation*} 
    \item Initialize $\As_\lambda$ with the registers $A_{\theta_0}A_{\theta_1}A_2$ and $\Zs(\theta, r)$. $\As_\lambda$'s output is parsed as a string $x' \in \F^{2n}$ and a residual state $\rho$.
    \item \textcolor{red}{Apply a projective measurement $\{\Ms, I-\Ms\}$ and post-select on obtaining the outcome $I-\Ms$.} Note that in the hybrids, the guessed field element $b$ is placed at position $i=1$. Also, the definition of $\cM$ is determined by $\theta$, $x'$, $r$, $m[-i]$, and $i$.
    \item Measure $C_{\theta_0}$ and $C_{\theta_1}$ with computational basis and Fourier basis, respectively. Say we obtain $x$.
    \item If $b \neq (m+\regR x_{\theta_0}+r)[1]$ or $x_{\theta_1} \neq x'_{\theta_1}$, then output $\bot$. Otherwise, output $\rho$.
\end{enumerate}
To define $\Ms$, we first define a unitary $U_{x,a}[i]$ through
its inverse.
\paragraph{Definition of $U^{-1}_{x,a}[i]$:} Let $x \in \F^n$ and $a \in \F^k$. $U^{-1}_{x,a}[i]$ is a unitary that satisfies the following for all $s$ such that $h(s) \leq \frac{n-k}{2}$:
\begin{equation*}
    \ket{x+s} \xmapsto{U^{-1}_{x,a}[i]} \left(\QFT^{-1}\right)^{\otimes n} \cdot \frac{1}{\sqrt{|\F|^{k-1}}} \sum_{y \in x+s+\rowSpan(\regR_{-i})} \omega_{|\F|}^{f(x+s-y,-a)}  \ket{y},
\end{equation*}
where $f$ is defined in \cref{def:extractor_f} and $\regR_{-i}$ denote the $(k-1)\times n$ matrix obtained from $\regR$ by removing its $i$-th row.

\ifnum\llncs=1
\begin{lemma}\label[lemma]{bb84_defU_eff}
    $U^{-1}_{x,a}[i]$ is a unitary and $U_{x,a}[i]$ can be implemented efficiently.
\end{lemma}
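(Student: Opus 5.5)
The plan is to prove both claims at once. I will exhibit an explicit polynomial-size reversible circuit $V$, acting on $X$ together with ancillas initialized to $\ket{0}$. For every $s$ with $h(s)\le \frac{n-k}{2}$, it should map the target state
\[
\ket{\phi_s}:=\left(\QFT^{-1}\right)^{\otimes n}\frac{1}{\sqrt{|\F|^{k-1}}}\sum_{y\in x+s+\rowSpan(\regR_{-i})}\omega_{|\F|}^{f(x+s-y,-a)}\ket{y}
\]
to $\ket{x+s}\ket{0}$. We then set $U_{x,a}[i]:=V$ and $U^{-1}_{x,a}[i]:=V^\dagger$. Before building $V$, I will check that the prescribed action is consistent with a unitary, i.e.\ that $\{\ket{\phi_s}\}_s$ is orthonormal; an isometry defined on $\LAspan{\ket{x+s}}$ then extends to a unitary on the whole space.

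\textbf{Orthonormality.} Write $W:=\rowSpan(\regR_{-i})$. Since the rows of $\regR$ are linearly independent (it is MDS with $k\le n$), $|W|=|\F|^{k-1}$, so each $\ket{\phi_s}$ is normalized. For $s\neq s'$, the supports $x+s+W$ and $x+s'+W$ (before the inverse QFT) are disjoint. Indeed, $s-s'\in W$ would make $s-s'$ a nonzero codeword $\alpha\regR$ of weight at most $n-k$. But by the MDS property any $k$ columns of $\regR$ are independent, so $\alpha\regR$ vanishes on at most $k-1$ coordinates, i.e.\ it has weight at least $n-k+1$. Disjoint supports give orthogonality, and $\QFT^{-1}$ preserves it.

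\textbf{Circuit for $U_{x,a}[i]$.} The steps, applied to $\ket{\phi_s}$, are as follows.
\begin{enumerate}
\item Apply $\QFT^{\otimes n}$ to $X$. Writing $y=x+s+w$ with $w=\sum_{j\neq i}\alpha_j\regR_j$, this gives the superposition $\sum_{\alpha:\alpha_i=0}\omega_{|\F|}^{f(-w,-a)}\ket{x+s+w}$. By bilinearity (\cref{pre:f}), $f(-w,-a)=\sum_j\alpha_j a[j]$.
\item Run the efficient decoder $\cD$ reversibly on $y-x=s+w$ into a fresh register, obtaining $\ket{\alpha}$. This recovers $\alpha$ correctly because $h(s)\le\lfloor\frac{n-k}{2}\rfloor$. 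Subtract $\alpha\regR$ from $X$, leaving $\ket{x+s}\ket{\alpha}$.
\item Apply the controlled phase $\ket{\alpha}\mapsto\omega_{|\F|}^{-\alpha\cdot a}\ket{\alpha}$ to cancel the phase from step 1. The $\alpha$ register is now in the uniform superposition over $\{\alpha:\alpha_i=0\}$, independent of $s$.
\item Apply $\QFT^{-1}$ to the $k-1$ nontrivial coordinates of $\alpha$, which maps this uniform superposition to $\ket{0}$.
\end{enumerate}
Each step is a QFT, a reversible classical computation (the decoder, linear maps over $\F$), or a diagonal phase, so $V$ is polynomial size.

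\textbf{Main obstacle.} The delicate point is the clean uncomputation of the ancilla. The decoded $\alpha$ must end in a state that is independent of $s$, otherwise $X$ stays entangled with the ancilla and $V$ does not realize the claimed map. This holds precisely because the phase $\omega_{|\F|}^{f(-w,-a)}$ depends only on $\alpha$ and $a$, not on $s$, and is removed in step 3. This is also exactly where decoding correctness matters: a single decoding error on some branch would corrupt the whole construction, so the bound $h(s)\le\lfloor\frac{n-k}{2}\rfloor$ from the hypothesis is essential.

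Finally, on the orthogonal complement of $\LAspan{\ket{\phi_s}}$ the unitary $V$ acts somehow, with the ancilla possibly left dirty. Since the lemma only prescribes behaviour on the states $\ket{x+s}$, it suffices to take $U^{-1}_{x,a}[i]$ to be $V^\dagger$ restricted to the ancilla-$\ket{0}$ subspace. This restriction agrees with the definition on the relevant subspace and extends to a unitary on the rest.
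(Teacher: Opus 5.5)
Your proposal is correct and follows essentially the same route as the paper. Both proofs first show that the cosets $x+s+\rowSpan(\regR_{-i})$ are pairwise disjoint, using the MDS minimum-distance bound $n-k+1 > n-k \geq h(s_1-s_2)$; both then use a circuit that applies $\QFT^{\otimes n}$, decodes reversibly to recover the coefficient vector, cancels the phase $\omega_{|\F|}^{f(-w,-a)}$ controlled on that vector, and clears the coefficient register with a Fourier transform. Your in-place subtraction and your remark about the action outside the span of the $\ket{\phi_s}$ differ only cosmetically from the paper's separate output register and ``trace out $BCD$'' step.
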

\fi

\ifnum\llncs=0
First, we will show that $U^{-1}_{x,a}[i]$ is indeed a unitary. Suppose $s_1 \neq s_2$ are two vectors, with both Hamming weight not greater than $\frac{n-k}{2}$. The property that every set of $k$ columns of $\regR$ is linearly independent implies that every non-zero vector in $\rowSpan(\regR)$ has Hamming weight at least $n-k+1$, whereas $h(s_1-s_2) \leq n-k$. Hence, $0 \neq s_1-s_2 \not\in \rowSpan(\regR)$. This implies that $s_1-s_2 \not\in \rowSpan(\regR_{-i})$. Then,   
\begin{equation*}
    x+s_1+\rowSpan(\regR_{-i}) \cap x+s_2+\rowSpan(\regR_{-i}) = \emptyset.
\end{equation*}
Since $U^{-1}_{x,a}[i]$ preserves orthogonality and the norm, 
it can be extended to a unitary.
Second, we will show that $U_{x,a}[i]$ can be implemented efficiently.
\begin{align*}
    &\quad \left(\QFT^{-1}\right)^{\otimes n} \cdot \frac{1}{\sqrt{|\F|^{k-1}}} \sum_{y \in x+s+\rowSpan(\regR_{-i})} \omega_{|\F|}^{f(x+s-y,-a)}\ket{y} \\
    \xmapsto{\text{Apply } \QFT^{\otimes n}} &\quad \frac{1}{\sqrt{|\F|^{k-1}}} \sum_{y \in x+s+\rowSpan(\regR_{-i})} \omega_{|\F|}^{f(x+s-y,-a)}\ket{y} \\
    =  &\quad \sum_{c \in \F^{k-1}} \omega_{|\F|}^{f(-c^T \cdot \regR_{-i},-a)}\Ket{x+s+c^T \cdot \regR_{-i}}_D \\
    \xmapsto{\text{Initialized registers } ABC} &\quad \frac{1}{\sqrt{|\F|^{k-1}}} \ket{0}_A\sum_{c \in \F^{k-1}} \omega_{|\F|}^{f(-c^T \cdot \regR_{-i},-a)}\ket{0}_B\ket{0}_C\Ket{x+s+c^T \cdot \regR_{-i}}_D \\
    \xmapsto{\text{Efficient Decoder}} &\quad \frac{1}{\sqrt{|\F|^{k-1}}}\ket{x+s}_A\sum_{c \in \F^{k-1}} \omega_{|\F|}^{f(-c^T \cdot \regR_{-i},-a)}\ket{0}_B\ket{0}_C\Ket{x+s+c^T \cdot \regR_{-i}}_D \\
    \xmapsto{\text{Compute } f \text{ based on } A \text{ and } D} &\quad \frac{1}{\sqrt{|\F|^{k-1}}} \ket{x+s}_A\sum_{c \in \F^{k-1}} \omega_{|\F|}^{f(-c^T \cdot \regR_{-i},-a)}\ket{c}_B\Ket{f(-c^T \cdot \regR_{-i},-a)}_C\Ket{x+s+c^T \cdot \regR_{-i}}_D \\
    \xmapsto{\text{Apply a phase shift depends on } C} &\quad \frac{1}{\sqrt{|\F|^{k-1}}} \ket{x+s}_A\sum_{c \in \F^{k-1}} \ket{c}_B\Ket{f(-c^T \cdot \regR_{-i},-a)}_C\Ket{x+s+c^T \cdot \regR_{-i}}_D \\
    \xmapsto{\text{Uncompute } C \text{ then } D}  &\quad  \frac{1}{\sqrt{|\F|^{k-1}}} \ket{x+s}_A\sum_{c \in \F^{k-1}} \ket{c}_B\ket{0}_C\ket{0}_D \\
    \xmapsto{\text{Apply } \QFT^{\otimes (k-1)} \text{ on } B}  &\quad \ket{x+s}_A\ket{0}_B\ket{0}_C\ket{0}_D\\
    \xmapsto{\text{Trace out } BCD} &\quad \ket{x+s}
\end{align*}
The step of the ``efficient decoder'' can be implemented efficiently, since $\regR$ is efficiently decodable and $x$ is known. Hence, there exists an efficient classical algorithm such that for all $c \in \F^k$,
\begin{equation*}
    x+s+c^T\cdot \regR \mapsto c^T\cdot \regR,
\end{equation*}
and furthermore, it can compute $x+s$.
\fi

\paragraph{Definition of projective measurement $\{\cM,I-\cM\}$:} 
Here, we define a measurement $\cM_i$. 
The measurement $\cM$ appearing in $\hyb_3$ is the special case of $i=1$, i.e., $\cM=\cM_1$. 
The definition of $\cM_i$ is determined by $\theta$, $x'$, $r$, $m[-i]$, and $i$.  We will define a CPTP map $\Phi$, and let $Q$ be a unitary that uses extra ancilla and defers all measurements in $\Phi$. The projective measurement $\cM_i$ is then defined as $Q^\dagger \Pi Q$, where $\Pi:=\ket{0}\bra{0}_{\text{Ans}}$ denotes the orthogonal projector. $\Phi$ is defined as follows:
\begin{enumerate}
    \item Measure $C_{2}$ in computational basis and obtain $b\|\ell$.
    \item Apply $U_{x'_{\theta_0}, \ell-(r+m)[-i]}[i]$ and $\QFT$ on $C_{\theta_0}$ and $C_{\theta_1}$, respectively. Measure $C_{\theta_0}$ and $C_{\theta_1}$, obtaining $x_{\theta_0}$ and $x_{\theta_1}$, respectively.
    \item If $x_{\theta_1} \neq x'_{\theta_1}$ or $h(x_{\theta_0}, x'_{\theta_0}) \leq \frac{n-k}{2}$, then set the ``Ans'' register as $\ket{1}$ (i.e., $I-\cM$). Otherwise, set it as $\ket{0}$.
\end{enumerate}

\paragraph{Analyzing the Hybrids.} 
In the following, we relate $\TD(\hyb_{i}(m),\hyb_{i}(m+\alpha e_1))$ for different $i $.

\begin{lemma}\label[lemma]{bb84_mainhybrid_0to2}
     Let $e_1 \in \F^k$ be the indicator string in which all positions are 0 except the first position, which is 1. For all $m \in \F^k$ and for all $\alpha \in \F$, we have
    \begin{align*}
        &\ \frac{1}{|\F|} \cdot \TD\left(\hyb_0(m), \hyb_0(m + \alpha\cdot e_1) \right) \\
        = &\ \TD\left(\hyb_1(m), \hyb_1(m + \alpha\cdot e_1)  \right) 
        = \ \TD\left(\hyb_2(m), \hyb_2(m + \alpha\cdot e_1)  \right).
    \end{align*}
\end{lemma}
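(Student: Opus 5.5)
The two equalities are independent, and I will prove each by a direct computation on the output states rather than by any hybrid-style approximation.

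\textbf{First equality: $\hyb_1$ versus $\hyb_0$.} I will show that, for every message $m$,
\[
\hyb_1(m) = \tfrac{1}{|\F|}\,\hyb_0(m) + \bigl(1-\tfrac{1}{|\F|}\bigr)\ket{\bot}\bra{\bot}.
\]
Fix all the randomness other than $b$: that is, fix $x,\theta,r$ and the adversary's internal randomness. Let $t := (m+\regR x_{\theta_0}+r)[1]$ denote the true first coordinate of the masked message.

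With probability exactly $1/|\F|$ the guess satisfies $b=t$. On this event the adversary's input in $\hyb_1$ is identical to its input in $\hyb_0$, and the verification check is identical as well. On the complementary event $b\neq t$, the experiment outputs $\bot$ regardless of what the adversary does.

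The subtle point is that, conditioned on $b\neq t$, the adversary's view is not the $\hyb_0$ view. This does not matter, because the output is $\bot$ deterministically in that case. Hence the decomposition above holds as an exact identity of density matrices.

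Apply the same decomposition to $m+\alpha e_1$. The $\bot$ component has the same weight $(1-1/|\F|)$ in both experiments, so it cancels in the difference. Trace distance is homogeneous, so
\[
\TD\bigl(\hyb_1(m),\hyb_1(m+\alpha e_1)\bigr) = \tfrac{1}{|\F|}\,\TD\bigl(\hyb_0(m),\hyb_0(m+\alpha e_1)\bigr).
\]

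\textbf{Second equality: $\hyb_2$ versus $\hyb_1$.} I will show that $\hyb_2(m)=\hyb_1(m)$ exactly; I read the sum over $x$ in $\hyb_2$ as carrying the implicit normalization $|\F|^{-n}$. The argument has three steps.

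\begin{enumerate}
\item The challenger's operations on $C=C_{\theta_0}C_{\theta_1}C_2$ act on registers disjoint from those the adversary touches. They therefore commute with the adversary's action, and I may move them before the adversary runs.
\item Measuring $C_{\theta_0}$ in the computational basis collapses the maximally entangled pair on those positions to $\ket{x_{\theta_0}}$ on both sides, with $x_{\theta_0}$ uniform. In turn, the copies of the classical string $b\|(m+\regR x_{\theta_0}+r)[2:k]$ in $C_2$ and $A_2$ become a classical value. This value is exactly what $\hyb_1$ sends.
\item On $C_{\theta_1}$, applying $\QFT$ followed by a computational-basis measurement prepares $\QFT^{\theta}\ket{x}$ on $A_{\theta_1}$ with $x_{\theta_1}$ uniform. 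I will check this via the identity $(\QFT\otimes I)\sum_x\ket{x}\ket{x}\propto(I\otimes\QFT^{T})\sum_x\ket{x}\ket{x}$, where $\QFT^{T}=\QFT$ because the $\QFT$ matrix is symmetric. If this yields $\QFT^{-1}$ on the $A$ side instead, I relabel the measured outcome as $x\mapsto -x$. This changes nothing, since $x_{\theta_1}$ remains uniform and the check $x_{\theta_1}=x'_{\theta_1}$ uses that same outcome.
\end{enumerate}

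Together these steps show that the joint distribution of the adversary's input, $x$, $b$ and the final check coincides with that of $\hyb_1$. So the two output states are equal for each message, and the trace distances agree.

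\textbf{Main obstacle.} I expect the only delicate step to be this sign and conjugation bookkeeping in the $\QFT$ on the EPR half. I would settle it by the explicit identity $\sum_x\ket{x}\otimes\QFT\ket{x}=\sum_y \QFT\ket{y}\otimes\ket{y}$ and verify the matching conventions directly.
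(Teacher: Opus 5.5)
Your proposal is correct and follows the paper's proof. The first equality comes from conditioning on the independent guess $b$ being correct (probability $1/|\F|$, with the wrong-guess branch contributing the identical $\bot$ term to both experiments). The second comes from purifying $x$ and deferring the challenger's measurements on the $C$ registers, which commute with the adversary; you just spell out details the paper leaves implicit. On the convention point: with the paper's $\QFT\ket{z}=|\F|^{-1/2}\sum_y\omega_{|\F|}^{zy}\ket{y}$, the transpose identity gives exactly $\QFT\ket{y}$ on $A_{\theta_1}$ paired with outcome $y$, so no relabeling is needed --- which is fortunate, since relabeling $x\mapsto -x$ alone would not preserve the check $x_{\theta_1}=x'_{\theta_1}$.
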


\ifnum\llncs=0
\begin{proof}
    Let us consider the first equation. If $b = (m + \regR x_{\theta_0} + r)[1]$, the output distribution is identical to $\hyb_0(m)$. Otherwise, the output distribution is $\bot$, regardless of whether the input is $m$ or $m+\alpha \cdot e_1$. Also, $b$ is sampled uniformly at random and independent of any other variables. Hence,
    \begin{equation*}
        \TD\left(\hyb_1(m), \hyb_1(m + \alpha\cdot e_1)\right) = \frac{1}{|\F|} \cdot \TD\left(\hyb_0(m), \hyb_0(m + \alpha\cdot e_1) \right) + \left(1-\frac{1}{|\F|}\right) \cdot 0.
    \end{equation*}
    The second equation follows since purification and deferral of operations leave the outcome distribution unchanged.
\end{proof}
\fi

For relating the trace distance in $\hyb_2$ and that in $\hyb_3$, we need the following lemma. Since the proof is somewhat involved, we defer the proof for the time being.
\begin{lemma}\label{lem:bb-predicate}
    For all $m \in \F^k$, the measurement outcome at Step 3 in $\hyb_3(m)$ yields the outcome $I-\Ms$ with $1-\negl(\lambda)$ probability.
\end{lemma}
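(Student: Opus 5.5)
We will prove \cref{lem:bb-predicate} by moving to an ideal world where $\Zs_\lambda(\theta,r)$ is replaced by $\Zs_\lambda(0,0)$. In that world we show, via the cut-and-choose lemma (\cref{cut_and_choose}), that the outcome $\Ms$ occurs with negligible probability. We then transfer this bound back to $\hyb_3(m)$ using semantic security, which applies because $\Phi$ (equivalently $Q$ followed by $\Pi$) is efficiently implementable given $\theta, x', r, m[-i]$. \Cref{bb84_defU_eff} and the efficient decodability of $\regR$ guarantee this efficiency.

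The sub-hybrids are as follows.
\begin{itemize}
\item $\hyb_2^a$ is the experiment of $\hyb_3$ up to Step 3, where we only record the bit ``$\Ms$ occurred''.
\item $\hyb_2^b$ replaces $\Zs_\lambda(\theta,r)$ by $\Zs_\lambda(0,0)$. The reduction samples $\theta,r,b$ itself, prepares the purified state, runs $\As_\lambda$, and evaluates $\Phi$. Any noticeable change in $\Pr[\Ms]$ therefore breaks semantic security.
\item $\hyb_2^c$ applies the equivocation step. Since $r$ now appears only in $C_2A_2$ (and in the definition of $\Ms$), we sample $\ell\leftarrow\F^{k-1}$ uniformly, give $b\|\ell$ to the adversary, and define $r[-1]:=\ell-(m+\regR x_{\theta_0})[-1]$ only later. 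The shift used by $\Ms$ is $\ell-(r+m)[-i]=\regR_{-i}x_{\theta_0}$. This is exactly the constraint that $\ell$ imposes, so $\Ms$ can be rewritten without reference to $r$: we coherently measure $\regR_{-i}$ on $C_{\theta_0}$ and apply the corresponding $U$. The distribution is identical.
\item $\hyb_2^d$ uses the fact that the adversary's view is now independent of $\theta$ (the register $A$ is half of an EPR state and $\ell$ is uniform). Since all of the challenger's $\theta$-dependent operations act on $C$, which the adversary never touches, they commute with the adversary's computation. We can therefore sample $\theta$ only after $x'$ is output.
\end{itemize}

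In $\hyb_2^d$ we bound $\Pr[\Ms]$ directly. Apply $\QFT$ to all of $C$ and measure to obtain $z\in\F^{2n}$, and let $s:=z-x'$. The positions in $\theta_1$ are checked against $x'$, which is the condition $s[\theta_1]=0$. Because the choice of $\theta$ is deferred, $\theta_1$ is a uniformly random half of $[2n]$, chosen independently of $s$. \Cref{cut_and_choose} with $c=\frac{n-k}{2}$ then gives $\Pr[s[\theta_1]=0 \wedge h(s[\theta_0])>\frac{n-k}{2}]\le 2e^{-(n-k)^2/128n}$. Since $3k+\Theta(\lambda)\le 2n$, we have $n-k=\Omega(n)$ and $n=\Omega(\lambda)$, so this bound is negligible.

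The main obstacle is reconciling this test with the actual measurement in $\Phi$, which applies $U_{x'_{\theta_0},\cdot}[i]$ rather than a plain $\QFT$. On the subspace where $h(s[\theta_0])\le\frac{n-k}{2}$, the post-$C_2$-measurement state has exactly the restricted form $\sum_{y\in x'+s+\rowSpan(\regR_{-i})}\omega^{f(\cdot)}\ket{y}$, as computed in the technical overview (\cref{eq:QFTed-restricted-structure}). By definition, $U$ maps this state to $\ket{x'_{\theta_0}+s}$, which passes the Hamming-distance check. I would formalize this as follows: decompose the state on $C_{\theta_0}$ after the $C_2$ measurement into its good part, spanned by these restricted states with small $s$, and its orthogonal complement, whose weight equals the cut-and-choose failure probability. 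The only nontrivial point is showing that measuring $C_2$ (that is, $\regR_{-i}$ of the computational-basis value) maps the Fourier-side support $\{x'+s\}$ into the $\rowSpan(\regR_{-i})$-shifted cosets with exactly the phase $f(x+s-y,-a)$ for $a=\regR_{-i}x_{\theta_0}$. That is a direct calculation from \cref{def:extractor_f}.
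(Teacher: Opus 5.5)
Your proposal is correct and follows the paper's proof essentially step for step: the switch to $\Zs_\lambda(0^{2n},0^k)$ justified by the efficiently checkable predicate, the equivocation of $r$ (so that the $C_2$ outcome becomes a fresh uniform $\ell'$ obtained by measuring $\regR_{-1}x_{\theta_0}$), the deferral of $\theta$, and cut-and-choose with $c=\frac{n-k}{2}$ are exactly $\hyb_2^a$ through $\hyb_2^d$, and your good/bad decomposition plays the role of the paper's post-selection on $\Pi_{x'_{\theta_0}}$ followed by gentle measurement in $\hyb_2^e$. One small correction: perform the decomposition with respect to $\Pi_{x'_{\theta_0}}$ before the $C_2$ measurement, where the bad weight is exactly the cut-and-choose quantity; each outcome projector then maps the good part into the span of the restricted states, so $\Pr[\Ms]$ is at most (not equal to) that weight, since after the measurement the bad part can also pick up components in the good subspace.
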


\begin{lemma}\label[lemma]{bb84_mainhybrid_2to3}
    For all $m \in \F^k$ and $\alpha \in \F$, we have
    \begin{equation*}
        \TD\left(\hyb_2(m), \hyb_2(m + \alpha\cdot e_1)  \right) 
        \leq  \TD\left(\hyb_3(m), \hyb_3(m + \alpha\cdot e_1)  \right) +\negl(\lambda).
    \end{equation*}
\end{lemma}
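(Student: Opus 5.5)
We derive the bound from \cref{lem:bb-predicate} together with the gentle measurement lemma (\cref{gentle_measurement}) and the triangle inequality for trace distance.

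\textbf{Setup.} Fix $m$ and $\alpha$. Let $\sigma(m')$ be the joint state of all registers right after Step 2 of $\hyb_2(m')$, for $m' \in \{m, m+\alpha e_1\}$. This state consists of the challenger registers $C_{\theta_0}C_{\theta_1}C_2$, the adversary's output $x'$ and residual state, and the classical variables $\theta, r, b$. Steps 1--2 of $\hyb_2$ and $\hyb_3$ are identical, so both hybrids start from the same $\sigma(m')$. They differ only in that $\hyb_3$ applies $\{\cM, I-\cM\}$ and post-selects on $I-\cM$ before the common final map.

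\textbf{Applying gentle measurement.} The measurement $\cM$ depends on $\theta, x', r, m[-1]$, all of which are classical and available in $\sigma(m')$. We can therefore treat it as a single projective measurement on the classical--quantum state $\sigma(m')$, namely a block-diagonal projector summed over these classical values. By \cref{lem:bb-predicate}, $\Tr\big((I-\cM)\sigma(m')\big) \ge 1-\delta$ with $\delta = \negl(\lambda)$. Let $\sigma'(m')$ be the normalized post-measurement state. \cref{gentle_measurement} then gives
\[
\TD\big(\sigma(m'), \sigma'(m')\big) \le 2\sqrt{\delta} = \negl(\lambda).
\]

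\textbf{Pushing through the remaining steps.} Let $\Lambda$ be the remaining process: measure $C_{\theta_0}$ in the computational basis and $C_{\theta_1}$ in the Fourier basis, check $b$ and $x_{\theta_1}$, and output $\rho$ or $\bot$. This $\Lambda$ is the same CPTP map in both hybrids, and CPTP maps do not increase trace distance. Hence $\TD(\hyb_2(m'), \hyb_3(m')) \le \negl(\lambda)$ for each $m'$. The triangle inequality then yields
\[
\TD\big(\hyb_2(m),\hyb_2(m+\alpha e_1)\big) \le \TD\big(\hyb_3(m),\hyb_3(m+\alpha e_1)\big) + 2\cdot\negl(\lambda).
\]

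\textbf{Caveats.} Two points need care. First, post-selection in $\hyb_3$ should be read as renormalized conditioning; if it is instead read as outputting $\bot$ on the other outcome, the bound still holds up to $\delta$. Second, $\cM$ depends on $m[-1]$, which differs between $m$ and $m+\alpha e_1$ only in coordinate $1$; in fact it is identical for both, so no issue arises. The substantive work is entirely in \cref{lem:bb-predicate}, which is given.
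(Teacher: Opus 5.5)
Your proposal is correct and follows the paper's proof. The paper also combines \cref{lem:bb-predicate} with the gentle measurement lemma to get $\TD(\hyb_2(m'),\hyb_3(m'))\le\negl(\lambda)$ for each $m'$, then closes with the triangle inequality; your classical--quantum block-diagonal treatment of $\cM$ and the data-processing step through the common final map spell out details the paper leaves implicit.
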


\ifnum\llncs=0
\begin{proof}
By Lemma~\ref{lem:bb-predicate}, 
    we know that the projective measurement at Step 3 in $\hyb_3$ yields the outcome $I-\Ms$ with probability $1-\negl(\lambda)$. 
%
    Next, the Gentle Measurement \cref{gentle_measurement} implies that the claim holds. Indeed, let $\rho$ be the joint state (including both the challenger's and the adversary's view) after Step 2 in $\hyb_2$, and let $\rho'$ be the joint state after Step 3 in $\hyb_3$. By Gentle Measurement lemma, $\TD(\rho,\rho') \leq \negl'(\lambda)$. This implies that for all $m \in \F^k$, $\TD(\hyb_2(m),\hyb_3(m)) \leq \negl'(\lambda)$. Hence,
    \begin{align*}
         &\TD\left(\hyb_2(m), \hyb_2(m + \alpha\cdot e_1)  \right) \\
         \leq \ &\TD\left(\hyb_2(m), \hyb_3(m)  \right) +  \TD\left(\hyb_3(m), \hyb_3(m + \alpha\cdot e_1)  \right) 
         + \TD\left(\hyb_3(m + \alpha\cdot e_1), \hyb_2(m + \alpha\cdot e_1)  \right) \\
         \leq \ & \TD\left(\hyb_3(m), \hyb_3(m + \alpha\cdot e_1)  \right) + 2 \cdot \negl'(\lambda).
    \end{align*}
\end{proof}
\fi

\begin{lemma}\label[lemma]{bb84_mainhybrid_3}
    For all $m \in \F^k$ and $\alpha \in \F$, we have
    \begin{equation*}
        \TD\left(\hyb_3(m), \hyb_3(m +\alpha\cdot e_1)\right) \leq \negl(\lambda).
    \end{equation*}
\end{lemma}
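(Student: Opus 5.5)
Assume the outcome of Step~3 in $\hyb_3$ is $I-\Ms$. The plan is to show that whenever $\hyb_3$ does not output $\bot$, the output state $\rho$ is independent of $m[1]$. The $\bot$ branch is message-independent; what remains is the weight of the non-$\bot$ branch together with $\rho$.

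\textbf{Split the post-selected state by outcome.} The projector $I-\Ms$ decomposes into two orthogonal pieces: one on which $x_{\theta_1}\neq x'_{\theta_1}$, and one on which $x_{\theta_1}=x'_{\theta_1}$ and the decoded error has $h\le\frac{n-k}{2}$.
\begin{itemize}
\item On the first piece, Step~5 outputs $\bot$ regardless of $m$.
\item The second piece is the relevant one. Fix $\theta,r,x'$, and the classical value $b\|\ell$ of $C_2$; the latter is effectively classical because $A_2$ is a copy of it. Then $\ell=(m+\regR x_{\theta_0}+r)[2:k]$, so the state of $C_{\theta_0}$ is supported on $x_{\theta_0}$ with $\regR_{-1}x_{\theta_0}=\ell-(r+m)[2:k]$.
\item By construction of $U_{x'_{\theta_0},\,\ell-(r+m)[-1]}[1]$, the projection onto small-weight $s$ after applying $U$ means the joint state on $C_{\theta_0}$ and the remaining registers has exactly the form (\ref{thm:extractor_init}) of \cref{bb84_extractor}. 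Here $x:=x'_{\theta_0}$, $\ell':=\ell-(r+m)[-1]$ (suitably padded), $E$ collects everything else, including the measured $C_{\theta_1}$ and the adversary's $\rho$, and the phases $\alpha_w=\omega^{f(-w,-\ell')}$ match the definition of $U^{-1}$.
\end{itemize}
One point to check carefully is that the coherent implementation $Q^\dagger\Pi Q$ projects onto exactly this subspace, so that $(I-\Ms)$ applied to the state, restricted to the passing branch, equals $U^{-1}$ applied to a superposition $\sum_s\ket{x'+s}\ket{\psi_s}$.

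\textbf{Apply the extractor.} Step~4 measures $C_{\theta_0}$ in the computational basis, which means applying $\QFT^{-1}$ to the form before the $U^{-1}$ layer. We then look at $\regR x_{\theta_0}$. By \cref{bb84_extractor}, the coordinate $(\regR x_{\theta_0})[1]$ is uniform over $\F$ and in product with $E$, while the other coordinates are fixed to the already-known $\ell'$. Consequently, the event $b=(m+\regR x_{\theta_0}+r)[1]$ occurs with probability exactly $1/|\F|$, independently of $E$ and of $m[1]$. Conditioned on this event, the residual state on $E$ equals $\sum_s\ket{\psi_s}\bra{\psi_s}$.

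\textbf{Remove the remaining message dependence.} I expect the main obstacle to be that the states $\ket{\psi_s}$ themselves could depend on $m$, since $\ell$ and $m[2:k]$ appear in the preparation. Under the substitution $m\mapsto m+\alpha e_1$, however, only $m[1]$ changes. This coordinate enters the preparation nowhere except through the Step~5 check, because $C_2$ and $A_2$ carry only $b$ and coordinates $2{:}k$. The measurement $\Ms_1$ also depends only on $m[-1]$.

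So the state just before Step~4 is identical in $\hyb_3(m)$ and $\hyb_3(m+\alpha e_1)$. Moreover, the success event $b=(m+\regR x_{\theta_0}+r)[1]$ has probability $1/|\F|$ and leaves the same residual $E$-state in both cases, since shifting the target value of a uniform independent coordinate does not change the conditional state.

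Averaging over $\theta,r,b,\ell,x'$ gives $\TD(\hyb_3(m),\hyb_3(m+\alpha e_1))=0$, which is in particular negligible. Any approximation slack appears only in \cref{lem:bb-predicate} and the gentle-measurement step, which were handled earlier.
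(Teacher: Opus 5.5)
Your proposal is correct and follows essentially the same route as the paper. Both arguments split the post-selected outcome $I-\Ms$ into the branch where the $\theta_1$-check fails, which always gives $\bot$, and the branch where the decoded error has weight at most $\frac{n-k}{2}$. In the latter branch, $U^{-1}_{x'_{\theta_0},\ell'}[1]$ puts $C_{\theta_0}$ into exactly the form required by \cref{bb84_extractor}, and since the first three steps never touch $m[1]$, the uniform coordinate $(\regR x_{\theta_0})[1]$ makes the final check message-independent. Your version is only more explicit, through the orthogonal decomposition of $I-\Ms$, the conditioning on the classical value of $C_2$, and the observation that the distance is in fact exactly $0$.
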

\begin{proof}
    First three steps of $\hyb_3$ are independent of $m[1]$. Hence, the challenger's and the adversary's joint view after Step 3 is identical between $\hyb_3(m)$ and $\hyb_3(m + \alpha\cdot e_1)$. 
        
    The measurement outcome $I-\cM$ guarantees that either (a) $x_{\theta_1} \neq x'_{\theta_1}$ or (b) $ h(x_{\theta_0},  x'_{\theta_0}) \leq \frac{n-k}{2} $.
    Suppose we condition on the case where event (a) occurs. Then, there is no advantage between the two hybrids, since the hybrids always output $\bot$ in Step~5. Moreover, event (a) occurs with the same probability in both hybrids. Therefore, it suffices to condition on the case where event (b) occurs. Moreover, even under this condition, the joint view of the challenger and the adversary remains identical between $\hyb_3(m)$ and $\hyb_3(m + \alpha\cdot e_1)$.

    If the event (b) occurs, the state after step 3 must have the form
    \begin{equation}\label{bb84_predicate}
         \left(U^{-1}_{x'_{\theta_0}, \ell'}[1] \right)_{C_{\theta_0}}\cdot\sum_{\substack{s \in \F^{n}\\ h(s)\leq \frac{n-k}{2}}} \ket{x'_{\theta_0}+s}_{C_{\theta_0}} \otimes \ket{\psi_s}_{\cA},
    \end{equation}
    with $\ell':=\ell-(r+m)[2:k]$. Then,
    \begin{align}
        &\quad\text{\cref{bb84_predicate}} \nonumber \\
        = &\quad \left(\QFT^{-1}\right)^{\otimes n}_{C_{\theta_0}}\frac{1}{\sqrt{|\F|^{k-1}}} \sum_{\substack{s \in \F^{n}\\ h(s)\leq \frac{n-k}{2}}} \ \sum_{y \in x'_{\theta_0}+s+\text{rowSpan}(\regR_{-1})} \omega_{|\F|}^{f(x'_{\theta_0}+s-y,-\ell')}  \ket{y}_{C_{\theta_0}} \otimes \ket{\psi_s}_{\cA} \nonumber \\
        = &\quad \left(\QFT^{-1}\right)^{\otimes n}_{C_{\theta_0}}\frac{1}{\sqrt{|\F|^{k-1}}} \sum_{\substack{s \in \F^{n}\\ h(s)\leq \frac{n-k}{2}}} \ \sum_{w \in \text{rowSpan}(\regR_{-1})}   \ket{x'_{\theta_0}+s+w}_{C_{\theta_0}} \otimes \omega_{|\F|}^{f(-w,-\ell')} \cdot \ket{\psi_s}_{\cA} \label{bb84_after_predicate},
    \end{align}
    where $\{\ket{\psi_s}\}_s$ are arbitrary states such that the state in \cref{bb84_after_predicate} is normalized. Moreover, $\{\psi_s\}_s$ remains identical between $\hyb_3(m)$ and $\hyb_3(m +\alpha\cdot e_1)$.

    By the extractor \cref{bb84_extractor}, we know that if we measure $C_{\theta_0}$ register and obtain $x_{\theta_0}$, then $\regR x_{\theta_0}[1]$ is uniformly random and independent of the adversary's view. Hence, checking whether $b \neq (m+\regR x_{\theta_0}+r)[1]$ or $b \neq (m+\regR x_{\theta_0}+r)[1]+\alpha$ occurs with equal probability and is independent of the residual state $\rho$. Note that although $r$ is sampled uniformly at random, it is not independent of the adversary's view and therefore is not a good source of randomness.
\end{proof}

\paragraph{Putting everything together, we conclude the proof of \cref{theorem:bb84_maintheorem}.} By \cref{bb84_mainhybrid_0to2,bb84_mainhybrid_2to3,bb84_mainhybrid_3}, we have 
    \begin{align*}
        \TD(\hyb_0(m),\hyb_0(0^k)) &\leq \sum_{i=1}^{k} \TD\left(\hyb_0\left(0^{i-1} \| m[i:k]\right),\hyb_0\left(0^{i} \| m[i+1:k]\right)\right) \\
        &\leq k \cdot |\F| \cdot \negl(\lambda) = \negl'(\lambda),
    \end{align*}
    where the first inequality follows from the triangle inequality. 
    Note that previous hybrids and Claims  are all based on ``guessing'' the first entry as $b$. We need to slightly modified all the hybrids and Claims for each $i$. For example, in $\hyb_1$, we have to replace the $i$th entry of $\left(m+\regR x_{\theta_0}+r\right)$ with $b$ at Step 2. Also, check whether $b \neq \left(m+\regR x_{\theta_0}+r\right)[i]$ at Step 3. The modification is straightforward and we omit the details.
    This completes the proof of \cref{theorem:bb84_maintheorem}.
\end{proof}

It remains to prove Lemma~\ref{lem:bb-predicate}, which claims that the measurement result is $I-\cM$ in $\hyb_3$ with overwhelming probability.
The rest of this subsection is devoted to the proof of the lemma.
\begin{proof}[Proof of Lemma~\ref{lem:bb-predicate}]
We introduce the following ``branch-hybrids'', $\hyb_2^a(m)$ through $\hyb_2^e$, to argue that the projective measurement $\{\mathcal{M}, I-\mathcal{M}\}$ inserted in $\hyb_3(m)$ yields the outcome $I-\mathcal{M}$ with overwhelming probability, using the semantic security of $\mathcal{Z}$.

\paragraph{$\hyb_2^a(m)$:} 
In this hybrid, we simulate the outcome of the projective measurement $\{\Ms, I - \Ms\}$ performed in Step~3 of $\hyb_3(m)$. 
%
\ifnum\llncs=0
   \begin{enumerate}
    \item Sample a balanced $\theta \leftarrow \{0,1\}^{2n}$, $r \leftarrow \F^k$, and $b \leftarrow \F$. Prepare the following state:
    \begin{equation*}
        \sum_x  \Ket{x}_{C_{\theta_0}C_{\theta_1}}\Ket{b\|(m+\regR x_{\theta_0}+r)[2:k]}_{C_2}\otimes \Ket{x}_{A_{\theta_0}A_{\theta_1}}\Ket{b\|(m+\regR x_{\theta_0}+r)[2:k]}_{A_2}.
    \end{equation*}
    \item Initialize $\As_\lambda$ with registers $A_{\theta_0}A_{\theta_1}A_2$ and $\Zs(\theta, r)$. $\As_\lambda$'s output is parsed as a string $x' \in \F^{2n}$.
    \item Measure $C_2$ in computational basis and obtain $b\|\ell$.
    \item Apply $U_{x'_{\theta_0}, \ell-(r+m)[2:k]}[1]$ and $\QFT$ on $C_{\theta_0}$ and $C_{\theta_1}$, respectively. Measure $C_{\theta_0}$ and $C_{\theta_1}$, obtaining $x_{\theta_0}$ and $x_{\theta_1}$, respectively.
    \item If $x_{\theta_1} \neq x'_{\theta_1}$ or $h(x_{\theta_0}, x'_{\theta_0}) \leq \frac{n-k}{2}$, then return $I-\Ms$. Otherwise, return $\Ms$.
\end{enumerate}
\fi

\paragraph{$\hyb_2^b(m)$:} In this hybrid, the measurement of $C_2$ is moved to occur prior to the initialization of $\cA_\lambda$’s register.
 In addition, $\cZ(\theta, r)$ is replaced by $\cZ(0^{2n}, 0^{k})$.
\begin{enumerate}
    \item Sample a balanced $\theta \leftarrow \{0,1\}^{2n}$, $r \leftarrow \F^k$, and $b \leftarrow \F$. Prepare the following state
    \begin{equation*}
        \sum_x  \Ket{x}_{C_{\theta_0}C_{\theta_1}}\Ket{b\|(m+\regR x_{\theta_0}+r)[2:k]}_{C_2}\otimes \Ket{x}_{A_{\theta_0}A_{\theta_1}}\Ket{b\|(m+\regR x_{\theta_0}+r)[2:k]}_{A_2}.
    \end{equation*} 
    
    \item \color{red}{Measure $C_2$ in computational basis and obtain $b\|\ell$.}
    \item Initialize $\As_\lambda$ with registers $A_{\theta_0}A_{\theta_1}A_2$ and $\color{red}{\Zs(0^{2n}, 0^k)}$. $\As_\lambda$'s output is parsed as a string $x' \in \F^{2n}$.
    \item Apply $U_{x'_{\theta_0}, \ell-(r+m)[2:k]}[1]$ and $\QFT$ on $C_{\theta_0}$ and $C_{\theta_1}$, respectively. Measure $C_{\theta_0}$ and $C_{\theta_1}$, obtain $x_{\theta_0}$ and $x_{\theta_1}$, respectively.
    \item If $x_{\theta_1} \neq x'_{\theta_1}$ or $h(x_{\theta_0}, x'_{\theta_0}) \leq \frac{n-k}{2}$, then return $I-\Ms$. Otherwise, return $\Ms$.
\end{enumerate}

\paragraph{$\hyb_2^c$:} In this hybrid, the register $A_2$ is unentangled from the other registers. Also, the register $C_2$ is constructed from register $C_{\theta_0}$. In addition, the operation $U^{-1}$ no longer depends on $m$. 
\begin{enumerate}
    \item Sample a balanced $\theta \leftarrow \{0,1\}^{2n}$, $b \leftarrow \F$, and $\color{red}{\ell \leftarrow \F^{k-1}}$. Prepare the following state
    \begin{equation*}
        \sum_x  \ket{x}_{C_{\theta_0}C_{\theta_1}}\otimes \ket{x}_{A_{\theta_0}A_{\theta_1}}, b\|\ell_{A_2}.
    \end{equation*} 
    \item \color{red}{Apply the isometry $\Is$: $\ket{x}_{C_{\theta_0}} \mapsto \ket{x}_{C_{\theta_0}}\ket{\regR_{-1} x_{\theta_0}}_{C_2}$}, where $\regR_{-1} \in \F^{(k-1)\times n}$ is equal to $\regR$ but without the first row.
    \item Measure $C_2$ in computational basis and obtain $\ell' \in \F^{k-1}$.
    \item Initialize $\As_\lambda$ with register $A_{\theta_0}A_{\theta_1}A_2$ and $\Zs(0^{2n}, 0^k)$. $\As_\lambda$'s output is parsed as a string $x' \in \F^{2n}$.
    \item Apply ${\color{red} U_{x'_{\theta_0}, \ell'}[1]}$ and $\QFT$ on $C_{\theta_0}$ and $C_{\theta_1}$, respectively. Measure $C_{\theta_0}$ and $C_{\theta_1}$, obtain $x_{\theta_0}$ and $x_{\theta_1}$, respectively.
    \item If $x_{\theta_1} \neq x'_{\theta_1}$ or $h(x_{\theta_0}, x'_{\theta_0}) \leq \frac{n-k}{2}$, then return $I-\Ms$. Otherwise, return $\Ms$.
\end{enumerate}

\paragraph{$\hyb_2^d$:} In this hybrid, some operations acting on different registers are reordered.
\begin{enumerate}
    \item Sample $b \leftarrow \F$ and $\ell \leftarrow \F^{k-1}$. Prepare the following state
    \begin{equation*}
        \sum_x  \ket{x}_{C}\otimes \ket{x}_{A}, b\|\ell_{A_2}.
    \end{equation*} 
    
    \item \color{red}{Initialize $\As_\lambda$ with register $AA_2$ and $\Zs(0^{2n}, 0^k)$. $\As_\lambda$'s output is parsed as a string $x' \in \F^{2n}$}.
    \item \color{red}{Sample a balanced $\theta \leftarrow \{0,1\}^{2n}$.}
    \item \color{red}{Apply $\QFT$ on $C_{\theta_1}$. Measure $C_{\theta_1}$ and obtain $x_{\theta_1}$. If $x_{\theta_1} \neq x'_{\theta_1}$, then return $I-\Ms$.}
    \item Apply the isometry $\Is:$  $\ket{x}_{C_{\theta_0}} \mapsto \ket{x}_{C_{\theta_0}}\ket{\regR_{-1} x_{\theta_0}}_{C_2}$.
    \item Measure $C_2$ in computational basis and obtain $\ell'$.
    \item Apply $U_{x'_{\theta_0}, \ell'}[1]$ on $C_{\theta_0}$. Measure $C_{\theta_0}$ and obtain $x_{\theta_0}$.
    \item If $h(x_{\theta_0}, x'_{\theta_0}) \leq \frac{n-k}{2}$, then return $I-\Ms$. Otherwise, return $\Ms$.
\end{enumerate}

\paragraph{$\hyb_2^e$:} In this hybrid, we perform a measurement and post-select on the register $C_{\theta_0}$.
\begin{enumerate}
    \item Sample $b \leftarrow \F$ and $\ell \leftarrow \F^{k-1}$. Prepare the following state
    \begin{equation*}
        \sum_x  \ket{x}_{C}\otimes \ket{x}_{A}, b\|\ell_{A_2}.
    \end{equation*} 

    \item Initialize $\As_\lambda$ with register $AA_2$ and $\Zs(0^{2n}, 0^k)$. $\As_\lambda$'s output is parsed as a string $x' \in \F^{2n}$.
    \item Sample a balanced $\theta \leftarrow \{0,1\}^{2n}$.
    \item Apply $\QFT$ on $C_{\theta_1}$. Measure $C_{\theta_1}$ and obtain $x_{\theta_1}$. If $x_{\theta_1} \neq x'_{\theta_1}$, then return $I-\Ms$.   
    \item Define the following projector $\Pi_{x'_{\theta_0}}$:
    \begin{equation*}
        \Pi_{x'_{\theta_0}} := \sum_{\substack{y \in \F^{n} \\ h(y,x'_{\theta_0}) \leq \frac{n-k}{2}}} \left(\QFT^{-1}\right)^{\otimes n} \ket{y}\bra{y} \QFT^{\otimes n}.
    \end{equation*}
    \textcolor{red}{Apply a projective measurement $\{\Pi_{x'_{\theta_0}}, I-\Pi_{x'_{\theta_0}}\}$ on $C_{\theta_0}$ and post-selecting on obtaining the outcome $\Pi_{x'_{\theta_0}}$.}
    
    \item Apply the isometry $\Is:$  $\ket{x}_{C_{\theta_0}} \mapsto \Ket{x}_{C_{\theta_0}}\Ket{\regR_{-1} x_{\theta_0}}_{C_2}$.
    \item Measure $C_2$ in computational basis and obtain $\ell'$.
    \item Apply $U_{x'_{\theta_0}, \ell'}[1]$ on $C_{\theta_0}$. Measure $C_{\theta_0}$ and obtain $x_{\theta_0}$.
    \item If $h(x_{\theta_0}, x'_{\theta_0}) \leq \frac{n-k}{2}$, then return $I-\Ms$. Otherwise, return $\Ms$.
\end{enumerate}

\paragraph{Analyzing the Sub-hybrids} 
In the following, we show that the measurement outcome are only negligibly different between the consecutive hybrids.
\begin{lemma}\label[lemma]{bb84_ab}
    For all $m \in \F^k$, $\Pr \left[ \hyb_2^a(m)  = I-\Ms \right]  \geq \Pr \left[ \hyb_2^b(m)  = I-\Ms \right] - \negl(\lambda)$.
\end{lemma}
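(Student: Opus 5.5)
The comparison between $\hyb_2^a(m)$ and $\hyb_2^b(m)$ involves two changes: the moment at which $C_2$ is measured, and the replacement of $\Zs(\theta,r)$ by $\Zs(0^{2n},0^k)$. I would handle them separately through an intermediate hybrid $\hyb_2^{a'}(m)$. This hybrid is identical to $\hyb_2^a(m)$ except that $C_2$ is measured before $\As_\lambda$ is initialized; it still hands $\Zs(\theta,r)$ to the adversary.

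\emph{Step 1: $\hyb_2^a(m)$ and $\hyb_2^{a'}(m)$ have identical output distributions.} The register $C_2$ is held only by the challenger. Between preparing the state and Step~3 of $\hyb_2^a$, the only operations on the joint state are those of $\As_\lambda$, and these act solely on $A_{\theta_0}A_{\theta_1}A_2$ and on its own workspace. A computational-basis measurement of $C_2$ therefore commutes with the adversary's unitary and with its measurement that produces $x'$. Because the outcome $b\|\ell$ is used only afterwards, moving the measurement earlier leaves the joint distribution of $(x',b\|\ell,x_{\theta_0},x_{\theta_1})$ unchanged. In particular, $\Pr[\hyb_2^a(m)=I-\Ms]=\Pr[\hyb_2^{a'}(m)=I-\Ms]$.

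\emph{Step 2: reduction to semantic security.} Suppose $\bigl|\Pr[\hyb_2^{a'}(m)=I-\Ms]-\Pr[\hyb_2^b(m)=I-\Ms]\bigr|$ were non-negligible. I would build a non-uniform QPT distinguisher $\cB$ between $\Zs(\theta,r)$ and $\Zs(0^{2n},0^k)$. The values $\theta,r$ serve as the inputs $s_1$, and $\cB$ receives $\theta,r,b$ and $m$ as classical non-uniform advice $\sigma_\lambda$; this is allowed because \cref{semantic} quantifies over all $s_1,s_2$ and all advice, and an averaging argument fixes a good choice of $(\theta,r,b)$. Given the challenge ciphertext $c$, $\cB$ proceeds as follows:
\begin{enumerate}
\item It prepares the purified EPR-type state, which is efficient since it needs only $\theta,r,m,b$ and the matrix $\regR$.
\item It measures $C_2$ and runs $\As_\lambda$ on $A_{\theta_0}A_{\theta_1}A_2$ together with $c$, obtaining $x'$.
\item It applies $U_{x'_{\theta_0},\ell-(r+m)[2:k]}[1]$ to $C_{\theta_0}$ and $\QFT$ to $C_{\theta_1}$.
\item It measures these registers and outputs $1$ exactly when the predicate defining $I-\Ms$ holds.
\end{enumerate}
Efficiency comes from the fact that $U_{x,a}[i]$ is implementable in polynomial time, as shown above using the efficient decoder of $\regR$. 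The Hamming-distance test and the comparison $x_{\theta_1}=x'_{\theta_1}$ are classical and efficient. The acceptance probability of $\cB$ equals that of $\hyb_2^{a'}$ when $c=\Zs(\theta,r)$ and that of $\hyb_2^b$ when $c=\Zs(0^{2n},0^k)$. Semantic security therefore forces the gap to be $\negl(\lambda)$, which together with Step~1 gives the claimed inequality.

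The main obstacle is making sure the predicate is genuinely efficiently computable by $\cB$. Everything the measurement $\Ms$ depends on, namely $\theta$, $x'$, $r$, $m[-1]$ and the outcome $\ell$, is available to $\cB$: $\theta$, $r$ and $m$ are hard-wired as advice, while $x'$ and $\ell$ are produced during the simulation. The unitary $U$ is efficient only because of the efficient decodability of $\regR$ with up to $\lfloor (n-k)/2\rfloor$ errors, which was already established when $U$ was shown to be implementable. A minor point is that $U$ is specified only on states $\ket{x+s}$ with $h(s)\le\frac{n-k}{2}$. Any fixed efficient unitary extension can be used, since the same extension appears in both hybrids.
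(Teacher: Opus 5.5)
Your proposal is correct and follows essentially the same route as the paper. The paper also first commutes the measurement of $C_2$ past the adversary, which acts only on other registers, and then reduces to the semantic security of $\Zs$, with efficiency resting on $U_{x,a}[i]$ being implementable via the efficient decoder of $\regR$; your intermediate hybrid and your averaging/non-uniform-advice argument for fixing $(\theta,r,b)$ just make explicit a point the paper leaves implicit, namely that its reduction samples $\theta,r$ itself yet needs the challenge to be $\Zs(\theta,r)$ for those same values.
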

\ifnum\llncs=0
\begin{proof}
    In $\hyb_2^b$, we measure the $C_2$ register before sending the quantum state to the adversary. Since the adversary operates on different registers, their operation commutes with the measurement on $C_2$.

    Next, suppose (by contradiction) that there exists a polynomial $p(\cdot)$ such that for infinite many $\lambda$,  $\Pr\Big[\hyb_2^a(m) = I-\cM\Big] < \Pr\Big[\hyb_2^b(m) = I-\cM\Big] - \frac{1}{p(\lambda)}$. We will construct a reduction $\cR$ that breaks the semantic security of $\cZ$. The reduction $\cR$ runs exactly as the experiment $\hyb_2^b$, except that in Step 3, $\As_\lambda$'s registers are initialized as $A_{\theta_0}A_{\theta_1}A_2, \ch$, where $\ch$ is either $\cZ(\theta,r)$ or $\cZ(0^{2n},0^k)$, as sent by the challenger. By the definition of $U_{x'_{\theta_0}, \ell-(r+m)[2:k]}[1]$, all operations can be implemented efficiently (This is why we say the predicate is efficiently checkable), and hence $\cR$ is a QPT algorithm.
    Finally, if the experiment outputs $\cM$, $\cR$ guesses the challenge is $\cZ(\theta,r)$. Otherwise, $\cR$ guesses the challenge is $\cZ(0^{2n},0^k)$. Then, $\cR$ breaks the semantic security with advantage $\frac{1}{2\cdot p(\lambda)}$.    
\end{proof}
\fi

\begin{lemma}\label[lemma]{bb84_bc}
    For all $m \in \F^k$, $\hyb_2^b(m)=\hyb_2^c$.
\end{lemma}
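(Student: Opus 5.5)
We show that the joint distribution of everything the experiment touches (the adversary's inputs, the post-measurement state of $C$, and the outcome $\ell$ or $\ell'$ fed into $U$) is identical in $\hyb_2^b(m)$ and $\hyb_2^c$. Then the final predicate returns $I-\Ms$ with exactly the same probability in both.

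The first step is to rewrite $\hyb_2^b(m)$ by expanding the measurement of $C_2$. Conditioned on $\theta$ and $r$, the state after Step 1 is
\[
\sum_{x}\ket{x}_{C}\ket{x}_{A}\otimes\Ket{b\|(m+\regR x_{\theta_0}+r)[2:k]}_{C_2}\Ket{b\|(m+\regR x_{\theta_0}+r)[2:k]}_{A_2}.
\]
Measuring $C_2$ gives an outcome $b\|\ell$ and collapses the sum onto those $x$ with $\regR_{-1}x_{\theta_0}=\ell-(m+r)[2:k]$. After the measurement, $A_2$ holds the classical string $b\|\ell$, so it is no longer entangled with $C$ or $A$. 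The collapsed state is exactly the state produced by computing $\regR_{-1}x_{\theta_0}$ into an ancilla and measuring it with outcome $\ell':=\ell-(m+r)[2:k]$.

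The second step is a change of variables. In $\hyb_2^b$ the adversary receives $\Zs(0^{2n},0^k)$, so $r$ appears nowhere else in its view. Hence $r[2:k]$ is a uniform mask independent of all other variables. We then compare the two samplings:
\begin{itemize}
\item In $\hyb_2^b$: sample $r$, measure $\ell'$ according to the Born rule, and set $\ell=\ell'+(m+r)[2:k]$.
\item In $\hyb_2^c$: sample $\ell$ uniformly and independently, and obtain $\ell'$ by measuring $\regR_{-1}x_{\theta_0}$.
\end{itemize}
In both cases the pair $(\ell,\ell')$ has $\ell'$ distributed by the Born rule and $\ell$ uniform and independent of $\ell'$ and of the collapsed state. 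The argument $\ell-(r+m)[2:k]$ of $U$ in $\hyb_2^b$ equals $\ell'$, which is exactly the argument used in $\hyb_2^c$. The coordinate $r[1]$ is unused in both hybrids, and $b$ is sampled identically in both.

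The third step is to check that all remaining steps coincide: the adversary's input $(A, b\|\ell, \Zs(0^{2n},0^k))$, the unitaries $U_{x'_{\theta_0},\ell'}[1]$ and $\QFT$, the measurements, and the final predicate. Joint distributions are preserved at every step, so the output distributions, and in particular $\Pr[\,\cdot=I-\Ms]$, are equal.

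The main subtlety is the second step. The change of variables requires that $r$ be absent from the adversary's view, which is exactly why $\Zs(\theta,r)$ was replaced by $\Zs(0^{2n},0^k)$ already in $\hyb_2^b$. One must also confirm that the measurement of $C_2$ yields a uniform $\ell$, independent of the post-measurement state and jointly with $\ell'$; this holds because $r$ is a perfect one-time pad on those coordinates. Normalization factors are routine.
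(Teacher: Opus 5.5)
Your proposal is correct and follows the paper's proof. The paper likewise writes out the collapsed state in both hybrids, noting that measuring $C_2$ in $\hyb_2^b(m)$ restricts to $\regR_{-1}x_{\theta_0}=\ell-(r+m)[2:k]$, and concludes equality because that argument of $U$ is distributed exactly as $\ell'$ while $\ell$ is uniform and independent. Your explicit remark that this change of variables needs $r$ to be absent from the adversary's view (which holds because $\Zs(0^{2n},0^k)$ is already in place) makes precise a point the paper only asserts.
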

\ifnum\llncs=0
\begin{proof}
    The following first term is the state described in $\hyb^b_2(m)$ at Step 1.
    \begin{align}
        &\quad  \frac{1}{\sqrt{|\F|^{|\theta|}}} \cdot  \sum_x  \Ket{x}_{C_{\theta_0}C_{\theta_1}}\Ket{b\|(m+\regR x_{\theta_0}+r)[2:k]}_{C_2}
        \otimes \Ket{x}_{A_{\theta_0}A_{\theta_1}}
        \Ket{b\|(m+\regR x_{\theta_0}+r)[2:k]}_{A_2} \nonumber \\
        \xmapsto{\text{Measure } C_2} &\quad  \frac{1}{\sqrt{|\F|^{|\theta|-k+1}}} \cdot
        \sum_{\substack{x \\ (m+ \regR x_{\theta_0} +r) [2:k] = \ell}}
        \ket{x}_{C_{\theta_0}C_{\theta_1}}\ket{x}_{A_{\theta_0}A_{\theta_1}}, b\|\ell_{A_2}, b\|\ell_{C_2} \nonumber \\
        = &\quad \frac{1}{\sqrt{|\F|^{|\theta|-k+1}}} \cdot\sum_{\substack{x \\ \regR x_{\theta_0}[2:k] = \ell-(r+m)[2:k]} } \ket{x}_{C_{\theta_0}C_{\theta_1}}\ket{x}_{A_{\theta_0}A_{\theta_1}}, b\|\ell_{A_2}, b\|\ell_{C_2} \label{hyb2b}
    \end{align}
    The second term holds since $\text{columnSpan}(\regR_{-1})$ has dimension $k-1$, where $\regR_{-1}$ is equal to $\regR$ but without the first row.
    Hence, the state of $\hyb_2^b(m)$ after Step 2 is the mixed state of term~(\ref{hyb2b}) over the randomness of $\theta$, $r$, $\ell$, and $b$, where they are sampled uniformly and independently on their own domain. Next, we consider the mixed state of $\hyb_2^c$ after Step 3. The first term below corresponds to the state described in $\hyb^c_2$ at Step 1.
    \begin{align}
        &\quad \frac{1}{\sqrt{|\F|^{|\theta|}}} \cdot \sum_{x}  \ket{x}_{C_{\theta_0}C_{\theta_1}}\ket{x}_{A_{\theta_0}A_{\theta_1}}, b\|\ell_{A_2} \nonumber \\
        \xmapsto{\Is} &\quad \frac{1}{\sqrt{|\F|^{|\theta|}}} \cdot \sum_x\ket{x}_{C_{\theta_0}C_{\theta_1}}\ket{\regR x_{\theta_0}[2:k]}_{C_2}\ket{x}_{A_{\theta_0}A_{\theta_1}}, b\|\ell_{A_2} \nonumber \\
        \xmapsto{\text{Measure } C_2} &\quad \frac{1}{\sqrt{|\F|^{|\theta|-k+1}}} \cdot \sum_{\substack{x \\ \regR x_{\theta_0}[2:k] = \ell' }} \ket{x}_{C_{\theta_0}C_{\theta_1}}\ket{x}_{A_{\theta_0}A_{\theta_1}}, b\|\ell_{A_2}, \ell'_{C_2} \label{hyb2c}
    \end{align}
    Again, the third term holds since $\text{columnSpan}(\regR_{-1})$ has dimension $k-1$. Hence, the state of $\hyb_2^c(m)$ after Step 3 is the mixed state of term~(\ref{hyb2c}) over the randomness of $\theta$, $\ell$, $\ell'$, and $b$, where they are sampled uniformly and independently on their own domain.

    In $\hyb^b_2(m)$, it uses the view of registers $C_{\theta_0}C_{\theta_1}A_{\theta_0}A_{\theta_1}A_2$, as well as variables $\ell-(r+m)[2:k]$ to obtain output. Similarly, in $\hyb^c_2$, it uses the view of registers $C_{\theta_0}C_{\theta_1}A_{\theta_0}A_{\theta_1}A_2$, as well as variables $\ell'$ to obtain output. It appears that the distribution of both hybrids' view is identical.
\end{proof}
\fi

\begin{lemma}\label[lemma]{bb84_cd}
    $\hyb_2^c=\hyb_2^d$.
\end{lemma}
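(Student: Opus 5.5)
The plan is to show that $\hyb_2^c$ and $\hyb_2^d$ induce the same distribution over the returned outcome $\{\Ms, I-\Ms\}$. I would do this by a sequence of commutation arguments, each of which leaves the outcome distribution unchanged. The starting observation is that in $\hyb_2^c$ the adversary's input no longer depends on $\theta$. It receives the register $A$ (all $2n$ qudits, without any labelling by $\theta$), the classical string $b\|\ell$ sampled independently of everything, and $\Zs(0^{2n},0^k)$. So the adversary's unitary-plus-measurement acts only on $A A_2$ and its own workspace, and it uses no information about $\theta$.

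First, I would observe that steps 2 and 3 of $\hyb_2^c$ (the isometry $\Is$ and the measurement of $C_2$) act only on the challenger's registers $C_{\theta_0}C_2$. Operations on disjoint registers commute, so these steps can be moved to after the adversary outputs $x'$ without changing the joint distribution of $(x',\ell',\text{residual state})$. After this move, $\theta$ is used only by challenger-side operations that occur after the adversary has finished. Hence the sampling of the balanced $\theta$ can be deferred to just after step 2 of $\hyb_2^d$. At that point, $C$ is simply the purifying register of $\sum_x\ket{x}_C\ket{x}_A$, and it can be split into $C_{\theta_0}, C_{\theta_1}$ once $\theta$ is known.

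Second, I would reorder the final challenger operations. In $\hyb_2^c$, step 5 applies $U_{x'_{\theta_0},\ell'}[1]$ on $C_{\theta_0}$ and $\QFT$ on $C_{\theta_1}$, measures both, and then checks both conditions. The operations on $C_{\theta_1}$ (the $\QFT$ and its measurement) act on registers disjoint from $C_{\theta_0}C_2$. They therefore commute with $\Is$, with the $C_2$ measurement, and with $U[1]$ and the measurement on $C_{\theta_0}$. So I would perform the $C_{\theta_1}$ measurement first, exactly as in step 4 of $\hyb_2^d$.

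Third, I would check the early-return logic. In $\hyb_2^c$ the output is $I-\Ms$ iff $x_{\theta_1}\neq x'_{\theta_1}$ or $h(x_{\theta_0},x'_{\theta_0})\le\frac{n-k}{2}$. $\hyb_2^d$ returns $I-\Ms$ immediately when $x_{\theta_1}\neq x'_{\theta_1}$; otherwise it proceeds and applies the Hamming test. The disjunction is therefore evaluated identically. Skipping the $C_{\theta_0}$ operations in the first branch does not affect the returned bit, since the outcome is $I-\Ms$ regardless. The distribution of the $C_{\theta_1}$ outcome is the same whether it is measured before or after the $C_{\theta_0}$ operations, because these are measurements on disjoint subsystems.

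The only delicate point is justifying that the adversary's action is independent of $\theta$. This requires that $A$ is handed over as an unlabeled $2n$-qudit register, that the EPR-state preparation $\sum_x\ket{x}_C\ket{x}_A$ does not depend on $\theta$, and that $\ell$ is sampled independently of $\theta$ and of $C$ in $\hyb_2^c$. Given these facts, all three steps are applications of the principle that local operations on different subsystems commute, so $\hyb_2^c=\hyb_2^d$ holds exactly.
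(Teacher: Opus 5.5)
Your proposal is correct and follows essentially the same route as the paper. The paper's proof consists of the same three reorderings: moving the adversary's operations ahead of $\Is$ and the $C_2$ measurement, deferring the sampling of $\theta$ until after the adversary outputs $x'$, and moving the $\QFT$ and measurement on $C_{\theta_1}$ forward, all justified by the fact that operations on disjoint registers commute; your checks that the adversary's input in $\hyb_2^c$ is independent of $\theta$ and that the early return in $\hyb_2^d$ evaluates the same disjunction simply make explicit what the paper leaves implicit.
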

\ifnum\llncs=0
\begin{proof}
    Consider $\hyb^c_2$. (i) The adversary $\As_{\lambda}$'s operations are moved forward to Step 2. (ii) Sampling of $\theta$ is deferred until after the adversary $\As_{\lambda}$'s operations. (iii) The operation on the Fourier register $C_{\theta_1}$ is moved forward, but performed after sampling $\theta$. The reordering preserves the output distribution, since the operations act on different registers and therefore commute.
\end{proof}
\fi

\begin{lemma}\label[lemma]{bb84_de}
    $\Pr[\hyb_2^d = I-\cM] \geq \Pr[\hyb_2^e = I-\cM] - \negl(\lambda).$
\end{lemma}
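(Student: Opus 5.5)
The plan is to derive the claim from the cut-and-choose lemma (\cref{cut_and_choose}) and the gentle measurement lemma (\cref{gentle_measurement}). The reordering done in $\hyb_2^d$ is exactly what makes this possible. In $\hyb_2^d$ the adversary acts only on $A A_2$ and $\Zs(0^{2n},0^k)$, and $\theta$ is sampled only after $x'$ is output. So after Step~2 we have a fixed (mixed, hence purifiable) state $\ket{\psi}_{C E}$, where $E$ collects everything except the $2n$ qudits of $C$, together with a fixed string $x'$, and both are independent of $\theta$.

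\textbf{Step 1 (cut-and-choose).} Apply \cref{cut_and_choose} to $\ket{\psi}_{CE}$ with $P:=\theta_1$, a uniformly random subset of size $n$ of $[2n]$. The single-qudit unitaries are $U_j := X^{-x'[j]}\,\QFT$, i.e.\ a Fourier transform followed by a shift by $-x'[j]$. For the coordinates in $\theta_0$, measuring in the Fourier basis and shifting by $x'$ is exactly the projector $\Pi_{x'_{\theta_0}}$ (up to the sign convention of $\QFT$ versus $\QFT^{-1}$, which only changes the unitary and not the argument).
\begin{itemize}
\item The event $s[P]=0^n$ is the event ``$x_{\theta_1}=x'_{\theta_1}$'' in Step~4.
\item The event $h(s[-P])>c$ with $c=\frac{n-k}{2}$ is the event that $C_{\theta_0}$ would fall outside $\Pi_{x'_{\theta_0}}$.
\end{itemize}
Hence the joint event ``pass the $\theta_1$ check and fail $\Pi_{x'_{\theta_0}}$'' has probability at most $2e^{-c^2/32n}$. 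Since $3k+\Theta(\lambda)\le 2n$ gives $k\le 2n/3$, we get $n-k\ge n/3$, so $c\ge n/6$ and $c^2/32n=\Omega(n)=\Omega(\lambda)$. This bound is therefore negligible. The measurements on $C_{\theta_1}$ and $C_{\theta_0}$ act on disjoint qudits and commute, so it does not matter that the lemma measures all qudits at once.

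\textbf{Step 2 (gentle measurement).} I would not post-select conditioned on passing, because the passing probability could itself be tiny. Instead, define on the joint state after Step~3 (with $\theta$ known) the single projector
\[
\Pi' := (I-\Pi^{\mathrm{pass}}_{\theta_1}) + \Pi^{\mathrm{pass}}_{\theta_1}\otimes \Pi_{x'_{\theta_0}},
\]
where $\Pi^{\mathrm{pass}}_{\theta_1}$ projects $\QFT\, C_{\theta_1}$ onto $x'_{\theta_1}$. By Step~1, $\Tr(\Pi'\rho)\ge 1-\negl(\lambda)$ on average over $\theta$. By \cref{gentle_measurement}, inserting $\Pi'$ and post-selecting changes the state by at most $2\sqrt{\negl(\lambda)}$ in trace distance.

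\textbf{Step 3 (matching the hybrids).} Inserting $\Pi'$ into $\hyb_2^d$ gives an experiment with the same output distribution on $\{\Ms,I-\Ms\}$ as $\hyb_2^e$. On the failing branch of $\theta_1$ both hybrids return $I-\Ms$. On the passing branch, $\hyb_2^e$ applies exactly $\Pi_{x'_{\theta_0}}$ and then runs the identical Steps~6--9. Since the remaining steps form a CPTP map followed by a measurement, the outcome probabilities of $\hyb_2^d$ and $\hyb_2^e$ differ by at most the trace distance bound from Step~2, which yields the lemma.

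The main obstacle is Step~1. One has to identify $\Pi_{x'_{\theta_0}}$ and the $\theta_1$-check exactly with the event in \cref{cut_and_choose} via the shifted Fourier unitaries. One also has to argue carefully that $\ket{\psi}$ is genuinely independent of $\theta$, which relies on the reordering of \cref{bb84_cd} and on $\Zs(0^{2n},0^k)$ no longer depending on $\theta$.
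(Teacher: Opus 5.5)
Your proposal is correct and follows the paper's proof: you apply cut-and-choose (\cref{cut_and_choose}) with $P=\theta_1$, $U_i = X^{-x'[i]}\QFT$ and $c=\frac{n-k}{2}$, relying on the state after Step~2 of $\hyb_2^d$ being independent of $\theta$, and then apply gentle measurement. The only difference is cosmetic: you fold the $\theta_1$-check into a single projector $\Pi'$ and obtain the bound $2\sqrt{pq}$ directly, whereas the paper applies \cref{gentle_measurement} to the state conditioned on passing and weights the result by $p$, giving $2p\sqrt{q}\le 2\sqrt{pq}$.
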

\ifnum\llncs=0
\begin{proof}
Consider the state after Step 2 in both hybrids and we know that this state is fixed. Suppose the adversary $\cA_\lambda$ causes $x_{\theta_1} = x'_{\theta_1}$ with probability $p(\lambda)$ in Step 4. Note that the only source of randomness is the sampling of $\theta$ during Step 3. Let $q(\lambda)$ denote the probability that the projective measurement outcome corresponds to $I-\Pi_{x'_{\theta_0}}$ at Step 5 in $\hyb_2^e$, conditioning on the event that $x_{\theta_1} = x'_{\theta_1}$.
We have $3k+\Theta(\lambda) \leq 2n $. Hence, by the Cut-and-Choose \cref{cut_and_choose} with $c:= \frac{n-k}{2} \geq  \frac{1}{6}\left(n+\Theta( \lambda)\right)$ and all $U_i$ set to $X^{-x'[i]}\QFT$, where $X$ is the Pauli-$X$ operator, we have
\begin{equation*}
    p(\lambda)\cdot q(\lambda) \leq 2e^{-(n+\Theta(\lambda))^2/(6^2\cdot 32n)}=\negl(\lambda).
\end{equation*}
Let $\rho$ and $\rho'$ are the states after Step 4 in $\hyb_2^d$ and after Step 5 in $\hyb_2^e$, respectively. By gentle measurement \cref{gentle_measurement}, we have
\begin{equation*}
    \TD(\rho,\rho') \leq 2\sqrt{q(\lambda)}.
\end{equation*}
Hence, the distributions of the two hybrids, $\hyb_2^d$ and $\hyb_2^e$, differ with probability $p(\lambda) \cdot 2\sqrt{q(\lambda)} \leq \negl'(\lambda)$.
\end{proof}
\fi

\begin{lemma}\label[lemma]{bb84_e}
    $\Pr\left[\hyb_2^e = I-\Ms \right] = 1$.
\end{lemma}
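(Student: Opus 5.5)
We need to show that in $\hyb_2^e$ the output is always $I-\Ms$. The plan is to trace the only branch that can output $\Ms$ and show it never occurs.

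\textbf{Step 1 (early exit and post-selection).} If Step 4 yields $x_{\theta_1}\neq x'_{\theta_1}$, the hybrid already returns $I-\Ms$, so we condition on $x_{\theta_1}=x'_{\theta_1}$. After the post-selection on $\Pi_{x'_{\theta_0}}$ in Step 5, the joint state of $C_{\theta_0}$ and the rest of the system (adversary registers, $C_{\theta_1}$, etc.) lies in the image of $\Pi_{x'_{\theta_0}}\otimes I$. It can therefore be written as
\begin{equation*}
\left(\QFT^{-1}\right)^{\otimes n}_{C_{\theta_0}}\sum_{h(s)\le\frac{n-k}{2}}\ket{x'_{\theta_0}+s}_{C_{\theta_0}}\otimes\ket{\phi_s}
\end{equation*}
for some unnormalized $\ket{\phi_s}$.

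\textbf{Step 2 (effect of $\Is$ and the measurement of $C_2$).} Apply $\Is$ and measure $C_2$, obtaining $\ell'$. This projects $C_{\theta_0}$, in the computational basis, onto $\{y:\regR_{-1}y=\ell'\}$. We then compute the projected state in the Fourier picture.
\begin{itemize}
\item Each branch $\left(\QFT^{-1}\right)^{\otimes n}\ket{x'_{\theta_0}+s}$ has computational-basis amplitudes proportional to $\omega^{-(x'_{\theta_0}+s)\cdot y}$.
\item Restricting to $y$ in the affine space $\regR_{-1}y=\ell'$ and applying $\QFT^{\otimes n}$ yields a superposition over $x'_{\theta_0}+s+w$ with $w\in\rowSpan(\regR_{-1})$.
\item The relative phases are determined via the bilinear $f$ of \cref{def:extractor_f}: for $y$ with $\regR_{-1}y=\ell'$, we have $w\cdot y=f(w,\cdot)$ evaluated at $\ell'$ padded in coordinate $1$. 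This follows by the same computation used in \cref{bb84_extractor} and in the displayed identity for \cref{bb84_predicate}.
\end{itemize}
Matching this against the definition of $U^{-1}_{x,a}[1]$, each $s$-branch becomes, up to a global normalization and a phase depending only on $s$ (absorbed into $\ket{\phi_s}$), exactly $U^{-1}_{x'_{\theta_0},\ell'}[1]\ket{x'_{\theta_0}+s}$.

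\textbf{Step 3 (undoing the unitary).} Applying $U_{x'_{\theta_0},\ell'}[1]$ in Step 8 returns the state
\begin{equation*}
\sum_{h(s)\le\frac{n-k}{2}}\ket{x'_{\theta_0}+s}\otimes\ket{\phi'_s}.
\end{equation*}
This relies on \cref{bb84_defU_eff}'s well-definedness argument: the cosets $x'_{\theta_0}+s+\rowSpan(\regR_{-1})$ are disjoint for distinct low-weight $s$, by the MDS distance $n-k+1$. Hence $U^{-1}$ acts isometrically on this span and $U$ inverts it there. Measuring $C_{\theta_0}$ then gives $x_{\theta_0}=x'_{\theta_0}+s$ with $h(s)\le\frac{n-k}{2}$, so Step 9 returns $I-\Ms$ with certainty.

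\textbf{Main obstacle.} The hard part is the phase bookkeeping in Step 2. We must verify that the phase acquired by $w$ is $\omega^{f(x+s-y,-a)}$ with the correct sign convention and argument $a=\ell'$, and that it is independent of $s$ apart from a global factor. If the convention is off, we expect only a relabeling of $\ell'$ (or of its sign), which can be corrected without affecting the conclusion.
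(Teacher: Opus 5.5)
Your proposal is correct and follows essentially the same route as the paper's proof. You write the post-selected state as $(\QFT^{-1})^{\otimes n}\sum_{s}\ket{x'_{\theta_0}+s}\otimes\ket{\psi_s}$, show that measuring $\regR_{-1}y$ turns each branch into $U^{-1}_{x'_{\theta_0},\ell'}[1]\ket{x'_{\theta_0}+s}$, and then undo this with $U_{x'_{\theta_0},\ell'}[1]$.

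The phase bookkeeping you flag works out exactly. The amplitude on $\ket{x'_{\theta_0}+s+w}$ is proportional to $\sum_{y:\regR_{-1}y=\ell'}\omega_{|\F|}^{w\cdot y}\propto\omega_{|\F|}^{f(-w,-\ell')}$, with no $s$-dependence. So no correction is needed. Note also that your fallback of relabeling $\ell'$ would not actually be available, since Step 8 fixes the unitary's argument to $\ell'$; the computation simply confirms the match.
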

\begin{proof}
    The following is the state after Step 5 of $\hyb^e_2$.
    \begin{align}
        &\quad \sum_{\substack{s \in \F^{n} \\  h(s)\leq \frac{n-k}{2}}} \left(\QFT^{-1} \right)^{\otimes n} \ket{x'_{\theta_0}+s}_{C_{\theta_0}} \otimes \ket{\psi_s} \label{hyb2e} \\
        =&\quad \frac{1}{\sqrt{|\F|^{n}}}\sum_{\substack{s \in \F^{n}\\ h(s) \leq \frac{n-k}{2}}} \sum_{x\in \F^n}\omega_{|\F|}^{-\left(x'_{\theta_0}+s\right)  x}\ket{x}_{C_{\theta_0}} \otimes \ket{\psi_s} \nonumber,
    \end{align}
    where $\ket{\psi_s}$ is an arbitrary (non-normalized) state that may be entangled with $C_{\theta_0}$ and makes the state in~$(\ref{hyb2e})$ normalized.

Apply the isometry $\Is:$  $\ket{x}_{C_{\theta_0}}\mapsto \ket{x}_{C_{\theta_0}}\ket{\regR_{-1} x_{\theta_0}}_{C_2}$ at Step 6. Then, the state becomes
\begin{equation*}
    \frac{1}{\sqrt{|\F|^{n}}} \sum_{\substack{s \in \F^{n}\\ h(s) \leq \frac{n-k}{2}}} \sum_{x\in \F^n}\omega_{|\F|}^{-\left(x'_{\theta_0}+s\right)x}\ket{x}_{C_{\theta_0}}\ket{\regR_{-1}x}_{C_2} \otimes \ket{\psi_s}.
\end{equation*}
At Step 7, suppose we measure $C_2$ and we obtain $*\|\ell'$. The remaining state becomes
\begin{align}
    &\quad \frac{1}{\sqrt{|\F|^{n-k+1}}} \sum_{\substack{s \in \F^{n}\\ h(s) \leq \frac{n-k}{2}}} \sum_{\substack{x \\ \regR_{-1} x= \ell'}}\omega_{|\F|}^{-\left(x'_{\theta_0}+s\right)x}\ket{x}_{C_{\theta_0}} \otimes \ket{\psi_s} \nonumber \\
    = &\quad \left(\QFT^{-1}\right)_{C_{\theta_0}}^{\otimes n} \cdot \frac{1}{\sqrt{|\F|^{n}}} \cdot \frac{1}{\sqrt{|\F|^{n-k+1}}} \sum_{\substack{s \in \F^{n}\\ h(s)\leq\frac{n-k}{2}}} \sum_{y\in \F^n} \sum_{\substack{x \\ \regR_{-1} x = \ell'}}\omega_{|\F|}^{-\left(x'_{\theta_0}+s-y\right)x}\ket{y}_{C_{\theta_0}} \otimes \ket{\psi_s} \nonumber \\
    = &\quad \left(\QFT^{-1}\right)_{C_{\theta_0}}^{\otimes n} \cdot \frac{1}{\sqrt{|\F|^{k-1}}} \sum_{\substack{s \in \F^{n}\\ h(s)\leq\frac{n-k}{2}}} \quad \sum_{\substack{y\\ y \in x'_{\theta_0}+s+\text{rowSpan}(\regR_{-1}) }} \omega_{|\F|}^{f(x'_{\theta_0}+s-y,\, -\ell')} \ket{y}_{C_{\theta_0}} \otimes \ket{\psi_s} \label{hyb2e_finalstate}.
\end{align}
The last equality holds by the following: Consider $y \in x'_{\theta_0} + s + \text{rowSpan}(\regR_{-1})$ and an arbitrary $x$ satisfied $\regR_{-1} x= \ell'$. Then, $x'_{\theta_0} + s - y \in \text{rowSpan}(\regR_{-1})$. This implies that
\begin{equation*}
    (x'_{\theta_0} + s - y) \cdot (-x) = f(x'_{\theta_0}+s-y,\, -\ell').
\end{equation*}
Together with the fact that $\text{columnSpan}(\regR_{-1})$ has dimension $k-1$, we have
\begin{equation*}
    \sum_{\substack{x \\ \regR x[2:k] = \ell'}}\omega_{|\F|}^{-\left(x'_{\theta_0}+s-y\right)x}\ket{y}_{C_{\theta_0}} \otimes \ket{\psi_s} = |\F|^{n-k+1} \cdot \omega_{|\F|}^{f(x'_{\theta_0}+s-y,\, -\ell')} \ket{y}_{C_{\theta_0}} \otimes \ket{\psi_s}.
\end{equation*}
Since the size of $\text{rowSpan}(\regR_{-1})$ is $|\F|^{k-1}$, we know that the state in~(\ref{hyb2e_finalstate}) has norm 1 and hence any $y \notin x'_{\theta_0} + s + \text{rowSpan}(\regR_{-1})$ has amplitude 0.

Next, at Step 8, we apply $U_{x'_{\theta_0}, \ell'}$ on $C_{\theta_0}$ in \cref{hyb2e_finalstate}, and the state becomes
\begin{equation*}
     \sum_{\substack{s \in \F^{n}\\ h(s)\leq \frac{n-k}{2}}} \ket{x'_{\theta_0}+s}_{C_{\theta_0}} \otimes \ket{\psi_s}.
\end{equation*}
After measuring the register $C_{\theta_0}$, we obtain an outcome $x_{\theta}$ within Hamming distance $\frac{n-k}{2}$ of $x'_{\theta_0}$ and hence the output of $\hyb^e_2$ is $I-\Ms$ with probability 1. 
\end{proof}
 
 \paragraph{We conclude the proof of Lemma~\ref{lem:bb-predicate}.} By \cref{bb84_ab,bb84_bc,bb84_cd,bb84_de,bb84_e}, 
    we know that the projective measurement at Step 3 in $\hyb_3$ yields the outcome $I-\Ms$ with probability $1-\negl(\lambda)$. 
   Indeed, the procedure in $\hyb_2^a$ is equivalent to Steps 1 through 3 in $\hyb_3$, with the following difference: $\hyb_2^a$ does not defer the measurement and does not do post-selection. Hence, the measurement outcome probabilities of $I-\cM$ and $\cM$ are identical in both hybrids.
This completes the proof of Lemma~\ref{lem:bb-predicate}.
\end{proof}

\ifnum\llncs=1
\subsection{Applications of BB84 Constructions.}

\cite{C:BarKhu23} propose a unified approach for upgrading ``all-or-nothing''-type cryptographic primitives to support certified deletion. Combining our constant-rate CD \cref{theorem:bb84_maintheorem} with their constructions, we upgrade ``all-or-nothing''-type cryptographic primitives to support constant-rate CD. We defer the discussion to the full version.
\fi

%% file: 5-2_BB84_apps.tex
\subsection{Applications of BB84 Constructions.}\label{app:Application_BB84}

\cite{C:BarKhu23} propose a unified approach for upgrading ``all-or-nothing''-type cryptographic primitives to support certified deletion. We first present their main theorem with minor modifications to align with our format.

\begin{theorem}[\cite{C:BarKhu23}, Theorem 3.1]\label[theorem]{BK_BB84}
    Let $\lambda$ be the security parameter, $\regR \in \{0,1\}^{1 \times \lambda}$ be the all-ones matrix, $m \in \{0,1\}$ be a single bit message and $\{\Zs_\lambda(\cdot,\cdot)\}_{\lambda \in \N}$ be a quantum operation with classical input. Define the following distribution $\left\{\EXP_\lambda^{\As_\lambda}(m) \right\}_{\lambda \in \N }$ over quantum states, obtaining by running $\As_\lambda$ as follows.
    \begin{enumerate}
        \item Sample $x \leftarrow \{0,1\}^{2\lambda}$ and a balanced $\theta \leftarrow \{0,1\}^{2\lambda}$. Initialize $\As_\lambda$ with
        \begin{equation*}
            H^{\theta}\ket{x}, \Zs_\lambda(\theta,m+\regR x_{\theta_0}). 
        \end{equation*}
        \item $\As_\lambda$'s output is parsed as a string $x' \in \{0,1\}^{2\lambda}$ and a residual state $\rho$. If $x_{\theta_1} = x'_{\theta_1}$, then output $\rho$, and otherwise output a special symbol $\bot$.
    \end{enumerate}
    If $\Zs$ is semantically-secure against any non-uniform QPT $\{\As_\lambda\}_{\lambda \in \N}$ (\cref{semantic}), then for all non-uniform QPT $\{\As_\lambda\}_{\lambda \in \N}$,
    \begin{equation*}
        \TD\left(\EXP_\lambda^{\As_\lambda}(0), \EXP_\lambda^{\As_\lambda}(1) \right) = \negl(\lambda).
    \end{equation*}
\end{theorem}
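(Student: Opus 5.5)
The plan is to obtain \cref{BK_BB84} as a special case of our machinery with $k=1$ over $\F_2$, rather than reproving it from scratch. With $k=1$ the position-wise guessing collapses: there is only one coordinate, $\regR_{-1}$ is the empty matrix, $\rowSpan(\regR_{-1})=\{0\}$, and the auxiliary one-time pad $r$ is unnecessary. The unpadded masked bit $m+\regR x_{\theta_0}$ sits inside $\Zs$ together with $\theta$. This is exactly the setting in which the guessing argument of \cite{C:BarKhu23} loses only a factor $2$.

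I would run the hybrids $\hyb_0$ through $\hyb_3$ of the proof of \cref{theorem:bb84_maintheorem}, with three changes.
\begin{itemize}
\item In $\hyb_1$ I guess $b\leftarrow\{0,1\}$ and initialize the adversary with $H^\theta\ket{x},\Zs_\lambda(\theta,b)$. The output is rejected unless $b=m+\regR x_{\theta_0}$. This gives $\TD(\hyb_1(0),\hyb_1(1))=\frac12\TD(\hyb_0(0),\hyb_0(1))$, as in \cref{bb84_mainhybrid_0to2}.
\item Purifying $x$ into $\sum_x\ket{x}_C\ket{x}_A$ is identical to the main proof.
\item The measurement $\cM$ now only applies $H$ on $C_{\theta_1}$ and measures, applies $H$ on $C_{\theta_0}$, and checks that the Hamming distance to $x'_{\theta_0}$ is small. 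Here $U_{x,a}[1]$ degenerates to $H^{\otimes n}$, since no decoding is needed.
\end{itemize}

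Next I would prove the analogue of \cref{lem:bb-predicate}. The predicate is efficiently checkable, so semantic security lets me switch $\Zs(\theta,b)$ to $\Zs(0,0)$ inside the branch hybrid, as in \cref{bb84_ab}. After the switch the adversary's view is independent of $\theta$, so $\theta$ can be sampled after $x'$ is output, as in \cref{bb84_cd}. The cut-and-choose \cref{cut_and_choose}, applied with $n=\lambda$ and $c$ a constant fraction of $\lambda$, then gives a negligible failure probability, as in \cref{bb84_de,bb84_e}. Unlike the general case, no measurement of $C_2$ and no coset restriction are needed.

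Finally, conditioned on $I-\cM$, the register $C_{\theta_0}$ has the form in \cref{bb84_predicate} with $k=1$. The $k$ columns are required to be linearly independent, which for $k=1$ just means each entry of the all-ones row is nonzero, so the MDS property is trivial. \cref{bb84_extractor} with $k=1$ then says that $\regR x_{\theta_0}=\bigoplus_{i\in\theta_0}x_i$ is uniform and independent of $E$. This requires the error radius $(n-k)/2$, which is why I take $c=(\lambda-1)/2$ in the cut-and-choose step. It follows that $\TD(\hyb_3(0),\hyb_3(1))=0$, and hence the original trace distance is at most $2\cdot\negl(\lambda)$.

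The main point to double-check is the cut-and-choose parameters. For a binary field with $k=1$ the allowed error weight $(\lambda-1)/2$ must still dominate the bound $2e^{-c^2/32\lambda}$. It does, since this bound is $e^{-\Omega(\lambda)}$, so the proof goes through.
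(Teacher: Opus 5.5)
Your proposal is correct and is essentially the argument the paper relies on. The paper does not reprove this theorem; it cites \cite{C:BarKhu23}. Their proof is the one the technical overview recounts: guess the masked bit, purify $x$, transfer an efficiently checkable Hadamard-basis predicate from $\Zs(0,0)$ to $\Zs(\theta,b)$ via semantic security, apply cut-and-choose, then the XOR extractor. This is also what the proof of \cref{theorem:bb84_maintheorem} reduces to for $k=1$ over $\F_2$. Your adaptations are sound: the guess $b$ sits inside $\Zs$, so neither the pad $r$ nor the $C_2$ measurement is needed, $U_{x,a}[1]=H^{\otimes n}$, and the error radius is $(\lambda-1)/2$.
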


They use Theorem~\ref{BK_BB84} as a black box to upgrade various cryptographic schemes to support certified deletion, including commitment schemes, secret-key encryption, public-key encryption, attribute-based encryption, fully homomorphic encryption, timed-release encryption, and witness encryption. Suppose the underlying message is a string; they encrypt the message bitwise. Formally, let $m \in \{0,1\}^k$ with some $k \in \N$, the construction is as follows: (i) For each $i \in [k]$, sample $x_i \leftarrow \{0,1\}^{2\lambda}$ and balanced $\theta^i \leftarrow \{0,1\}^{2\lambda}$. (ii) Construct
\begin{equation*}
\left\{H^{\theta^i}\ket{x_i}\right\}_{i \in [k]}, \left\{\Zs_\lambda\left(\theta^i,m[i]+\regR (x_i)_{\theta_0^i}\right)\right\}_{i \in [k]},
\end{equation*}
where $\regR = 1^\lambda$ is an all-ones vector in their construction.

To upgrade the cryptographic scheme to support constant-rate certified deletion, one modification is to replace the quantum state and the extractor matrix $\regR$ with our constructions. The proof follows that of \cite{C:BarKhu23}, except that Theorem~\ref{BK_BB84} is replaced by our constant-rate version, Theorem~\ref{theorem:bb84_maintheorem}. However, since the message $m+\regR x_{\theta_0}$ serves as an input to $\Zs$ in Theorem~\ref{BK_BB84}, our construction should account for this. 

Let $m \in \F^k$ with some $k \in \N$, the construction is as follows: (i) Sample $x \leftarrow \F^{2n}$ and a balanced $\theta \leftarrow \{0,1\}^{2n}$, and $r \leftarrow \F^k$, with $n=\lambda+1.5k$. (ii) Construct
\begin{equation*}
\QFT^{\theta}\ket{x}, m+\regR x_{\theta_0}+r, \Zs_\lambda(\theta,r,m+\regR x_{\theta_0}+r).
\end{equation*}

The proof of modified \cref{theorem:bb84_maintheorem} follows a similar logic to the original. However, when purifying $x$ in $\hyb_2$, $\hyb_3$, $\hyb_2^a$, and $\hyb_2^b$, the term $\Zs_\lambda(\theta,r,b\|(m+\regR x_{\theta_0}+r)[2:k])$ must be purified together. If $\Zs$ is a randomized function, it is sufficient to use the same randomness coherently. To explain this, we slightly modify $\hyb_2^b$ and $\hyb_2^c$ to obtain $\hyb_2^{b^*}$ and $\hyb_2^{c^*}$ as follows:
\paragraph{$\hyb_2^{b^*}(m)$:}
\begin{enumerate}
    \item Sample a balanced $\theta \leftarrow \{0,1\}^{2n}$, $r \leftarrow \F^k$, and $b \leftarrow \F$. Prepare the following state
    \begin{align*}
        \sum_x  &\Ket{x}_{C_{\theta_0}C_{\theta_1}}\Ket{b\|(m+\regR x_{\theta_0}+r)[2:k]}_{C_2}\otimes \\
        & \Ket{x}_{A_{\theta_0}A_{\theta_1}}\Ket{b\|(m+\regR x_{\theta_0}+r)[2:k]}_{A_2}{\color{red}\Ket{\Zs_\lambda(\theta,r,b\|(m+\regR x_{\theta_0}+r)[2:k])}_{A_3}}.
    \end{align*}  
    \item
    Measure $C_2$ in computational basis and obtain $b\|\ell$.
    \item Initialize $\As_\lambda$ with registers $A_{\theta_0}A_{\theta_1}A_2A_3$. $\As_\lambda$'s output is parsed as a string $x' \in \F^{2n}$.
    \item Apply $U_{x'_{\theta_0}, \ell-(r+m)[2:k]}[1]$ and $\QFT$ on $C_{\theta_0}$ and $C_{\theta_1}$, respectively. Measure $C_{\theta_0}$ and $C_{\theta_1}$, obtain $x_{\theta_0}$ and $x_{\theta_1}$, respectively.
    \item If $x_{\theta_1} \neq x'_{\theta_1}$ or $h(x_{\theta_0}, x'_{\theta_0}) \leq \frac{n-k}{2}$, then return $I-\Ms$. Otherwise, return $\Ms$.
\end{enumerate}

\paragraph{$\hyb_2^{c^*}$:} 
\begin{enumerate}
    \item Sample a balanced $\theta \leftarrow \{0,1\}^{2n}$, $b \leftarrow \F$, and $\ell \leftarrow \F^{k-1}$. Prepare the following state
    \begin{equation*}
        \sum_x  \ket{x}_{C_{\theta_0}C_{\theta_1}}\otimes \ket{x}_{A_{\theta_0}A_{\theta_1}}, b\|\ell_{A_2}.
    \end{equation*} 
    \item Apply the isometry $\Is$: $\ket{x}_{C_{\theta_0}} \mapsto \ket{x}_{C_{\theta_0}}\ket{\regR_{-1} x_{\theta_0}}_{C_2}$, where $\regR_{-1} \in \F^{(k-1)\times n}$ is equal to $\regR$ but without the first row.
    \item Measure $C_2$ in computational basis and obtain $\ell' \in \F^{k-1}$.
    \item Initialize $\As_\lambda$ with register $A_{\theta_0}A_{\theta_1}A_2$ and ${\color{red}\Zs(0, 0, 0)}$. $\As_\lambda$'s output is parsed as a string $x' \in \F^{2n}$.
    \item Apply $ U_{x'_{\theta_0}, \ell'}[1]$ and $\QFT$ on $C_{\theta_0}$ and $C_{\theta_1}$, respectively. Measure $C_{\theta_0}$ and $C_{\theta_1}$, obtain $x_{\theta_0}$ and $x_{\theta_1}$, respectively.
    \item If $x_{\theta_1} \neq x'_{\theta_1}$ or $h(x_{\theta_0}, x'_{\theta_0}) \leq \frac{n-k}{2}$, then return $I-\Ms$. Otherwise, return $\Ms$.
\end{enumerate}

The equality $\hyb_2^{b^*}=\hyb_2^{c^*}$ follows similarly to $\hyb_2^b=\hyb_2^c$. Indeed, the challenger will measure the $C_2$ register in the computational basis. Consequently, $b\|(m+\regR x_{\theta_0}+r)[2:k]$ and $\Zs_\lambda(\theta,r,b\|(m+\regR x_{\theta_0}+r)[2:k])$ will collapse to classical strings. This also justifies why it is sufficient to use the same randomness for $\Zs$ across all superposition. Furthermore, we apply the semantic security of $\cZ$ to show that the output of $\cZ$ is indistinguishable between the two hybrids.

%% file: 6-1_coset_states_final.tex
\ifnum\llncs=0
\section{Certified Deletion for Subspace Coset States}\label{section:coset_m}
\fi
\ifnum\llncs=1
\section{Omitted Details from Section~\ref{section:coset_m}.}
\fi

\subsection{Main Theorem}\label{section:coset}

Note that we provide a visual illustration in \cref{fig:coset-basis} of the subspaces defined in \cref{sampling}. Hence, when we sample $[S,T]$, the following are well-defined.
\begin{equation*}
    \left(S, T,\co(S),\co(T),\co(S^\perp),\co(T^\perp),\{P_i\}_{i\in [k]}, \{D_i\}_{i\in [k]}, \{S^+_i\}_{i \in [k]}, P \right)
\end{equation*}
and
\begin{equation*}
   T/S, \{P/P_i\}_{i\in [k]}
\end{equation*}
are well-defined.

We define the notion of subspace hiding, which is similar to $\cite{BGK+}$.
\begin{definition}[Subspace Hiding]\label[definition]{subspace_hiding} \arnie{Should $\F$ and $[S,T]$ be parameterized by $\lambda$?}
    Let $\mathscr{A}$ be a class of adversaries. Sample $[S,T]$ uniformly at random over dimension $n$ (\cref{sampling}). Choose an arbitrary $\regR \in \F^{k \times n}$.
    Also, sample $(u,w) \leftarrow \co(T) \times \co(S^\perp)$ and an arbitrary string $m' \in \F^k$. We say that a QPT operation $\{\cZ_\lambda\}_{\lambda \in \N}$ is \textbf{subspace-hiding} for $\mathscr{A}$ if there exists a QPT simulator $\{\cS_\lambda\}_{\lambda \in \N}$ such that
    \begin{itemize}
        \item Sample a $\frac{3n}{4}$-dimension superspace $G$ of $S^\perp$ uniformly at random. Let $x \in \co(G)$ be the unique coset such that $S^\perp + w \subset G +x$. We have that
        \begin{equation*}
            \regR, S,T,u,w, \cZ_\lambda(S,T,u,w, m') \approx_\mathscr{A} \regR, S,T,u,w, \cS_\lambda(G,T,u,x,m').
        \end{equation*}
    \end{itemize}
\end{definition}

\begin{theorem}\label[theorem]{coset_maintheorem}
    Let $\lambda$ be the security parameter. Fix $k,n \in \N$ such that $4k+\Theta(\lambda) \leq n$. Consider a family of finite prime fields $\{\F_\lambda\}_\lambda$ with polynomial bounded sizes (and for simplicity, we suppress the subscript). \arnie{Does it mean that all $\F_\lambda$ have the same order?}
    Let $m \in \F^k$ be classical messages, $\regR \in \F^{k \times n}$ be a full rank matrix, and $\{\Zs_\lambda(\cdot)\}_{\lambda \in \N}$ be a quantum operation with classical input. Define the following distribution $\left\{\EXP_\lambda^{\As_\lambda}(m) \right\}_{\lambda \in \N }$ over quantum states, obtained by running $\As_\lambda$ as follows.
    \begin{enumerate}
        \item Sample $[S,T]$ uniformly at random over $\F^n$ \arnie{Should all the $\F^n$ be  subscripted with $\lambda$, denoted as $\F^n_\lambda$?} (Definition~\ref{sampling}).  
        Next, sample $(v,w) \leftarrow \co(S) \times \co(S^\perp)$, uniformly at random.
        
        \item Let $u \in \co(T)$ be the unique coset such that $S+v \subset T+u$.  
        
        Initialize $\As_\lambda$ with
        \begin{equation*}
            \ket{S_{v,w}}, \Zs_\lambda(S, T, u, w, m+\regR v). 
        \end{equation*}
        \item $\As_\lambda$'s output is parsed as a string $\widetilde{z} \in \F^n$ and a residual state $\rho$. If $\widetilde{z} \in S^\perp + w$, then output $\rho$, and otherwise output a special symbol $\bot$.
    \end{enumerate}
    Let $\mathscr{A}$ be a class of adversaries that contains QPT algorithms. If $\{\cZ_\lambda\}_{\lambda \in \N}$ is subspace-hiding for $\mathscr{A}$ (\cref{subspace_hiding}), then for all $\{\cA_\lambda\}_{\lambda \in \N} \in \mathscr{A}$ and for all $m \in \F^k$,
    \begin{equation*}
        \TD\left(\EXP_\lambda^{\As_\lambda}(m), \EXP_\lambda^{\As_\lambda}(0^k) \right) = \negl(\lambda).
    \end{equation*}
\end{theorem}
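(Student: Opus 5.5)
We argue coordinate by coordinate, exactly as in the proof of \cref{theorem:bb84_maintheorem}. By the triangle inequality over $i\in[k]$, it suffices to show $\TD(\EXP(m),\EXP(m+\alpha e_i))\le\negl(\lambda)$ for each $i$ and $\alpha\in\F$; we treat $i=1$.

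\textbf{Guessing and purification.}
\begin{enumerate}
\item In $\hyb_1$, replace $(m+\regR v)[1]$ inside $\Zs$ by a uniform guess $b\leftarrow\F$. At the end we output $\bot$ unless $b=(m+\regR v)[1]$. This loses exactly a factor $1/|\F|$ in trace distance, as in \cref{bb84_mainhybrid_0to2}.
\item Decompose $v=v_{\ker}+v_{\img}+u$ with $v_{\ker}\in P_1$, $v_{\img}\in P/P_1$, $u\in\co(T)$. Since $\regR(P_1)\subseteq \LAspan{e_1}$, the leaked coordinates $(m+\regR v)[2:k]=(m+\regR(v_{\img}+u))[2:k]$ do not depend on $v_{\ker}$.
\item In $\hyb_2$, purify $v_{\ker}$ as $\frac{1}{\sqrt{|P_1|}}\sum_{v_{\ker}\in P_1}\ket{v_{\ker}}_C\ket{S_{v_{\ker}+v_{\img}+u,w}}_A$, which is efficiently preparable by \cref{coset_stateconstruct}. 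The challenger measures $C$ only after the adversary outputs $\widetilde z$ and then checks $b$. The classical input to $\Zs$ is now a function of $(S,T,u,w,v_{\img})$ only, so $C$ is unentangled from it.
\end{enumerate}

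\textbf{Post-verification structure.} Conditioned on $\widetilde z\in S^\perp+w$, we apply the standard BGK+ computation: write $\widetilde z=w+\widetilde v_{\ker}+\widetilde v_{\img}+(\text{element of }(S_1^+)^\perp)$, using the decomposition of \cref{fig:coset-basis}. Applying $U_1$ from \cref{coset_unitaryconstruct} to $C$ relates the $C$ register to the $D_1$-component $\widetilde v_{\ker}$ of $\widetilde z-w$. Define $\hyb_3$ by inserting the efficient projective measurement that accepts iff $C$, after $U_1$ is applied, equals $\ket{\widetilde v_{\ker}}$ with $\widetilde v_{\ker}$ computed from $\widetilde z$. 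The intended structure is that a verified $\widetilde z$ forces $C$ to be exactly $U_1^\dagger\ket{\widetilde v_{\ker}}$. Once this holds, \cref{coset_extractor} with $a=v_{\img}+u$ shows $(\regR v)[1]$ is uniform and independent of everything else. Hence the event $b=(m+\regR v)[1]$ and the event $b=(m+\alpha e_1+\regR v)[1]$ have equal probability, independent of $\rho$, which gives $\TD(\hyb_3(m),\hyb_3(m+\alpha e_1))=0$. By gentle measurement (\cref{gentle_measurement}), $\hyb_2$ and $\hyb_3$ are negligibly close provided the inserted measurement rejects with negligible probability.

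\textbf{Bounding the rejection probability (the main obstacle).} The rejection event corresponds to the adversary outputting $\widetilde z\in(S^\perp+w)\setminus((S_1^+)^\perp+\widetilde v_{\ker}+w)$, i.e. producing a point of $S^\perp\setminus(S_1^+)^\perp$ after shifting. The plan:
\begin{enumerate}
\item Since this predicate is efficiently checkable, invoke subspace hiding (\cref{subspace_hiding}) to replace $\Zs(S,T,u,w,\cdot)$ by $\cS(G,T,u,x,\cdot)$. This changes the rejection probability by only $\negl(\lambda)$, via a reduction as in \cref{bb84_ab}.
\item Measure the coset state and the purification so that the adversary's input depends only on $S^+_1$, $T$, $G$ and coset data. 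Concretely, $\frac{1}{\sqrt{|P_1|}}\sum_{v_{\ker}}\ket{v_{\ker}}\ket{S_{v_{\ker}+\cdots,w}}$ becomes, after tracing, an $S_1^+$-type coset state.
\item Obtain an adversary $\cA'$ that, given $(G,S_1^+,T)$ and a state independent of $S$ within $S_1^+$, outputs $\widetilde z'\in S^\perp\setminus(S_1^+)^\perp$.
\item Resample $S$ as a uniform $\frac n2$-dimensional subspace with $G^\perp\le S\le S_1^+$, conditioned on the adversary's view, after $\widetilde z'$ is output. Since $\widetilde z'\notin(S_1^+)^\perp$, the hyperplane $\widetilde z'^\perp\cap S_1^+$ contains a random $S$ (of codimension $\approx \frac n4-k\ge\Theta(\lambda)$ in $S_1^+$, relative to $G^\perp$) with probability at most $|\F|^{-\Theta(\lambda)}$.
\end{enumerate}
The delicate points are verifying that the simulated view really is independent of $S$ given $S_1^+$, which requires $\cS$ to depend only on $G\supseteq S^\perp$ and $T$, and checking the dimension counts in \cref{sampling}. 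These counts are where $4k+\Theta(\lambda)\le n$ enters. Combining the steps, summing over the $k$ coordinates and multiplying by $|\F|=\poly$ yields negligible total distance.
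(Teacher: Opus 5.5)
Your proposal follows the paper's proof essentially step for step. The shared steps are the position-wise guess $b$, purifying only $v_{\ker}\in P_1$ so that the leaked coordinates are unentangled from $C$, the $U_1$-based measurement testing $\widetilde z\in(S_1^+)^\perp+\widetilde v_{\ker}+w$ followed by \cref{coset_extractor} and gentle measurement, and bounding rejection via subspace hiding, the shift $\widetilde z\mapsto\widetilde z-\widetilde v_{\ker}-w$, and resampling $S$ uniformly between $G^\perp$ and $S_1^+$. Two details are made explicit in the paper and left unstated in your sketch: its $\hyb_3^d$ first samples the leakage $\ell=(m+\regR(v_{\img}+u))[2:k]$ uniformly and then defines $v_{\img}$ through the bijection $\regR_{-1}|_{P/P_1}$, so that the message fed to $\Zs$ is independent of $(S,T,u,w)$ and the subspace-hiding reduction is valid; and it justifies the resampling of $S$ by conditioning on $\regR(G^\perp)=\F^k$, which is where $4k+\Theta(\lambda)\le n$ is used.
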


\begin{remark}
    An honest deletion operation applies $\left(\QFT^{-1}\right)^{\otimes n}$ to the quantum state, obtaining $\ket{S^{\perp}_{w,-v}}$, measures the resulting state in the computational basis, and sends all measurement outcomes to the sender.
\end{remark}

\begin{proof}[Proof of \cref{coset_maintheorem}]
    To prove the theorem, we define a sequence of hybrid distributions.
    We start by defining $\hyb_0(m)$ to be identical to $\EXP_\lambda^{\As_\lambda}(m)$. Our goal is therefore to show that $\hyb_0(m)$ and $\hyb_0(0^k)$ are statistically close for all $m$.
    We establish this via a hybrid argument over individual field elements. Without loss of generality, we focus on the first coordinate: specifically, we show that $\hyb_0(m)$ and $\hyb_0(m+\alpha e_1)$ are statistically close for any $\alpha \in \F$, where $e_1 = 1|0^{k-1}$.
    To this end, we introduce a sequence of hybrid experiments from $\hyb_0$ to $\hyb_4$, such that each pair of consecutive hybrids is statistically indistinguishable. It therefore suffices to show that $\hyb_4(m)$ and $\hyb_4(m+\alpha e_1)$ are statistically close. This final step relies on the extraction lemma from \cref{coset_extractor}.
    Applying the extraction lemma requires that the challenger’s quantum state satisfies a specific structural condition. This condition is enforced by inserting a measurement $\mathcal{M}$ in $\hyb_4$. To argue that introducing $\mathcal{M}$ only negligibly affects the experiment, we show that its outcome is almost fixed and invoke the gentle measurement lemma.
    To establish this, we further introduce sub-hybrids $\hyb_3^a$ through $\hyb_3^e$. The consecutive games are  indistinguishable by the subspace-hiding security of $\cZ$, and if the outcome distribution of $\mathcal{M}$ were to change noticeably, the measurement result itself would yield a predicate that distinguishes the hybrids, since the measurement can be efficiently implemented.
    
    \paragraph{$\hyb_0(m)$:} This is the distribution $\{\EXP_\lambda^{\As_\lambda}(m)\}_{\lambda \in \N}$ described above.

    \paragraph{$\hyb_1(m)$:} In this hybrid, the first field element of $m+\regR v$ is replaced by a uniformly random field element $b \leftarrow \F$. The dependence on the message $m[1]$ is deferred until after the adversary sends a proof $\widetilde{z}$.
    \begin{enumerate}
        \item Sample $[S,T]$ uniformly at random over $\F^n$. Sample vectors $(v,w) \leftarrow \co(S) \times \co(S^\perp)$ and \textcolor{red}{a field $b \leftarrow \F$}. Prepare
        \begin{equation*}
            \ket{S_{v,w}}_{A}
        \end{equation*}
        
        \item Let $u \in \co(T)$ be the unique coset such that $S+v \subset T+u$. Initialize $\As_\lambda$ with
        \begin{equation*}
            A,  \Zs\left(S, T, u, w, \textcolor{red}{b}\|(m+\regR v)[2:k]\right).
        \end{equation*}$
        \As_\lambda$'s output is parsed as a string $\widetilde{z} \in \F^{n}$ and a residual state $\rho$.
        \item If \textcolor{red}{$b \neq (m+\regR v)[1]$} or $\widetilde{z} \not\in S^\perp + w$, then output $\bot$. Otherwise, output $\rho$.
    \end{enumerate}

    \paragraph{$\hyb_2(m)$:} In this hybrid, $\ket{S_{v,w}}_{A}$  is obtained by purifying the choice of $v_{\ker}$, followed by a measurement.
    \begin{enumerate}
       \item Sample $[S,T]$ uniformly at random over $\F^n$ (\cref{sampling}). Hence, we have $P_1 = \ker (\regR_{-1}|_P)$  and \textcolor{red}{$S^+:=\LAspan{S,P_1}$.
       Sample vectors $(v_{\img}, u,w) \leftarrow (P / P_1) \times \co(T) \times \co(S^\perp)$} and a field $b \leftarrow \F$. Prepare
        \begin{equation*}
            \frac{1}{\sqrt{|P_1|}} \sum_{v_{\ker} \in P_1}\ket{v_{\ker}}_{C}\ket{S_{v_{\ker}+v_{\img}+ u,w}}_{A},
        \end{equation*}
        and \textcolor{red}{measure the $C$ register to obtain $v_{\ker}$}. Let $v=v_{\ker}+v_{\img}+u$.
        \item Initialize $\As_\lambda$ with
        \begin{equation*}
            A, \ \Zs(S, T, u, w, b\|(m+\regR v)[2:k]).
        \end{equation*}
        $\As_\lambda$'s output is parsed as a string $\widetilde{z} \in \F^{n}$ and a residual state $\rho$.
        \item If $b \neq (m+\regR v)[1]$ \synote{You do not need $r$ any more? Same comment throughout the game}
        or $\widetilde{z} \not\in S^\perp + w$, then output $\bot$. Otherwise, output $\rho$.
    \end{enumerate}

    \paragraph{$\hyb_3(m)$:} In this hybrid, $v_{\ker}$ is purified, and its choice is deferred until after the adversary sends a proof $\widetilde{z}$.
    \begin{enumerate}
         \item Sample $[S,T]$ uniformly at random over $\F^n$. Hence, we have $P_1 = \ker (\regR_{-1}|_P)$  and $S^+:=\LAspan{S,P_1}$. Sample vectors $(v_{\img}, u,w) \leftarrow (P / P_1) \times \co(T) \times \co(S^\perp)$ and a field $b \leftarrow \F$. Prepare
        \begin{equation*}
            \frac{1}{\sqrt{|P_1|}} \sum_{v_{\ker} \in P_1}\ket{v_{\ker}}_{C}\ket{S_{v_{\ker}+v_{\img}+ u,w}}_{A}.
        \end{equation*}
        \item Initialize $\As_\lambda$ with
        \begin{equation*}
            A, \ \Zs(S, T, u, w, b\|(m+ \textcolor{red}{\regR (v_{\img}+ u)})[2:k]).
        \end{equation*}
        $\As_\lambda$'s output is parsed as a string $\widetilde{z} \in \F^{n}$ and a residual state $\rho$.
        \item \textcolor{red}{Measure the $C$ register to obtain $v_{\ker}$. Let $v=v_{\ker}+v_{\img}+u$.} If $b \neq (m+\regR v)[1]$ or $\widetilde{z} \not\in S^\perp + w$ then output $\bot$. Otherwise, output $\rho$.
    \end{enumerate}

    \paragraph{$\hyb_4(m)$:} In this hybrid, we insert a projective measurement immediately before Step 3 of the previous hybrid $\hyb_3(m)$. We define the measurement $\cM$ immediately after describing the hybrid.
    \begin{enumerate}
        \item Sample $[S,T]$ uniformly at random over $\F^n$. Hence, we have $P_1 = \ker (\regR_{-1}|_P)$ and $S^+:=\LAspan{S,P_1}$. Sample vectors $(v_{\img}, u,w) \leftarrow (P / P_1) \times \co(T) \times \co(S^\perp)$ and a field $b \leftarrow \F$. Prepare
        \begin{equation*}
            \frac{1}{\sqrt{|P_1|}} \sum_{v_{\ker} \in P_1}\ket{v_{\ker}}_{C}\ket{S_{v_{\ker}+v_{\img}+ u,w}}_{A}.
        \end{equation*}
        \item Initialize $\As_\lambda$ with
        \begin{equation*}
            A, \ \Zs(S, T, u, w, b\|(m+\regR (v_{\img}+ u))[2:k]).
        \end{equation*}
        $\As_\lambda$'s output is parsed as a string $\widetilde{z} \in \F^{n}$ and a residual state $\rho$.
        \item \textcolor{red}{Apply a projective measurement $\{\Ms, I-\Ms\}$ and post-selecting on obtaining the outcome $I-\Ms$.} Note that in the hybrids, the definition of $\cM$ is determined by the position of guessed $b$, which is at $i=1$. 
        \item Measure the $C$ register to obtain $v_{\ker}$. Let $v=v_{\ker}+ v_{\img}+u$. If $b \neq (m+\regR v)[1]$ or $\widetilde{z} \not\in S^\perp + w$, then output $\bot$. Otherwise, output $\rho$.
    \end{enumerate}

    \paragraph{Definition of projective measurement $\{\cM,I-\cM\}$:} 

    We will define a CPTP map $\Phi$, and let $Q$ be a unitary that uses extra ancilla and defers all measurements in $\Phi$. The projective measurement $I-\cM$ is then defined as $Q^\dagger \Pi Q$, where $\Pi:=\ket{1}\bra{1}_{\text{Ans}}$ denotes the orthogonal projector. $\Phi$ is defined as follows:
    \begin{enumerate}
        \item Let $i=1$.
        Apply $U_i$ on register $C$, where $U_i$ is defined in \cref{coset_unitaryconstruct}. Measure $C$ in computational basis to obtain $\widetilde{v}_{\ker}$.
        \item If $\widetilde{z} \not\in S^\perp + w$ or $\widetilde{z} \in \left(S^+\right)^\perp + \widetilde{v}_{\ker}+w$, then set the ``Ans'' register as $\ket{1}$. Otherwise, set it as $\ket{0}$.
    \end{enumerate}

\paragraph{Analyzing the Hybrids.} 
In the following, we relate $\TD(\hyb_{i}(m),\hyb_{i}(m+\alpha e_1))$ for different $i $.

\begin{lemma}\label[lemma]{coset_mainhybrid_0to3}
    Let $e_1 \in \F^k$ be the indicator string in which all positions are 0 except the first position, which is 1. For all $m \in \F^k$ and for all $\alpha \in \F$, we have
    \begin{align*}
        &\ \frac{1}{|\F|} \cdot \TD\left(\hyb_0(m), \hyb_0(m + \alpha\cdot e_1) \right) \\
        = &\ \TD\left(\hyb_1(m), \hyb_1(m + \alpha\cdot e_1)  \right) \\
        = &\ \TD\left(\hyb_2(m), \hyb_2(m + \alpha\cdot e_1)  \right) \\
        = &\ \TD\left(\hyb_3(m), \hyb_3(m + \alpha\cdot e_1)  \right).
    \end{align*}
\end{lemma}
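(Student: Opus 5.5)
The plan mirrors \cref{bb84_mainhybrid_0to2} and proves the three equalities one by one.

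\textbf{First equality.} $\hyb_1(m)$ differs from $\hyb_0(m)$ only in one respect: the first coordinate of $m+\regR v$ given to $\Zs$ is replaced by a uniform $b\leftarrow\F$ sampled independently of everything else. The extra check $b\neq (m+\regR v)[1]$ is then applied. Conditioned on the event $b=(m+\regR v)[1]$, which has probability exactly $1/|\F|$ independently of $v$ and of the adversary's view, the input to $\Zs$ and the output rule coincide with $\hyb_0(m)$. So the unnormalized output is $\frac{1}{|\F|}\hyb_0(m)$ plus $\bot$ with the remaining weight. Replacing $m$ by $m+\alpha e_1$ leaves the $\bot$-weight unchanged, since it is always $1-1/|\F|$. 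Taking trace distance of these subnormalized decompositions gives the factor $1/|\F|$ exactly.

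\textbf{Second equality.} I show that $\hyb_2(m)$ and $\hyb_1(m)$ are identical distributions. The key step is to check that sampling $(v_{\img},u,v_{\ker})\leftarrow (P/P_1)\times\co(T)\times P_1$ and setting $v=v_{\ker}+v_{\img}+u$ produces a uniform element of $\co(S)$. The same sampling must also produce the same $u$ as ``the unique coset with $S+v\subset T+u$''.

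By \cref{sampling}, $\co(S)=\LAspan{z_{n/2+1},\dots,z_n}=P\oplus \co(T)$, where $P=T\cap\co(S)$ has basis $z_{n/2+1},\dots,z_{3n/4}$. Also $P=P_1\oplus P/P_1$ by construction of $\beta_1$. Hence the map $(v_{\ker},v_{\img},u)\mapsto v$ is a linear bijection onto $\co(S)$, so $v$ is uniform. Moreover $v-u\in P\subseteq T$, so $S+v\subset T+u$ with $u\in\co(T)$, which is the unique such coset. Measuring $C$ on the uniform superposition $\frac{1}{\sqrt{|P_1|}}\sum_{v_{\ker}}\ket{v_{\ker}}\ket{S_{v,w}}$ yields a uniform $v_{\ker}$ with post-measurement state $\ket{S_{v,w}}$. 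The two hybrids therefore agree on the joint distribution of everything, and the second equality holds.

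\textbf{Third equality.} This step needs the most care, since $\hyb_3$ defers the measurement of $C$ until after the adversary acts. For deferral to be valid, the classical input to $\Zs$ must not depend on $v_{\ker}$. I will check that $(\regR v_{\ker})[2:k]=\regR_{-1}v_{\ker}=0$ for every $v_{\ker}\in P_1=\ker(\regR_{-1}|_P)$. It follows that $(m+\regR v)[2:k]=(m+\regR(v_{\img}+u))[2:k]$, so the $\Zs$ input in $\hyb_3$ equals that in $\hyb_2$ for every branch.

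With this in hand, measuring $C$ commutes with the adversary's operation on $A$ and the $\Zs$ register, because these act on disjoint registers. Moving the measurement of $C$ from Step 1 to Step 3 therefore leaves the joint output distribution unchanged: $\hyb_3(m)=\hyb_2(m)$ for every $m$, and in particular for both $m$ and $m+\alpha e_1$. The final check $b\neq(m+\regR v)[1]$ uses the measured $v_{\ker}$ in both hybrids identically.

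I expect the main obstacle to be the linear-algebra bookkeeping in the second equality: confirming that $P\oplus\co(T)=\co(S)$ and that $u$ matches. This should follow directly from the explicit bases in \cref{sampling}.
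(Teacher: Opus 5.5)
Your proposal is correct and follows the paper's proof essentially step for step. The first equality uses the $1/|\F|$ guessing factor, the second uses the decomposition $\co(S)=P_1\oplus(P/P_1)\oplus\co(T)$, and the third uses $\regR_{-1}v_{\ker}=0$ together with commutation of the deferred measurement of $C$; your additional checks (the sampled $u$ is the unique coset with $S+v\subset T+u$, and the direct sum follows from the explicit bases in \cref{sampling}) simply fill in details the paper asserts without proof.
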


\begin{proof}
    The first equation follows from the same argument as in \cref{bb84_mainhybrid_0to2} of the BB84-based protocol.
    Next, consider the second equation. We know that $\co(S) = P_1 \oplus (P/P_1) \oplus \co(T)$. In $\hyb_1$, a vector $v$ is sampled uniformly from $\co(S)$, which is equivalent to sampling $v_{\ker}$ uniformly from $P_1$, $v_{\img}$ uniformly from $P/P_1$, and $u$ uniformly from $\co(T)$ in $\hyb_2$, and then set $v=v_{\ker}+v_{\img}+u$.

    Finally, consider the third equation. We first replace $\regR v[2:k]$ with $\regR (v_\img + u) [2:k]$ and defer the measurement of $C$ until after the adversary outputs. Indeed, we know that
     \begin{equation*}
         \regR v[2:k] = \regR (v_{\ker}+v_\img+u)[2:k] = \regR v_{\ker}[2:k]+\regR (v_\img+u)[2:k] = \regR (v_\img+u)[2:k].
     \end{equation*}
     Furthermore, the deferring process does not change the distribution, since the adversary's operation commutes with the operation on $C$. 
\end{proof}

For relating the trace distance in $\hyb_3$ and that in $\hyb_4$, we need the following lemma. Since the proof is somewhat involved, we defer the proof for the time being.
\begin{lemma}\label[lemma]{lem:coset-predicate}
    For all $m \in \F^k$, the measurement outcome at Step 3 in $\hyb_4(m)$ yields the outcome $I-\Ms$ with $1-\negl(\lambda)$ probability.
\end{lemma}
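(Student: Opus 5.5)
We follow the template of the proof of Lemma~\ref{lem:bb-predicate}. We introduce sub-hybrids $\hyb_3^a(m),\dots,\hyb_3^e$ that only compute the outcome of $\{\Ms,I-\Ms\}$ (as in $\hyb_2^a$) and lower bound $\Pr[I-\Ms]$ in each. $\hyb_3^a(m)$ runs Steps 1--2 of $\hyb_4(m)$ and then $\Phi$. Since $\Phi$ only needs $S,T,w,\widetilde z$, $U_1$ (efficient by \cref{coset_unitaryconstruct}), and membership tests in affine subspaces (Gaussian elimination), the predicate is efficiently checkable. In $\hyb_3^b$ we replace $\Zs(S,T,u,w,b\|(m+\regR(v_\img+u))[2:k])$ by $\cS(G,T,u,x,b\|(m+\regR(v_\img+u))[2:k])$. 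Here $G$ is a random $\frac{3n}{4}$-dimensional superspace of $S^\perp$ and $x\in\co(G)$ is determined by $w$. A reduction that receives $(\regR,S,T,u,w)$ and either $\Zs(\cdot)$ or $\cS(\cdot)$, prepares the purified state of Step 1 itself (which it can do since it knows $S,T,u,w$ and samples $v_\img$), runs $\cA$ and evaluates $\Phi$ would distinguish if the probabilities differed noticeably. Subspace hiding (\cref{subspace_hiding}) thus gives $\Pr[\hyb_3^a=I-\Ms]\ge\Pr[\hyb_3^b=I-\Ms]-\negl$.

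Next we unfold the winning condition. By \cref{coset_unitaryconstruct} and the Fourier identity for coset states, applying $U_1$ to $C$ and measuring yields $\widetilde v_{\ker}\in D_1$, and the adversary's register becomes a coset state that is Fourier-supported on $S^\perp+\widetilde v_{\ker}+w$. In other words, the joint state is $\sum_{\widetilde v_{\ker}\in D_1}\ket{\widetilde v_{\ker}}_C\otimes\ket{(S^+)^\perp\text{-coset at }\widetilde v_{\ker}+w}_A$ up to phases. This uses Proposition~\ref{prop:dual}(5), $P_1\oplus D_1^\perp=\F^n$, and the fact that $(S^+)^\perp\oplus D_1=S^\perp$. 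So $\Ms$ occurs exactly when $\widetilde z\in S^\perp+w$ but $\widetilde z\notin(S^+)^\perp+\widetilde v_{\ker}+w$. Shifting by $w$ (known to a reduction that is given $x$ and the coset data), such an adversary yields $\cA'$ that outputs $\widetilde z'=\widetilde z-w-\widetilde v_{\ker}$. With the right bookkeeping this lies in $S^\perp\setminus(S^+)^\perp$, or more simply, $\cA'$ outputs a vector in $S^\perp+w$ that is not in the coset of $(S^+)^\perp$ determined by the hidden $\widetilde v_{\ker}$. I would organize this as $\hyb_3^c$: the state on $A$ is generated as a superposition over $(S^+)^\perp$-cosets inside $S^\perp+w$, with the coset index held by the challenger.

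The key step, and the main obstacle, is $\hyb_3^d/\hyb_3^e$: re-sampling the subspaces. In $\hyb_3^c$ the adversary's view depends only on $(G,T,u,x)$, on $S^+$ (through the purified state and $v_\img$), and on the $(S^+)^\perp$-coset data. Everything else about $S$ enters only through the choice of coset index, which is now on the challenger's side. We argue that the joint distribution of $(G,T,S^+)$ produced by \cref{sampling} is such that, conditioned on $(G,T,S^+)$, the subspace $S$ is uniform among $\frac n2$-dimensional subspaces with $G^\perp\le S\le S^+$. This holds because $S^+=\mathrm{span}(z_1,\dots,z_{3n/4-k+1})$ with $S$ spanned by a uniformly random $\frac n2$-subset of a random basis extension, up to the negligible failure events of \cref{sampling}. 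We can therefore defer sampling $S$ until after $\cA$ outputs $\widetilde z$. Success for $\cA$ then requires $\widetilde z-(\text{known shift})\in S^\perp\setminus(S^+)^\perp$ for a fresh random $S$ between $G^\perp$ and $S^+$. For any fixed vector outside $(S^+)^\perp$, the probability that it lies in $S^\perp$ is at most $|\F|^{-(\dim S^+-\dim S)}\le|\F|^{-(n/4-k)}$, which is negligible since $4k+\Theta(\lambda)\le n$. Hence $\Pr[\hyb_3^e=\Ms]\le\negl$.

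The two places I expect to require the most care are, first, verifying this conditional-uniformity of $S$ given $(G,S^+,T,u,x)$, including the dependence on $v_\img\in P/P_1$ and on $\co(S)$, which must be re-expressible without $S$; and second, verifying that $\Phi$'s outcome is preserved exactly when passing from $(S,w,\widetilde v_{\ker})$ to the re-sampled description. Combining the chain with the gentle measurement lemma is not needed here, since the statement is only about the outcome probability. This concludes the plan for Lemma~\ref{lem:coset-predicate}.
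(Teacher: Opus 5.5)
Your plan follows the same outline as the paper's proof. You make the predicate efficient via $U_1$. You move the $U_1$-measurement of $C$ before $\cA$, so that $A$ holds $\ket{(S^+)_{v_\img+u,\,\widetilde v_{\ker}+w}}$ with $\widetilde v_{\ker}\leftarrow D_1$. You swap $\Zs$ for $\cS$, subtract the phase vector to reduce to outputting a vector of $S^\perp\setminus(S^+)^\perp$, and resample $S$ given $(G,S^+,T)$. The order of the simulator swap and the $U_1$ step does not matter. However, your subspace-hiding step does not go through as written. The string fed to $\Zs$ or $\cS$ is $b\|(m+\regR(v_\img+u))[2:k]$. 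It depends on the challenge vector $u$, and on a $v_\img$ that your reduction samples only after receiving $u$. But \cref{subspace_hiding} gives indistinguishability only for a message $m'$ fixed independently of $(S,T,u,w)$. The paper inserts a hybrid ($\hyb_3^d$) exactly for this:
\begin{itemize}
\item when $[S,T]$ is sampled successfully, $\regR_{-1}$ restricted to $P/P_1$ is a bijection onto $\F^{k-1}$;
\item so sampling $v_\img\leftarrow P/P_1$ is equivalent to sampling $\ell\leftarrow\F^{k-1}$ and letting $v_\img$ be the unique solution of $(m+\regR(v_\img+u))[2:k]=\ell$;
\item the message becomes $b\|\ell$, independent of the challenge, and the reduction solves for $v_\img$ after seeing $u$.
\end{itemize}
Relatedly, $x$ only determines $w+G$, so $w$ itself is not available to $\cA'$. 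What $\cA'$ subtracts is the combined phase vector $\widetilde v_{\ker}+w$, which is the only way $w$ and $\widetilde v_{\ker}$ enter the view.

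The more serious gap is the conditional-uniformity claim on which the final bound rests. Your justification describes a symmetric sampling that is not \cref{sampling}. There, $S^+=\LAspan{S,\ker\regR_{-1}|_P}$ with $P=\co(S)\cap T$. So $S^+$ is not the span of a prefix of a random basis, and its relation to $S$ runs through $\regR$, via the constraints $\regR(P)=\F^k$ and $\ker\regR_{-1}|_P=P_1$. Take a given $(S^+,T)$ and a candidate $S'$ with $G^\perp\le S'\le S^+$. The number of compatible pairs $(P_1,P)$ depends in general on $\regR(S')$, so uniformity does not follow from basis symmetry. The missing steps are as follows.
\begin{enumerate}
\item Condition on $\regR(G^\perp)=\F^k$. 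By the argument of \cref{prop:sampling_subspace}, this costs only negligible statistical distance when $\frac n4\ge k+\Omega(\lambda)$.
\item Then every candidate satisfies $\regR(S')=\F^k$, hence $\regR_{-1}(S')=\F^{k-1}$.
\item Re-express the sampling as choosing $P_1$ with $\regR(P_1)=\LAspan{e_1}$ and $P_1\cap S=\{0\}$, then $P$ with $\ker\regR_{-1}|_P=P_1$.
\item Count, showing the number of choices at each stage is independent of $S'$. This is the paper's chain $\cD_{\hyb_3^e}=\cD_1\approx\cD_2=\cdots=\cD_6$.
\end{enumerate}
Only after this may you take $S^\perp$ uniform between $(S^+)^\perp$ and $G$. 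You can then bound the success probability for a fixed $\widetilde z\in G\setminus(S^+)^\perp$ by $\frac{|\F|^{n/4-k+1}-1}{|\F|^{n/2-k+1}-1}$. Incidentally, the relevant exponent is $\dim S-\dim G^\perp=\frac n4$, not $\dim S^+-\dim S$. Your weaker bound still holds here since $k\ge 1$.
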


The following Lemma follows from \cref{lem:coset-predicate} and is similar to \cref{bb84_mainhybrid_2to3}; hence, we omit the proof.
\begin{lemma}\label[lemma]{coset_mainhybrid_3to4}
    For all $m \in \F^k$ and $\alpha \in \F$, we have
    \begin{equation*}
        \TD\left(\hyb_3(m), \hyb_3(m + \alpha\cdot e_1)  \right) 
        \leq  \TD\left(\hyb_4(m), \hyb_4(m + \alpha\cdot e_1)  \right) +\negl(\lambda).
    \end{equation*}
\end{lemma}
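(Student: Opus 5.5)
The plan mirrors the proof of \cref{bb84_mainhybrid_2to3}: first relate $\hyb_3$ to $\hyb_4$ for a fixed message, then apply the triangle inequality.

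\textbf{Step 1: $\hyb_3(m)$ and $\hyb_4(m)$ are close for each fixed $m$.} Fix $m\in\F^k$. Let $\rho$ be the joint state of the challenger and adversary after Step~2 of $\hyb_3(m)$. This includes register $C$, the adversary's output $\widetilde z$, and its residual register. Step~2 of $\hyb_4(m)$ is identical, so the same $\rho$ appears there too.

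In $\hyb_4(m)$, the only change is that $\{\Ms, I-\Ms\}$ is applied to $\rho$ and we post-select on $I-\Ms$. Here $\Ms$ is realized as $Q^\dagger\Pi Q$, which is a genuine projector on the joint space once the deferred-measurement ancillas of $\Phi$ are included. By \cref{lem:coset-predicate}, $\Tr((I-\Ms)\rho)\ge 1-\negl(\lambda)$. The Gentle Measurement \cref{gentle_measurement} then gives $\TD(\rho,\rho')\le 2\sqrt{\negl(\lambda)}=\negl'(\lambda)$, where $\rho'$ is the post-selected state.

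Step~4 of $\hyb_4$ is exactly Step~3 of $\hyb_3$: measure $C$, compute $v$, run the checks, and output $\rho$ or $\bot$. This is a single CPTP map applied to both states. Trace distance is contractive under CPTP maps, so $\TD(\hyb_3(m),\hyb_4(m))\le\negl'(\lambda)$.

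\textbf{Step 2: triangle inequality.} Step~1 holds for every message, so apply it to both $m$ and $m+\alpha e_1$. Then
\[
\TD(\hyb_3(m),\hyb_3(m+\alpha e_1))\le \TD(\hyb_3(m),\hyb_4(m))+\TD(\hyb_4(m),\hyb_4(m+\alpha e_1))+\TD(\hyb_4(m+\alpha e_1),\hyb_3(m+\alpha e_1)).
\]
The first and third terms are $\negl'(\lambda)$, which yields the claim.

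\textbf{Main obstacle.} The only delicate point is that the ancilla registers used to implement $\Ms$ coherently must be reset or traced out before Step~4. This is what makes the final output depend only on the post-selected state of the original registers. It holds because $\Ms$ is defined as $Q^\dagger\Pi Q$, so uncomputing $Q$ returns the ancillas to $\ket{0}$ on both branches.
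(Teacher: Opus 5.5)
Your proposal is correct and is essentially the paper's argument: the paper omits this proof by pointing to \cref{bb84_mainhybrid_2to3}, which combines the predicate lemma (here \cref{lem:coset-predicate}) with Gentle Measurement to get $\TD(\hyb_3(m),\hyb_4(m))\le\negl(\lambda)$ for every $m$, and then applies the triangle inequality. The only cosmetic slip is that in the coset section it is $I-\Ms$, not $\Ms$, that is written as $Q^\dagger\Pi Q$ with $\Pi=\ket{1}\bra{1}$, which changes nothing in your argument.
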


\begin{lemma}\label[lemma]{coset_mainhybrid4}
    For all $m \in \F^k$ and $\alpha \in \F$, we have
    \begin{equation*}
        \TD\left(\hyb_4(m), \hyb_4(m + \alpha\cdot e_1) \right) \leq \negl(\lambda).
    \end{equation*}
\end{lemma}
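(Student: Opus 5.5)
The proof will mirror \cref{bb84_mainhybrid_3}: first observe that the joint state is identical in the two hybrids up to Step 3, then apply the extractor \cref{coset_extractor} to the post-measurement state of register $C$.

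\textbf{Step 1 (independence from $m[1]$).} In $\hyb_4(m)$, Steps 1--3 depend on the message only through $b\|(m+\regR(v_{\img}+u))[2:k]$. This string involves $m[2:k]$ but not $m[1]$. Also, $\cM$ depends only on $S,S^+,w,\widetilde z$ and $U_1$. Hence, conditioned on all classical randomness $([S,T],v_{\img},u,w,b)$, the joint state of $C$ and the adversary after post-selecting on $I-\cM$ is the same for $m$ and $m+\alpha e_1$. The message $m[1]$ enters only in the final check $b\neq(m+\regR v)[1]$.

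\textbf{Step 2 (case split).} The outcome $I-\cM$ means that either (a) $\widetilde z\notin S^\perp+w$, or (b) $\widetilde z\in (S^+)^\perp+\widetilde v_{\ker}+w$.

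In case (a), both hybrids output $\bot$, and case (a) occurs with equal probability in both, so it contributes nothing to the distance.

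In case (b), write $\widetilde z - w = \widetilde v_{\ker} + z_0$ with $z_0 \in (S^+)^\perp$. This determines the class of $\widetilde v_{\ker}$ modulo $(S^+)^\perp\cap D_1$. I would check, using \cref{prop:dual}(5) and the decomposition $P_1\oplus D_1^\perp=\F^n$, that $D_1\cap (S^+)^\perp=\{0\}$. Then $\widetilde v_{\ker}\in D_1$ is uniquely determined by $\widetilde z$.

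So in the $U_1$-rotated basis, the register $C$ is projected onto the single basis state $\ket{\widetilde v_{\ker}}$, where $\widetilde v_{\ker}$ is a classical function of the adversary's output $\widetilde z$ and of $w$. Consequently $C$ is in the product state $\ket{\widetilde v_{\ker}}$, unentangled from the adversary's residual state $\rho$. This decoupling is the whole point of the predicate.

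\textbf{Step 3 (extraction).} Step 4 of $\hyb_4$ undoes $U_1$ (the deferred measurement rotates back) and measures $C$ to obtain $v_{\ker}$. This is exactly the procedure of \cref{coset_extractor} with $i=1$ and $a=v_{\img}+u$. That theorem gives: $(\regR v)[1]$ is uniform on $\F$, $(\regR v)[2:k]=(\regR a)[2:k]$ is fixed, and both are independent of $\rho$. Hence the events $b=(m+\regR v)[1]$ and $b=(m+\alpha e_1+\regR v)[1]$ each have probability $1/|\F|$ and are independent of $\rho$. The outputs in case (b) are therefore identical, giving trace distance $0$ conditioned on the classical randomness. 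The $\negl(\lambda)$ term absorbs the event that sampling $[S,T]$ fails, which \cref{coset_extractor} needs in order to guarantee $\regR(P)=\F^k$.

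\textbf{Main obstacle.} The delicate step is justifying that case (b) pins $\widetilde v_{\ker}$ uniquely, so that $C$ is genuinely unentangled with the adversary. This requires $D_1\cap(S^+)^\perp=\{0\}$, with $D_1=\co((S^+)^\perp)\cap S^\perp$. I would first try to derive it from $\co((S^+)^\perp)\cap(S^+)^\perp=\{0\}$, which holds by \cref{prop:co_of_perp}. If $\widetilde v_{\ker}$ were not unique, the fallback is to note that post-selection projects $C$ onto a span of $\ket{\widetilde v}$ for $\widetilde v$ in a coset determined by $\widetilde z$. One would then have to argue that this span still yields a uniform $(\regR v)[1]$, but this should not be needed.
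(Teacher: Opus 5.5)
Your proposal is correct and follows essentially the same route as the paper. The steps match: Steps 1--3 are independent of $m[1]$; you split on the outcome $I-\cM$; in case (b) the post-selection is a rank-one projection onto $U_1^\dagger\ket{\widetilde{v}_{\ker}}_C$, with $\widetilde{v}_{\ker}$ determined by $\widetilde{z}$; and you then apply \cref{coset_extractor} with $i=1$ and $a=v_{\img}+u$. Your first-choice justification of uniqueness, $D_1\cap(S^+)^\perp\subseteq\co((S^+)^\perp)\cap(S^+)^\perp=\{0\}$, is valid and makes explicit a step the paper only asserts, so the fallback you describe is not needed.
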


\begin{proof}
    First three steps of $\hyb_4$ are independent of $m[1]$. Hence, the challenger's and the adversary's joint view after Step 3 is identical between $\hyb_4(m)$ and $\hyb_4(m + \alpha\cdot e_1)$. 
        
    The measurement outcome $I-\cM$ guarantees that if we apply $U_i$\synote{$U$ here should be $U_i$? The same comment applies to other places as well} on $C$ register and measure the register to obtain $\widetilde{v}_{\ker}$. Then the following holds: Either (a) $\widetilde{z} \not\in S^\perp + w$ or (b) $\widetilde{z} \in (S^+)^\perp + \widetilde{v}_{\ker} + w$.

    Suppose we condition on the case where event (a) occurs. 
    \synote{Isn't it much simpler if you just say (a) occurs?}
    Then, there is no advantage between the two hybrids, since the hybrids always output $\bot$ in step 4. Moreover, event (a) occurs \synote{The same comment as the above.} with the same probability in both hybrids, since the adversary has no information about $m[1]$. Therefore, it suffices to condition on the case where event (b) occurs. Moreover, even under this condition, $\rho$ remains identical among $\hyb_4(m)$ and $\hyb_4(m + \alpha\cdot e_1)$. 

    We know that $\widetilde{v}_{\ker} \in D_1:=\co((S^+)^\perp) \cap S^\perp$. Indeed, the vector in register $C$ after Step~2 belongs to $P_1$. By \cref{coset_unitaryconstruct}, if we apply $U_i$ on the register $C$ and measure it with computational basis, the measurement outcome must belongs to $D_1$. Also, if the event (b) occurs, the measurement outcome, say $\widetilde{v}_{\ker}$, must be a unique vector such that $\widetilde{z}-\widetilde{v}_{\ker}-w \in (S^+)^\perp$. Hence, the state after step~3 must have the form
    \begin{equation*}
        U_i^\dagger \ket{\widetilde{v}_{\ker}}_{C},
    \end{equation*}
    \synote{What is $h$? The reason why the state can be written in the above form (i.e., the state of the adverasry is independent of $\widetilde{v}_{\ker}$) is not immediately clear to me. I had to calculate by myself. I think more explanation will be needed here.}
    where $\widetilde{v}_{\ker}$ is the unique vector defined by $\widetilde{z}$. Indeed, if event (b) occurs, $I-\cM$ is a rank-one projector, and hence the $C$ register is independent of the adversary's state. \HZZ{Check this argument}
        
    By the extractor \cref{coset_extractor}, we know that if we measure $C$ register and obtain $v_{\ker}$, then $\regR(v_{\ker}+v_{\img}+ u)[1]$ is uniformly random and independent of adversary's view. Hence, checking whether $b \neq (m+\regR v)[1]$ or $b \neq (m+\regR v)[1]+\alpha$ occurs with equal probability and is independent of the residual state $\rho$.
    \end{proof}

    \paragraph{Putting everything together, we conclude the proof of \cref{coset_maintheorem}.} By \cref{coset_mainhybrid_0to3,coset_mainhybrid_3to4,coset_mainhybrid4},
    \begin{align*}
        \TD(\hyb_0(m),\hyb_0(0^k)) &\leq \sum_{i=1}^{k} \TD\left(\hyb_0\left(0^{i-1} \| m[i:k]\right),\hyb_0\left(0^{i} \| m[i+1:k]\right)\right) \\
        &\leq k \cdot |\F| \cdot \negl(\lambda) \\
        &= \negl'(\lambda),
    \end{align*}
    where the first inequality follows from the triangle inequality and the second inequality follows from \cref{coset_mainhybrid_0to3,coset_mainhybrid_3to4,coset_mainhybrid4}. Note that previous hybrids and Claims  are all based on ``guessing'' the first field element as $b$. We need to slightly modify all the hybrids and Claims for each $i$. For example, in $\hyb_1$, we have to replace the $i$th bit of $\left(m+\regR v\right)$ with $b$ at Step 2. Also, check whether $b \neq \left(m+\regR v\right)[i]$ at Step 3. The modification is straightforward and we omit the details.
\end{proof}


    It remains to prove Lemma~\ref{lem:coset-predicate}, which claims that the measurement result is $I-\cM$ in $\hyb_4$ with overwhelming probability. The rest of this subsection is devoted to the proof of the lemma.

\begin{proof}[Proof of Lemma~\ref{lem:coset-predicate}]
    We introduce the following ``branch-hybrids'', $\hyb_3^a(m)$ through $\hyb_3^e(m)$, to argue that the projective measurement $\{\mathcal{M}, I-\mathcal{M}\}$ inserted in $\hyb_4(m)$ yields the outcome $I-\mathcal{M}$ with overwhelming probability, under the subspace-hiding security of $\mathcal{Z}(\cdot)$.
    
    \paragraph{$\hyb_3^a(m)$:} This is the same as $\hyb_4(m)$, except that $\hyb_3^a(m)$ does not defer the measurement and does not perform post-selection. Moreover, the experiment stops after the measurement outcome $I-\mathcal{M}$ or $\mathcal{M}$ is obtained.
    \begin{enumerate}
         \item Sample $[S,T]$ uniformly at random over $\F^n$. Hence, we have $P_1 = \ker (\regR_{-1}|_P)$ with $P = \co(S) \cap T$ (See \cref{sampling}). Define a subspace $S^+:=\LAspan{S,P_1}$. Sample vectors $(v_{\img}, u,w) \leftarrow (P / P_1) \times \co(T) \times \co(S^\perp)$ and a field $b \leftarrow \F$. Prepare
        \begin{equation*}
            \frac{1}{\sqrt{|P_1|}} \sum_{v_{\ker} \in P_1}\ket{v_{\ker}}_{C}\ket{S_{v_{\ker}+v_\img+ u,w}}_{A}.
        \end{equation*}
        \item Initialize $\As_\lambda$ with
        \begin{equation*}
            A, \ \Zs(S, T, u, w, b\|(m+\regR (v_\img+u))[2:k]).
        \end{equation*}
        $\As_\lambda$'s output is parsed as a string $\widetilde{z} \in \F^{n}$ and a residual state $\rho$.
        \item Apply $U_1$ on register $C$. Measure $C$ to obtain $\widetilde{v}_{\ker}$.
        \item If $\widetilde{z} \not\in S^\perp + w$ or $\widetilde{z} \in \left(S^+\right)^\perp + \widetilde{v}_{\ker}+w$, then return $I-\cM$. Otherwise, return $\cM$.
    \end{enumerate}

    \paragraph{$\hyb_3^b(m)$:} In this hybrid, the unitary $U_1$ is applied before sending the messages to the adversary. Besides, if $[S,T]$ are not successfully sampled, return $\cM$.
    \begin{enumerate}
         \item Sample $[S,T]$ uniformly at random over $\F^n$. \textcolor{red}{If the subspaces $[S,T]$ are not successfully sampled, return $\cM$}.
         \item We have $P_1 = \ker (\regR_{-1}|_P)$ with $P = \co(S) \cap T$. Define a subspace $S^+:=\LAspan{S,P_1}$. Sample vectors $(v_{\img}, u,w) \leftarrow (P / P_1) \times \co(T) \times \co(S^\perp)$ and a field $b \leftarrow \F$. Prepare
        \begin{equation*}
            \frac{1}{\sqrt{|P_1|}} \sum_{v_{\ker} \in P_1}\ket{v_{\ker}}_{C}\ket{S_{v_{\ker}+v_\img+ u,w}}_{A}.
        \end{equation*}
        \item \textcolor{red}{Apply $U_1$ on register $C$. Measure $C$ to obtain $\widetilde{v}_{\ker}$.}
        \item Initialize $\As_\lambda$ with
        \begin{equation*}
            A, \ \Zs(S, T, u, w, b\|(m+\regR (v_\img+u))[2:k]).
        \end{equation*}
        $\As_\lambda$'s output is parsed as a string $\widetilde{z} \in \F^{n}$ and a residual state $\rho$.
        \item If $\widetilde{z} \not\in S^\perp + w$ or $\widetilde{z} \in \left(S^+\right)^\perp + \widetilde{v}_{\ker}+w$, then return $I-\cM$. Otherwise, return $\cM$.
    \end{enumerate}

    \paragraph{$\hyb_3^c(m)$:} In this hybrid, Step 2 of $\hyb_3^c$ computes the result of Steps 2 and 3 of $\hyb_3^b$.
    \begin{enumerate}
        \item Sample $[S,T]$ uniformly at random over $\F^n$. If the subspaces $[S,T]$ are not successfully sampled, return $\cM$.
        
        \item We have $P_1 = \ker (\regR_{-1}|_P)$ with $P = \co(S) \cap T$. Define a subspace $S^+:=\LAspan{S,P_1}$.
        Sample vectors $(v_{\img}, u,w) \leftarrow (P / P_1) \times \co(T) \times \co(S^\perp)$ and a field $b \leftarrow \F$. \textcolor{red}{Prepare}
        \begin{equation*}
            \textcolor{red}{\frac{1}{\sqrt{|P_1|\cdot |S|}}
            \sum_{\substack{s' \in S^{+} \\ z = s'+ v_\img+u }} \omega_{|\F|}^{z \cdot (\widetilde{v}_{\ker}+w)} \ket{z}_A,}
        \end{equation*}
        \textcolor{red}{for a uniformly random $\widetilde{v}_{\ker} \leftarrow D_1:= \co((S^+)^\perp) \cap S^\perp$.}
        
        \item Initialize $\As_\lambda$ with
        \begin{equation*}
            A, \ \Zs(S, T, u, w, b\|(m+\regR (v_\img+u))[2:k]).
        \end{equation*}
        $\As_\lambda$'s output is parsed as a string $\widetilde{z} \in \F^{n}$.
        \item If $\widetilde{z} \not\in S^\perp + w$ or $\widetilde{z} \in \left(S^+\right)^\perp + \widetilde{v}_{\ker}+w$, then return $I-\cM$. Otherwise, return $\cM$.
    \end{enumerate}

    \paragraph{$\hyb_3^d(m)$:} In this hybrid, instead of sample $(v_\img,u)$ directly, we sample a leakage $\ell$ and $u$. \arnie{Can we simply state that we replace sampling $v_{img}$ with samping $\ell$ without mentioning $u$?}
    \begin{enumerate}
        \item Sample $[S,T]$ uniformly at random over $\F^n$. If the subspaces $[S,T]$ are not successfully sampled, return $\cM$.
        
        \item We have $P_1 = \ker (\regR_{-1}|_P)$ with $P = \co(S) \cap T$. Define a subspace $S^+:=\LAspan{S,P_1}$. \textcolor{red}{Sample a leakage $\ell \leftarrow \F^{k-1}$} and a field $b \leftarrow \F$. Furthermore,
        sample vectors $(u,w) \leftarrow  \co(T) \times \co(S^\perp)$. \textcolor{red}{Let $v_{\img} \in (P / P_1)$ be the unique vector such that $(m+\regR (v_\img+u))[2:k]= \ell$}. Note that a successfully sampled $[S,T]$ implies the existence of $v_\img$. Prepare
        \begin{equation*}
            \frac{1}{\sqrt{|P_1|\cdot |S|}}
            \sum_{\substack{s' \in S^{+} \\ z = s'+v_\img+u }} \omega_{|\F|}^{z \cdot (\widetilde{v}_{\ker}+w)} \ket{z}_A,
        \end{equation*}
        for a uniformly random $\widetilde{v}_{\ker} \leftarrow D_1:= \co((S^+)^\perp) \cap S^\perp$.
        \item Initialize $\As_\lambda$ with
        \begin{equation*}
            A, \ \cZ(S, T, u, w, b\|\textcolor{red}{\ell}).
        \end{equation*}
        $\As_\lambda$'s output is parsed as a string $\widetilde{z} \in \F^{n}$.
        \item If $\widetilde{z} \not\in S^\perp + w$ or $\widetilde{z} \in \left(S^+\right)^\perp + \widetilde{v}_{\ker}+w$, then return $I-\cM$. Otherwise, return $\cM$.
    \end{enumerate}

    \paragraph{$\hyb_3^e(m)$:} In this hybrid, $\Zs(S, T, u, w, b\|\ell)$ is replaced by a simulator $\cS(G, T, u, x, b\|\ell)$.
    \begin{enumerate}
        \item Sample $[S,T]$ uniformly at random over $\F^n$. If the subspaces $[S,T]$ are not successfully sampled, return $\cM$.
        
        \item We have $P_1 = \ker (\regR_{-1}|_P)$ with $P = \co(S) \cap T$. Define a subspace $S^+:=\LAspan{S,P_1}$.
        Sample a leakage $\ell \leftarrow \F^{k-1}$ and a field $b \leftarrow \F$. Furthermore,
        sample vectors $(u,w) \leftarrow  \co(T) \times \co(S^\perp)$. Let $v_{\img} \in (P / P_1)$ be the unique vector such that $(m+\regR (v_\img+u))[2:k]= \ell$. Prepare
        \begin{equation*}
            \frac{1}{\sqrt{|P_1|\cdot |S|}}
            \sum_{\substack{s' \in S^{+} \\ z = s'+v_\img+u }} \omega_{|\F|}^{z \cdot (\widetilde{v}_{\ker}+w)} \ket{z}_A,
        \end{equation*}
        for a uniformly random $\widetilde{v}_{\ker} \leftarrow D_1:= \co((S^+)^\perp) \cap S^\perp$.
        \item \textcolor{red}{Sample a subspace G uniformly at random such that $S^\perp \leq G$, and $\dim(G) = \frac{3n}{4}$. Let $x \in \co(G)$ be the unique coset such that $S^\perp + w \subset G+x$.}
        \item Initialize $\As_\lambda$ with
        \begin{equation*}
            A, \ \cS(\textcolor{red}{G}, T, u, \textcolor{red}{x}, b\|\ell).
        \end{equation*}
        $\As_\lambda$'s output is parsed as a string $\widetilde{z} \in \F^{n}$.
        \item If $\widetilde{z} \not\in S^\perp + w$ or $\widetilde{z} \in \left(S^+\right)^\perp + \widetilde{v}_{\ker}+w$, then return $I-\cM$. Otherwise, return $\cM$.
    \end{enumerate}

\begin{lemma}
    For all $m \in \F^k$, $\Pr\Big[\hyb_3^a(m)=I-\Ms \Big]  \geq \Pr\Big[\hyb_3^b(m)=I-\Ms\Big] - \negl(\lambda)$.
\end{lemma}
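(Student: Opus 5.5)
The plan is to compare $\hyb_3^a(m)$ and $\hyb_3^b(m)$ through their two changes: the abort when $[S,T]$ is not sampled successfully, and moving ``apply $U_1$ to $C$ and measure'' from after the adversary to before it.

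First, the abort. By \cref{sampling}, sampling fails only with negligible probability: the intersection event is negligible because $\frac{n}{4}\ge\Theta(\lambda)$, and the surjectivity event is negligible by \cref{prop:sampling_subspace}. Let $F$ be the event that sampling fails. The two hybrids can differ only on $F$, where $\hyb_3^b$ outputs $\cM$, and otherwise coincide with the corresponding step of $\hyb_3^a$. Therefore
\[
\Pr[\hyb_3^b=I-\cM]\le \Pr[\hyb_3^a=I-\cM]+\Pr[F],
\]
which already has the required inequality form.

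Second, the reordering. Conditioned on successful sampling, every classical quantity determined before the adversary runs is fixed and independent of register $C$. These quantities are $S,T,S^+,P_1,D_1,U_1,v_{\img},u,w,b$ and the input to $\Zs$, which is $b\|(m+\regR(v_\img+u))[2:k]$ and does not involve $v_{\ker}$. The adversary acts only on $A$, on the classical output of $\Zs$, and on its own workspace. The operation ``apply $U_1$ to $C$, then measure $C$ in the computational basis'' acts only on $C$. Operations on disjoint registers commute, so performing this measurement before Step 4 instead of after the adversary leaves the joint distribution of $(\widetilde{z},\widetilde{v}_{\ker})$ unchanged. The final test (whether $\widetilde{z}\notin S^\perp+w$ or $\widetilde{z}\in (S^+)^\perp+\widetilde v_{\ker}+w$) is a classical function of these values and the fixed subspaces, so its outcome distribution is identical in the two hybrids.

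Combining the two steps gives the lemma with $\negl(\lambda)=\Pr[F]$.

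The only point needing care is confirming that nothing given to the adversary depends on $C$, and in particular that $\Zs$'s input uses $v_\img+u$ rather than $v$. This was arranged in $\hyb_3$ by the identity $\regR v[2:k]=\regR(v_\img+u)[2:k]$ from \cref{coset_mainhybrid_0to3}. Once that is verified, the lemma follows with no computational assumption; the subspace-hiding property is needed only later, for $\hyb_3^d\to\hyb_3^e$.
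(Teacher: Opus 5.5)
Your proof is correct and is essentially the paper's argument: the paper also notes that applying $U_1$ and measuring $C$ commutes with the adversary's action on the other registers. It likewise notes that the new abort in Step~1 of $\hyb_3^b$ occurs only with negligible probability by \cref{prop:sampling_subspace} and \cref{sampling}. As a minor remark, that abort outputs $\cM$, so it can only lower $\Pr[\hyb_3^b=I-\cM]$, and the $\Pr[F]$ slack is not needed for this direction of the inequality.
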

\begin{proof}
    In $\hyb_3^b$, we apply the unitary $U_1$ and measure the $C$ register before sending the quantum state to the adversary. Since the adversary operates on different registers, the adversary's operation commutes with the operation on $C$. Furthermore, by \cref{prop:sampling_subspace,sampling}, the hybrid output $\cM$ in Step~1 with probability at most $\negl(\lambda)$.
\end{proof}

\begin{lemma}\label[lemma]{coset_applyU}
    For all $m \in \F^k$, $\hyb_3^b(m)=\hyb_3^c(m)$.
\end{lemma}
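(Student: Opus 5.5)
Both hybrids agree except in Steps 2--3, so it suffices to show that the joint distribution of $(\widetilde{v}_{\ker}, \text{register } A)$ produced by Steps 2--3 of $\hyb_3^b(m)$ equals that of Step~2 of $\hyb_3^c(m)$, conditioned on the same classical values $([S,T],v_\img,u,w,b)$. The remaining steps depend only on $A$, $\widetilde{v}_{\ker}$ and these classical values, so equality of this joint distribution gives $\hyb_3^b(m)=\hyb_3^c(m)$. We work throughout on the event that $[S,T]$ is sampled successfully, since otherwise both hybrids output $\cM$ identically.

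\textbf{Step 1: apply $U_1^\dagger$.} Starting from $\frac{1}{\sqrt{|P_1|}}\sum_{v_{\ker}\in P_1}\ket{v_{\ker}}_C\ket{S_{v_{\ker}+v_\img+u,w}}_A$, insert the resolution of identity on $C$ over the orthonormal family $\{U_1^\dagger\ket{\widetilde{v}_{\ker}}\}_{\widetilde{v}_{\ker}\in D_1}$. This family spans $\mathrm{span}\{\ket{v_{\ker}}:v_{\ker}\in P_1\}$, because $|P_1|=|D_1|$ and $U_1$ is a unitary bijection between these spans by \cref{coset_unitaryconstruct}. By \cref{coset_unitaryconstruct}, $\bra{\widetilde{v}_{\ker}}U_1\ket{v_{\ker}}=|D_1|^{-1/2}\omega^{v_{\ker}\cdot\widetilde{v}_{\ker}}$. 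Hence applying $U_1$ and measuring $C$ yields outcome $\widetilde{v}_{\ker}$ with residual (unnormalized) state on $A$ equal to
\[
\frac{1}{\sqrt{|P_1||D_1|}}\sum_{v_{\ker}\in P_1}\omega^{v_{\ker}\cdot\widetilde{v}_{\ker}}\ket{S_{v_{\ker}+v_\img+u,w}}.
\]

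\textbf{Step 2: expand the coset states.} Write $\ket{S_{v_{\ker}+v_\img+u,w}}=|S|^{-1/2}\sum_{s\in S}\omega^{(s+v_{\ker}+v_\img+u)\cdot w}\ket{s+v_{\ker}+v_\img+u}$. Since $\widetilde{v}_{\ker}\in D_1\le S^\perp$, we have $s\cdot\widetilde{v}_{\ker}=0$, so the phase $\omega^{v_{\ker}\cdot\widetilde{v}_{\ker}}$ can be rewritten as $\omega^{(s+v_{\ker})\cdot\widetilde{v}_{\ker}}$.

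Next, $\widetilde{v}_{\ker}\in\co((S^+)^\perp)=\co(S^+)^\perp$, and $v_\img+u\in\LAspan{\beta_1}+\co(T)=\co(S^+)$. Therefore $(v_\img+u)\cdot\widetilde{v}_{\ker}=0$, and the phase can be extended further to $\omega^{z\cdot\widetilde{v}_{\ker}}$ with $z=s+v_{\ker}+v_\img+u$.

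Because $S^+=S\oplus P_1$ (recall $P_1\le\co(S)$), the map $(s,v_{\ker})\mapsto s':=s+v_{\ker}$ is a bijection $S\times P_1\to S^+$. The state therefore becomes
\[
\frac{1}{\sqrt{|D_1|}}\cdot\frac{1}{\sqrt{|P_1||S|}}\sum_{s'\in S^+,\,z=s'+v_\img+u}\omega^{z\cdot(\widetilde{v}_{\ker}+w)}\ket{z}.
\]
The inner normalized vector is exactly the state prepared in $\hyb_3^c$. Its norm does not depend on $\widetilde{v}_{\ker}$, so each outcome occurs with probability $1/|D_1|$, i.e.\ $\widetilde{v}_{\ker}$ is uniform on $D_1$, matching $\hyb_3^c$.

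\textbf{Main obstacle.} The only delicate point is justifying the orthogonality relations $\widetilde{v}_{\ker}\perp S$ and $\widetilde{v}_{\ker}\perp(v_\img+u)$, together with the decomposition $S^+=S\oplus P_1$. These follow from \cref{sampling}: $\co(S^+)=\LAspan{\beta_1,z_{3n/4+1},\dots,z_n}$ contains both $P/P_1$ and $\co(T)$, and $D_1=\co(S^+)^\perp\cap S^\perp$. The identity $|P_1|=|D_1|$ should be checked via \cref{prop:dual}(5).
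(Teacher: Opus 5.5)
Your proposal is correct and follows essentially the same computation as the paper. You apply $U_1$, expand the coset states, use $\widetilde{v}_{\ker}\in S^\perp\cap\co(S_1^+)^\perp$ together with $v_\img+u\in\co(S_1^+)$ to rewrite the phase as $\omega^{z\cdot(\widetilde{v}_{\ker}+w)}$, and merge $S\times P_1$ into $S^+$. Your explicit norm computation giving probability $1/|D_1|$ for each outcome, and your check of $|P_1|=|D_1|$ (which the paper uses silently in its normalization), make rigorous two points the paper only asserts.
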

\begin{proof}
    The following first term is the state at Step 2 in $\hyb_3^b$.
\begin{align*}
&\frac{1}{\sqrt{|P_1|}} \sum_{v_{\ker} \in P_1} \ket{v_{\ker}} \otimes \ket{S_{v_{\ker}+v_\img+ u,w}}
\\
\xmapsto{\text{Apply } U_1}\quad
&\frac{1}{\sqrt{|P_1|}} \sum_{v_{\ker} \in P_1} (U_1\ket{v_{\ker}}) \otimes \ket{S_{v_{\ker}+v_\img+u,w}}
\\
= \quad
&\frac{1}{\sqrt{|P_1|\cdot |S|}} \sum_{v_{\ker} \in P_1}\sum_{z \in S + v_{\ker} + v_\img+u}
   (U_1\ket{v_{\ker}}) \otimes \omega_{|\F|}^{z \cdot w}\ket{z}
\\
= \quad
&\frac{1}{\sqrt{|P_1|\cdot |S|}} 
\sum_{\substack{ s \in S \\ v_{\ker} \in P_1 \\ z:=s + v_{\ker} + v_\img+ u }}
 (U_1\ket{v_{\ker}}) \otimes \omega_{|\F|}^{z \cdot w}\ket{z}
\\
= \quad
&\frac{1}{\sqrt{|P_1|^2\cdot |S|}}
  \sum_{\widetilde{v}_{\ker} \in D_1}\ \sum_{\substack{ s \in S \\ v_{\ker} \in P_1 \\ z:=s + v_{\ker} + v_\img+ u }}
  \omega_{|\F|}^{v_{\ker} \cdot \widetilde{v}_{\ker}}\cdot \omega_{|\F|}^{z \cdot w}
  \ket{\widetilde{v}_{\ker}} \otimes \ket{z}
\\
= \quad
&\frac{1}{\sqrt{|P_1|^2\cdot |S|}}
  \sum_{\widetilde{v}_{\ker} \in D_1}\ \sum_{\substack{ s \in S \\ v_{\ker} \in P_1 \\ z:=s + v_{\ker} + v_\img+ u }}
  \omega_{|\F|}^{z \cdot (\widetilde{v}_{\ker}+w)}
  \ket{\widetilde{v}_{\ker}}_C \otimes \ket{z}
\\
\xmapsto{\text{Measure } C}\quad
&\frac{1}{\sqrt{|P_1|\cdot |S|}}
  \sum_{\substack{ s' \in S^+ \\  z:=s' + v_\img+ u }} \omega_{|\F|}^{z \cdot (\widetilde{v}_{\ker}+w)} \ket{z}.
\end{align*}
    
    The second equality follows from \cref{coset_unitaryconstruct}.
    Next, we show that the last equality holds, by showing that $z \cdot \widetilde{v}_{\ker} = v_{\ker}\cdot \widetilde{v}_{\ker}$. Indeed, 
    \begin{equation*}
        z \cdot \widetilde{v}_{\ker} = \left(s+v_{ker}+v_\img+u\right) \cdot \widetilde{v}_{\ker} =  v_{\ker}\cdot \widetilde{v}_{\ker},
    \end{equation*}
    where the second equality follows from the fact that $\widetilde{v}_{\ker} \in \co((S^+)^\perp) \cap S^\perp$, $s \in S$, and $v_\img+u \in \co(S^+)$.
    
    Finally, $\widetilde{v}_{\ker}$ in the last term is uniformly sampled over $D_1:=\co((S^+)^\perp) \cap S^\perp$. It turns out that the final state is the state at Step 2 in $\hyb_3^c$.
\end{proof}

\begin{lemma}\label[lemma]{coset_applyU}
    For all $m \in \F^k$, $\hyb_3^c(m)=\hyb_3^d(m)$.
\end{lemma}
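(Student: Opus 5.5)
The plan is to show that the two hybrids define the same joint distribution over all variables the experiment uses, namely $([S,T], u, w, b, v_\img, \widetilde{v}_{\ker})$ together with the $\Zs$ input. Once that holds, the prepared state, the adversary's input and the final check are literally the same, so the outputs coincide. The only change from $\hyb_3^c(m)$ to $\hyb_3^d(m)$ is how $v_\img$ is sampled. In $\hyb_3^c$ it is uniform on $P/P_1$. In $\hyb_3^d$ it is defined implicitly from a uniform $\ell \leftarrow \F^{k-1}$. The claim therefore reduces to a bijection statement.

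\textbf{Step 1 (linear-algebra fact).} Condition on $[S,T]$ being sampled successfully; otherwise both hybrids return $\cM$ at Step~1 identically. By \cref{sampling}, $P = \co(S)\cap T = \LAspan{z_{\frac{n}{2}+1},\dots,z_{\frac{3n}{4}}}$, and successful sampling gives $\regR(P)=\F^k$, hence $\regR_{-1}(P)=\F^{k-1}$. Since $P_1=\ker(\regR_{-1}|_P)$ and $P = P_1 \oplus (P/P_1)$, the fundamental theorem of linear maps gives $\dim(P/P_1)=\dim P-\dim P_1 = k-1$. Moreover $\regR_{-1}$ restricted to $P/P_1$ is injective, because $(P/P_1)\cap P_1=\{0\}$. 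So $\regR_{-1}|_{P/P_1}: P/P_1\to\F^{k-1}$ is a linear bijection.

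\textbf{Step 2 (affine change of variables).} Fix $[S,T]$, $m$, $u$, $w$, $b$ and $\widetilde{v}_{\ker}$. The map
\[
\phi_{u}: v_\img \mapsto \bigl(m+\regR(v_\img+u)\bigr)[2:k] = m[2:k] + \regR_{-1}u + \regR_{-1}v_\img
\]
is a translate of the bijection from Step~1, so it is a bijection $P/P_1\to\F^{k-1}$. Consequently, sampling $v_\img$ uniformly and setting $\ell:=\phi_u(v_\img)$ yields the same joint distribution as sampling $\ell$ uniformly and setting $v_\img:=\phi_u^{-1}(\ell)$. The second procedure is exactly the sampling in $\hyb_3^d$, and it also justifies the existence and uniqueness of $v_\img$ asserted there. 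In both cases the pair $(v_\img,\ell)$ is independent of $(u,w,b,\widetilde{v}_{\ker})$ given $[S,T]$, because $\phi_u$ is a bijection for every $u$.

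\textbf{Step 3 (conclude).} Under this coupling, the state prepared in Step~2 depends only on $(S^+, v_\img, u, \widetilde{v}_{\ker}, w)$, so it is identical in the two hybrids. The $\Zs$ input $b\|(m+\regR(v_\img+u))[2:k]$ in $\hyb_3^c$ equals $b\|\ell$ in $\hyb_3^d$. The final predicate depends only on $\widetilde{z}$, $S$, $S^+$, $\widetilde{v}_{\ker}$ and $w$, all of which are identical. Hence $\hyb_3^c(m)=\hyb_3^d(m)$.

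The only step requiring care is Step~1: one must check that $P$ coincides with $\LAspan{z_{\frac{n}{2}+1},\dots,z_{\frac{3n}{4}}}$, so that the successful-sampling condition $\regR(\cdot)=\F^k$ applies to $P$. This follows directly from the definitions $\co(S)=\LAspan{z_{\frac{n}{2}+1},\dots,z_n}$ and $T=\LAspan{z_1,\dots,z_{\frac{3n}{4}}}$, since the $z_j$ form a basis of $\F^n$. Steps~2 and~3 are routine once this is in place.
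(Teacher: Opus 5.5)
Your proof is correct and follows the paper's argument. The paper also rests on the fact that $\regR_{-1}|_{P/P_1}\colon P/P_1\to\F^{k-1}$ is a bijection once $\regR(P)=\F^k$, so the map $v_{\img}\mapsto(m+\regR(v_{\img}+u))[2:k]$ turns a uniform $v_{\img}$ into a uniform $\ell$ and back; you have simply spelled out the dimension count and the change of variables in more detail.

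One small slip does not affect the argument. The pair $(v_{\img},\ell)$ is not independent of $u$, since it determines $\regR_{-1}u=\ell-m[2:k]-\regR_{-1}v_{\img}$. Only $v_{\img}$ and $\ell$ separately are independent of $(u,w,b,\widetilde{v}_{\ker})$, and the coupling for each fixed $u$ is all you actually use.
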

\begin{proof}
    Let $P:= T/S$. A successfully sampled $[S,T]$ says that $\regR(P) = \F^k$. Let $P_1 := \ker (\regR_{-1}|_P)$. We have that the linear map $\regR_{-1}|_{(P/P_1)}: P/P_1 \rightarrow \F^{k-1}$ is a bijective. Hence, when $m, u, \ell$ are fixed, there exists a unique vector $v_{\img} \in (P / P_1)$ such that $(m+\regR (v_\img+u))[2:k]= \ell$.
\end{proof}

\begin{lemma}
    For all $m \in \F^k$, $\left|\Pr\Big[\hyb_3^d(m)=I-\cM\Big]-\Pr\Big[\hyb_3^e(m) = I- \cM\Big]\right| \leq \negl(\lambda)$.
\end{lemma}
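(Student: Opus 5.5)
This is a direct reduction to the subspace-hiding property of $\cZ$ (\cref{subspace_hiding}), in the same way \cref{bb84_ab} reduced to semantic security. Suppose, toward a contradiction, that for some polynomial $p$ and infinitely many $\lambda$ the two probabilities differ by at least $1/p(\lambda)$. We build a QPT distinguisher $\cR$ for the subspace-hiding game. $\cR$ receives $(\regR, S, T, u, w, \mathsf{ch})$, where $\mathsf{ch}$ is either $\cZ(S,T,u,w,m')$ or $\cS(G,T,u,x,m')$. The distinguisher needs to choose $m'$ before seeing the challenge. We handle this by letting $\cR$ (non-uniformly, or by an averaging argument) fix $m' := b\|\ell$ for uniformly sampled $b \leftarrow \F$ and $\ell \leftarrow \F^{k-1}$. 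This matches $\hyb_3^d$ and $\hyb_3^e$, where $b\|\ell$ is sampled independently of $(S,T,u,w)$. Since the subspace-hiding definition quantifies over all $m'$, indistinguishability also holds on average over a random $m'$.

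Given the challenge, $\cR$ performs the following steps.
\begin{enumerate}
\item Recompute all subspaces from $[S,T]$. This requires the full tuple produced in \cref{sampling}, including $\co(S)$, $\co(T)$, $P_1$, $S^+$, $P/P_1$ and $D_1$. If the challenge gives only $S$ and $T$, we modify the challenger to sample these auxiliary complements freshly. Their distribution depends only on $(S,T)$ and the randomness of the $z_j$, so $\cR$ can resample consistent $z_{\frac{3n}{4}+1},\dots,z_n$ and $\beta_1$. The condition $u\in\co(T)$ has to be respected in this resampling; alternatively, we view the tuple $[S,T]$ as the public input in \cref{subspace_hiding}, as the paper does.
\item Compute the unique $v_\img \in P/P_1$ with $(m+\regR(v_\img+u))[2:k]=\ell$. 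This takes Gaussian elimination, using the bijectivity of $\regR_{-1}|_{P/P_1}$ from the lemma proving $\hyb_3^c=\hyb_3^d$.
\item Sample $\widetilde v_{\ker}\leftarrow D_1$ and prepare the state $\frac{1}{\sqrt{|P_1||S|}}\sum_{s'\in S^+}\omega^{(s'+v_\img+u)\cdot(\widetilde v_{\ker}+w)}\ket{s'+v_\img+u}$. This is just a coset state of $S^+$, so it is efficiently preparable by \cref{coset_stateconstruct}.
\item Run $\As_\lambda$ on $(A,\mathsf{ch})$ to obtain $\widetilde z$.
\item Output $1$ iff the predicate ``$\widetilde z\notin S^\perp+w$ or $\widetilde z\in (S^+)^\perp+\widetilde v_{\ker}+w$'' holds. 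This is efficiently checkable by linear algebra since $\cR$ knows $S$, $S^+$, $w$ and $\widetilde v_{\ker}$.
\end{enumerate}

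If $\mathsf{ch}=\cZ(\cdot)$, the view and output of $\cR$ are distributed exactly as $\hyb_3^d(m)$. If $\mathsf{ch}=\cS(\cdot)$, with $G$ and $x$ sampled as in \cref{subspace_hiding}, they are distributed exactly as $\hyb_3^e(m)$. The abort on unsuccessful sampling of $[S,T]$ happens with negligible probability (\cref{prop:sampling_subspace}), so it changes the probabilities only by $\negl(\lambda)$. Hence $\cR$ distinguishes with advantage at least $1/p(\lambda)-\negl(\lambda)$, and $\cR\in\mathscr{A}$ because $\mathscr{A}$ contains QPT algorithms and is closed under constant-factor overhead. This contradicts subspace hiding.

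The main obstacle is verifying that $\cR$ can simulate everything from only the information given in the subspace-hiding game. This concerns two things: the state and the predicate depend only on $(S,T,u,w)$, the auxiliary complements, $m$ and fresh randomness, and never on $G$ or $x$; and the choice of $m'$ must be independent of the challenge. Once the correspondence between the challenge distribution and the hybrid sampling is checked, the remaining steps are routine.
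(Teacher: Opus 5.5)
Your proposal is essentially the paper's proof. The paper's reduction likewise receives $(S,T,u,w,b\|\ell,\mathsf{ch})$ (taking $b\|\ell$ from the challenger instead of fixing it by averaging), outputs $0$ if $\regR(T/S)\neq\F^k$, computes $v_{\img}$, prepares the $S^+$ coset state with a fresh $\widetilde{v}_{\ker}\leftarrow D_1$, runs $\As_\lambda$ on $(A,\mathsf{ch})$, and outputs according to the same predicate.

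One caution on your resampling option: $S^+$ depends on the complement $P=\co(S)\cap T$, not only on $(S,T)$, so you would also need to sample $z_{\frac{n}{2}+1},\dots,z_{\frac{3n}{4}}$ (e.g.\ uniformly subject to $P\perp w$, so that $w\in\co(S)^\perp$ still holds), not just $z_{\frac{3n}{4}+1},\dots,z_n$ and $\beta_1$. Your alternative of treating the full tuple $[S,T]$ as public avoids this, and the paper's ``extend a basis of $T$'' step is equally terse on this point.
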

\begin{proof}
    Suppose (by contradiction) that there exists a polynomial $p(\cdot)$ such that for infinite many $\lambda$, $\Pr\Big[\hyb_3^d = I-\cM\Big] < \Pr\Big[\hyb_3^e = I-\cM\Big] - \frac{1}{p(\lambda)}$.
    We obtain a contradiction from the subspace hiding property and from the fact that the experiment can be implemented within the class $\mathscr{A}$. We will construct a reduction $\cR$ that breaks the security of subspace-hiding. The goal of the reduction $\cR$ is to simulate the experiments $\hyb_3^d(m)$ and $\hyb_3^e(m)$. The reduction proceeds as follows:
    \begin{enumerate}
        \item Receive $S,T,u,w,b\|\ell, \ch$ from the challenger, where $\ch$ is either $\cZ_\lambda(S,T,u,w,b\|\ell)$ or $\cS_\lambda( G,T,u,x,b\|\ell)$. 
        \item Extend the basis of $T$ to a basis of $\F^n$ uniformly at random, which ensures that $\co(T)$ and $\co(T^\perp)$ are well-defined. Furthermore, $S^+$ are defined by $\regR, S, T$. If $\regR(T/S) \neq \F^k$, the reduction returns $b'=0$; i.e., $\cR$ guesses the challenge is $\cZ(S,T,u,w,b\|\ell)$. Otherwise, continue to Step~3.
        \item Compute $v_{\img} \in (P / P_1)$ subject to $(m+\regR (v_\img+u))[2:k]= \ell$. Note that $\regR(T/S) = \F^k$ implies the existence of $v_\img$.
        \item Prepare the state
        \begin{equation*}
            \frac{1}{\sqrt{|P_1|\cdot |S|}}
            \sum_{\substack{s' \in S^{+} \\ z = s'+v_\img+u }} \omega_{|\F|}^{z \cdot (\widetilde{v}_{\ker}+w)} \ket{z}_A,
        \end{equation*}
        for a uniformly random $\widetilde{v}_{\ker} \leftarrow D_1:= \co((S^+)^\perp) \cap S^\perp$.
        \item Run the remainder of  experiment $\hyb^d_3$ or $\hyb^e_3$ (Step 4-5), except that in Step 4, $\As_{\lambda}$'s registers are initialized as $A, \ch$.
        \item In Step~5, if the experiment outputs $\cM$, $\cR$ output $b'= 0$, i.e., $\cR$ guesses the challenge is $\cZ( S,T,u,w,b\|\ell)$. Otherwise, $\cR$ outputs $b'= 1$, i.e., $\cR$ guesses the challenge is $\cS_\lambda( G,T,u,x,b\|\ell)$.
    \end{enumerate}
    All operations except the adversary can be implemented efficiently (This is why we say the predicate is efficiently checkable), and hence $\cR \in \mathscr{A}$. If $\ch=\cZ_\lambda(S,T,u,w,b\|\ell)$, then $\cR$ simulates $\hyb^d_3$ and If $\ch=\cS_\lambda( G,T,u,x,b\|\ell)$, then $\cR$ simulates $\hyb^e_3$. Hence, $\cR$ breaks the security of subspace-hiding with advantage $\frac{1}{2\cdot p(\lambda)}$.
\end{proof}

\begin{lemma}
    For all $m \in \F^k$, $\Pr\left[\hyb_3^e(m) = I-\cM\right] = 1-\negl(\lambda)$.
\end{lemma}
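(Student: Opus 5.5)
The plan is to show that reaching $\cM$ in $\hyb_3^e(m)$ yields an algorithm that outputs a vector in $S^\perp\setminus(S^+)^\perp$. We then show that no algorithm can do this, even an unbounded one, because its view carries almost no information about $S$ beyond $G$, $S^+$ and $T$. The sampling-failure branch in Step~1 occurs with negligible probability, so we condition on successful sampling.

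\textbf{Step 1 (rewrite the view).} Outcome $\cM$ means
\begin{equation*}
\widetilde{z}\in (S^\perp+w)\setminus\left((S^+)^\perp+\widetilde{v}_{\ker}+w\right).
\end{equation*}
Set $w':=\widetilde{v}_{\ker}+w$. Since $|S^+|=|P_1|\cdot|S|$, the state prepared in Step~2 is exactly the coset state $\ket{S^+_{v_\img+u,\,w'}}$. Hence the adversary's view depends only on:
\begin{itemize}
\item the coset $S^++v_\img+u$, which is determined by $S^+$, $T$, $u$, $\ell$ and $m$;
\item the coset $(S^+)^\perp+w'$;
\item $G$, $T$, $u$, $b\|\ell$, and the class $x$ of $w'$ modulo $G$.
\end{itemize}
For the last item, $\widetilde{v}_{\ker}\in S^\perp\le G$, so $w'$ and $w$ have the same class modulo $G$. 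Next, $\co(S^\perp)\oplus D_1\oplus (S^+)^\perp=\F^n$, which follows from \cref{prop:dual} and the construction in \cref{sampling}. With $w$ uniform in $\co(S^\perp)$ and $\widetilde{v}_{\ker}$ uniform in $D_1$, the class of $w'$ modulo $(S^+)^\perp$ is therefore uniform over $\F^n/(S^+)^\perp$. Finally, $w'\in S^\perp+w$, so any winning $\widetilde{z}$ gives $y:=\widetilde{z}-w'\in S^\perp\setminus(S^+)^\perp$.

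\textbf{Step 2 (resample $S$ last).} Build an experiment in which we first sample $(G,S^+,T)$ with the correct joint marginal, together with $u$, $\ell$, $b$ and a uniform $w'$ modulo $(S^+)^\perp$. We then compute $v_\img+u$ modulo $S^+$ and $x$, prepare $\ket{S^+_{v_\img+u,w'}}$, and run $\cS(G,T,u,x,b\|\ell)$ and $\cA_\lambda$, which outputs $y=\widetilde{z}-w'$. Only after this do we sample $S$ from its conditional distribution given $(G,S^+,T)$. That distribution should be uniform over the $\frac n2$-dimensional subspaces with $G^\perp\le S\le S^+$, plus the compatibility conditions with $T$, which are automatic because $S^+\le T$. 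The claim is that this experiment has the same distribution as $\hyb_3^e$.

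This is the step I expect to be the main obstacle. It requires a careful symmetry argument over the sampling of \cref{sampling}: one must check that the choice of basis and representatives (the cosets $\co(\cdot)$, $P/P_1$, and $v_\img$) leaks nothing about $S$ beyond $S^+$, $T$ and $G$. My first attempt would be to show that the stabilizer of $(G,S^+,T)$ in $GL_n(\F)$ acts transitively on the valid $S$ while preserving every other component of the view. I would also double-check that $u$ and the coset of $v_\img+u$ modulo $S^+$ depend only on $(T,S^+,\ell,m)$.

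\textbf{Step 3 (bound the probability).} Now fix $y\notin (S^+)^\perp$ before $S$ is sampled. The condition $y\in S^\perp$ says that $S\le y^\perp\cap S^+$, which is a proper hyperplane of $S^+$. If that hyperplane does not contain $G^\perp$, the probability is $0$. Otherwise, a standard counting of uniform subspaces between $G^\perp$ and $S^+$ gives
\begin{equation*}
\Pr\left[y\in S^\perp\right]\le |\F|^{-(\dim S^+-\dim S)}=|\F|^{-(\frac n4-k+1)}.
\end{equation*}
Since $4k+\Theta(\lambda)\le n$, this is $\negl(\lambda)$. Therefore $\Pr[\hyb_3^e(m)=\cM]\le\negl(\lambda)$, which proves the lemma.
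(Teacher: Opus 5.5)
\textbf{Same skeleton as the paper, but Step~2 is not closed.} Your outline matches the paper's proof: turn a win into $y=\widetilde{z}-w'\in S^\perp\setminus(S^+)^\perp$, resample $S$ uniformly between $G^\perp$ and $S^+$ given $(G,S^+,T)$, and count. Your Step~1 facts are correct: the state is $\ket{S^+_{v_{\img}+u,w'}}$, and $w'$ is uniform modulo $(S^+)^\perp$ because $\co(S^\perp)\oplus D_1\oplus(S^+)^\perp=\F^n$. The gap is Step~2, which carries the whole argument. The plan you give for it rests on a false claim: the coset $S^++v_{\img}+u$ is \emph{not} determined by $(S^+,T,u,\ell,m)$.

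\textbf{Why the coset is not determined.} Put $c:=\ell-m[2:k]-\regR_{-1}u$. Then $v_{\img}$ is the element of $P/P_1$ with $\regR_{-1}v_{\img}=c$, so $S^++v_{\img}=\phi_{S,P}^{-1}(c)$. Here $\phi_{S,P}\colon T/S^+\to\F^{k-1}$, $t+S^+\mapsto\regR_{-1}(\pi_P t)$, is an isomorphism, and $\pi_P$ is the projection onto $P$ along $S$. Since $\regR_{-1}(S^+)=\F^{k-1}$, every coset of $S^+$ in $T$ contains vectors of $\regR_{-1}$-value $c$. So which coset is selected depends on the decomposition $T=S\oplus P$, not only on $(S^+,T)$. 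Consequently:
\begin{itemize}
\item you cannot compute this coset before sampling $S$;
\item the view (the coset together with $u,\ell$) exposes one evaluation of $\phi_{S,P}$, which is a priori information about $S$.
\end{itemize}

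\textbf{What does hold is independence.} The paper asserts this without detail; here is an argument. Fix $(S,P_1)$ and write $P/P_1$ as the graph of some $\alpha\in\mathrm{Hom}(T/S^+,S)$ over a reference complement. Then $\phi_{S,P}=\phi_0+\regR_{-1}\circ\alpha$. Because $\regR_{-1}(S)=\F^{k-1}$, each isomorphism $T/S^+\to\F^{k-1}$ equals $\phi_{S,P}$ for exactly $|\mathrm{Hom}(T/S^+,S\cap\ker\regR_{-1})|$ choices of $\alpha$, a number independent of $S$. So in your reordered experiment you should sample a uniform isomorphism $\phi$ and use the coset $\phi^{-1}(c)$.

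\textbf{Uniformity of $S$ is also not automatic.} The other half of Step~2 is that $S$ is uniform over $\{S: G^\perp\le S\le S^+,\ \dim S=\frac n2\}$ given $(G,S^+,T)$. This is exactly what the paper proves with the chain $\cD_{\hyb_3^e}=\cD_1\approx\cD_2=\cD_3=\cD_4=\cD_5=\cD_6$. It first conditions on $\regR(G^\perp)=\F^k$ at negligible loss, which gives $\regR(S)=\F^k$ for every admissible $S$ (also needed above). It then counts the valid $(P_1,P)$ for each such $S$ and shows the count does not depend on $S$.

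Your stabilizer route would have to use only maps compatible with the fixed matrix $\regR$, since $\regR$ enters both $P_1=\ker\regR_{-1}|_P$ and $\ell$. Those maps must also fix $u$, the coset $S^++v_{\img}+u$, and $\langle w',\cdot\rangle|_{S^+}$. Transitivity under all these constraints is not evident, so the counting argument is the safer way to close the step.

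\textbf{Minor point on Step~3.} The exact probability is $\frac{|\F|^{n/4-k+1}-1}{|\F|^{n/2-k+1}-1}\le|\F|^{-n/4}$, as in the paper. The exponent $\dim S^+-\dim S$ you quote is not what the count gives, though your inequality still holds because $\frac n4-k+1\le\frac n4$.
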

\begin{proof}
    The entire view of the adversary is completely determined by the tuple $(\regR, S^+, v_{\img}, u, \widetilde{v}_{\ker}+w, G, T, x, b\|(m+\regR (v_\img+u))[2:k])$.
    Suppose (by contradiction) that
    \begin{equation*}
        \Pr\left[\hyb_3^e(m) = \cM\right] \geq \frac{1}{\poly(\lambda)}.
    \end{equation*}
    Then, there exists an adversary $\Adv$, given those variables, such that
    \begin{equation}\label{equ:hyb_e_ztilde}
        \Pr\left[\widetilde{z} \in \left( S^\perp+w \right) \Big\backslash \left( (S^+)^\perp + \widetilde{v}_{\ker} + w \right)  : \widetilde{z} \leftarrow \Adv \right] \geq \frac{1}{\poly(\lambda)},
    \end{equation}
    where the probability is over the sampling process of $\hyb_3^e$. We will construct a new $\Adv'$ such that
    \begin{equation}\label{equ:hyb_e_z}
        \Pr\left[\widetilde{z} \in  S^\perp  \big\backslash (S^+)^\perp  : \widetilde{z} \leftarrow \Adv' \right] \geq \frac{1}{\poly(\lambda)},
    \end{equation}
    where the probability is over the sampling process of $\hyb_3^e$. $\Adv'$ is defined as follows:
    \begin{enumerate}
        \item Run $\widetilde{z} \leftarrow  \Adv$.
        \item Return $\widetilde{z}-\widetilde{v}_{\ker}-w$.
    \end{enumerate}
    Since $\widetilde{v}_{\ker} \in S^\perp$, \cref{equ:hyb_e_ztilde} implies \cref{equ:hyb_e_z}. 
    
    \arnie{AI keeps saying this implication is not correct. It says that we can only conclude that $z = s + p$, where $s\in S$ and $p \in P_1$. But it's hard to compute $p$ since we don't know $S$.}\HZZ{I change an argument}
    
    Next, we show that \cref{equ:hyb_e_z} cannot occur except with negligible probability, which leads to a contradiction. The game defined in \cref{equ:hyb_e_z} only depends on the sampling process of subspace $G$, $S$, $S^+$, and $T$.
    We change the sampling order so that $S$ is sampled conditioned on $G$, $S^+$, and $T$. Under this sampling procedure, the conditional distribution of $S$ is independent of the coset shifts determined by $G$, $S^+$, and $T$. Therefore, the distribution of $S$ is independent of the adversary's view $\left(v_{\img},\,u,\,\widetilde{v}_{\ker}+w,\,x\right)$. It is crucial to emphasize that the adversary's view does not include $\widetilde{v}_{\ker}$, and this is precisely the hardness that we aim to capture. Furthermore, $b\|(m+\regR (v_\img+u))[2:k]$ is independent of the sampling of $S$. For the rest of the proof, we can assume the adversary’s view is $(\regR, S^+,G,T)$.

     We show that the statistical distance between the following distributions are negligible as long as $\dim(G^\perp) \geq k + \Omega(\lambda)$:
    \begin{itemize}
        \item $\cD_{\hyb_3^e}:$
        \begin{enumerate}
            \item Uniformly sample a subspace $S$ subject to $\dim(S)=\frac{n}{2}$.
            \item Uniformly sample a subspace $T$ subject to $S \leq T$,  $\dim(T)=\frac{3n}{4}$, and $\regR(T/S)=\F^k$.
            \item Uniformly sample a subspace $G$ subject to $S^\perp \leq G$ and $\dim(G)=\frac{3n}{4}$.
            \item Define $P_1:= \ker\left (\regR_{-1}|_{(T/S)} \right)$ and $S^+:=\LAspan{S,P_1}$.
            \item Return $S^+,G,T,S$.
        \end{enumerate}

        \item $\cD_{1}:$
        \begin{enumerate}
            \item \textcolor{red}{Uniformly sample a subspace $G^\perp$ subject to $\dim(G^\perp)=\frac{n}{4}$.}
            \item Uniformly sample a subspace $S$ subject to $G^\perp \leq S$ and $\dim(S)=\frac{n}{2}$.
            \item Uniformly sample a subspace $T$ subject to $S \leq T$, $\dim(T)=\frac{3n}{4}$, and $\regR(T/S)=\F^k$.
            \item Define $P_1:= \ker\left (\regR_{-1}|_{(T/S)} \right)$ and $S^+:=\LAspan{S,P_1}$.
            \item Return $S^+,G,T,S$.
        \end{enumerate}

        \item $\cD_{2}:$
        \begin{enumerate}
            \item Uniformly sample a subspace $G^\perp$ subject to $\dim(G^\perp)=\frac{n}{4}$ \textcolor{red}{and $\regR(G^\perp) = \F^k$}.
            \item Uniformly sample a subspace $S$ subject to $G^\perp \leq S$ and $\dim(S)=\frac{n}{2}$.
            \item Uniformly sample a subspace $T$ subject to $S \leq T$, $\dim(T)=\frac{3n}{4}$, and $\regR(T/S)=\F^k$.
            \item Define $P_1:= \ker\left (\regR_{-1}|_{(T/S)} \right)$ and $S^+:=\LAspan{S,P_1}$.
            \item Return $S^+,G,T,S$.
        \end{enumerate}

        \item $\cD_{3}:$
        \begin{enumerate}
            \item Uniformly sample a subspace $G^\perp$ subject to $\dim(G^\perp)=\frac{n}{4}$ and $\regR(G^\perp) = \F^k$.
            \item Uniformly sample a subspace $S$ subject to $G^\perp \leq S$ and $\dim(S)=\frac{n}{2}$.
            \item \textcolor{red}{Uniformly sample $P$ subject to $\dim(P)=\frac{n}{4}$, $\regR(P)= \F^k$, and $P \cap S = \{0\}$.}
            \item Define $P_1:= \ker\left (\regR_{-1}|_{P} \right)$,  $S^+:=\LAspan{S,P_1}$, and \textcolor{red}{$T:= \LAspan{S,P}$}. 
            \item Return $S^+,G,T,S$.
        \end{enumerate}

        \item $\cD_{4}:$
        \begin{enumerate}
            \item Uniformly sample a subspace $G^\perp$ subject to $\dim(G^\perp)=\frac{n}{4}$ and $\regR(G^\perp) = \F^k$.
            \item Uniformly sample a subspace $S$ subject to $G^\perp \leq S$ and $\dim(S)=\frac{n}{2}$.
            \item \textcolor{red}{Uniformly sample $P_1$ subject to $\regR(P_1) = \LAspan{e_1}$, $\dim(P_1)=\frac{n}{4}-k+1$, and $P_1 \cap S = \{0\}$.}
            \item \textcolor{red}{Uniformly sample $P$ subject to $\dim(P)=\frac{n}{4}$, $\ker \regR_{-1}|_P = P_1$, $P \cap S = \{0\}$.}
            \item Define $S^+:=\LAspan{S,P_1}$ and $T:= \LAspan{S,P}$. 
            \item Return $S^+,G,T, S$.
        \end{enumerate}

        \item $\cD_{5}:$
        \begin{enumerate}
            \item Uniformly sample a subspace $G^\perp$ subject to $\dim(G^\perp)=\frac{n}{4}$ and $\regR(G^\perp) = \F^k$.
            \item Uniformly sample a subspace $S$ subject to $G^\perp \leq S$ and $\dim(S)=\frac{n}{2}$.
            \item Uniformly sample $P_1$ subject to $\regR(P_1) = \LAspan{e_1}$, $\dim(P_1)=\frac{n}{4}-k+1$, and $P_1 \cap S = \{0\}$.
            \item Uniformly sample $P$ subject to $\dim(P)=\frac{n}{4}$, $\ker \regR_{-1}|_P = P_1$, $P \cap S = \{0\}$.
            \item  Define $S^+:=\LAspan{S,P_1}$ and $T:= \LAspan{S,P}$.
            \item \textcolor{red}{Uniformly resample a subspace $S^*$ subject to $\dim(S^*) = \frac{n}{2}$ and $G^\perp \leq S^* \leq S^+$. Rename $S^*$ to $S$.} 
            \item Return $S^+,G,T, S$.
        \end{enumerate}

        \item $\cD_{6}:$
        \begin{enumerate}
            \item Uniformly sample a subspace $G^\perp$ subject to $\dim(G^\perp)=\frac{n}{4}$ and $\regR(G^\perp) = \F^k$.
            \item Uniformly sample a subspace $S$ subject to $G^\perp \leq S$ and $\dim(S)=\frac{n}{2}$.
            \item Uniformly sample $P_1$ subject to $\regR(P_1) = \LAspan{e_1}$, $\dim(P_1)=\frac{n}{4}-k+1$, and $P_1 \cap S = \{0\}$.
            \item Uniformly sample $P$ subject to $\dim(P)=\frac{n}{4}$, $\ker \regR_{-1}|_P = P_1$, $P \cap S = \{0\}$.
            \item  Define $S^+:=\LAspan{S,P_1}$ and $T:= \LAspan{S,P}$.
            \item \textcolor{red}{Uniformly resample a subspace $(S^*)^\perp$ subject to $\dim\left ((S^*)^\perp \right) = \frac{n}{2}$ and $(S^+)^\perp \leq (S^*)^\perp \leq G$. Rename $S^*$ to $S$.}
            \item Return $S^+,G,T, S$.
        \end{enumerate}
    \end{itemize}

    \begin{lemma}
        $\cD_{\hyb_3^e} = \cD_1$.
    \end{lemma}
    \begin{proof}
     By definition, $S^\perp \leq G$ if and only if $G^\perp \leq S$. Both distributions are identical as follows: Uniformly sample $S$, $T$, and $G$ with dimensions $\frac{n}{2}$, $\frac{3n}{4}$, and $\frac{3n}{4}$, respectively. Resample if the following constraint do not holds: (i) $G^\perp \leq S \leq T$ and (ii) $\regR(T/S) = \F^k$.
    \end{proof}
     \begin{lemma}
        If $\frac{n}{4} \geq k + \Omega(\lambda)$, then $\Delta(\cD_1, \cD_2) = \negl(\lambda)$.
    \end{lemma}
    \begin{proof}
        If $G^\perp$ is an $\frac{n}{4}$-dimensional subspace sampled uniformly at random, then $\regR(G^\perp) = \F^k$ holds with overwhelming probability, whenever $\frac{n}{4} \geq k + \Omega(\lambda)$ (see \cref{sampling}). 
    \end{proof}
    \begin{lemma}
        $\cD_2 = \cD_3$.
    \end{lemma}
    \begin{proof}
        When sampling a subspace $T$ subject to $S \leq T$, we actually mean sampling linearly independent vectors $\{z_{\frac{n}{2}+1}, \dots, z_{\frac{3n}{4}}\}$ such that their span's intersection with $S$ is $\{0\}$ (see \cref{sampling}). Hence, the two distributions are identical.
    \end{proof}
    \begin{lemma}
        $\cD_3 = \cD_4$.
    \end{lemma}
    \begin{proof}
        Let us consider $\cD_3$. For each $P$, satisfying the constraint in Step~3, there exists a unique $P_1$ defined in Step~4. 
        We say $(P,P_1)$ is valid in $\cD_3$ (resp. $\cD_4$) if, given $S$, $(P,P_1)$ satisfy the constraint in Step~3 and 4 in $\cD_3$ (resp. $\cD_4$). 
        We know that $(P,P_1)$ is a valid pair in $\cD_3$ if and only if it is a valid pair in $\cD_4$. Indeed, this follows from the fundamental theorem of linear maps.

        Therefore, given $S$, it is sufficient to show that for each $P_1$, the number of choices for $P$ such that $(P,P_1)$ is valid is the same (i.e., independent of $P_1$). First, we define a sequence of $y_1,\dots,y_{k-1}$. For each $i \in [k-1]$, uniformly choose $y_i$ such that
        \begin{equation*}
        \begin{cases}
            y_i \not\in S \oplus \LAspan{P_1, y_1, \dots, y_{i-1}} \\
            \regR(y_i) \not\in \LAspan{e_1, \regR(y_1), \dots, \regR(y_{i-1})}
        \end{cases}         
        \end{equation*}
        The first constraint is well-defined as $S+P_1$ is a direct sum. Let $\beta$ and $\gamma$ be a basis of $S$ and $P_1$, respectively.
        The first constraint holds if and only if $\beta \cup \gamma \cup \{y_1,\dots,y_{k-1}\}$ are linearly independent. The second constraint holds if and only if $\regR(\LAspan{P_1,y_1, \dots, y_{k-1}}) = \F^k$. Let $P = \LAspan{P_1,y_1,\dots,y_{k-1}}$. Two constraints hold if and only if $(P,P_1)$ is valid.
        Next, we know that 
        \begin{equation*}
        \begin{cases}
            \dim\, S \oplus \LAspan{P_1, y_1, \dots, y_{i-1}} = \frac{3n}{4}-k+i \\
            \dim\, \regR^{-1}\Big(\LAspan{e_1, \regR(y_1), \dots, \regR(y_{i-1})}\Big) = n-k+i
        \end{cases}
        \end{equation*}
        Besides, condition on the event that $\regR(S) = \F^k$, the dimension of
        \begin{equation*}
            \Big( S \oplus \LAspan{P_1, y_1, \dots, y_{i-1}} \Big) \cap \regR^{-1}\Big(\LAspan{e_1, \regR(y_1), \dots, \regR(y_{i-1})} \Big)
        \end{equation*}
        is $\frac{3n}{4}-2k+2i$. Hence, the number of valid $y_i$ is equal to
        \begin{equation*}
            |\F|^n - |\F|^{\frac{3n}{4}-k+i} - |\F|^{n-k+i} + |\F|^{\frac{3n}{4}-2k+2i}.
        \end{equation*}
        Therefore, the number of valid order tuples of $(y_1,\dots, y_{k-1})$ is the same (i.e., independent of $P_1$). Next, we need to show that for each valid $(P,P_1)$, the number of valid ordered tuples $(y_1,\dots, y_{k-1})$ such that $P = \LAspan{P_1,y_1,\dots,y_{k-1}}$ is the same. Indeed, the only constraint of $y_i$ are
        \begin{equation*}
        \begin{cases}
            y_i \in P \\
            y_i \not\in \LAspan{P_1,y_1,\dots,y_{i-1}},
        \end{cases}
        \end{equation*}
        where the number of valid $y_i$ only depends on the dimension of $P_1$ and $P$.

        We conclude that two distributions are identical when $\regR(S) = \F^k$, which happens when $\regR(G^\perp) = \F^k$.  
    \end{proof}
    \begin{lemma}
        $\cD_4 = \cD_5$.
    \end{lemma}
    \begin{proof}
         It is sufficient to show that, in $\cD_4$, given any valid pair $(S^+, G, T)$, every subspace $S$ satisfying $G^\perp \leq S \leq S^+$ occurs with the same probability. This condition holds provided that, given a valid $(S^+, G, T)$,
         \begin{enumerate}
             \item Each valid $S$ has the same number of $P_1$ such that $(S,P_1)$ is valid.
             \item Each valid pair $(S,P_1)$ has the same number of $P$ such that $(S,P_1,P)$ is valid.
             \item Each valid pair $(S,P_1,P)$ occurs with same probability.
         \end{enumerate}
         Here, $S$ is called valid if $G^\perp \leq S \leq S^+$; $(S,P_1)$ is called valid if $S$ is valid, $\regR(P_1) = \LAspan{e_1}$, $P_1 \cap S = \{0\}$, and $S^+ = \LAspan{S,P_1}$. Note that $\dim(P_1)=\frac{n}{4}-k+1$ holds directly if $(S,P_1)$ is valid; $(S, P_1, P)$ is called valid if $S$ is valid, $(S,P_1)$ is valid, $\dim(P)=\frac{n}{4}$, $\ker \regR_{-1}|_P = P_1$, $P \cap S = \{0\}$, and $T = \LAspan{S,P}$.

         We first prove the first item. Let us define some notation. We call the tuple $(y_1,\dots, y_{\frac{n}{4}-k+1}) \in \F^{n \times (\frac{n}{4}-k+1)}$ semi-valid, if it spans a $P_1$ such that $P_1 \leq \ker \regR_{-1}$, $P_1 \cap S = \{0\}$, and $S^+ =\LAspan{S, P_1}$; we call the tuple semi-invalid, if it spans a $P_1$ such that $P_1 \leq \ker \regR$, $P_1 \cap S = \{0\}$, and $S^+ =\LAspan{S, P_1}$. It is clear that if a tuple is semi-valid but not semi-invalid, then it is valid.
        
         We will prove the statement by the following strategy: (i) For each valid $(S^+,G,S)$, the number of semi-valid tuples is the same. (ii) For each valid $(S^+,G,S)$, the number of semi-invalid tuples is the same. (iii) For each valid $P_1$, the number of valid tuples, that span $P_1$, is the same.

         Now, we prove (i). For each $i \in [\frac{n}{4}-k+1]$, we can only choose $y_i$ subject to
         \begin{equation*}
             y_i \in \left( S^+ \cap \ker \regR_{-1} \right) \Big\backslash \left( \LAspan{S,y_1,\dots,y_{i-1}} \cap \ker \regR_{-1} \right).
         \end{equation*}
         $\regR_{-1}(S^+) = \F^{k-1}$ implies that $\dim \left( S^+ \cap \ker \regR_{-1} \right) = \frac{3n}{4}-2k+2$. Similarly, $\regR_{-1}(S) = \F^{k-1}$ implies that $\dim \left( S \cap \ker \regR_{-1} \right) = \frac{n}{2}-k+1$. Furthermore, $\dim \left( \LAspan{S,y_1,\dots,y_{i-1}} \cap \ker \regR_{-1} \right) =  \frac{n}{2}-k+i$. Hence, the number of choice of $y_i$ is
         \begin{equation*}
             |\F|^{\frac{3n}{4}-2k+2} - |\F|^{\frac{n}{2}-k+i},
         \end{equation*}
         which is independent of the choice of $(S^+,G,S)$. (ii) is similar to (i) and hence we omit its proof. Next, we prove (iii). The number of choices of the tuple depends only on the dimension of $P_1$, which is
         \begin{equation*}
             \prod\limits_{i=0}^{\frac{n}{4}-k} \left( |\F|^{\frac{n}{4}-k+1} - |\F|^i \right).
         \end{equation*}
    We conclude the proof: given any valid tuple $(S^+, G, S)$, the number of valid $P_1$ is
    \begin{equation*}
        \frac{\text{(i)}- \text{(ii)}}{\text{(iii)}} > 0.
    \end{equation*}
    Next, the second term is trivial. As we only need to sample $y_{\frac{3n}{4}-k+2}, \dots,  y_{\frac{3n}{4}}$ such that
    \begin{equation*}
    \begin{cases}
        \LAspan{\{y_i\}} \cap S^+ = \{0\} \\
        \LAspan{\{y_i\}, S^+} = T \\
        \LAspan{\regR(\{y_i\})} = \LAspan{e_2,\dots, e_k}.
    \end{cases}
    \end{equation*}
    It is clear that the number of choice of $y$s is independent of $(S,P_1)$.
    Finally, the third item follows from the sampling process of $\cD_4$.
    
    \end{proof}
    \begin{lemma}
        $\cD_5 = \cD_6$.
    \end{lemma}
    \begin{proof}
        Fix $G^\perp$ and $S^+$ (or equivalently, fix $G$ and $(S^+)^\perp$). It is clear that $S$ satisfies $G^\perp \leq S \leq S^+$ if and only if $S^\perp$ satisfies $(S^+)^\perp \leq S^\perp \leq G$. Therefore, $\cD_5 = \cD_6$.
    \end{proof}

    \noindent We conclude that (where $\Adv'$ has the information about $S^+$, $G$, and $T$)
    \begin{align*}
        &\Pr_{\cD_{\hyb_3^e}}\left[\widetilde{z} \in  S^\perp  \big\backslash (S^+)^\perp  : \widetilde{z} \leftarrow \Adv' \right] \\
        &\Pr_{\cD_6}\left[\widetilde{z} \in  S^\perp  \big\backslash (S^+)^\perp  : \widetilde{z} \leftarrow \Adv' \right]  + \negl(\lambda)\\
        \leq\,  &\max_{\widetilde{z} \in G \backslash (S^+)^\perp}\, \Pr_{S^\perp :\, (S^+)^\perp < S^\perp < G }\left[ \widetilde{z} \in S^\perp  \big\backslash (S^+)^\perp  \right]  + \negl(\lambda) \\
        =\, & \frac{|\F|^{\dim (S^\perp)-\dim((S^+)^\perp)}-1}{|\F|^{\dim(G)-\dim((S^+)^\perp)}-1} + \negl(\lambda)\\
       \leq\, & \frac{|\F|^{\left(\frac{n}{2} \right)-\left(\frac{n}{4}+k-1 \right)}-1}{|\F|^{\left(\frac{3n}{4}\right)-\left(\frac{n}{4}+k-1 \right)}-1} + \negl(\lambda)\\
       =\, & \frac{|\F|^{\frac{n}{4}-k+1}-1}{|\F|^{\frac{n}{2}-k+1}-1} + \negl(\lambda)\\
        =\, &\negl'(\lambda),
    \end{align*}
    when $\frac{n}{4}>\Omega(\lambda)$.
\end{proof}
\end{proof}

%% file: 6-2_coset_apps.tex
\subsection{Applications of Subspace Coset State Constructions}\label{app:Application_coset}

\cite{BGK+} propose a unified approach for upgrading beyond ``all-or-nothing''-type cryptographic primitives to support certified deletion. We first present their main theorem with minor modifications to align with our format.

\begin{theorem}[\cite{BGK+}, Definition 6.1 and Theorem 6.5]\label[theorem]{BGK_coset}
    Let $\lambda$ be the security parameter. Fixed $n \in \N$. Let $\regR \in \F_2^{1 \times n}$ be the all-ones matrix,
    $m \in \F_2$ be a single bit message, and $\{\Zs_\lambda(\cdot)\}_{\lambda \in \N}$ be a quantum operation with classical input. Define the following distribution $\left\{\EXP_\lambda^{\As_\lambda}(m) \right\}_{\lambda \in \N }$ over quantum states, obtaining by running $\As_\lambda$ as follows.
    \begin{enumerate}
        \item Sample $[S,T]$ uniformly at random over $\F_2^n$ (Definition~\ref{sampling} and ignore the subspace $S^+$).
        
        Next, sample $(v,w) \leftarrow \co(S) \times \co(S^\perp)$, uniformly at random. Let $u \in \co(T)$ be the unique coset such that $S+v \subset T+u$. 
        
        Initialize $\As_\lambda$ with
        \begin{equation*}
            \ket{S_{v,w}}, \Zs_\lambda(S, T, u, w, m+\regR v). 
        \end{equation*}
        \item $\As_\lambda$'s output is parsed as a string $\widetilde{z} \in \F_2^n$ and a residual state $\rho$. If $\widetilde{z} \in S^\perp + w$, then output $\rho$, and otherwise output a special symbol $\bot$.
    \end{enumerate}
    Let $\mathscr{A}$ be a class of adversaries that contains QPT algorithm. If $\{\cZ_\lambda\}_{\lambda \in \N}$ is subspace-hiding for $\mathscr{A}$ (\cref{subspace_hiding}), then for all $\{\cA_\lambda\}_{\lambda \in \N} \in \mathscr{A}$ and for all $m \in \{0,1\}$,
    \begin{equation*}
        \TD\left(\EXP_\lambda^{\As_\lambda}(0), \EXP_\lambda^{\As_\lambda}(1) \right) = \negl(\lambda).
    \end{equation*}
\end{theorem}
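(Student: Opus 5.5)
The plan is to obtain \cref{BGK_coset} as the special case $k=1$, $\F=\F_2$ of our constant-rate theorem \cref{coset_maintheorem}, rather than re-running the argument of \cite{BGK+}. Concretely, I would instantiate \cref{coset_maintheorem} with the constant family $\F_\lambda:=\F_2$, which is a prime field of polynomially bounded size. I would take $\regR\in\F_2^{1\times n}$ to be the all-ones row; it is nonzero, so it has full rank as \cref{coset_maintheorem} requires. The parameter condition $4k+\Theta(\lambda)\le n$ then reads $n\ge 4+\Theta(\lambda)$, which holds for the choice $n=\Theta(\lambda)$ implicit in \cite{BGK+}.

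The key steps, in order, are as follows.

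\begin{enumerate}
\item Check that the experiment $\EXP^{\As_\lambda}_\lambda(m)$ of \cref{BGK_coset} coincides with that of \cref{coset_maintheorem} for $k=1$. The sampling of $[S,T]$ via \cref{sampling}, the choice of $(v,w)\leftarrow\co(S)\times\co(S^\perp)$, the derived $u$, the input $\Zs_\lambda(S,T,u,w,m+\regR v)$, and the verification $\widetilde z\in S^\perp+w$ are literally identical. The auxiliary objects $\{P_i\},\{D_i\},\{S_i^+\}$ are never used by the experiment, so ``ignoring $S^+$'' changes nothing.
\item Check that subspace-hiding in \cref{BGK_coset} refers to the same \cref{subspace_hiding}, specialized to $k=1$ (so $m'\in\F_2$). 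The hypothesis of \cref{coset_maintheorem} is therefore met verbatim.
\item Conclude $\TD(\EXP(m),\EXP(0))=\negl(\lambda)$ for every $m\in\F_2$. Taking $m=1$ gives the claim. If one prefers, the triangle inequality over the single coordinate in the proof of \cref{coset_maintheorem} degenerates to one term with loss factor $k\cdot|\F|=2$.
\end{enumerate}

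It is worth noting what the degenerate case looks like inside the proof of \cref{coset_maintheorem}. When $k=1$, $\regR_{-1}$ is the empty map, so $P_1=P$ and $S_1^+=T$. The ``leakage'' $(m+\regR v)[2:k]$ is empty, so hybrids $\hyb_3^d$ and $\hyb_3^e$ carry no $\ell$. The final bound becomes
\[
\frac{|\F_2|^{n/4}-1}{|\F_2|^{n/2}-1},
\]
which is negligible. So no step of that proof needs a new argument. The extractor \cref{coset_extractor} also applies, since $\regR(P_1)=\F_2=\LAspan{e_1}$ follows from successful sampling.

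The step I expect to require the most care is the bookkeeping in step 1. One must confirm that the successful-sampling conditions of \cref{sampling}, namely $\regR(\LAspan{z_{n/2+1},\dots,z_{3n/4}})=\F_2$ and trivial intersection with $S$, fail only with negligible probability. For $k=1$ this follows from \cref{prop:sampling_subspace} with $k+c\lambda\le n/3$. One must also confirm that conditioning on success does not alter the distribution relative to the original \cite{BGK+} sampling by more than a negligible amount. If a discrepancy in the subspace-hiding formulation did appear, the fallback would be to rerun the hybrids $\hyb_0$ through $\hyb_4$ and $\hyb_3^a$ through $\hyb_3^e$ directly with $k=1$, which is exactly the original \cite{BGK+} argument.
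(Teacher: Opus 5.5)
Your proposal is correct, but it is not the paper's route. The paper does not prove this statement at all: it imports it from \cite{BGK+} (Definition 6.1 and Theorem 6.5), restated ``with minor modifications to align with our format,'' and then presents \cref{coset_maintheorem} as its constant-rate upgrade. You run the logic the other way, obtaining the statement as the instance $k=1$, $\F=\F_2$, $\regR=(1,\dots,1)$ of \cref{coset_maintheorem}.

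This is legitimate for three reasons:
\begin{itemize}
\item The proof of \cref{coset_maintheorem} never invokes \cref{BGK_coset}, so there is no circularity.
\item That theorem only needs a prime field of polynomially bounded size and a full-rank $\regR$, with no MDS condition.
\item The experiment, the sampling of \cref{sampling}, and the subspace-hiding hypothesis (\cref{subspace_hiding}) in the restated statement are literally those of \cref{coset_maintheorem} at $k=1$.
\end{itemize}

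What your route buys is a self-contained proof of the statement exactly as restated here. That means it uses this paper's sampling procedure, including the conditioning on successful sampling. It also uses this paper's formulation of subspace hiding, in which the distinguisher additionally receives $S,T,u,w$. The reader therefore need not check that the ``minor modifications'' are harmless for the original argument. For the same reason, your closing worry about matching the original \cite{BGK+} sampling is unnecessary. The statement itself fixes \cref{sampling}, whose requirement $k+\lambda\le n/4$ already forces $n\ge 4(\lambda+1)$.

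The price is that correctness now rests on the paper's newer and more elaborate argument (notably the resampling chain $\cD_1$--$\cD_6$) rather than on a published proof. As you observe, though, at $k=1$ that argument collapses to essentially the \cite{BGK+} one: $P_1=T/S$, $S_1^+=T$, $v_{\img}=0$, the leakage is empty, and the purification over $T/S$ is exactly theirs.
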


They use Theorem~\ref{BGK_coset} as a black box to upgrade various cryptographic schemes to support certified deletion, including blind delegation,
secure software leasing, functional encryption, differing-inputs iO, and CCA-PKE. Suppose the underlying message is a string; they encrypt the message bitwise. For example, if $m \in \{0,1\}^k$ with some $k \in \N$, the construction is as follows: (i) For each $i \in [k]$, do the following. Sample $[S_i,T_i]$ uniformly at random over $\F_2^n$, with $n=\lambda$; Sample $(v_i,w_i), \leftarrow \co(S_i) \times \co(S_i^\perp)$ uniformly at random; Let $u_i \in \co(T_i)$ be the unique coset such that $S_i+v_i \subset T_i+u_i$. (ii) Construct
\begin{equation*}
    \Big\{\ket{{(S_i)}_{v_i,w_i}}\Big\}_{i \in [k]}, \Big\{\Zs_\lambda(S_i, T_i, u_i, w_i, m[i]+\regR v_i)\Big\}_{i \in [k]}.
\end{equation*}

To upgrade cryptographic schemes to support constant-rate certified deletion, we only need to invoke the construction and theorem in \cref{coset_maintheorem}.
The proof follows that of \cite{BGK+}, except that Theorem~\ref{BGK_coset} is replaced by our constant-rate version, Theorem~\ref{coset_maintheorem}. We explicitly describe our construction as follows:
Let $m \in \F^k$ with some $k \in \N$, the construction is as follows: (i) Sample $[S,T]$ uniformly at random over $\F^n$, with $n = \lambda+4k$. Next, sample $(v,w) \leftarrow \co(S) \times \co(S^\perp)$, uniformly at random. Let $u \in \co(T)$ be the unique coset such that $S+v \subset T+u$. (ii) Construct
\begin{equation*}
    \ket{{S}_{v,w}}, \Zs_\lambda(S, T, u, w, m+\regR v). 
\end{equation*}

%% file: 7_PublicVerifiable.tex
\ifnum\llncs=0
\section{Publicly Verifiable Certified Deletion}
\label{sec:PVD}
\fi

\ifnum\llncs=1
\section{Omitted Details from \cref{sec:PVD}}\label{sec:llncs_PVD}
\fi

In this section, we establish publicly verifiable versions of our constant-rate certified deletion theorems for BB84 (\cref{theorem:bb84_maintheorem}) and subspace coset states (\cref{coset_maintheorem}). 
Under the hardness of SIS, we incorporate public verifiability via constrained signatures~\cite{TCC:Tsabary17} (a.k.a., attribute-based signatures~\cite{RSA:MajPraRos11}) while preserving constant rate. 
Existing approaches to publicly verifiable deletion either incur a linear loss~\cite{Por,BKP,BKM+,KNY,EC:KatSel25} in rate or rely on strong assumptions~\cite{BGK+} such as indistinguishability obfuscation. 
In contrast, our construction achieves constant-rate publicly verifiable certified deletion under the standard SIS assumption.

We start by recalling the definition of constrained signatures.
%
%
\begin{definition}[Constrained Signatures]
\label{def:constrained-signatures}
A \emph{constrained signature scheme} (CS) with message space $\mathcal{M}$ and
constraint class $\mathcal{F} = \{ f : \mathcal{M} \to \{0,1\} \}$ is a tuple of four algorithms $(\Setup, \Constrain, \Sign, \Vrfy)$ defined as follows.

\begin{itemize}
    \item $\Setup(1^\lambda) \to (\vk, \msk)$:  
    The setup algorithm is a PPT algorithm that takes as input the security
    parameter $1^\lambda$ and outputs a master signing key $\msk$ and a
    verification key $\vk$.

    \item $\Constrain(\msk, f) \to \sigk_f$:  
    The constrain algorithm is a PPT algorithm that takes as input the master
    signing key $\msk$ and a constraint $f \in \mathcal{F}$, and outputs a
    constrained signing key $\sigk_f$.

    \item $\Sign(\sigk_f, m) \to \sigma$:  
    The signing algorithm is a PPT algorithm that takes as input a constrained
    signing key $\sigk_f$ and a message $m \in \mathcal{M}$, and outputs a
    signature $\sigma$.  We write $\Sign(\sigk_f,m;r)$ when its random tape is fixed to $r$.

    \item $\Vrfy(\vk, m, \sigma) \to \top/\bot$:  
    The verification algorithm is a deterministic polynomial-time algorithm
    that takes as input a verification key $\vk$, a message $m \in \mathcal{M}$,
    and a signature $\sigma$, and outputs $\top$ or $\bot$.
\end{itemize}
\end{definition}

\paragraph{Correctness.}
For any $m \in \mathcal{M}$ and $f \in \mathcal{F}$ such that $f(m) = 1$, we have
\[
\Pr\left[
    \Vrfy(\vk, m, \sigma) = \top :
    \begin{array}{l}
        (\vk, \msk) \leftarrow \Setup(1^\lambda) \\
        \sigk_f \leftarrow \Constrain(\msk, f) \\
        \sigma \leftarrow \Sign(\sigk_f, m)
    \end{array}
\right]
\ge 1 - \negl(\lambda).
\]

\begin{definition}[Function privacy]\label[definition]{def:function_privacy}
For any $m \in \mathcal{M}$ and $f_0, f_1 \in \mathcal{F}$ such that
$f_0(m) = f_1(m) = 1$, where
\[
(\vk, \msk) \in \Setup(1^\lambda), \quad
\sigk_{f_0} \in \Constrain(\msk, f_0), \quad
\sigk_{f_1} \in \Constrain(\msk, f_1),
\]
we require that the following distributions are statistically
indistinguishable:
\[
\sigma_0 \leftarrow \Sign(\sigk_{f_0}, m) 
\quad
\approx_s 
\quad 
\sigma_1 \leftarrow \Sign(\sigk_{f_1}, m).
\]
\end{definition}

\begin{definition}[Selective single-key security]\label[definition]{def:single_key_unforge}
We say that a constrained signature scheme satisfies selective single-key security if for any stateful QPT adversary
$\mathcal{A}$, we have
\[
\Pr\left[
    \Vrfy(\vk, m, \sigma) = \top \;\land\; f(m) = 0 :
    \begin{array}{l}
        f \leftarrow \mathcal{A}(1^\lambda) \\
        (\vk, \msk) \leftarrow \Setup(1^\lambda) \\
        \sigk_f \leftarrow \Constrain(\msk, f) \\
        (m, \sigma) \leftarrow \mathcal{A}(\vk, \sigk_f)
    \end{array}
    \right]
\le \negl(\lambda).
\]
\end{definition}

\begin{theorem}[Adapted from \cite{EC:KitMorYam25,EPRINT:KLYY25}]
\label{lem:coherent-cs-from-sis}
Assuming the SIS assumption with a subexponential modulus-to-target-norm-bound ratio, there exists a constrained signature scheme that achieves selective single-key security~(\cref{def:single_key_unforge}) and function privacy~(\cref{def:function_privacy}), and supports constraint classes represented by Boolean circuits of fixed but arbitrary input length 
$\ell$ and depth $d$.
Under the SIS assumption with a polynomial modulus, we obtain a scheme with the same security guarantees, where the supported function class consists of circuits of depth $O(\log(\secp))$, i.e., the $\mathsf{NC}^1$ circuit class.\footnote{$\mathsf{NC}^1$ circuits can be represented by polynomial-length constant-width branching programs via Barrington's theorem. Thus, for this special case, we may replace the generic circuit-evaluation procedure used in the construction with the branching-program evaluation algorithm of~\cite{AC:GorVin15}, which incurs only polynomial norm growth. Consequently, the relevant lattice parameters can be chosen polynomially bounded, yielding an instantiation with polynomial modulus.}
In both cases, the size of the public parameters is
\[
|\vk| = \poly(\secp, \ell,  d), 
\quad 
|\sk_f| = \poly(\secp, \ell, d), 
\quad 
|\sigma| = \poly(\secp, d).
\]
\end{theorem}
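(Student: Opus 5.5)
The plan is to instantiate the constrained signature scheme with the lattice-based homomorphic-evaluation paradigm (GSW/BGG$^+$-style key-homomorphic matrices), following the approach of \cite{EC:KitMorYam25,EPRINT:KLYY25}. Constraints $f$ of input length $\ell$ and depth $d$ are handled through a universal circuit $U(f,m):=f(m)$, whose depth is $O(d)$ in the $m$-part and whose input is the description of $f$.

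\textbf{Construction.} $\Setup$ samples a matrix $\mathbf{A}\in\mathbb{Z}_q^{n\times w}$ together with a trapdoor $T_{\mathbf{A}}$ (this is $\msk$). It also samples uniform matrices $\mathbf{B}_1,\dots,\mathbf{B}_L$, one per bit of a constraint description, and a uniform target $\mathbf{v}$, and sets $\vk:=(\mathbf{A},\{\mathbf{B}_j\},\mathbf{v})$. $\Constrain(\msk,f)$ uses $T_{\mathbf{A}}$ to sample short Gaussian matrices $\mathbf{R}_j$ with $\mathbf{A}\mathbf{R}_j=\mathbf{B}_j-f_j\mathbf{G}$, where $\mathbf{G}$ is the gadget matrix, and outputs $\sigk_f:=(f,\{\mathbf{R}_j\})$. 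To sign $m$, one computes the evaluated matrix $\mathbf{B}_{U_m}:=\mathsf{Eval}(\{\mathbf{B}_j\},U(\cdot,m))$ together with its short evaluated preimage $\mathbf{R}_{f,m}$, which satisfies $\mathbf{B}_{U_m}=\mathbf{A}\mathbf{R}_{f,m}+f(m)\mathbf{G}$. When $f(m)=1$, $\mathbf{R}_{f,m}$ is a $\mathbf{G}$-trapdoor for $\mathbf{F}_m:=[\mathbf{A}\,|\,\mathbf{B}_{U_m}]$. The signer uses it to sample a discrete Gaussian $\sigma$ with $\mathbf{F}_m\sigma=\mathbf{v}$. $\Vrfy$ recomputes $\mathbf{F}_m$ from $\vk$ alone and checks $\mathbf{F}_m\sigma=\mathbf{v}$ and that $\|\sigma\|$ is below a threshold. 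Correctness follows from the standard trapdoor-extension lemma.

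\textbf{Sizes.} $|\sigma|$ is $2w\log q$, which is independent of $\ell$ and depends only on $\lambda$ and, through $\log q$, on $d$. Both $\vk$ and $\sk_f$ have size $\poly(\lambda,\ell,d)$.

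\textbf{Function privacy.} Once the trapdoor quality exceeds the smoothing parameter, the Gaussian preimage distribution over $\{\sigma:\mathbf{F}_m\sigma=\mathbf{v}\}$ is statistically independent of which trapdoor was used. So for $f_0(m)=f_1(m)=1$, signing with $\mathbf{R}_{f_0,m}$ and signing with $\mathbf{R}_{f_1,m}$ are statistically close, provided the Gaussian width is set larger than the worst-case norm bound of $\mathbf{R}_{f,m}$.

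\textbf{Selective single-key security.} Since $f$ is announced before setup, the reduction receives an SIS instance $\mathbf{A}$ (without a trapdoor). It samples short $\mathbf{R}_j$ itself and programs $\mathbf{B}_j:=\mathbf{A}\mathbf{R}_j+f_j\mathbf{G}$. It also sets $\mathbf{v}:=\mathbf{A}\mathbf{e}$ for a short Gaussian $\mathbf{e}$, or else takes $\mathbf{v}$ from an inhomogeneous SIS instance. By leftover hash and the Gaussian preimage lemma, $(\vk,\sigk_f)$ is then statistically distributed as in the real game.

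Now suppose the adversary forges $(m,\sigma=(\sigma_1,\sigma_2))$ with $f(m)=0$. Then $\mathbf{B}_{U_m}=\mathbf{A}\mathbf{R}_{f,m}$, so $\mathbf{A}(\sigma_1+\mathbf{R}_{f,m}\sigma_2-\mathbf{e})=0$. This is a short vector in the kernel of $\mathbf{A}$, and it is nonzero with overwhelming probability because $\mathbf{e}$ retains min-entropy given the adversary's view. Hence we obtain an SIS solution.

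\textbf{Main obstacle: noise growth.} The step I expect to be hardest is the parameter bookkeeping for the evaluated preimage $\mathbf{R}_{f,m}$. Its norm grows roughly like $w^{O(d)}$ under generic circuit evaluation, which forces the modulus-to-norm ratio to be subexponential in order to support arbitrary polynomial depth $d$; this gives the first claim. For the polynomial-modulus claim, I would first convert an $\mathsf{NC}^1$ constraint (with $U$) into a width-5 branching program via Barrington's theorem. I would then replace $\mathsf{Eval}$ by the \cite{AC:GorVin15} branching-program evaluation, whose norm growth is only polynomial in the program length. This keeps all Gaussian widths, and hence $q$, polynomially bounded, while the function-privacy and SIS arguments above go through unchanged.
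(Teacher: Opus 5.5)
Your proposal is correct and follows essentially the route the paper relies on. The paper gives no self-contained proof: it cites the SIS-based constrained signature of \cite{EC:KitMorYam25,EPRINT:KLYY25}, which publicly evaluates the universal circuit on encodings of $f$, gets function privacy statistically from Gaussian trapdoor sampling, and reduces selective single-key security to SIS, and its footnote adds exactly your last step (Barrington plus the branching-program evaluation of \cite{AC:GorVin15}) to keep the modulus polynomial for $\mathsf{NC}^1$.

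One small precision: under generic evaluation, what is forced to be subexponential in the lattice dimension is the SIS norm bound $\beta\approx w^{O(d)}$ (and hence $q$), while $q/\beta$ can stay polynomial. The ``modulus-to-norm ratio'' wording you inherited from the statement is therefore loose, but it does not affect your argument.
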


Here, we introduce a new security definition for a classical \emph{deterministic} signature that we call one-time unforgeability for subset states.
We then show that such a signature scheme can be constructed from constrained signatures.
\begin{definition}[One-time unforgeability for subset states]
\label[definition]{def:subset-unforgeability}
    Let $\Sig=(\Gen,\Sign,\Vrfy)$ be a classical deterministic signature scheme, $\lambda$ be the security parameter, $\cM$ be the message space. 
We also consider a class $\cT$ of a pair of a subsets of the message space and coefficients. Namely, for each $(T,\alpha) \in \cT$, we have $T\subseteq \cM$ and $\alpha = \{ \alpha_{m}\in \mathbb{C} \}_{m\in T}$.
For $(T,\alpha)\in \cT$, we require the following. 
\begin{itemize}
    \item There is an efficient and canonical representation of each $(T,\alpha)\in \cT$. The description size of $(T,\alpha)$ should be $\poly(\secp, \log|\cM|)$. 
    We will slightly abuse the notation and identify the set $T$ with its description.
    \item For each $(T,\alpha)\in\cT$, 
    we have $\sum_{m\in T} |\alpha_{m}|^2=1$. 
    Furthermore, given the description of $(T,\alpha)$, 
    one can efficiently generate the state 
    \[
    \sum_{m\in T}\alpha_{m} \ket{m}.
    \]
    \item For each $(T,\alpha)\in\cT$, there is an efficient circuit $f_T$ that takes as input $m\in\cM$ and outputs $1$ if $m\in T$ and $0$ otherwise.
    The size of $f_T$ is $\poly(\secp, \log|\cM|)$. 
\end{itemize}
    We define the experiment $\EXP(1^\lambda)$
    between an adversary and the challenger as follows.
    \begin{enumerate}
        \item The adversary $\cA$ sends $(T,\alpha) \in \cT$ to the challenger.
        \item The challenger runs $(\vk,\sigk) \leftarrow \Gen(1^\lambda)$.
        The challenger generates a quantum state $\ket{\psi}$ by applying the map
        \begin{equation*}
            \ket{m}\ket{0\cdots 0} \mapsto \ket{m}\ket{\Sign(\sigk,m)}
        \end{equation*}
        to the state
        \begin{equation*}
            \sum_{m\in T}\alpha_{m}\ket{m}\ket{0 \cdots 0}.
        \end{equation*}
        The challenger sends $(\vk,\ket{\psi})$ to $\cA$.
        \item The adversary $\cA$ outputs a pair of messages and signature $(m',\sigma')$ to the challenger.
        \item The experiment outputs 1 if $m' \not\in T$ and $\Vrfy(\vk,m',\sigma')=1$. Otherwise, outputs 0.
    \end{enumerate}
    We say $\Sig$ is one-time unforgeable for subset states with respect to  
    $\cT$ if, for any QPT adversary $\cA$, it holds that
    \begin{equation*}
        \Pr[\EXP(1^\lambda)=1]=\negl(\lambda).
    \end{equation*}
\end{definition}

\paragraph{Construction:} We construct a classical \emph{deterministic} signature scheme $\Sig=(\Gen,\Sign,\Vrfy)$ satisfying one-time unforgeability for subset states with respect to $\cT$,
using a constrained signature $\CS.\Sig=(\CS.\Setup,\CS.\Constrain,\allowbreak \CS.\Sign,\CS.\Vrfy)$
that supports a function class of $\cF:=\{f_T\}_{T\in \cT}$ with message space $\cM$ and randomness from $\cR$. Let $\cH$ be a family of 4-wise independent hash functions with domain $\cM$ and range $\cR$. 

\begin{itemize}
\item $\Gen(1^\lambda)$: 
    \begin{itemize}
        \item Run $(\CS.\vk, \CS.\msk) \leftarrow \CS.\Setup(1^\lambda)$.
        \item Let $f_{=1}$ be a constant function that always outputs $1$. Run $\CS.\sigk_{f_{=1}} \leftarrow \CS.\Constrain(\CS.\msk,f_{=1})$.
        \item Sample a random hash $\hash$ from a family of 4-wise independent functions $\hash \sample \cH$.
        \item Output the verification key $\vk=\CS.\vk$ and the signing key $\sigk=(\hash,\CS.\sigk_{f_{=1}})$.
    \end{itemize}
\item $\Sign(\sigk,m)$:
Parse the input as 
    $\sigk = (\hash, \CS.\sigk_{f_{=1}})$
    and proceed as follows.
    \begin{itemize}
        \item Run $\CS.\sigma \leftarrow \CS.\Sign(\CS.\sigk_{f_{=1}},m;\hash(m))$,
        where $\hash(m)$ is used as the randomness for signing.
        \item 
        Output $\sigma = \CS.\sigma$.
    \end{itemize}
\item $\Vrfy(\vk,m, \sigma)$:
Parse the input as 
    $\vk =\CS.\vk$ and $\sigma = \CS.\sigma$ and proceed as follows.
    \begin{itemize}
        \item Output $\CS.\Vrfy(\CS.\vk,m,\CS.\sigma)$.
    \end{itemize}
\end{itemize}

The correctness of $\Sig$ follows from that of $\CS.\Sig$. 
Furthermore, its signing algorithm is deterministic. 
We prove the unforgeability theorem below.
\begin{theorem}\label{th:CS-to-Sig}
    Let $\CS.\Sig$ be a constrained signature scheme with selective single-key security (\cref{def:single_key_unforge}) and function privacy (\cref{def:function_privacy}), then $\Sig$ is one-time unforgeable for subset states with respect to $\cT$ (\cref{def:subset-unforgeability}).
\end{theorem}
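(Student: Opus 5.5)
We prove \cref{th:CS-to-Sig} by a short hybrid argument that moves from the real experiment to one in which the challenger signs with a constrained key $\CS.\sigk_{f_T}$ instead of $\CS.\sigk_{f_{=1}}$; in that final hybrid a winning adversary is exactly a forger against selective single-key security. The hybrids are as follows.

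\begin{itemize}
\item[] $\hyb_0$: the real experiment $\EXP(1^\lambda)$. The signed state is $\sum_{m\in T}\alpha_m\ket{m}\ket{\CS.\Sign(\CS.\sigk_{f_{=1}},m;\hash(m))}$ with $\hash$ 4-wise independent.
\item[] $\hyb_1$: replace $\hash$ by a truly random function $H:\cM\to\cR$.
\item[] $\hyb_2$: generate $\CS.\sigk_{f_T}\leftarrow\CS.\Constrain(\CS.\msk,f_T)$ and coherently sign with $\CS.\Sign(\CS.\sigk_{f_T},m;H(m))$.
\item[] $\hyb_3$: replace $H$ back by a 4-wise independent $\hash$.
\end{itemize}

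We argue the winning probabilities (output $1$ iff $m'\notin T$ and $\Vrfy$ accepts) are negligibly close across consecutive hybrids.

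For $\hyb_0\to\hyb_1$ and $\hyb_2\to\hyb_3$, view the challenger's state preparation as an algorithm querying the oracle. It applies $\ket{m}\ket{0}\mapsto\ket{m}\ket{\hash(m)}$, then coherently signs, then uncomputes $\hash(m)$. That is two quantum queries, so by \cref{lemma:t_wise_to_random} with $q=2$ a 4-wise independent function is perfectly indistinguishable from a random oracle. The remainder of the experiment (adversary and check) uses no further queries, so the winning probabilities are identical.

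For $\hyb_1\to\hyb_2$ we use \cref{lemma:oracle_diff}. Fix $(\vk,\msk,\sigk_{f_{=1}},\sigk_{f_T})$. Define oracles $\cO(m):=\CS.\Sign(\sigk_{f_{=1}},m;H(m))$ and $\cO'(m):=\CS.\Sign(\sigk_{f_T},m;H(m))$, restricted to $m\in T$ (outside $T$, let both output a fixed symbol; this is harmless since the state has amplitude only on $T$). Because $H$ is a random function, for each $m$ the values $\cO(m)$ and $\cO'(m)$ are independent samples from $\Sign(\sigk_{f_{=1}},m)$ and $\Sign(\sigk_{f_T},m)$ respectively. For $m\in T$ we have $f_{=1}(m)=f_T(m)=1$, so function privacy (\cref{def:function_privacy}) gives $\TD\le\epsilon=\negl(\lambda)$ pointwise.

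The challenger prepares the signed state with a single query to $\cO$ or $\cO'$: it writes the signature into a fresh register directly, with no uncomputation needed since the signature register stays. The whole experiment is therefore a one-query algorithm, and \cref{lemma:oracle_diff} bounds the difference by $\sqrt{c\,\epsilon}=\negl(\lambda)$. The subtle point is that function privacy must be statistical and hold for all keys in the support, which \cref{def:function_privacy} provides, so the bound holds even though the key is known to the distinguisher.

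Finally, in $\hyb_3$ we reduce to selective single-key security (\cref{def:single_key_unforge}). The reduction receives $(T,\alpha)$ from $\cA$ and submits $f_T$; its size is $\poly(\secp,\log|\cM|)$, so it lies in the supported class. It receives $(\vk,\sigk_{f_T})$, samples $\hash$, prepares $\sum_{m\in T}\alpha_m\ket{m}$ efficiently by the assumption on $\cT$, coherently signs with $\sigk_{f_T}$ and randomness $\hash(m)$, and forwards $(\vk,\ket{\psi})$ to $\cA$. Any output $(m',\sigma')$ with $m'\notin T$ and $\Vrfy=\top$ satisfies $f_T(m')=0$, so it is a forgery; hence $\Pr[\hyb_3=1]=\negl(\lambda)$.

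The main obstacle is the middle step, lifting per-message statistical function privacy to a superposition over exponentially many messages. The random-oracle detour is what makes \cref{lemma:oracle_diff} applicable, since it requires independent per-input randomness. It also matters that the query count is constant: one query in $\hyb_1\to\hyb_2$, and two, including uncomputation, in the 4-wise steps.
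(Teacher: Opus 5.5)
Your proposal is correct and follows the paper's proof essentially step for step. It uses the same four hybrids and \cref{lemma:t_wise_to_random} with $q=2$ (compute and uncompute) for the outer transitions, \cref{lemma:oracle_diff} with a single query together with function privacy for the middle transition, and the same reduction to selective single-key security. Your one addition, making both oracles agree outside $T$ so that the pointwise bound holds for every input, makes explicit a detail the paper leaves implicit.
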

\begin{proof}
To prove the theorem, we define a sequence of hybrid distributions, corresponding to a fixed adversary $\cA$.
The proof proceeds by showing that the winning probabilities of $\cA$ in $\hyb_0$ and $\hyb_3$ differ only negligibly, and that the winning probability in $\hyb_3$ is negligible.
    \begin{itemize}
        \item $\hyb_0:$ This is the distribution $\EXP(1^\lambda)$ described above.
        Recall that the output of the hybrid is $1$ if and only if the adversary outputs $m'$ and $\CS.\sigma'$ such that 
        $m'\not\in T $
        and $\CS.\Vrfy(\CS.\vk, m', \CS.\sigma')=\top$.
        
        \item $\hyb_1:$ This is the same as $\hyb_0$ except that $\hash(m)$ is replaced with a truly random function $H(m)$.
        \item $\hyb_2:$ This is the same as $\hyb_1$ except that at Step 2 in the experiment the challenger does the following. It defines $f_T$ such that $f_T(m)=1$ if and only if $m \in T$. It then runs 
        $(\CS.\vk, \CS.\msk) \leftarrow \CS.\Setup(1^\lambda)$ and 
        $\CS.\sigk_{f_T} \leftarrow \CS.\Constrain(\CS.\msk,f_T)$. It then generates a quantum state $\ket{\psi}$ by applying the map
        \begin{equation*}
            \ket{m}\ket{0\cdots 0} \mapsto \ket{m}\ket{\CS.\Sign(\CS.\sigk_{f_T},m;H(m))}
        \end{equation*}
        to the state
        \begin{equation*}
            \sum_{m\in T}\alpha_{m}\ket{m}\ket{0 \cdots 0}.
        \end{equation*}
        
        It finally sends $(\vk = \CS.\vk,\ket{\psi})$ to $\cA$.
        \item $\hyb_3:$ This is the same as $\hyb_2$ except that $H(m)$ is replaced with a 4-wise independent hash function $\hash(m)$, where $\hash \leftarrow \cH$.
    \end{itemize}
    First, we show that $\Pr[\hyb_0=1]=\Pr[\hyb_1=1]$. 
    By a similar argument, 
    $\Pr[\hyb_2=1]=\Pr[\hyb_3=1]$ follows. Let $\cO$ be either a 4-wise independent hash function $\hash \leftarrow \cH$ or a truly random function $H$. Consider a quantum oracle algorithm that simulates hybrid $\hyb_0$ and $\hyb_1$, where the oracle is set to either $\hash(\cdot)$ or $H(\cdot)$. The quantum oracle algorithm queries the oracle twice. Indeed, at Step 2, the algorithm works as follows:
    \begin{align*}
        \sum_{m \in T}\alpha_{m}\ket{m}\ket{0\cdots 0}      
        &\xmapsto{\text{Query } \cO}
        \sum_{m \in T}\alpha_{m} \ket{m}\ket{0\cdots 0}\ket{\cO(m)} \\
        &\xmapsto{\text{Run } \CS.\Sign} 
        \sum_{m \in T}\alpha_{m}
        \ket{m}\ket{\CS.\Sign(\CS.\sigk_{f},m;\cO(m))}\ket{\cO(m)}   \\
        &\xmapsto{\text{Query } \cO} \sum_{m \in T}\alpha_{m}\ket{m}
        \ket{\CS.\Sign(\CS.\sigk_{f},m;\cO(m))}.
    \end{align*}
    Hence, \cref{lemma:t_wise_to_random} implies that $\Pr[\hyb_0=1]=\Pr[\hyb_1=1]$.
    
    Next, we show that $\left|\Pr[\hyb_1=1]-\Pr[\hyb_2=1]\right| = \negl(\lambda)$. Let $\ket{\psi_1}$ and $\ket{\psi_2}$ be the states sent to the adversary at Step 2 in $\hyb_1$ and $\hyb_2$, respectively.
    By the function privacy (\cref{def:function_privacy}) of constrained signature, for any honestly generated $\vk, \msk, \CS.\sigk_{f_{=1}}, \CS.\sigk_{f_T}$ and any $m \in T$, the following distributions are negligibly close: 
    \begin{equation*}
        \CS.\sigma_m \sample \CS.\Sign(\CS.\sigk_{f_{=1}},m)
        \approx_s
        \CS.\sigma_{T,m}
        \sample \CS.\Sign(\CS.\sigk_{f_T},m)
    \end{equation*}
    Let $\cO(\cdot)$ and $\cO'(\cdot)$ be the oracles that output 
    $\CS.\sigma_m$ and $\CS.\sigma_{T,m}$ on input $m$, respectively. 
    Here, each of $\CS.\sigma_m$ and $\CS.\sigma_{T,m}$ is distributed as above.
    We can view $\ket{\psi_1}$ (resp., $\ket{\psi_2}$) as an answer 
    from the oracle $\cO$ (resp., $\cO'$) to the single query 
    $\sum_{m\in T}\alpha_{m}\ket{m}\ket{0\ldots0}$.
    \cref{lemma:oracle_diff} implies that $\ket{\psi_1}$ and $\ket{\psi_2}$ are indistinguishable conditioned on fixed $\vk, \msk, \CS.\sigk_{f_{=1}}, \CS.\sigk_{f_T}$. Therefore, the indistinguishability holds when averaging over $\vk, \msk, \CS.\sigk_{f_{=1}}, \CS.\sigk_{f_T}$. Since $\hyb_i$'s output is determined by $(\vk,T, \cA,\ket{\psi_i})$ for $i\in \{1,2\}$, it holds that $\left|\Pr[\hyb_1=1]-\Pr[\hyb_2=1]\right| = \negl(\lambda)$.
    From the discussion so far, we have 
    $|\Pr[\hyb_0=1]-\Pr[\hyb_3=1]| = \negl(\lambda)$. 
        
    Finally, it is sufficient to show that $\Pr[\hyb_3 = 1]=\negl(\lambda)$. Suppose (by contradiction) that $\Pr[\hyb_3 = 1]$ is non-negligible. We then construct a QPT reduction $\cB$ that breaks the selective single-key security (\cref{def:single_key_unforge}) of the constrained signature scheme. 
    The reduction $\cB$ interacting with the adversary $\cA$ in $\hyb_3$ proceeds as follows:
    \begin{enumerate}
        \item $\cA$ sends $(T,\alpha)\in\cT$ to $\cB$. $\cB$ then submits $f_T$ to its challenger.
        \item $\cB$ is then given 
        the verification key $\CS.\vk$
        and the signing key $\CS.\sigk_{f_T}$
        from its challenger, where $\CS.\sigk_{f_T} \leftarrow \CS.\Constrain(\msk,f_T)$.
        \item 
        $\cB$ samples $\hash \leftarrow \cH$. 
        It then computes 
        a quantum state $\ket{\psi}$ by applying the map
        \begin{equation*}
            \ket{m}\ket{0\cdots 0} \mapsto \ket{m}\ket{\CS.\Sign(\CS.\sigk_{f_T},m;\hash(m))}
        \end{equation*}
        to the state
        $
            \sum_{m\in T}\alpha_{m}\ket{m}\ket{0 \cdots 0},
        $
        which can be efficiently generated from the description of $(T,\alpha)$.
        It then sends $(\CS.\vk,\ket{\psi})$ to $\cA$.
        \item $\cA$ outputs a pair of message and signature $m'$ and $\CS.\sigma'$ to $\cB$.
        $\cB$ then outputs $(m',\sigma' := \CS.\sigma')$.
    \end{enumerate}
    Since the reduction simulates the experiment $\hyb_3$ perfectly, its probability of breaking selective single-key security is equal to $\Pr[\hyb_3 = 1]$, which is non-negligible, yielding a contradiction.
\end{proof}

Here, we discuss the instantiations. 
We are interested in the following two cases. 
\paragraph{Instantiation for the BB84 states} To capture the case of BB84 states, we instantiate the message space $\cM$ and $\cT$ in Definition \ref{def:subset-unforgeability} with $\cM_\mathrm{BB} := \F^{2n}$ and $\cT_{\mathrm{BB}}$ defined as
follows:  
\ifnum\llncs=1
\begin{eqnarray*}
\cT_{\mathrm{BB}} 
&:=&
\{ (T^{x,\theta}, \alpha^{x,\theta}) \}_{%
\substack{
x \in \F^{2n},\theta \in \bin^{2n},\\
h(\theta)=n
}}
\\ 
T^{x,\theta} 
&:=&
\{\, m \in \F^{2n} \mid 
m[i] = x[i] 
\text{ for all } i \in \theta_1
\,\},
\qquad 
\alpha^{x,\theta}_m 
:= 
\frac{
\omega_{|\F|}^{-\langle x \odot \bar{\theta}, m \rangle}
}{ \sqrt{ |\F|^{n} } },
\end{eqnarray*}
\fi
%
%
\ifnum\llncs=0
\[
\cT_{\mathrm{BB}} 
:=
\{ (T^{x,\theta}, \alpha^{x,\theta}) \}_{%
\substack{
x \in \F^{2n},\theta \in \bin^{2n},\\
h(\theta)=n
}}
\quad 
T^{x,\theta} 
:=
\{\, m \in \F^{2n} \mid 
m[i] = x[i] 
\text{ for all } i \in \theta_1
\,\},
\quad 
\alpha^{x,\theta}_m 
:= 
\frac{
\omega_{|\F|}^{-\langle x \odot \bar{\theta}, m \rangle}
}{ \sqrt{ |\F|^{n} } },
\]
\fi
where $\odot$ denotes the component-wise product over $\F^{2n}$, $\bar{\theta}$ denotes $\theta$ with all the bits flipped, 
and $h(\theta)$ is the Hamming weight of the string $\theta$.
We can check that each $(T^{x,\theta},\alpha^{x,\theta})$ has efficient description by $x$ and $\theta$, the generation of the state 
\[
\sum_{m\in T^{x,\theta}}\alpha^{x,\theta}_m \ket{m}
=
\sum_{m\in T^{x,\theta}}\frac{
\omega_{|\F|}^{-\langle x \odot \bar{\theta}, m \rangle}
}{ \sqrt{ |\F|^{n} } }\ket{m}
= 
\QFT^{-\bar{\theta}}\ket{x}
= 
(\QFT^{-1})^{\otimes 2n}\QFT^{\theta}\ket{x}
\] 
is possible efficiently,\footnote{If we were to use the standard BB84 encoding, the roles of the standard basis and the QFT basis would be reversed in our application.
To keep the roles of the bases consistent with our main theorem—namely, using the standard basis for extracting a random pad and the QFT basis for deletion—we instead slightly modify the definition of the quantum state specified by $(x,\theta)$.}
and the membership checking of $T^{x,\theta}$ is efficient.
It is easy to see that the circuit computing $f_{T^{x,\theta}}$ can be implemented in $\mathsf{NC}^1$: We check whether $m_i = x_i$ or not in parallel, which leads to the depth of $O(\log n + \log |\F|) = O(\log \secp) $.
Therefore, by \cref{lem:coherent-cs-from-sis},
the constrained signature supporting $\{f_{T^{x,\theta}}\}_{x,\theta}$ can be instantiated assuming the SIS assumption with polynomial modulus.
By \cref{th:CS-to-Sig}, we obtain a classical deterministic signature scheme satisfying one-time unforgeability for subset states with respect to 
$\cT_{\mathrm{BB}} $
assuming the same assumption.
By \cref{lem:coherent-cs-from-sis} and from the fact that the depth of $f_T$ is $O(\log \secp)$, the size of the signature is fixed polynomial $|\sigma| =  \poly( \secp)$.\footnote{
The size of the signature has polylogarithmic dependence on $n$.
However, since we can bound $\poly(\log n)$ by a fixed polynomial in $\secp$ (say, $\secp$) for polynomially bounded $n$, we ignore this dependency.
}

\paragraph{Instantiation for the coset states} To capture the case of coset states,
we instantiate the message space $\cM$ and $\cT$ in 
Definition \ref{def:subset-unforgeability} with $\cM_\mathrm{CoS} := \F^n$ and  $\cT_{\mathrm{CoS}}$ defined as
follows: 
\[
\cT_{\mathrm{CoS}} 
:=
\{ (T^{v,w,S}, \alpha^{v,w,S}) \}_{v,w\in \F^n, S \leq \F^n },
\quad 
T^{v,w,S} 
:=
S + v,
~
\quad 
\alpha^{v,w,S}_m 
:= 
\frac{
\omega_{|\F|}^{\langle m,w \rangle}
}{ \sqrt{ |S| } },
\]
where recall that $S\leq \F^n$ means that $S$ is a subspace of $\F^n$.
We can check that each $(T^{v,w,S}, \alpha^{v,w,S})$ has efficient description by $v$, $w$, and $S$, where $S$ is represented by the basis of the subspace, 
the state $\sum_{m\in T^{v,w,S}}\alpha^{v,w,S}_m \ket{m} = \ket{S_{v,w}}$ can be efficiently generated (\cref{coset_stateconstruct}),
and the membership checking of $T^{v,w,S}$ is efficient.
One can also see that the circuit computing $f_{T^{v,w,S}}$ can be implemented in
$\mathsf{NC}^1$.
To check whether $m \in S+v$, it suffices to test whether $m - v \in S$.
This can be done by taking inner products between $m - v$ and all basis vectors
of $S^\perp$ and checking whether all of them evaluate to $0$.
Since inner products over $\F$ can be computed in $\mathsf{NC}^1$, this
membership test—and hence the computation of $f_{T^{v,w,S}}$—can be carried out
in $\mathsf{NC}^1$. 
Therefore, by \cref{lem:coherent-cs-from-sis},
the constrained signature supporting $\{f_{T^{v,w,S}}\}_{v,w,S}$ can be instantiated assuming the SIS assumption with polynomial modulus.
By \cref{th:CS-to-Sig}, we obtain a classical deterministic signature scheme satisfying one-time unforgeability for subset states with respect to 
$\cT_{\mathrm{CoS}} $
assuming the same assumption.
By \cref{lem:coherent-cs-from-sis} and from the fact that the depth of $f_T$ is $O(\log \secp)$, the size of the signature is fixed polynomial $|\sigma| =  \poly( \secp)$.
\ifnum\llncs=0
The following theorem is our main result of this section. 
\fi
\ifnum\llncs=1
The following theorem is the restatement of \cref{th:pvd-bb}.
\fi
\begin{theorem}[Constant-Rate Publicly Verifiable Certified Deletion for BB84 States]
\ifnum\llncs=0
\label{th:pvd-bb}
\fi
    Let $\Sig=(\Gen,\Sign,\Vrfy)$ be a one-time unforgeable signature scheme for subset states (\cref{def:subset-unforgeability}) with respect to $\cT_{\mathrm {BB}}$. 
    Let $\lambda \in \N$ be the security parameter. Fix $k,n \in \N$ 
    such that $3k+\Theta(\lambda) \leq 2n $. Consider a family of finite prime fields $\{\F_\lambda\}_{\lambda \in \N}$ with polynomial bounded field sizes. Let $\regR \in \F^{k \times n}_\secp$ be a matrix that is both MDS and efficiently decodable with error
    $\left\lfloor \frac{n-k}{2} \right\rfloor$
    \ifnum\llncs=0(\cref{prop_extract})\fi. Let $m \in \F^k_\secp$ be a classical message and $\Zs = \{\Zs_\lambda(\cdot)\}_{\lambda \in \N}$ be a quantum operation with classical input. Define the following distribution $\left\{\EXP_\lambda^{\As_\lambda}(m) \right\}_{\lambda \in \N }$ over quantum states, obtained by running $\As_\lambda$ as follows.
    \begin{enumerate}
        \item Sample $x \leftarrow \F^{2n}_\secp$, a balanced $\theta \leftarrow \{0,1\}^{2n}$, and $r \leftarrow \F^k_\secp$. Run $(\vk,\sigk) \leftarrow \Gen(1^\lambda)$. Generate a quantum state $\ket{\psi}$ by applying a map  $\ket{x}\ket{0\cdots 0} \mapsto \ket{x}\ket{\Sign(\sigk,x )}$ to $(\QFT^{-1})^{\otimes 2n}\QFT^{\theta}\ket{x}$.  Initialize $\As_\lambda$ with
        \begin{equation*}
            \ket{\psi}, m+\regR x_{\theta_0}+r, \vk, \Zs_\lambda(\theta, r, \sigk). 
        \end{equation*}
        \item $\As_\lambda$'s output is parsed as a string $x' \in \F^{2n}_\secp$ and $\sigma' $ and a residual state $\rho$. If $\Vrfy(\vk,x', \sigma') = 1$, then output $\rho$, and otherwise output a special symbol $\bot$.
    \end{enumerate}
    If $\Zs$ is semantically-secure against any non-uniform QPT $\{\As_\lambda\}_{\lambda \in \N}$ (\cref{semantic}), then for all non-uniform QPT $\{\As_\lambda\}_{\lambda \in \N}$ and for all $m \in \F^k_\secp$,
    \begin{equation*}
        \TD\left(\EXP_\lambda^{\As_\lambda}(m), \EXP_\lambda^{\As_\lambda}(0^k) \right) = \negl(\lambda).
    \end{equation*}
\end{theorem}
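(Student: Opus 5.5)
The plan is to reduce the statement to \cref{theorem:bb84_maintheorem} by a hybrid argument. The only new ingredients are the signature register and the changed verification predicate $\Vrfy(\vk,x',\sigma')=1$.

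The first step is to relate the publicly verifiable check to the private check of the main theorem. The state $(\QFT^{-1})^{\otimes 2n}\QFT^{\theta}\ket{x}=\sum_{m\in T^{x,\theta}}\alpha^{x,\theta}_m\ket{m}$ is exactly the subset state for $(T^{x,\theta},\alpha^{x,\theta})\in\cT_{\mathrm{BB}}$, and $\ket{\psi}$ is its coherently signed version. Define an intermediate experiment $\EXP'(m)$ that is identical to $\EXP(m)$ except that it outputs $\rho$ only if both $\Vrfy(\vk,x',\sigma')=1$ and $x'_{\theta_1}=x_{\theta_1}$ (equivalently $x'\in T^{x,\theta}$). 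The two experiments differ only on the event $E$ that $\Vrfy$ accepts while $x'\notin T^{x,\theta}$. I would bound $\Pr[E]$ by one-time unforgeability for subset states (\cref{def:subset-unforgeability}), so that $\TD(\EXP(m),\EXP'(m))\le\Pr[E]$.

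The obstacle in that reduction is that the adversary also sees $\Zs_\lambda(\theta,r,\sigk)$ and $m+\regR x_{\theta_0}+r$, which depend on $\theta$ and on the signing key, while the forgery reduction does not know $\sigk$. I would therefore first switch $\Zs_\lambda(\theta,r,\sigk)$ to $\Zs_\lambda(0,0,0)$. This is justified by semantic security, because the event $E$ is efficiently checkable given $(x,\theta,\vk)$, which the challenger knows. After the switch, $r$ is a perfect one-time pad, so the classical string can be sampled uniformly. The forgery reduction then submits $(T^{x,\theta},\alpha^{x,\theta})$ for random $(x,\theta)$, receives $(\vk,\ket{\psi})$, samples a uniform string itself, runs $\As_\lambda$, and outputs $(x',\sigma')$. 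This gives $\Pr[E]=\negl(\lambda)$ in the real experiment, applying the semantic-security gap once.

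The second step is to show $\TD(\EXP'(m),\EXP'(0^k))=\negl(\lambda)$ by invoking \cref{theorem:bb84_maintheorem} with a modified adversary $\cB$ and a modified hiding operation. Let $\Zs'(\theta,r):=(\vk,\Zs_\lambda(\theta,r,\sigk))$ with freshly generated $(\vk,\sigk)$, augmented with $\sigk$ placed inside. The difficulty is that $\cB$ receives $\QFT^{\theta}\ket{x}$ without the signature and must produce $\ket{\psi}$. The adversary $\cB$ applies $(\QFT^{-1})^{\otimes 2n}$ and then $\ket{y}\mapsto\ket{y}\ket{\Sign(\sigk,y)}$, which it can do because $\Sign$ is deterministic and classical. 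This requires $\cB$ to hold $\sigk$, so I would let $\Zs'$ output $\sigk$ in the clear as well.

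This clear-text $\sigk$ is the step I expect to be the real obstacle, since semantic security of $\Zs'$ must then hold when $\sigk$ is public. It does, because $\sigk$ is independent of $(\theta,r)$: semantic security of $\Zs'$ follows from that of $\Zs_\lambda$ with $\sigk$ treated as auxiliary advice, using non-uniform security. The verification inside $\cB$'s output is then postprocessing. $\cB$ outputs $x'$ as the certificate and keeps $(\rho,\sigma',\vk)$ as residual state, so the main theorem's check $x'_{\theta_1}=x_{\theta_1}$ is enforced. Applying $\Vrfy$ to the residual state is a fixed CPTP map that maps $\bot$ to $\bot$, so it does not increase trace distance. Combining the two steps with the triangle inequality gives the theorem.
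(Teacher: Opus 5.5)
Your proposal is correct and follows essentially the same route as the paper: bound the forgery event $E$ via semantic security plus one-time unforgeability for subset states, then reduce the doubly-checked experiment to \cref{theorem:bb84_maintheorem} through an adversary $\cB$ that re-signs the received state using the $\sigk$ that the augmented operation $(\vk,\sigk,\Zs_\lambda(\theta,r,\sigk))$ outputs in the clear. The differences are cosmetic. In the forgery step the paper only replaces $\sigk$ by $0^{|\sigk|}$, since the reduction can compute $m+\regR x_{\theta_0}+r$ and $\Zs_\lambda(\theta,r,0^{|\sigk|})$ itself. The paper also has $\cB$ perform the $\Vrfy$ check internally rather than applying it afterward by data processing; your version would additionally require copying $x'$ into the residual state.
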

\begin{proof}
    We define the event $\Forge$ as $\Vrfy(\vk,x', \sigma') = 1$ but $x'_i \neq x_i$ for some $i\in \theta_1$. We define another event $\Forge^*$ which is identical to $\Forge$ except that $\As_\lambda$ is initialized with $\Zs_\lambda \left(\theta, r, 0^{|\sigk|} \right)$. Because $\Sig$ is a one-time unforgeable signature scheme for subset states with respect to $\cT_{\mathrm {BB}}$, $\Pr[\Forge^*] = \negl(\lambda)$. Since $\Zs_\lambda$ is semantically secure, we have $\Pr[\Forge] \le \Pr[\Forge^*] + \negl(\lambda)$. Therefore, $\Pr[\Forge] = \negl(\lambda)$. 

    We define another experiment $\widehat{\EXP}_\lambda^{\As_\lambda}(m)$, which is identical to $\EXP_\lambda^{\As_\lambda}(m)$ except that it outputs $\rho$ only if $\Vrfy(\vk,x', \sigma') = 1 \land x'_i = x_i$ for all $i\in \theta_1$. Then, 
    \[
    \TD\left(\EXP_\lambda^{\As_\lambda}(m), \EXP_\lambda^{\As_\lambda}(0^k) \right) \leq     \TD\left(\widehat{\EXP}_\lambda^{\As_\lambda}(m), \widehat{\EXP}_\lambda^{\As_\lambda}(0^k) \right) + \negl(\lambda) = \negl(\lambda).
    \]
The first inequality follows because $\Pr[\Forge] = \negl(\lambda)$. 

The last one holds by our main constant-rate certified deletion theorem \cref{theorem:bb84_maintheorem}. We construct a QPT adversary $\cB_\secp$ in the BB84 certified-deletion experiment in \cref{theorem:bb84_maintheorem}. Define a quantum operation
$\widetilde{\Zs}=\{\widetilde{\Zs}_\lambda\}_{\lambda\in\N}$ as follows:
on input $(\theta,r)$,
\begin{enumerate}
    \item Run $(\vk,\sigk) \gets \Gen(1^\lambda)$.
    \item Output $(\vk,\sigk,\Zs_\lambda(\theta,r,\sigk))$.
\end{enumerate}
Notice that $\widetilde{\Zs}$ is semantically-secure as required in \cref{theorem:bb84_maintheorem}. Indeed, by the semantic security of $\Zs$, $\widetilde{\Zs}(\theta,r) = (\vk,\sigk,\Zs_\lambda(\theta,r,\sigk)) \approx_c (\vk,\sigk,\Zs_\lambda(0, 0, \sigk)) = \widetilde{\Zs}(0, 0)$.

We construct $\cB_\secp$ as follows:
\begin{enumerate}
    \item Receive $\QFT^{\theta}\ket{x}, m+\regR x_{\theta_0}+r, \widetilde{\Zs}_\lambda(\theta, r)$ from the challenger.
    \item Compute $\ket{\psi}$ by first applying $(\QFT^{-1})^{\otimes 2n}$ and then $\ket{x}\ket{0\cdots 0} \mapsto \ket{x}\ket{\Sign(\sigk,x )}$ to $\QFT^{\theta}\ket{x}$. 
    \item Send $\ket{\psi}, m+\regR x_{\theta_0}+r, \vk, \Zs_\lambda(\theta, r, \sigk)$ to $\cA$.
    \item Receive $x' \in \F^{2n}_\secp$ and $\sigma' $ and a residual state $\rho$ from $\cA$.
    \item Check if $\Vrfy(\vk,x', \sigma') = 1$, then send $x'$ and $\rho$ to the challenger. Otherwise, send $x'$ and $\bot$ to the challenger.
\end{enumerate}

Since the challenger additionally checks whether
$x'_i=x_i$ for every $i\in\theta_1$, the above simulation is exactly
the same as $\widehat{\EXP}^{\cA_\lambda}_\lambda(m)$.
Therefore, $\widehat{\EXP}^{\cA_\lambda}_\lambda(m)
\equiv
\EXP^{\cB_\lambda}_\lambda(m)$,
where the latter is the experiment of
\cref{theorem:bb84_maintheorem} instantiated with
$\widetilde{\Zs}$.
Hence,
\[
\TD\left(
    \widehat{\EXP}^{\cA_\lambda}_\lambda(m),
    \widehat{\EXP}^{\cA_\lambda}_\lambda(0^k)
\right)
=
\negl(\lambda)
\]
by \cref{theorem:bb84_maintheorem}.
\end{proof}
\paragraph{Discussion on the Rate}

By \cref{th:pvd-bb}, we can add publicly verifiable deletion to various primitives. 
Compared to the case without the public verifiability~(\cref{theorem:bb84_maintheorem}), we only have to add a quantum register that contains a signature $\Sign(\sigk,x)$ coherently. 
The length of the signature can be bounded by a fixed polynomial $\poly(\secp)$ using the instantiation from SIS, which does not depend on the message length $k$.
Therefore, the above theorem does not decrease the rate at all, in asymptotic sense. 

\paragraph{Discussion on the Applications}
Our formulation of \cref{th:pvd-bb} follows that of \cite{KNY} and can be used for the same applications. 
In more detail, we can add PVD to various primitives, while achieving constant rate. 
Let us sketch how to use the theorem by taking PKE as an example. 
$\ket{\psi}, m+\regR x_{\theta_0}+r, \vk, \Zs_\lambda(\theta, r, \sigk) $ constitutes the ciphertext of a PKE with PVD, while $\Zs_\lambda$ is an encryption algorithm of the underlying PKE (without PVD).
The deletion certificate is obtained by measuring $\ket{\psi}$ in the standard basis to obtain the deletion certificate $(x',\sigma')$, which can be publicly verified using $\vk$.
The decryption requires a bit of work, due to the added signature: 
We first decrypt the ciphertext to recover $\sigk,r,\theta$. Then, we \emph{uncompute} the signature register from $\ket{\psi}$ and measure it in QFT basis on position specified by $\theta$.
Then, the message is obtained by unmasking.

\begin{theorem}
Assume that there exists 
$Z \in \{\mathsf{SKE}, \mathsf{COM}, \mathsf{PKE}, \mathsf{ABE}, \mathsf{QFHE}, \mathsf{TRE}, \mathsf{WE}\}$
with quantum encryption and decryption (or committing) algorithms and the SIS assumption with polynomial modulus. 
Then there exists a constant-rate $Z$ with publicly verifiable deletion.
\end{theorem}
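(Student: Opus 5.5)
The plan is to combine, in a black-box way, the existing certified-deletion compilers for each $Z$ (those of \cite{C:BarKhu23}, as used with PVD in \cite{KNY}) with \cref{th:pvd-bb} in place of their single-bit certified-deletion lemma. The signature scheme needed by \cref{th:pvd-bb} comes from \cref{lem:coherent-cs-from-sis} and \cref{th:CS-to-Sig}. As argued above, SIS with polynomial modulus gives a constrained signature for the $\mathsf{NC}^1$ class $\{f_{T^{x,\theta}}\}$. \cref{th:CS-to-Sig} then yields a deterministic $\Sig$ that is one-time unforgeable for subset states with respect to $\cT_{\mathrm{BB}}$, with $|\sigma|=\poly(\lambda)$ independent of $k$.

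\textbf{Construction.} For each $Z$, a ciphertext (or commitment) of $m\in\F^k$ is $\ket{\psi}, m+\regR x_{\theta_0}+r, \vk, Z.\Enc(\theta,r,\sigk)$, with $2n=3k+\Theta(\lambda)$, or $(1+\epsilon)k+\Theta(\lambda)$ via \cref{remark:rate1}. Here $Z.\Enc$ is the underlying primitive's encryption or commit algorithm, run under the appropriate key, attribute, statement, or time parameter.

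\textbf{Deletion and verification.} Deletion measures $\ket{\psi}$ in the computational basis to obtain $(x',\sigma')$. Public verification runs $\Vrfy(\vk,x',\sigma')$.

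\textbf{Decryption.} First decrypt $Z$ to recover $(\theta,r,\sigk)$. Since $\Sign$ is deterministic and $\sigk$ is known, uncompute the signature register. Then apply $\QFT^{\otimes 2n}$, which undoes the $(\QFT^{-1})^{\otimes 2n}$ prepended in \cref{th:pvd-bb}, to return to $\QFT^\theta\ket{x}$. Measure the $\theta_0$ positions in the computational basis to get $x_{\theta_0}$ and unmask. Correctness follows from the correctness of $Z$ and of $\CS$.

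\textbf{Rate.} The quantum part has $2n\log|\F|+\poly(\lambda)$ qubits. The classical part adds $k\log|\F|$ bits plus the size of $Z.\Enc(\theta,r,\sigk)$, which is $\poly(\lambda)+O(n\log|\F|)$ times the overhead of $Z$. That overhead is constant if we first hybrid-encrypt $(\theta,r,\sigk)$ with a one-time-pad-style SKE key encapsulated by $Z$. With $k\gg\lambda$, the rate is therefore constant.

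\textbf{Security.} Following \cite{C:BarKhu23} and \cite{KNY}, I would define the quantum operation $\Zs_\lambda(\theta,r,\sigk)$ as $Z$'s encryption or commitment together with whatever public material the adversary sees (the public key, the attribute-key oracles in the ABE game, and so on), as in their per-primitive reductions. Semantic security of $\Zs$ in the sense of \cref{semantic} then follows from the standard security of $Z$. For ABE the adversary may query keys only for non-satisfying attributes, and for WE the statement must be false. Applying \cref{th:pvd-bb} gives everlasting certified deletion: after a verifying certificate, the residual states for $m$ and $0^k$ are negligibly close. Computational hiding before deletion follows directly from the semantic security of $Z$ together with the one-time pad $r$. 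It hides $r$, so $m+\regR x_{\theta_0}+r$ is uniform.

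\textbf{Main obstacle.} The delicate step is fitting each primitive's security game into the single-shot "operation $\Zs$" abstraction. This matters most for QFHE, where evaluation must act on the ciphertext, and for ABE, where key queries are adaptive. I would handle it exactly as \cite{C:BarKhu23} does: the adversary's interactive oracles are simulated inside the reduction $\cA$, which is allowed because \cref{theorem:bb84_maintheorem} quantifies over arbitrary QPT adversaries with auxiliary input. The other point to check is that the decryptor can still uncompute the coherent signature when decryption itself is quantum, as in QFHE. This holds because $\Sign(\sigk,\cdot)$ is deterministic.
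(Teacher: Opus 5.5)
Your route is the paper's. You instantiate the subset-state signature from SIS via \cref{lem:coherent-cs-from-sis} and \cref{th:CS-to-Sig}, plug it into \cref{th:pvd-bb}, and reuse the \cite{KNY}/\cite{C:BarKhu23} compilers. Your decryption is also the paper's: recover $(\theta,r,\sigk)$, uncompute the deterministic signature, read $x_{\theta_0}$ in the Fourier basis, and unmask. The paper only sketches this for PKE. Two of your supporting arguments, however, are wrong as written.

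\textbf{The classical part of your rate count.} By \cref{lem:coherent-cs-from-sis}, $|\vk|$ and $|\sigk|$ are only bounded by $\poly(\lambda,\ell,d)$, where the constraint input length is $\ell=2n\log|\F|$. Moreover, $\vk$ is sent in the clear. So the classical overhead is not $\poly(\lambda)+O(n\log|\F|)$ in general; lattice verification keys typically carry one matrix per input bit, i.e.\ $\ell\cdot\poly(\lambda)$ bits. Hybrid-encrypting $(\theta,r,\sigk)$ does nothing about $\vk$. This does not hurt the theorem only because the paper defines rate as plaintext length divided by the number of qubits communicated. Your count of $2n\log|\F|+\poly(\lambda)$ qubits, with $|\sigma|=\poly(\lambda)$ independent of $k$, is exactly the paper's argument and is all you need. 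Drop the classical claim rather than rely on it.

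\textbf{Your resolution of the ``main obstacle''.} The adversary you feed into \cref{th:pvd-bb} cannot answer adaptive ABE key queries without $\msk$. The operation $\Zs$ cannot output $\msk$ and remain semantically secure. Non-uniform advice cannot answer queries about keys sampled inside the game either. Similarly, TRE hides only against depth-bounded adversaries, so it is not semantically secure in the sense of \cref{semantic} at all.

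The fix is to observe where the proofs of \cref{theorem:bb84_maintheorem} and \cref{th:pvd-bb} use semantic security. They use it only through reductions that efficiently simulate the challenger, and check the efficient predicate, around the adversary: \cref{bb84_ab} and the bound on $\Pr[\Forge]$. Both theorems therefore hold verbatim for semantic security against any adversary class closed under these simulations. Examples are QPT adversaries with a key-generation oracle restricted to non-satisfying policies, or suitably depth-bounded adversaries. This is the class-of-adversaries formulation of \cite{C:BarKhu23}, which the paper implicitly relies on when it defers the applications to \cite{KNY}.
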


We also obtain an analog of \cref{th:pvd-bb} for coset states. 
\begin{theorem}[Constant-Rate Publicly Verifiable Certified Deletion for Subspace Coset States]
    Let $\Sig=(\Gen,\Sign,\Vrfy)$ be a one-time unforgeable signature scheme for subset states with respect to $\cT_{\mathrm {CoS}}$. Let $\lambda$ be the security parameter. Fix $k,n \in \N$ such that $4k+\Theta(\lambda) \leq n$. Consider a family of finite prime fields $\{\F_\lambda\}_{\lambda\in \N}$ with polynomial bounded field sizes. Let $\regR \in \F^{k \times n}_\secp$ be a full rank matrix.
    Let $m \in \F^k_\secp$ be a classical message and $\{\Zs_\lambda(\cdot)\}_{\lambda \in \N}$ be a quantum operation with classical input. Define the following distribution $\left\{\EXP_\lambda^{\As_\lambda}(m) \right\}_{\lambda \in \N }$ over quantum states, obtained by running $\As_\lambda$ as follows.
    \begin{enumerate}
        \item Sample $[S,T]$ uniformly at random over $\F^n_\secp$ (Definition~\ref{sampling}).
        
        Next, sample $(v,w) \leftarrow \co(S) \times \co(S^\perp)$, uniformly at random. Let $u \in \co(T)$ be the unique coset such that $S + v \subset T + u$.

        Run $(\vk,\sigk) \leftarrow \Gen(1^\lambda)$. Generate a quantum state $\ket{\psi}$ by applying a map  $\ket{x}\ket{0\cdots 0} \mapsto \ket{x}\ket{\Sign(\sigk,x)}$ to $\ket{S^{\perp}_{w,-v}}$. 

        Initialize $\As_\lambda$ with
        \begin{equation*}
            \ket{\psi}, m+\regR v, T, u, \vk, \Zs_\lambda(\sigk, S). 
        \end{equation*}
        \item $\As_\lambda$'s output is parsed as a string $(x' \in \F^n_\lambda, \sigma')$ and a residual state $\rho$. If $\Vrfy(\vk,x', \sigma') = 1$, then output $\rho$, and otherwise output a special symbol $\bot$.
    \end{enumerate}
    
    If $\{\Zs_\lambda\}_{\lambda \in \N}$ is semantically-secure \footnote{We only need semantic security here and no longer need subspace-hiding.} against any non-uniform QPT $\{\As_\lambda\}_{\lambda \in \N}$, then for all non-uniform QPT $\{\As_\lambda\}_{\lambda \in \N}$ and for all $m \in \F^k_\lambda$,
    \begin{equation*}
        \TD\left(\EXP_\lambda^{\As_\lambda}(m), \EXP_\lambda^{\As_\lambda}(0^k) \right) = \negl(\lambda).
    \end{equation*}
\end{theorem}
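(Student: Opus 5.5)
The plan mirrors the proof of \cref{th:pvd-bb}: first show that a forged certificate is infeasible, then reduce the remaining post-selected experiment to \cref{coset_maintheorem}.

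\textbf{Step 1 (no forgery).} Let $\Forge$ be the event that $\Vrfy(\vk,x',\sigma')=1$ but $x'\notin S^\perp+w$. The challenger signs the state $\ket{S^\perp_{w,-v}}$, which is exactly the subset state for $(T^{-v,w,S^\perp},\alpha)$ with support $S^\perp+w$; this pair lies in $\cT_{\mathrm{CoS}}$ up to the sign and phase conventions. Let $\Forge^*$ be the same event when $\cA$ receives $\Zs_\lambda(0,0)$ in place of $\Zs_\lambda(\sigk,S)$.
\begin{itemize}
\item Bounding $\Forge^*$: a reduction to one-time unforgeability for subset states sends $(T,\alpha)$, receives $(\vk,\ket{\psi})$, and itself samples $[S,T]$, $v$, $w$, $u$. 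It can then simulate $m+\regR v$, $T$, $u$ and $\Zs_\lambda(0,0)$. So $\Pr[\Forge^*]=\negl(\lambda)$.
\item Moving to $\Forge$: the event is efficiently checkable, because the reduction knows $S$ and $w$. Semantic security of $\Zs$ therefore gives $\Pr[\Forge]\le\Pr[\Forge^*]+\negl(\lambda)$.
\end{itemize}

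\textbf{Step 2 (reduce to the secretly verified experiment).} Define $\widehat{\EXP}(m)$, which outputs $\rho$ only if the signature verifies and also $x'\in S^\perp+w$. Since $\Forge$ is negligible, it suffices to bound $\TD(\widehat{\EXP}(m),\widehat{\EXP}(0^k))$. I would apply \cref{coset_maintheorem} with the operation
\[
\widetilde{\Zs}_\lambda(S,T,u,w,m') := \big(m',T,u,\vk,\Zs_\lambda(\sigk,S)\big), \qquad (\vk,\sigk)\leftarrow\Gen(1^\lambda).
\]
\begin{itemize}
\item The reduction $\cB$ is given $\ket{S_{v,w}}$ and applies $(\QFT^{-1})^{\otimes n}$ to get $\ket{S^\perp_{w,-v}}$, up to a global phase.
\item $\cB$ coherently signs this state with $\sigk$, which it holds in the clear from $\widetilde{\Zs}$, and runs $\cA$.
\item $\cB$ forwards $x'$ together with $\rho$ if verification passes, and $\bot$ otherwise.
\item The challenger's own check $x'\in S^\perp+w$ then makes the simulation identical to $\widehat{\EXP}$.
\end{itemize}

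\textbf{Step 3 (main obstacle: subspace hiding).} The main obstacle is verifying that $\widetilde{\Zs}$ is subspace-hiding in the sense of \cref{subspace_hiding}, since the theorem only assumes semantic security of $\Zs$. Here $\widetilde{\Zs}$ outputs $m'$ in the clear, whereas the simulator $\cS(G,T,u,x,m')$ may also use $m'$, so I would take
\[
\cS(G,T,u,x,m') := \big(m',T,u,\vk,\Zs_\lambda(\sigk,0)\big).
\]
Indistinguishability then follows from semantic security of $\Zs$, because $S$ enters only through $\Zs_\lambda(\sigk,S)$ and $\sigk$ is independent of $S$.

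One issue needs checking: \cref{subspace_hiding} hands the distinguisher $S$ and $w$ in the clear, which seems to trivialize hiding. On closer reading, however, the distinguisher in \cref{subspace_hiding} receives $\regR,S,T,u,w$ in both worlds, so hiding only concerns the encoded part. Replacing $\Zs_\lambda(\sigk,S)$ by $\Zs_\lambda(\sigk,0)$ is therefore fine by semantic security, since the distinguisher can generate everything else itself. With this, \cref{coset_maintheorem} yields the negligible trace distance, and the triangle inequality combines it with Step 1 to conclude.
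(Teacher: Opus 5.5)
Your proposal matches the paper's proof step for step. It uses the same $\Forge$/$\Forge^*$ argument (your choice of $\Zs_\lambda(0,0)$ instead of the paper's $\Zs_\lambda(0^{|\sigk|},S)$ is immaterial), the same post-selected experiment $\widehat{\EXP}$, and the same reduction $\cB$ to \cref{coset_maintheorem}, via a wrapped operation $\widetilde{\Zs}$ whose subspace hiding follows from semantic security of $\Zs$. One fix is needed: your displayed $\widetilde{\Zs}$ and its simulator must also output $\sigk$ in the clear, as the paper's do, because $\cB$ uses it to sign; adding it does not affect the semantic-security argument.
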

\begin{proof}
    We define the event $\Forge$ as $\Vrfy(\vk,x', \sigma') = 1$ but $x' \notin S^\perp + w$. We define another event $\Forge^*$ which is identical to $\Forge$ except that $\As_\lambda$ is initialized with $\Zs_\lambda \left(0^{|\sigk|}, S \right)$. Because $\Sig$ is a one-time unforgeable signature scheme for subset states with respect to $\cT_{\mathrm {CoS}}$, $\Pr[\Forge^*] = \negl(\lambda)$. Since $\Zs_\lambda$ is semantically secure, we have $\Pr[\Forge] \le \Pr[\Forge^*] + \negl(\lambda)$. Therefore, $\Pr[\Forge] = \negl(\lambda)$. 

    We define another experiment $\widehat{\EXP}_\lambda^{\As_\lambda}(m)$, which is identical to $\EXP_\lambda^{\As_\lambda}(m)$ except that it outputs $\rho$ only if $\Vrfy(\vk,x', \sigma') = 1 \land x' \in S^\perp + w$. Then, 
    \[
    \TD\left(\EXP_\lambda^{\As_\lambda}(m), \EXP_\lambda^{\As_\lambda}(0^k) \right) \leq     \TD\left(\widehat{\EXP}_\lambda^{\As_\lambda}(m), \widehat{\EXP}_\lambda^{\As_\lambda}(0^k) \right) + \negl(\lambda) = \negl(\lambda).
    \]
The first inequality follows because $\Pr[\Forge] = \negl(\lambda)$, and the last one holds by our main constant-rate subspace coset state certified deletion theorem \cref{coset_maintheorem}. Formally, we construct a QPT adversary $\cB_\secp$ in the coset state certified-deletion experiment in \cref{coset_maintheorem}. Define a quantum operation
$\widetilde{\Zs}=\{\widetilde{\Zs}_\lambda\}_{\lambda\in\N}$ as follows:
on input $(S, T, u, w, m+\regR v)$,
\begin{enumerate}
    \item Run $(\vk,\sigk) \gets \Gen(1^\lambda)$.
    \item Output $(\vk,\sigk, T, u, m+\regR v, \Zs_\lambda(\sigk, S))$.
\end{enumerate}
Notice that $\widetilde{\Zs}$ is subspace-hiding as required in \cref{coset_maintheorem}. Because by the semantic security of $\Zs$, $\widetilde{\Zs}_\secp(S, T, u, w, m+\regR v) = (\vk,\sigk, T, u, m+\regR v, \Zs_\lambda(\sigk, S)) \approx_c (\vk,\sigk, T, u, m+\regR v, \Zs_\lambda(\sigk, 0))$, which can be simulated by a simulator $\cS_\lambda(G,T,u,x,m+\regR v)$ that does the following:
\begin{enumerate}
    \item $(\vk,\sigk) \gets \Gen(1^\lambda)$
    \item Output $(\vk,\sigk, T, u, m+\regR v, \Zs_\lambda(\sigk, 0))$.
\end{enumerate}

We construct $\cB_\secp$ as follows:
\begin{enumerate}
    \item Receive $\ket{S_{v,w}}, \widetilde{\Zs}_\lambda(S, T, u, w, m+\regR v)$ from the challenger.
    \item Compute $\ket{\psi}$ by first applying $(\QFT^{-1})^{\otimes n}$ and then $\ket{x}\ket{0\cdots 0} \mapsto \ket{x}\ket{\Sign(\sigk,x )}$ to $\ket{S_{v,w}}$. 
    \item Send $\ket{\psi}, m+\regR v, T, u, \vk, \Zs_\lambda(\sigk, S)$ to $\cA$.
    \item Receive $x'$ and $\sigma' $ and a residual state $\rho$ from $\cA$.
    \item Check if $\Vrfy(\vk, x', \sigma') = 1$, then send $x'$ and $\rho$ to the challenger. Otherwise, send $x'$ and $\bot$ to the challenger.
\end{enumerate}

Since the challenger additionally checks whether $x' \in S^\perp + w$, the above simulation is exactly
the same as $\widehat{\EXP}^{\cA_\lambda}_\lambda(m)$. Therefore, $\widehat{\EXP}^{\cA_\lambda}_\lambda(m)
\equiv
\EXP^{\cB_\lambda}_\lambda(m)$,
where the latter is the experiment of
\cref{coset_maintheorem} instantiated with
$\widetilde{\Zs}$.
Hence,
\[
\TD\left(
    \widehat{\EXP}^{\cA_\lambda}_\lambda(m),
    \widehat{\EXP}^{\cA_\lambda}_\lambda(0^k)
\right)
=
\negl(\lambda)
\]
by \cref{coset_maintheorem}.
\end{proof}

\paragraph{Discussion on the Applications}
We emphasize that, unlike the \cite{BGK+} compiler, which relies on indistinguishability obfuscation (iO) to obfuscate a subspace membership checking program in order to obtain PVD for all-or-nothing type primitives, we eliminate the need for iO. Instead, we obtain PVD for all-or-nothing type primitives, by extending the \cite{KNY} compiler to subspace coset states via constrained signatures. Moreover, our construction achieves constant-rate efficiency, enabled by our constant-rate certified deletion (CD) main theorem.

We note that our PVD compiler for subspace-coset-state CD encodings does not extend to beyond all-or-nothing type primitives.  The main obstacle is that, for some primitives such as $\mathsf{diO}$-$\mathsf{CD}$, $\mathsf{FE}$-$\mathsf{CD}$, and blind delegation, correctness requires coherent evaluation on the subspace coset state within an otherwise classical functionality. When the signing key is embedded in such a classical primitive, it is unclear how to coherently uncompute the signing unitary on the entire signed subspace coset state.

For $\mathsf{CCA}$-$\PKE$, the aforementioned issue does not arise because, for correctness, the decryptor has access to the secret key. It can therefore first classically recover the signing key and then use it to uncompute the signing unitary. However, a different difficulty arises in the security definition, as it is unclear how to formulate an appropriate decryption oracle for signed ciphertexts in the CCA security game.

%% file: 8-3_llncs_coset.tex
\section{Certified Deletion for Subspace Coset States}\label{section:coset_m}

\subsection{Main Theorem}\label{section:coset}

\begin{theorem}\label[theorem]{llncs_coset_maintheorem}
    Let $\lambda$ be the security parameter. Fix $k,n \in \N$ such that $8k+\Theta(\lambda) \leq n$. Consider a family of finite fields $\{\F_\lambda\}_\lambda$ with polynomial bounded field sizes. Let $\regR \in \F^{k \times n}$ be a matrix that is both MDS and satisfies Row-Resilient Column Independence\ifnum\llncs=0 (\cref{prop_extract})\fi.
    Let $m \in \F^k$ be classical messages and $\{\Zs_\lambda(\cdot)\}_{\lambda \in \N}$ be a quantum operation with classical input. Define the following distribution $\left\{\EXP_\lambda^{\As_\lambda}(m) \right\}_{\lambda \in \N }$ over quantum states, obtained by running $\As_\lambda$ as follows.
    \begin{enumerate}
        \item Sample $[S,T]$ uniformly at random over $\F^n$ with $k$-dimensional leakage (Definition~\ref{sampling}).
        
        Next, sample $(v,w) \leftarrow \co(S) \times \co(S^\perp)$, uniformly at random. Let $u \in \co(T)$ be the unique coset such that $S+v \subset T+u$. 
        
        Finally, sample a uniform random string $r \in \F^n$.
        Initialize $\As_\lambda$ with
        \begin{equation*}
            \ket{S_{v,w}}, m+\regR \regB^{-1} v+r, \Zs_\lambda(S, T, u, w, r, \regB^{-1}). 
        \end{equation*}
        \item $\As_\lambda$'s output is parsed as a string $\widetilde{z} \in \F^n$ and a residual state $\rho$. If $\widetilde{z} \in S^\perp + w$, then output $\rho$, and otherwise output a special symbol $\bot$.
    \end{enumerate}
    Let $\mathscr{A}$ be a class of adversaries that contains QPT algorithms. If $\{\cZ_\lambda\}_{\lambda \in \N}$ is subspace-hiding for
    \ifnum\llncs=0
        $\mathscr{A}$ (\cref{subspace_hiding}),
    \fi
    \ifnum\llncs=1
        $\mathscr{A}$,
    \fi
    then for all $\{\cA_\lambda\}_{\lambda \in \N} \in \mathscr{A}$ and for all $m \in \F^k$,
    \begin{equation*}
        \TD\left(\EXP_\lambda^{\As_\lambda}(m), \EXP_\lambda^{\As_\lambda}(0^k) \right) = \negl(\lambda).
    \end{equation*}
\end{theorem}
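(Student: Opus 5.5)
We reduce this variant to the position-wise guessing strategy of \cref{coset_maintheorem}, with two differences. The pad $r$ computationally hides the message-carrying string, exactly as in the BB84 construction (\cref{equ:bb84_construction}). The change of basis $\regB^{-1}$ composed with $\regR$ is handled through the MDS and row-resilient column independence properties.

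First, a triangle inequality over the $k$ field positions reduces the claim to showing $\TD(\hyb_0(m),\hyb_0(m+\alpha e_i))\le\negl(\lambda)$ for a single $i$; we write the argument for $i=1$. Position $1$ of $m+\regR\regB^{-1}v+r$ is replaced by a guess $b\leftarrow\F$, which costs a factor $|\F|$ exactly as in \cref{bb84_mainhybrid_0to2}. Next, we decompose $\co(S)=P_1\oplus(P/P_1)\oplus\co(T)$, where now $P_1:=\ker\bigl((\regR\regB^{-1})_{-1}|_P\bigr)$. We purify $v_{\ker}\in P_1$ into a register $C$ and defer its measurement until after the adversary outputs $\widetilde z$. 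The leaked coordinates $(m+\regR\regB^{-1}v+r)[2:k]$ are independent of $v_{\ker}$, so this step is exact, as in \cref{coset_mainhybrid_0to3}. We then insert the measurement $\{\cM,I-\cM\}$: apply $U_1$ from \cref{coset_unitaryconstruct}, measure $\widetilde v_{\ker}$, and accept if $\widetilde z\notin S^\perp+w$ or $\widetilde z\in(S^+)^\perp+\widetilde v_{\ker}+w$. Once this measurement is accepted, \cref{coset_extractor} with $a=v_{\img}+u$ and the linear map $\regR\regB^{-1}$ in place of $\regR$ shows that coordinate $1$ is uniform and independent of the adversary's state. This finishes the single-position bound, as in \cref{coset_mainhybrid4}.

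The remaining step, and the main obstacle, is proving that $I-\cM$ occurs with overwhelming probability, following the sub-hybrid chain $\hyb_3^a$ through $\hyb_3^e$. We move $U_1$ and the measurement of $C$ before the adversary, as in \cref{coset_applyU}. We then resample the leakage $\ell$ uniformly and solve for $v_{\img}$. This bijectivity argument needs $\regR\regB^{-1}(T/S)=\F^k$, which we get from \cref{prop:sampling_subspace} since $\regB$ is invertible. Here $r$ is essential: because $r$ is uniform and hidden inside $\Zs$, the classical value $m+\regR\regB^{-1}v+r$ can be equivocated after switching $\Zs$. Subspace hiding then replaces $\Zs(S,T,u,w,r,\regB^{-1})$ with $\cS(G,T,u,x,\cdot)$ together with a dummy $r$.

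Finally, we bound the probability of producing $\widetilde z\in S^\perp\setminus(S^+)^\perp$ given $(S^+,G,T,\regB)$. For this we rerun the chain of distributions $\cD_{\hyb_3^e},\cD_1,\dots,\cD_6$ with $\regR\regB^{-1}$ in place of $\regR$. The delicate point is the uniform counting of $\cD_3=\cD_4=\cD_5$, which needs $\regR\regB^{-1}(G^\perp)=\F^k$ with overwhelming probability. I expect row-resilient column independence to be used exactly here, because $\regB$ may correlate with the sampled subspaces. The larger slack $8k+\Theta(\lambda)\le n$ gives enough room in the dimension counts of $P_1$ and $G^\perp$. With that in place, the final ratio $\bigl(|\F|^{n/4-k+1}-1\bigr)/\bigl(|\F|^{n/2-k+1}-1\bigr)$ is negligible, which completes the proof.
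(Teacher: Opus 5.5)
Your outline follows, step by step, the only coset-state proof the paper actually writes out, namely that of \cref{coset_maintheorem}, with $\regR$ replaced by $\regR\regB^{-1}$ and an extra pad $r$. That skeleton is fine. But this statement differs from \cref{coset_maintheorem} exactly in $\regB$, the MDS and row-resilient hypotheses, and the $8k$ slack. You never say what $\regB$ is (the paper does not define it either, but a proof has to commit to it), and you never actually use those hypotheses. The places where $\regB$ matters are handled inconsistently.

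First, you get $\regR\regB^{-1}(T/S)=\F^k$ from \cref{prop:sampling_subspace} ``since $\regB$ is invertible''. That proposition needs the matrix to be fixed independently of the random basis vectors; invertibility is beside the point. This leaves two cases:
\begin{itemize}
\item If $\regB$ is independent of $[S,T]$, then $\regR\regB^{-1}$ is just another full-rank matrix. Your argument is then a rerun of \cref{coset_maintheorem} in which MDS, row-resilient column independence and $8k$ play no role, contradicting your claim that they are what handles $\regB$.
\item If $\regB$ is derived from the sampled subspaces (presumably why it sits inside $\Zs$, and as you concede later), then \cref{prop:sampling_subspace} applies neither to $T/S$ nor to the step $\cD_1\to\cD_2$, where you need $\regR\regB^{-1}(G^\perp)=\F^k$. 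Surjectivity would then need an explicit argument from the structure of $\regB$ and the MDS property of $\regR$, which you do not give.
\end{itemize}

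Second, and more seriously, the final information-theoretic step $\cD_4=\cD_5$ resamples $S$ uniformly subject to $G^\perp\le S\le S^+$. This is valid only if the adversary's view depends on $S$ solely through $(S^+,G,T)$. You condition on $(S^+,G,T,\regB)$, and subspace hiding (\cref{subspace_hiding}) passes the auxiliary inputs of $\Zs$, here $r$ and $\regB^{-1}$, verbatim to $\cS$. If $\regB$ is correlated with $S$, it can pin $S$ down, and then the ratio $\bigl(|\F|^{n/4-k+1}-1\bigr)/\bigl(|\F|^{n/2-k+1}-1\bigr)$ no longer bounds anything. No property of the fixed matrix $\regR$ can restore the symmetry among admissible $S$. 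So ``I expect row-resilient column independence to be used exactly here'' stands in for a missing idea: either a hiding assumption whose simulator does not receive $\regB^{-1}$, or a proof that $\regB$ is independent of $S$ given $(S^+,G,T)$.

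Two smaller points. Subspace hiding does not let you replace $r$ by a dummy. You do not need it, though: as in $\hyb_3^d$, you can sample the leaked $\ell$ uniformly and solve for $r$ (or $v_{\img}$), and the final step tolerates that leakage in the view. Also, \cref{coset_extractor} and the unitaries $U_i$ are stated for prime fields, whereas this statement allows arbitrary finite fields.
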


\subsection{Applications of Subspace Coset State Constructions}

\cite{BGK+} propose a unified approach for upgrading beyond ``all-or-nothing''-type cryptographic primitives to support certified deletion. 
Combining our constant-rate CD \cref{llncs_coset_maintheorem} with suitable coset-state-based constructions from \cite{BGK+}, we obtain constant-rate certified deletion for applications such as CCA encryption and blind delegation. For example, in blind delegation, we define the follow:
\begin{align*}
    &\Zs_\lambda(S, T, u, w, r, \regB^{-1}, m+\regR\regB^{-1}v+r) \\
    :=\, &\algo{FHE.Enc}[\pk, (S,m+\regR\regB^{-1}v+r,\regB^{-1},r)], T, u,
\end{align*}
where the encryption is indistinguishable from $\algo{FHE.Enc}[\pk, 0^*]$.
We defer the details to the full version.


\section{Publicly Verifiable Certified Deletion}\label{sec:PVD}

\ifnum\llncs=0
The following theorem is our main result of this section, the details of which are deferred to \cref{sec:llncs_PVD}.
\begin{theorem}[Constant-Rate Publicly Verifiable Certified Deletion for BB84 State]
\label{th:pvd-bb}
    Let $\Sig=(\Gen,\Sign,\Vrfy)$ be a one-time unforgeable signature scheme for subset states (\cref{def:subset-unforgeability}) with respect to $\cT_{\mathrm {BB}}$. 
    Let $\lambda \in \N$ be the security parameter. Fix $k,n \in \N$ 
    such that $3k+\Theta(\lambda) \leq 2n $. Consider a family of finite fields $\{\F_\lambda\}_{\lambda \in \N}$ with polynomial bounded field sizes. Let $\regR \in \F^{k \times n}$ be a matrix that is both MDS and efficiently decodable (\cref{prop_extract}). Let $m \in \F^k$ be a classical message and $\Zs = \{\Zs_\lambda(\cdot)\}_{\lambda \in \N}$ be a quantum operation with classical input. Define the following distribution $\left\{\EXP_\lambda^{\As_\lambda}(m) \right\}_{\lambda \in \N }$ over quantum states, obtaining by running $\As_\lambda$ as follows.
    \begin{enumerate}
        \item Sample $x \leftarrow \F^{2n}$, a balanced $\theta \leftarrow \{0,1\}^{2n}$, and $r \leftarrow \F^k$. Run $(\vk,\sigk) \leftarrow \Gen(1^\lambda)$. Generate a quantum state $\ket{\psi}$ by applying a map  $\ket{x}\ket{0\cdots 0} \mapsto \ket{x}\ket{\Sign(\sigk,x )}$ to $\QFT^{\theta}\ket{x}$.  Initialize $\As_\lambda$ with
        \begin{equation*}
            \ket{\psi}, m+\regR x_{\theta_0}+r, \vk, \Zs_\lambda(\theta, r, \sigk). 
        \end{equation*}
        \item $\As_\lambda$'s output is parsed as a string $x' \in \F^{2n}$ and $\sigma' $ and a residual state $\rho$. If $\Vrfy(\vk,x', \sigma') = 1$, then output $\rho$, and otherwise output a special symbol $\bot$.
    \end{enumerate}
    If $\Zs$ is semantically-secure against any non-uniform QPT $\{\As_\lambda\}_{\lambda \in \N}$ (\cref{semantic}), then for all non-uniform QPT $\{\As_\lambda\}_{\lambda \in \N}$ and for all $m \in \F^k$,
    \begin{equation*}
        \TD\left(\EXP_\lambda^{\As_\lambda}(m), \EXP_\lambda^{\As_\lambda}(0^k) \right) = \negl(\lambda).
    \end{equation*}
\end{theorem}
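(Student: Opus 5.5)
The proof reduces to \cref{theorem:bb84_maintheorem} by first neutralizing forgeries and then simulating the signed state with a wrapper adversary. This follows the same template as the appendix version of the result.

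\textbf{Step 1: forgeries are negligible.} Let $\Forge$ be the event that $\Vrfy(\vk,x',\sigma')=1$ while the certificate $x'$ disagrees with $x$ on some checked coordinate. I need to be careful about which coordinates the signature actually pins down. The signed branches of $\ket{\psi}$ range over strings that agree with $x$ exactly on the positions encoded in the computational basis. If the state is presented so that the Fourier-basis positions $\theta_1$ become computational-basis positions, the relevant set is $T^{x,\theta}\in\cT_{\mathrm{BB}}$. In that case $(T^{x,\theta},\alpha^{x,\theta})$ is exactly the subset-state class of \cref{def:subset-unforgeability}.

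To bound $\Pr[\Forge]$, I first move to the event $\Forge^*$, in which $\Zs_\lambda(\theta,r,\sigk)$ is replaced by $\Zs_\lambda(\theta,r,0^{|\sigk|})$. Whether $\Forge$ occurs is efficiently checkable given $x,\theta$, so semantic security of $\Zs$ gives $\Pr[\Forge]\le\Pr[\Forge^*]+\negl(\lambda)$.

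In $\Forge^*$, a reduction announces $(T^{x,\theta},\alpha^{x,\theta})$ and receives $\vk$ together with the signed subset state. It can then produce $m+\regR x_{\theta_0}+r$ by itself, because $r$ is uniform and independent of the rest of the view. Any $\Forge^*$ event is then a forgery on a message outside $T^{x,\theta}$. Hence $\Pr[\Forge^*]=\negl(\lambda)$.

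\textbf{Step 2: reduce to the unsigned theorem.} I define $\widehat{\EXP}(m)$, which is identical to $\EXP(m)$ except that it additionally requires $x'_{\theta_1}=x_{\theta_1}$ before outputting $\rho$. The two experiments differ only on $\Forge$. So the trace distance between $\EXP(m)$ and $\EXP(0^k)$ is at most the trace distance between $\widehat{\EXP}(m)$ and $\widehat{\EXP}(0^k)$, plus a negligible term.

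Next I define $\widetilde{\Zs}(\theta,r)$ to run $\Gen$ and output $(\vk,\sigk,\Zs_\lambda(\theta,r,\sigk))$. This operation is semantically secure, again by semantic security of $\Zs$.

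The wrapper adversary $\cB$ works as follows.
\begin{enumerate}
\item It receives $\QFT^\theta\ket{x}$, the masked message $m+\regR x_{\theta_0}+r$, and $\widetilde{\Zs}(\theta,r)$.
\item It applies the basis convention used in the statement and coherently computes $\Sign(\sigk,\cdot)$ into an ancilla, obtaining $\ket{\psi}$.
\item It runs $\cA$ on this input.
\item It forwards $x'$ together with either $\rho$ or $\bot$, according to $\Vrfy$.
\end{enumerate}
This simulation reproduces $\widehat{\EXP}$ exactly. Applying \cref{theorem:bb84_maintheorem} to $\cB$ then finishes the proof.

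\textbf{Main obstacle.} The delicate point is the basis bookkeeping. The statement signs $\QFT^\theta\ket{x}$ directly, whereas $\cT_{\mathrm{BB}}$ and the verification condition $x'_{\theta_1}=x_{\theta_1}$ concern the $\theta_1$ positions after an inverse QFT. I would first try to adopt the convention of the appendix version, namely applying $(\QFT^{-1})^{\otimes 2n}$ before signing, so that the signed computational-basis coordinates are exactly $\theta_1$ and the set is $T^{x,\theta}$. I would also double-check that uncomputing or discarding the signature register does not disturb the extraction argument inside \cref{theorem:bb84_maintheorem}. It should not, because $\cB$ builds the signature locally from the state it receives.
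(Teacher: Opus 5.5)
Your proposal is correct and follows the paper's proof step for step: bound $\Pr[\Forge]$ via $\Forge^*$, pass to $\widehat{\EXP}$, and reduce to \cref{theorem:bb84_maintheorem} through $\widetilde{\Zs}(\theta,r)=(\vk,\sigk,\Zs_\lambda(\theta,r,\sigk))$ and a wrapper $\cB$ that applies $(\QFT^{-1})^{\otimes 2n}$ before signing; your basis fix is in fact forced, not cosmetic, since signing $\QFT^{\theta}\ket{x}$ literally would make every verifying certificate agree with $x$ on $\theta_0$ and hence expose the pad $x_{\theta_0}$, and the paper's proof likewise works with $(\QFT^{-1})^{\otimes 2n}\QFT^{\theta}\ket{x}$. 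One harmless slip: in the $\Forge^*$ reduction, $r$ is not independent of the rest of the view, because it enters $\Zs_\lambda(\theta,r,0^{|\sigk|})$, but this does not matter, since the reduction chooses $(x,\theta)$ itself and can simply sample $r$ and compute everything directly.
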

\fi

\ifnum\llncs=1
The following theorem is our main result of this section, the details of which are deferred to the full version.
\begin{theorem}[Constant-Rate Publicly Verifiable Certified Deletion for BB84 State]
\label{th:pvd-bb}
    Let $\Sig=(\Gen,\Sign,\Vrfy)$ be a one-time unforgeable signature scheme for subset states with respect to $\cT_{\mathrm {BB}}$. 
    Let $\lambda \in \N$ be the security parameter. Fix $k,n \in \N$ 
    such that $3k+\Theta(\lambda) \leq 2n $. Consider a family of finite fields $\{\F_\lambda\}_{\lambda \in \N}$ with polynomial bounded field sizes. Let $\regR \in \F^{k \times n}$ be a matrix that is both MDS and efficiently decodable. Let $m \in \F^k$ be a classical message and $\Zs = \{\Zs_\lambda(\cdot)\}_{\lambda \in \N}$ be a quantum operation with classical input. Define the following distribution $\left\{\EXP_\lambda^{\As_\lambda}(m) \right\}_{\lambda \in \N }$ over quantum states, obtaining by running $\As_\lambda$ as follows.
    \begin{enumerate}
        \item Sample $x \leftarrow \F^{2n}$, a balanced $\theta \leftarrow \{0,1\}^{2n}$, and $r \leftarrow \F^k$. Run $(\vk,\sigk) \leftarrow \Gen(1^\lambda)$. Generate a quantum state $\ket{\psi}$ by applying a map  $\ket{x}\ket{0\cdots 0} \mapsto \ket{x}\ket{\Sign(\sigk,x )}$ to $\QFT^{\theta}\ket{x}$.  Initialize $\As_\lambda$ with
        \begin{equation*}
            \ket{\psi}, m+\regR x_{\theta_0}+r, \vk, \Zs_\lambda(\theta, r, \sigk). 
        \end{equation*}
        \item $\As_\lambda$'s output is parsed as a string $x' \in \F^{2n}$ and $\sigma' $ and a residual state $\rho$. If $\Vrfy(\vk,x', \sigma') = 1$, then output $\rho$, and otherwise output a special symbol $\bot$.
    \end{enumerate}
    If $\Zs$ is semantically-secure against any non-uniform QPT $\{\As_\lambda\}_{\lambda \in \N}$ (\cref{semantic}), then for all non-uniform QPT $\{\As_\lambda\}_{\lambda \in \N}$ and for all $m \in \F^k$,
    \begin{equation*}
        \TD\left(\EXP_\lambda^{\As_\lambda}(m), \EXP_\lambda^{\As_\lambda}(0^k) \right) = \negl(\lambda).
    \end{equation*}
\end{theorem}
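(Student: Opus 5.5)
The plan is to reduce the statement to \cref{theorem:bb84_maintheorem}, handling the signature layer separately, in the same way as the earlier proof of the full-version \cref{th:pvd-bb}. Three steps are needed: bound the probability of a forged certificate, restrict the experiment to honest certificates, and embed the result into the BB84 certified-deletion experiment.

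\textbf{Step 1: forgeries are negligible.} Define the event $\Forge$: $\Vrfy(\vk,x',\sigma')=1$ but $x'[i]\neq x[i]$ for some position $i\in\theta_1$ that the deletion check inspects. Bounding $\Pr[\Forge]$ takes two moves.
\begin{itemize}
\item First replace $\Zs_\lambda(\theta,r,\sigk)$ by $\Zs_\lambda(\theta,r,0^{|\sigk|})$. By semantic security of $\Zs$, $\Pr[\Forge]$ changes only negligibly. The event is efficiently checkable by a reduction that itself samples $x,\theta,r,\vk,\sigk$ and simulates the rest of the experiment.
\item In the modified world $\sigk$ is used only to sign the superposition. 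The subset state $(\QFT^{-1})^{\otimes 2n}\QFT^{\theta}\ket{x}$ (up to the basis convention of $\cT_{\mathrm{BB}}$) is exactly $\sum_{m\in T^{x,\theta}}\alpha^{x,\theta}_m\ket{m}$. A reduction therefore submits $(T^{x,\theta},\alpha^{x,\theta})$ to the one-time unforgeability game of \cref{def:subset-unforgeability}. It samples $r$ and computes $m+\regR x_{\theta_0}+r$ and $\Zs_\lambda(\theta,r,0)$ itself, and outputs $(x',\sigma')$.
\item A forgery in the sense above means exactly $x'\notin T^{x,\theta}$, so $\Pr[\Forge]=\negl(\lambda)$.
\end{itemize}

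\textbf{Step 2: restrict to honest certificates.} Define $\widehat{\EXP}(m)$ to output $\rho$ only when both $\Vrfy$ accepts and $x'_{\theta_1}=x_{\theta_1}$. On every outcome where this extra condition fails but $\Vrfy$ still accepts, $\Forge$ has occurred. Hence $\EXP$ and $\widehat{\EXP}$ are within $\Pr[\Forge]$ in trace distance, and the triangle inequality reduces the theorem to bounding $\TD(\widehat{\EXP}(m),\widehat{\EXP}(0^k))$.

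\textbf{Step 3: embed into \cref{theorem:bb84_maintheorem}.} Let $\widetilde{\Zs}(\theta,r)$ run $\Gen$ and output $(\vk,\sigk,\Zs_\lambda(\theta,r,\sigk))$. It is semantically secure: first switch to $\Zs_\lambda(0,0,\sigk)$, which is independent of $(\theta,r)$. An adversary $\cB$ in the experiment of \cref{theorem:bb84_maintheorem} with $\widetilde{\Zs}$ then proceeds as follows.
\begin{itemize}
\item It receives $\QFT^\theta\ket{x}$, applies the basis change needed to match the statement's encoding, and coherently applies $\ket{x}\ket{0}\mapsto\ket{x}\ket{\Sign(\sigk,x)}$ ($\Sign$ is deterministic).
\item It forwards this state together with $m+\regR x_{\theta_0}+r$, $\vk$ and $\Zs_\lambda(\theta,r,\sigk)$ to $\cA$.
\item It returns $x'$ and $\rho$ if $\Vrfy$ accepts, and $x'$ and $\bot$ otherwise.
\end{itemize}
The challenger's own check $x'_{\theta_1}=x_{\theta_1}$ makes $\EXP^{\cB}(m)\equiv\widehat{\EXP}^{\cA}(m)$, so the main theorem gives negligible distance.

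\textbf{Expected obstacle.} The delicate point is the basis bookkeeping. In the statement the signing map is applied directly to $\QFT^\theta\ket{x}$, so the computational-basis positions are $\theta_0$. Meanwhile the deletion check of \cref{theorem:bb84_maintheorem} reads $x'$ on the $\theta_1$ positions, and $x'$ is produced by standard-basis measurement. Before invoking $\cT_{\mathrm{BB}}$ I would align conventions, either by applying $(\QFT^{-1})^{\otimes 2n}$ first as in the full-version statement or by swapping the roles of $\theta_0$ and $\theta_1$. I would then check that the membership set is defined on the positions the verifier actually compares.
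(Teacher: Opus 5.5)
\textbf{The approach is the paper's; the obstacle you flag is fatal for the statement as written.} Your three steps are exactly the paper's proof: bound $\Pr[\Forge]$ via semantic security of $\Zs$ and then one-time unforgeability for subset states, pass to $\widehat{\EXP}$, and embed into \cref{theorem:bb84_maintheorem} through $\widetilde{\Zs}$ and $\cB$. But the paper proves the version in which the signing map is applied to $(\QFT^{-1})^{\otimes 2n}\QFT^{\theta}\ket{x}=\sum_{m\in T^{x,\theta}}\alpha^{x,\theta}_m\ket{m}$. The statement here applies the map directly to $\QFT^{\theta}\ket{x}$, and what you call basis bookkeeping cannot be repaired inside a proof of that statement.

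\textbf{Step 1 fails.} In the stated experiment, the signed computational-basis branches are exactly the strings $y$ with $y_{\theta_0}=x_{\theta_0}$, i.e.\ the pad coordinates. So unforgeability can at best force $x'_{\theta_0}=x_{\theta_0}$, and your claim $\Pr[\Forge]=\negl(\lambda)$ is false. Measuring $\ket{\psi}$ in the computational basis and outputting $(y,\Sign(\sigk,y))$ passes $\Vrfy$. Yet $y_{\theta_1}$ is uniform, so $\Forge$ occurs with probability about $1-|\F|^{-n}$.

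\textbf{The literal statement is false.} The same adversary keeps $y$, $m+\regR x_{\theta_0}+r$ and $\Zs_\lambda(\theta,r,\sigk)$ in $\rho$. Suppose $\Zs$ is a perfectly binding, computationally hiding commitment, which is semantically secure in the required sense. An unbounded distinguisher then recovers $(\theta,r)$, and hence $m=(m+\regR x_{\theta_0}+r)-\regR y_{\theta_0}-r$. So the trace distance between $\EXP(m)$ and $\EXP(0^k)$ is close to $1$ for every $m\neq 0^k$.

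\textbf{Your fallback does not work either.} Swapping the roles of $\theta_0$ and $\theta_1$ is not available, because the statement and \cref{theorem:bb84_maintheorem} fix both the pad $\regR x_{\theta_0}$ and the check $x'_{\theta_1}=x_{\theta_1}$. The right conclusion is that the experiment itself must be amended to sign $(\QFT^{-1})^{\otimes 2n}\QFT^{\theta}\ket{x}$. For that corrected statement your argument goes through verbatim and coincides with the paper's.
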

\fi